\pgfplotsset{compat = newest} 
\newtheorem{theorem}{Theorem}
\newtheorem*{theorem*}{Theorem}
\Crefname{theorem}{Theorem}{Theorems}
\crefname{theorem}{theorem}{theorems}
\newtheorem{proposition}{Proposition}
\newtheorem*{proposition*}{Proposition}
\Crefname{proposition}{Proposition}{Propositions}
\crefname{proposition}{proposition}{propositions}
\newtheorem{corollary}{Corollary}
\newtheorem*{corollary*}{Corollary}
\Crefname{corollary}{Corollary}{Corollaries}
\crefname{corollary}{corollary}{corollaries}
\newtheorem*{claim*}{Claim}
\Crefname{claim}{Claim}{Claims}
\crefname{claim}{claim}{claims}
\newtheorem{lemma}{Lemma}
\newtheorem*{lemma*}{Lemma}
\Crefname{lemma}{Lemma}{Lemmas}
\crefname{lemma}{lemma}{lemmas}
\newtheorem*{conjecture*}{Conjecture}
\Crefname{conjecture}{Conjecture}{Conjectures}
\crefname{conjecture}{conjecture}{conjectures}
\theoremstyle{definition}
\newtheorem*{definition*}{Definition}
\Crefname{definition}{Definition}{Definitions}
\crefname{definition}{definition}{definitions}
\newtheorem{assumption}{Assumption}
\newtheorem*{assumption*}{Assumption}
\Crefname{assumption}{Assumption}{Assumptions}
\crefname{assumption}{assumption}{assumptions}
\newtheorem{remark}{Remark}
\newtheorem*{remark*}{Remark}
\Crefname{remark}{Remark}{Remarks}
\crefname{remark}{remark}{Remarks}
\newtheorem{example}{Example}
\newtheorem*{example*}{Example}
\Crefname{example}{Example}{Examples}
\crefname{example}{example}{Examples}
\newcommand{\Var}{\textup{Var}}
\newcommand{\diag}{\textup{diag}}
\newcommand{\corr}{\textup{corr}}
\title{Inference with Many Weak Instruments and Heterogeneity}
\date{\today }
\author{Luther Yap\thanks{Email: \texttt{lyap@princeton.edu}. Department of Economics, Princeton University. This paper benefited from discussions with Michal Koles{\'a}r, David Lee, Ulrich M\"{u}ller, and seminar participants at Duke University, Cornell University, National University of Singapore, Princeton University, University of Notre Dame, University of Pennsylvania, and University of Southern California. }}
\begin{document}


\maketitle


\begin{abstract}
This paper considers inference in a linear instrumental variable regression model with many potentially weak instruments, in the presence of heterogeneous treatment effects. 
I first show that existing test procedures, including those that are robust to either weak instruments or heterogeneous treatment effects, can be arbitrarily oversized.
I propose a novel and valid test based on a score statistic and a ``leave-three-out" variance estimator. 
In the presence of heterogeneity and within the class of tests that are functions of the leave-one-out analog of a maximal invariant, this test is asymptotically the uniformly most powerful unbiased test.
In two applications to judge and quarter-of-birth instruments, the proposed inference procedure also yields a bounded confidence set while some existing methods yield unbounded or empty confidence sets. 
\end{abstract}


\vskip 0.2in
\textbf{Keywords:} Many Weak Instruments, Heterogeneous Treatment Effects.
\vskip 0.2in


\section{Introduction}

Many empirical studies in economics involve instrumental variable (IV) models with many instruments.
A prominent example is the examiner design: several studies argue that judges or case workers are as good as randomly assigned and can affect the treatment status, so they are used as instruments to study the effects of foster care \citep{doyle2007child}, incarceration \citep{kling2006incarceration}, detention \citep{dobbie2018effects}, disability benefits \citep{autor2019disability}, and misdemeanor prosecution \citep{agan2023misdemeanor}, among others. 
When the IV is a vector of indicators for judges, the number of instruments can be large relative to the sample size.
Another example of many IV is a single instrument interacted with discrete covariates. 
When \citet{angrist1991does} used the quarter of birth as an instrument to study the returns to education, interacting the quarter of birth with the state of birth generates 150 instruments.

Despite the pervasiveness and importance of this setting, there does not yet exist an inference procedure that is robust to both heterogeneous treatment effects and weak instruments, which is a gap this paper aims to fill.
Weak IV refers to a setting where the first-stage coefficients converge to zero at a rate such that no consistent estimator for the object of interest exists (following from the definition in \citet{mikusheva2022inference} rather than \citet{chao2012asymptotic}); and heterogeneous treatment effects refers to a setting where different subsets of the many IV may estimate different local average treatment effects (LATE).
There are several recent proposals \citep{crudu2021inference, mikusheva2022inference, matsushita2022jackknife} that are robust to weak IV, but they assume constant treatment effects. 
A separate literature \citep{evdokimov2018inference} proposed variance estimators for the Jackknife IV Estimator (JIVE) that are robust to heterogeneous treatment effects, but their $t$-statistic test is still not robust to many weak IV.
While it is clear that weak IV can lead to substantial distortions in inference (e.g., \citet{dufour1997some, StaigerStock97}), it is less obvious if procedures developed under constant treatment effects that are robust to weak IV are still valid with heterogeneous treatment effects.

In this paper, I first show that neglecting either heterogeneity or weak instruments can result in substantial distortions in inference. 
\Cref{sec:challenges} presents a simple simulation that has both weak instruments and heterogeneous treatment effects.
For a nominal 5\% test, using the procedure from \citet{mikusheva2022inference} (MS22), which is robust to weak instruments but not heterogeneity, can result in 100\% rejection under the null, because their test statistic is not centered correctly when there is heterogeneity. 
This result is attributed to how their test is a joint test of both the parameter value and the null of no heterogeneity.
Similarly, the procedure from \citet{evdokimov2018inference} (EK18), which is robust to heterogeneity but not weak instruments, can be severely oversized, as expected due to \citet{dufour1997some}.
Additionally, this section documents how an empirically common practice of constructing a ``leniency measure" that combines the many instruments and then using weak IV robust procedures from the just-identified IV literature is invalid. 

Given the evidence on how existing methods have substantial distortions in inference, \Cref{sec:valid_inference} proposes a procedure for valid inference. 
Following the many instruments literature, the JIVE estimand is the object of interest --- this estimand can be interpreted as a weighted average of treatment effects when there is heterogeneity (e.g., EK18).
Using weak identification asymptotics, I show that the Lagrange Multiplier (LM) (i.e., score) statistic, earlier proposed by \citet{matsushita2022jackknife} under constant treatment effects, is mean zero and asymptotically normal even with treatment effect heterogeneity.
In fact, I prove a stronger normality result that a set of jackknife statistics that includes the LM is jointly normal, which is the first technical challenge of this paper.
This normality result uses an asymptotic environment that nests the asymptotic environments of EK18 and MS22 in that normality holds if either the number of instruments is large or the instruments are strong.
This normality implies that, as long as the variance of LM is consistently estimable, a $t$-statistic can be calculated and critical values from the standard normal distribution are valid for inference. 
Obtaining a consistent variance estimator is the second technical challenge of the paper, since reduced-form coefficients are not consistently estimable when there are few observations per instrument.
Motivated by \citet{anatolyev2023testing} who proposed a method to jointly test the significance of many covariates in OLS, I construct a leave-three-out (L3O) variance estimator for the LM variance and show that it is consistent, even when reduced-form coefficients are not consistently estimable.
Due to the generality of the setting considered, beyond its robustness to weak IV and heterogeneity, the procedure proposed in this paper is also robust to heteroskedasticity, and potentially many covariates, so it retains the advantages of existing procedures in the literature. 

\Cref{sec:power} argues that the proposed LM procedure is powerful.
In the over-identified IV environment with normal homoskedastic errors, \citet{moreira2009maximum} showed that, if we are willing to restrict our attention to tests that are invariant to rotations of the instruments, it suffices to consider tests that are functions of three statistics.
These three statistics are known as a ``maximal invariant".
To be robust to non-normality and heteroskedasticity in the many IV environment, I focus on the leave-one-out (L1O) analog of this maximal invariant.
The proposed LM statistic is one of the three statistics in the L1O analog, and I show that the two-sided LM test is asymptotically uniformly most powerful unbiased (UMPU) within the class of tests that are functions of this L1O analog, for the interior of the alternative space (i.e., where heterogeneity is imposed). 
Simulation results in \Cref{sec:app_sim} show how the procedure is robust even with a small number of instruments, and it is reasonably powerful even with constant treatment effects. 
\Cref{sec:empirical_applications} contains two empirical applications that show how being robust to many weak IV and heterogeneity can change conclusions.\footnote{
Implementation code can be found at: \url{https://github.com/lutheryap/mwivhet}.
}
In the \citet{angrist1991does} quarter of birth application, the \citet{matsushita2022jackknife} procedure that are robust to many weak IV but not heterogeneity have unbounded confidence sets while L3O has a bounded confidence set.
In the \citet{agan2023misdemeanor} judge application, MS22 has an empty confidence set, and the length of the L3O confidence interval is more than twice that of EK18, which is not robust to many weak IV.

This paper contributes to the following strands of literature. First, this paper contributes to a growing literature on many weak instruments.
There is a strand of literature dealing with many instruments (e.g., \citet{chao2012asymptotic}) and another separate strand dealing with weak instruments (e.g., \citet{StaigerStock97, lmmpy23}). 
While recent procedures accommodate both simultaneously (e.g., \citet{crudu2021inference, mikusheva2022inference, matsushita2022jackknife, lim2024conditional}), their focus has been on the linear IV model with constant treatment effects. 
This paper augments their setup by allowing for heterogeneity in treatment effects, and contributes new results on the limitations of their procedures under heterogeneity.
Further, I show how heterogeneity can be understood in a framework analogous to weak instruments.

Second, this paper contributes to the literature on heterogeneous treatment effects (e.g., \citet{kolesar2013estimation, evdokimov2018inference, blandhol2022tsls}). 
The previous papers exploit consistent estimation of the object of interest to conduct inference. 
In contrast, this paper uses the (more general) weak IV environment where the object of interest may not be consistently estimated. 
Two recent papers allow weak IV and heterogeneity.
\citet{boot2024inference} study a single discrete instrument interacted and saturated with many covariates.
Their setup is a special case of the environment considered in this paper, so it is unclear if their procedure generalizes to many instruments without covariates (e.g., judges). 
\citet{kleibergen2025double} target a continuous updating (CU) GMM estimator with a fixed number of instruments rather than many instruments.\footnote{
They show that their CU-GMM estimator corresponds to the limited information maximum likelihood (LIML) estimator.
However, it is also known that the LIML estimand may not be interpretable as a weighted average of LATE's. \citep{kolesar2013estimation}
I am unaware of any paper that allows both weak IV and heterogeneity with a fixed number of instruments and targets a parameter that is a weighted average of LATE's.
}


Third, this paper contributes to a literature on inference when coefficients cannot be consistently estimated. 
The difficulty in having such a general robust inference procedure lies in consistent variance estimation when the number of coefficients is large.
Recent literature that has made substantial progress in a different context. 
In doing inference in OLS with many covariates, \citet{cattaneo2018inference} and \citet{anatolyev2023testing} proposed consistent variance estimators that are robust to heteroskedasticity, which involve inverting a large ($n$ by $n$, where $n$ is the sample size) matrix and a L3O approach respectively. \citet{boot2024inference} adapt the \citet{cattaneo2018inference} variance estimator for inference. 
In contrast, this paper adapts the approach from \citet{anatolyev2023testing} that does not require an inversion of an $n$ by $n$ matrix, and whose L3O implementation is fast when using matrix operations.

Fourth, this paper contributes to a literature on optimal tests. 
While the UMPU test for just-identified IV has been established since \citet{moreira2009tests}, obtaining a UMPU test in the over-identified IV environment has thus far been more challenging.
In the over-identified IV environment with constant treatment effects, several statistics are informative of the object of interest. 
Consequently, there is a large literature that numerically compares various valid tests and characterizes various forms of optimality (e.g., \citet{andrews2016conditional, andrews2019optimal, van2023power, lim2024conditional}).
By imposing heterogeneity in the environment, the problem is (somewhat surprisingly) simplified. Since only one statistic in the asymptotic distribution is directly informative of the object of interest, I obtain a UMPU result.


\section{Challenges in Conventional Practice} \label{sec:challenges}

This section explains the challenges faced in conventional practice by considering a simple potential outcomes model without covariates that exhibits weak instruments and heterogeneity in treatment effects. 
This model is a special case of the general model in Section \ref{sec:valid_inference}.
A simulation from the model shows how weak instruments and heterogeneity can lead to substantial distortions in inference for procedures recently proposed in the econometric literature. 
A common empirical practice of constructing a leave-one-out instrument and then applying inference methods for the instrument as if it is not constructed also has high rejection rates. 
In contrast, the method proposed in this paper has a rejection rate that is close to the nominal rate. 

\subsection{Setting for Simple Example} \label{sec:simple_setting}
The simple example uses the canonical latent variable framework of \citet{heckman2005structural}. 
We are interested in the effect of $X_i \in \{ 0,1 \}$ (e.g., incarceration) on some outcome $Y_i$, for $i = 1, \cdots n$ that indexes individuals.
To instrument for $X_i$, we use a vector of judges indicators: $Z_i$ is a $(K+1)$-dimensional vector of indicators for judges, indexed $k=1,\cdots, K+1$, each with $c=5$ individual cases, so the vector takes value 1 for the $k$th component when individual $i$ is matched to judge $k$, and 0 elsewhere.
Then, $n = (K+1)c$.
The problem of many instruments arises when $c$ is fixed while $K$ increases.
Let $Y_i(0)$ and $Y_i(1)$ denote the untreated and treated potential outcomes respectively, and we observe $Y_i = Y_i (X_i)$.
The treatment status given some instrument value $z$ is $X_i(z)$, and we observe $X_i (Z_i)$.
The model is:
\begin{equation}
X_i(z) = 1\{ z^\prime \lambda > v_i \}, \text{ and } Y_i(x) = xf(v_i) + \varepsilon_i,
\end{equation}
where $1\{ \cdot \}$ is an indicator function that takes the value 1 if the argument is true and 0 otherwise.
Here, $Z_i^\prime \lambda = \lambda_{k(i)}$, where $k(i)$ is the judge that individual $i$ is matched to.
With individual unobservable $v_i \sim U[0,1]$, the probability of treatment (i.e., $X_i=1$) given judge $k$ is $\lambda_k$.
I set $\lambda_k = 1/2$ for the base judge, and evenly split all other $K$ judges to take 4 different values of $\lambda_k$.
Potential outcomes are $Y_i(0) = \varepsilon_i$ and $Y_i(1) = f(v_i) + \varepsilon_i$ so $Y_i(1) - Y_i(0) = f(v_i)$ is the treatment effect. 
The individual-specific residuals $v_i$ and $\varepsilon_i$ are allowed to be arbitrarily correlated.
Let $\beta_k$ denote the local average treatment effect (LATE) when comparing judge $k$ to the base judge: for instance, when $\lambda_k > 1/2$, $\beta_k = \frac{1}{\lambda_k-1/2} \int_{1/2}^{\lambda_k} f(v) dv$.
The values of $(\lambda_k, \beta_k)$ for the 4 groups of judges are $(1/2 -s, \beta - h/s), (1/2(1-s), \beta + 2h/s), (1/2(1+s), \beta - 2h/s)$, and $(1/2+s, \beta + h/s)$.
The function $f(v)$ that delivers these parameters and further details of this example are in \Cref{sec:sect2DGP}.

The $\lambda_k$ and $\beta_k$ values are parameterized by objects $s$ and $h$, which control the IV strength and heterogeneity in the model respectively. 
The impact of these parameters are illustrated in \Cref{fig:RFSimpleExample} that plots the point masses for the four groups of judges in reduced-form. 
Parameter $s$ controls how far $E[X\mid Z]$ are spread across judges, which then affects the instrument strength.
Parameter $h$ controls the distance between the mass points and a line with slope $\beta$ --- this slope is the object of interest. 
If the impact of $X$ on $Y$ is homogeneous, then $h=0$, and all mass points \emph{must} lie on a line --- this implication is falsifiable by the data. 

\begin{figure}
    \centering
    \caption{IV Strength and Heterogeneity in Reduced Form}
    \label{fig:RFSimpleExample}
\begin{tikzpicture}
\begin{axis}[ xmin = 0, xmax = 1, ymin = -1, ymax = 1, axis lines* = left, clip = false, ]  
\addplot[color = black, solid, thick] coordinates{(0,0.2) (1,-0.2)};
\addplot[color = black, dashed, thin] coordinates{(0.5,-1) (0.5,1)};

\addplot[color = black, solid, thin] coordinates{(0.75,-0.1) (0.75,-0.15)};
\addplot[color = black, solid, thin] coordinates{(0.75,-0.15) (7/8,-0.15)};
\node [below] at (0.75,-0.15) {$\beta$};

\addplot[color = black, mark= *, only marks, mark size = 3pt] coordinates {(0.25,0.5+0.1) (3/8,-0.5+0.05) (0.5,0) (5/8,-0.5-0.05) (0.75,0.5-0.1)};

\addplot[color = blue, dashed, thick, <->] coordinates {(0.75,0.5-0.1) (0.5,0.5-0.1)}; 
\node [above] at (5/8,0.5-0.1) {\textcolor{blue}{$s$}};
\addplot[color = red, dashed, thick, <->] coordinates {(0.25,0+0.1) (0.25,0.5+0.1)}; 
\node [right] at (0.25,0.25+0.1) {\textcolor{red}{$h$}};

\node [left] at (-0.1,1) {$E[Y\mid Z]$}; 
\node [below] at (1,-1.1) {$E[X \mid Z]$};
\end{axis} 
\end{tikzpicture} 
\end{figure}

\begin{table}
    \centering
    \caption{Rejection rates under the null for nominal size 0.05 test} \label{tab:sim_mte_base}
    
\begin{tabular}{ll|rrrrrrrrr}
\toprule
\multicolumn{2}{c|}{Designs} & \multicolumn{9}{c}{Procedures}\\
$E[T_{AR}]$ & $E[T_{FS}]$ & TSLS & EK & MS & MO & $\tilde{X}$-t & $\tilde{X}$-AR& L3O & LMorc & ARorc\\
\midrule
&$2 \sqrt{K}$ & 0.428 & 0.066 & NaN & 0.027 & 0.041 & 0.041 & 0.060 & 0.048 & 1.000\\
$ 2 \sqrt{K}$& 2 & 0.978 & 0.040 & NaN & 0.282 & 0.114 & 0.303 & 0.052 & 0.049 & 1.000\\
&0 & 0.983 & 0.025 & NaN & 0.260 & 0.054 & 0.282 & 0.047 & 0.055 & 1.000\\
\midrule
&$2 \sqrt{K}$ & 0.984 & 0.076 & 1.000 & 0.039 & 0.046 & 0.049 & 0.044 & 0.050 & 1.000\\
$ 2$& 2 & 1.000 & 0.096 & 1.000 & 0.085 & 0.155 & 0.149 & 0.048 & 0.049 & 1.000\\
&0 & 1.000 & 0.128 & 1.000 & 0.103 & 0.225 & 0.177 & 0.060 & 0.051 & 1.000\\
\midrule
&$2 \sqrt{K}$ & 0.994 & 0.097 & 0.064 & 0.064 & 0.071 & 0.067 & 0.059 & 0.055 & 0.057\\
$0$& 2& 1.000 & 0.231 & 0.059 & 0.047 & 0.179 & 0.106 & 0.049 & 0.051 & 0.055\\
&0 & 1.000 & 0.359 & 0.063 & 0.041 & 0.350 & 0.107 & 0.048 & 0.046 & 0.059\\
\bottomrule
\end{tabular}

    \justifying \small
    Notes: The table displays rejection rates of various procedures (in columns) for various designs (in rows). 
    Details of the data generating process are in \Cref{sec:sect2DGP}.
    I use $K=400,c=5,\beta=0$ with 1000 simulations. 
    TSLS implements the standard two-stage-least-squares $t$-test for an over-identified IV system. 
    EK implements the procedure in \citet{evdokimov2018inference}. 
    MS uses $T_{AR}$ with the cross-fit procedure in \citet{mikusheva2022inference}.
    MO uses the $T_{LM}$ statistic with the variance estimator proposed in \citet{matsushita2022jackknife}. 
    $\tilde{X}$-t uses a constructed instrument and runs TSLS for a just-identified IV system. 
    $\tilde{X}$-AR uses the \citet{anderson1949estimation} (AR) procedure for a just-identified system using a constructed instrument.
    L3O uses the variance estimator proposed in this paper. 
    LMorc is the infeasible theoretical benchmark that uses an LM statistic with an oracle variance. 
    ARorc uses the AR statistic with an oracle variance. 
\end{table}

The simulation designs vary the values of $s$ and $h$ through the parameters:
\begin{equation} \label{eqn:CS_CH}
    E[T_{FS}] = \frac{5}{8} \sqrt{K} (c-1) s^2, \text{ and } E[T_{AR}] = \frac{5}{8} \sqrt{K} (c-1) h^2.
\end{equation}
The statistic $T_{FS}$ is the leave-one-out (L1O) analog of the first-stage ``F" statistic in this model, and $T_{AR}$ is similarly the L1O analog of the Anderson-Rubin statistic under the null. 
These objects are explained in detail in \Cref{sec:valid_inference}, but it suffices to mention here that, for the given $c$ and $K$, there is a one-to-one mapping between $(E[T_{FS}],E[T_{AR}])$ and $(s,h)$.
Using \citet{StaigerStock97} asymptotics, $E[T_{FS}]$ is the parameter that determines whether there is strong or weak identification.
Where $C$ is some positive arbitrary constant, $E[T_{FS}] \rightarrow \infty$ is an environment with strong identification where the object of interest can be estimated consistently, and $E[T_{FS}] \rightarrow C < \infty$ is an environment with weak identification where no consistent estimator exists. 

For every design, I generate data under the null and calculate the frequency that each inference procedure rejects the null of $\beta_{0}=0$.
These procedures include the standard TSLS $t$-test, procedures that are robust to either weak instruments (MO, MS) or heterogeneity (EK), and procedures that use a constructed instrument ($\tilde{X}$).
The results are presented in \Cref{tab:sim_mte_base}, which I will refer to in the remainder of this section as I explain them.

\subsection{Issue with Many Weak Instruments}
Many IV and weak IV are different but related issues.
The many IV problem arises when the number of cases per judge $c$ does not diverge to infinity, so that $K$ is large relative to $n$. 
When $c$ is small, the judge-specific $\lambda_k$ and $\beta_k$ cannot be consistently estimated and hence inference procedures like the TSLS $t$-test can be oversized. 
In \Cref{fig:RFSimpleExample}, a small $c$ is attributed to the sample uncertainty surrounding each black circle.
The weak IV problem arises from $E[T_{FS}]$ not diverging: since $E[T_{FS}]$ is a function of $K,c$, and $s$, the weak IV issue is related to the many IV issue. 

If we simply run the TSLS $t$-test for an over-identified model, then the estimator can be asymptotically biased and inference is invalid, a fact already known in the literature. 
This fact is also evident in \Cref{tab:sim_mte_base}, where TSLS has 100\% rejection in many designs. 
In TSLS, the first stage regresses $X$ on $Z$ to get a predicted $\hat{X} = Z \hat{\pi}$, where $\hat{\pi}$ is the estimated coefficient; the second stage regresses $Y$ on $\hat{X}$. 
With constant treatment effects, the asymptotic bias of the TSLS estimator depends on $\sum_i \varepsilon_i \hat{X}_i / \sum_i \hat{X}_i^2 $. 
When every judge only has $c=5$ cases, the influence of $v_i$ on $\hat{\pi}_{k(i)}$ and hence $\hat{X}_i$ is non-negligible.
Since $\varepsilon_i$ and $v_i$ can be arbitrarily correlated, the numerator is biased.
If the instruments are weak such that the denominator $\sum_i \hat{X}_i^2$ does not diverge sufficiently quickly, then the asymptotic bias can be large.
Due to the asymptotic bias, the $t$-statistic is not centered around $\beta_0$ when data is generated under the null, so we observe over-rejection in \Cref{tab:sim_mte_base}. 

Since the bias in the TSLS estimator arises from using $X_i$ to estimate $\hat{\pi}$, a natural solution to address that bias is to use the JIVE to estimate $\beta$. 
Instead of using $\hat{X}_i = Z_i^\prime \hat{\pi}$ in the second stage, we instead use $\tilde{X}_i = Z_i^\prime \hat{\pi}_{-i}$, where $\hat{\pi}_{-i}$ is the coefficient from the first-stage regression that leaves out observation $i$. 
I call $\hat{\pi}_{-i}$ the leave-one-out (L1O) coefficient.
With $P=Z\left(Z^{\prime}Z\right)^{-1}Z^{\prime}$ denoting the projection matrix, $\tilde{X}_i = Z_i^\prime \hat{\pi}_{-i}$ can be written as $\tilde{X}_i = \sum_{j\ne i}P_{ij}X_{j}$. 
Then, the JIVE is:
\begin{equation}
\hat{\beta}=\frac{\sum_{i}Y_{i}\left(\sum_{j\ne i}P_{ij}X_{j}\right)}{\sum_{i}X_{i}\left(\sum_{j\ne i}P_{ij}X_{j}\right)}.
\end{equation}
In the many IV context with constant treatment effects, the asymptotic distribution of the $t$-statistic of the JIVE is the same as the distribution of the $t$-statistic of the TSLS estimator in the just-identified environment \citep{mikusheva2022inference} --- it is a ratio of two normally distributed random variables. 
It is well-known that, in the just-identified IV context with weak IV, the rejection rate of the standard $t$-statistic can be up to 100\% for a nominal 5\% test (e.g., \citet{dufour1997some}). 
Hence, like the just-identified IV context, by using a structural model that has sufficiently weak instruments and high covariance, the simulation can deliver high rejection rates.

EK18 have a procedure that is robust to heterogeneity, but not weak instruments, so even if we use their variance estimator for the $t$-statistic, this problem is not alleviated. 
This fact is evident in the EK column of \Cref{tab:sim_mte_base}, where, with a sufficiently large correlation in the individual unobservables, rejection rates can be large.\footnote{
The rejection rate of EK can be 100\% under the null in some simulations: one example is given in \Cref{tab:sim_cont} in Online \Cref{sec:details_power}.
}
Hence, ignoring the issue of weak instruments can lead to substantial distortions in inference.
In fact, even with strong instruments, there is no guarantee that EK18 achieves the nominal rate, because their variance estimation method requires consistent estimation of the first-stage coefficients $\hat{\pi}$.
A condition for consistent variance estimation is that the number of cases per judge is large, which is not $c=5$.  

\begin{remark} \label{remark:weakIV_def}
In the literature, there have been several definitions of weak IV, which I clarify in this remark.
Using \Cref{eqn:CS_CH}, there are three asymptotic regimes, ordered from the strongest to the weakest: (i) $\frac{1}{\sqrt{K}}E[T_{FS}] \rightarrow \infty$, (ii) $E[T_{FS}] \rightarrow \infty$, and (iii) $E[T_{FS}] \rightarrow C < \infty$.
Regime (i) is a necessary condition for the TSLS estimator to be consistent, so $\frac{1}{\sqrt{K}}E[T_{FS}] \rightarrow C < \infty$ is what \citet{StockYogo05} would refer to as weak instruments.
Regime (ii) is a necessary condition for the JIVE to be consistent (e.g., \citet{chao2012asymptotic}, EK18).
Regime (iii) is where no estimator is consistent (e.g., MS22). 
If $K$ is fixed, then (i) and (ii) are the same asymptotically, and (iii) is the relevant weak identification asymptotic regime.
If $K \rightarrow \infty$, then there is more ambiguity in what weakness means: \citet{chao2012asymptotic} and EK18 who assume (ii) are robust to weak instruments when defined in the \citet{StockYogo05} sense, because $s$ can converge to 0, albeit at a slower rate than $\sqrt{K}$.
In this paper, I follow the \citet{StaigerStock97} standard of weak identification where no consistent estimator exists, which corresponds to (iii) that EK18 is not robust to.
\end{remark}

\subsection{Issue with Heterogeneity}
Next, consider proposals for inference that are developed for contexts with many weak IV.
MS22 (and \citet{crudu2021inference}) propose using an Anderson-Rubin (AR) statistic $T_{AR}=\frac{1}{\sqrt{K}}\sum_{i}\sum_{j\ne i}P_{ij}e_{i}\left( \beta_0 \right)e_{j}\left( \beta_0 \right)$, for $e_{i}\left( \beta_0 \right):=Y_{i}-X_{i}\beta_{0}$ where $\beta_{0}$ is the hypothesized null value.
With constant treatment effects, $e_i := Y_i - X_i \beta$ is the residual.
Hence, if the instrument is orthogonal to the residual, then $E[Z_i e_i]=0$.\footnote{
An equivalent way to see how heterogeneity affects inference is through the framework of \citet{hall2003large} and \citet{lee2018consistent}: $E[Z_i e_i]=0$ is a special case of a misspecified over-identified GMM problem.
The instruments are individually valid, but every component of the $K$ moments in $E[Z_i e_i]=0$ identifies a different treatment effect, so there is no parameter that satisfies all moments simultaneously under heterogeneity.
Then, while the estimand is still interpretable as a combination of these treatment effects due to how GMM weights these moments, there are additional components in the variance that affect inference.}
Then, $T_{AR}$ is the L1O analog for the quadratic form that tests the moment $E[Z_i e_i]=0$.
Since observations are independent, the critical value for the test is obtained from a mean-zero normal distribution.
In this model, $E\left[T_{AR}\right]=\sqrt{K} (c-1) h^2$ under the null.\footnote{This result can be obtained as a special case of \Cref{thm:normality} in \Cref{sec:valid_inference} and using the fact that $\sum_{i}\sum_{j\ne i}P_{ij}^{2}=\sum_{i}\sum_{j\ne i}\left(1/c^{2}\right)=\sum_{i}\frac{c-1}{c^{2}}=\sum_{k}\frac{c-1}{c}$.}
Hence, when there are constant treatment effects such that $h=0$ for all $k$, the statistic is unbiased. 
However, in the setup with heterogeneity, the $T_{AR}$ can be biased: in fact, when $h$ does not converge to zero, $E[T_{AR}]$ diverges, resulting in a 100\% rejection rate under the null, even if the oracle variance were used.
Further, there does not exist any estimand $\beta$ such that $E\left[T_{AR}\right]=0$, as stated in \Cref{lem:MS_estimand} of \Cref{sec:main_supp_sect2}. 

A further problem with the feasible MS procedure is that when there is strong heterogeneity ($E[T_{AR}]=2\sqrt{K}$) in this simulation, their cross-fit variance estimate is negative for all simulation draws, as the negative heterogeneity terms are larger in magnitude than the positive variances of the residuals.
The formal analysis requires more notation from \Cref{sec:valid_inference}, so details are deferred to \Cref{sec:main_supp_sect2}.  

Another proposal in the literature that is robust to many weak instruments is \citet{matsushita2022jackknife} (MO22) who use the statistic $T_{LM}=\frac{1}{\sqrt{K}}\sum_{i}\sum_{j\ne i}P_{ij}e_{i} \left( \beta_0 \right) X_{j} = \frac{1}{\sqrt{K}}\sum_{i}e_{i}\left( \beta_0 \right)\tilde{X}_{i}$. 
This statistic can be interpreted as the LM (or score) statistic that uses the moment $E[e \tilde{X}] =0$.
They propose the following variance estimator $\hat{\Psi}_{MO}$:
\begin{equation} \label{eqn:VMO_def}
\hat{\Psi}_{MO} := \frac{1}{K} \sum_{i}\left(\sum_{j\ne i}P_{ij}X_{j}\right)^{2}e_{i}\left( \beta_0 \right)^{2}+\frac{1}{K}\sum_{i}\sum_{j\ne i}P_{ij}^{2}X_{i}e_{i}\left( \beta_0 \right)X_{j}e_{j}\left( \beta_0 \right).
\end{equation}
While $T_{LM}$ has zero mean under the null even with heterogeneity, a result shown later in \Cref{sec:valid_inference}, the MO22 variance estimator was constructed under constant treatment effects, so the variance estimand differs from the true variance. 
It can be shown that $E\left[\hat{\Psi}_{MO}\right]\ne \Var\left(T_{LM}\right)$, and $\hat{\Psi}_{MO}$ is inconsistent in general, so when it is used to construct the $t$-statistic of $T_{LM}$, the normalized statistic is not distributed $N(0,1)$ asymptotically.
Consequently, by constructing a DGP where $\hat{\Psi}_{MO}$ underestimates the variance, it is possible to get over-rejection of the MO22 procedure, as in the cases of \Cref{tab:sim_mte_base} where $E[T_{AR}]$ diverges.
As expected, when there is no heterogeneity such that $h=0$, the rejection rate of MO22 and MS22 are close to the nominal rate. 

\subsection{Issue with a Constructed Instrument} \label{sec:constructedIV}
In light of problems with weak identification and heterogeneity, there is a large applied literature that transforms a many IV environment into a just-identified single-IV environment. 
With a single IV, the \citet{anderson1949estimation} (AR) procedure (among others) is robust to both weak identification and heterogeneity. 
However, this subsection will argue that such an approach is invalid.

Due to how the JIVE is written, there are several empirical papers that treat $\tilde{X}_{i}=\sum_{j\ne i}P_{ij}X_{j}$ as the ``instrument'' so that $\hat{\beta}=\sum_{i}Y_{i}\tilde{X}_{i}/\sum_{i}X_{i}\tilde{X_{i}}$, and proceed with inference as if $\tilde{X}_{i}$ is not constructed, but is an observed scalar instrument, usually referred to as a leniency measure.
While the resulting estimator is numerically identical to JIVE, there are distortions in inference because the variance estimators do not account for the variability in constructing $\tilde{X}_{i}$. 

If the TSLS $t$-statistic inference is used as if $\tilde{X}_{i}$ is the instrument, then its rejection rates in designs with heterogeneity are usually higher than rejection rates of EK18 that accounts for the variance accurately, by comparing the $\tilde{X}$-t and EK columns in \Cref{tab:sim_mte_base}. 
Consequently, in the cases where EK under-rejects, $\tilde{X}$-t can have close to nominal rejection rates by coincidence.

Even if the weak IV robust AR procedure for just-identified IV were used, there can still be distortion in inference (see $\tilde{X}$-AR in \Cref{tab:sim_mte_base}).
The AR $t$-statistic is $t_{\tilde{X}AR} := \sum_{i}e_{i}\left( \beta_0 \right)\tilde{X}_{i}/ \sqrt{\hat{V}}$, where $\hat{V} = \sum_{i}\tilde{X}_{i}^{2}\hat{\varepsilon}_{i}^{2}/ \left(\sum_{i}\tilde{X}_{i}^{2}\right)^{2}$ and $\hat{\varepsilon}_{i}=e_{i}\left( \beta_0 \right)-\tilde{X}_{i}\left( \sum_{i}e_{i}\left( \beta_0 \right)\tilde{X}_{i}\right) / \left( \sum_{i}\tilde{X}_{i}^{2} \right)$.
Even though $t_{\tilde{X}AR}$ is mean zero and asymptotically normal, the variance estimand is inaccurate, much like MO22. 
In particular, when $\beta_0=\beta=0$, the leading term of the variance estimand is $E\left[\sum_{i}\tilde{X}_{i}^{2}e_{i}^{2}\right]$, and it does not converge to the true variance derived in \Cref{sec:valid_inference} in general.
Hence, using the just-identified AR procedure with a constructed instrument results in over-rejection.
There are several papers that cluster standard errors by judges, but this approach faces a similar issue.\footnote{
Details of this discussion are relegated to Online \Cref{sec:compare_var_estimands}.
} 

As a preview, the L3O procedure proposed in this paper has rejection rates close to the nominal rate while the other procedures can over-reject.

\section{Valid Inference} \label{sec:valid_inference}
In light of how existing procedures are invalid in an environment with many weak instruments and heterogeneity as documented in the previous section, this section describes a novel inference procedure and shows that it is valid. 
I set up a general model, then show that an LM statistic is asymptotically normal and a feasible variance estimator is consistent, which suffices for inference.

\subsection{Setting: Model and Asymptotic Distribution} \label{sec:setting}

The general setup mimics \citet{evdokimov2018inference}. 
With an independently drawn sample of individuals $i=1,\dots,n$, we observe each individual's scalar outcome $Y_{i}$, scalar endogenous variable $X_{i}$, instrument $Z_{i}$, and covariates $W_{i}$, with $\dim\left(Z_{i}\right)=K$.\footnote{
The endogenous variable $X_i$ can be extended to a vector with some technical modifications and without conceptual complications.
} 
For every instrument value $z$, there is an associated potential treatment $X_{i}\left(z\right)$, and we observe $X_{i}=X_{i}\left(Z_{i}\right)$. 
Similarly, potential outcomes are denoted $Y_{i}\left(x\right)$, with $Y_{i}=Y_{i}\left(X_{i}\right)$. 
Let $R_{i}:=E\left[X_{i}\mid Z_{i},W_{i}\right]$ and $R_{Yi}:=E\left[Y_{i}\mid Z_{i},W_{i}\right]$ be linear in $Z_i$ and $W_i$. 
The model, written in the reduced-form and first-stage equations, is:
\begin{align*}
Y_{i} & =R_{Yi}+\zeta_{i}, \text{ where }\quad &R_{Yi}&=Z_{i}^{\prime}\pi_{Y}+W_{i}^{\prime}\gamma_{Y}, \quad &E\left[\zeta_{i}\mid Z_{i},W_{i}\right]&=0, \text{ and }\\
X_{i} & =R_{i}+\eta_{i},\text{ where } \quad &R_{i}&=Z_{i}^{\prime}\pi+W_{i}^{\prime}\gamma,\quad &E\left[\eta_{i}\mid Z_{i},W_{i}\right]&=0.
\end{align*}
The setup implicitly conditions on $Z_{i},W_{i}$, so $R_{i},R_{Yi}$ are nonrandom.\footnote{If we are interested in a superpopulation where $Z$ is random, then the estimands would be defined as the probability limit of the conditional objects. Then, it suffices to have regularity conditions to ensure that the conditional object converges to the unconditional object.} 
Linearity in $Z$ and $W$ is not necessarily restrictive when there is full saturation or when $K$ is large.\footnote{Any nonlinear function of the instruments can be arbitrarily well-approximated by a spline with a large number of pieces or a high-order polynomial. 
Moreover, the arguments in this paper could presumably be extended to a linear approximation of nonlinear functions as long as there are regularity conditions to ensure that higher-order terms are asymptotically negligible. } 

Define $e_{i}:=Y_{i}-X_{i}\beta$, where $\beta$ is some estimand of interest, and $e_{i}$ is a linear transformation. 
Let $e_i \left( \beta_0 \right) := Y_i - X_i \beta_0$ denote the feasible null-imposed linear transformation.
Let $R_{\Delta i}:=R_{Yi}-R_{i}\beta$ and $\nu_{i}:=\zeta_{i}-\eta_{i}\beta$. 
These definitions imply $e_{i}=R_{\Delta i}+\nu_{i}$ and $R_{\Delta i} = Z_i^\prime (\pi_Y - \pi \beta) + W_i^\prime (\gamma_Y - \gamma \beta)$.
Since $E\left[\nu_{i}|Z_{i},W_i\right]=0$ from the model, $E\left[e_{i}|Z_{i},W_i\right]=R_{\Delta i}$, which need not be zero. 
For data matrix $A$, let $H_{A}=A\left(A^{\prime}A\right)^{-1}A^{\prime}$ denote the hat (i.e., projection) matrix and $M_{A}=I-H_{A}$ its corresponding annihilator matrix. 
With $Z,W$ denoting the corresponding data matrices of the instrument and covariates, let $Q=(Z,W)$, $P=H_{Q}$, and $M=I-P$. 
$C$ denotes arbitrary constants. 

\begin{remark}
While $E\left[e_{i}|Z_{i}, W_i\right]=R_{\Delta i}$ need not be zero under heterogeneous treatment effects, $E\left[e_{i}|Z_{i}, W_i\right]=R_{\Delta i}=0$ under constant treatment effects.
Since $R_{\Delta i} = Z_i^\prime (\pi_Y - \pi \beta) + W_i^\prime (\gamma_Y - \gamma \beta)$ for all $i$, constant treatment effects with $E[Y_i - X_i \beta \mid Z_{i}, W_i] = 0$ also implies $\pi_Y = \pi \beta$ and $\gamma_Y = \gamma \beta$ outside of edge cases (e.g., when $Z_i,W_i$ are always 0). 
These $R_\Delta$ objects hence capture the impact of having heterogeneous treatment effects in the many instruments model. 
\end{remark}

The (conditional) object of interest and its corresponding estimator are:
\[
\beta_{JIVE}:=\frac{\sum_{i}\sum_{j\ne i}G_{ij}R_{Yi}R_{j}}{\sum_{i}\sum_{j\ne i}G_{ij}R_{i}R_{j}}, \text{ and }\quad\hat{\beta}_{JIVE}=\frac{\sum_{i}\sum_{j\ne i}G_{ij}Y_{i}X_{j}}{\sum_{i}\sum_{j\ne i}G_{ij}X_{i}X_{j}}, 
\]
where $G$ is an $n\times n$ matrix that can take several forms.
If there are no covariates, using the projection matrix $G=H_Z=P$ is the standard JIVE, and when there are covariates, I use the unbiased JIVE ``UJIVE" \citep{kolesar2013estimation} with $G=\left(I-\diag\left(H_{Q}\right)\right)^{-1}H_{Q}-\left(I-\diag\left(H_{W}\right)\right)^{-1}H_{W}$.
In an environment with a binary instrument and many covariates interacted with the instrument, the saturated estimand ``SIVE" \citep{chao2023jackknife, boot2024inference} uses $G=P_{BN}-M_QD_{BN}M_Q$, where $P_{BN}=M_{W}Z\left(Z^{\prime}M_{W}Z\right)^{-1}Z^{\prime}M_{W}$ and $D_{BN}$ is defined as a diagonal matrix with elements such that
$P_{BN,ii}=\left[M_QD_{BN}M_Q\right]_{ii}$. 
With constant treatment effects, the estimand is the same for all the estimators: $R_{Yi}=R_{i}\beta$ so $\beta_{JIVE}=\beta$. 
Depending on the application, the estimand is usually interpretable as some weighted average of treatment effects when using JIVE without covariates or UJIVE with covariates with a saturated regression.\footnote{
In the judge example without covariates above, we have $G=P$ and $\pi_{Yk} = \beta_k \pi_k$ where $\beta_k$ is the local average treatment effect (LATE) between judge $k$ and the base judge, so $\beta_{JIVE} = \frac{\sum_k \pi_{Yk} \pi_k}{\sum_k \pi_k^2 } = \frac{\sum_k \pi_k^2 \beta_k}{\sum_k \pi_k^2}$ is a weighted average of LATE's. 
} (EK18)
The focus of this paper is on inference, so I will not discuss the estimand in detail. 
The results for valid inference are established for any $G$ that satisfies properties that will be formally stated in the theorem. 

This paper restricts its attention to the following statistics:
\begin{equation}
\left(T_{AR},T_{LM},T_{FS}\right)^{\prime}:=\frac{1}{\sqrt{K}}\sum_{i}\sum_{j\ne i}G_{ij}\left(e_{i}\left( \beta_0 \right)e_{j}\left( \beta_0 \right),e_{i}\left( \beta_0 \right)X_{j},X_{i}X_{j}\right)^{\prime}.
\end{equation}
It suffices to focus on $\left(T_{AR},T_{LM},T_{FS}\right)$ for inference as they correspond to a linear transformation of the leave-one-out analog of a maximal invariant --- details are in \Cref{sec:max_inv}. 
$T_{AR}$ is the (unnormalized) AR statistic used by MS22 for inference, and $T_{LM}$ is the LM (score) statistic used by MO22. 
$T_{FS}$ corresponds to a first-stage F statistic that can be used as a diagnostic for weak instruments. 

Asymptotic theory in this paper uses $r_n/ \sqrt{K} \rightarrow \infty$, where
\begin{equation} \label{eqn:rn}
    r_n := \sum_i \left( \sum_{j \ne i} G_{ij} R_j \right)^2 +\sum_i \left( \sum_{j \ne i} G_{ij} R_{\Delta j} \right)^2 + \sum_i \sum_{j \ne i} G_{ij}^2,
\end{equation}
which nests the environments of EK18, MS22, and MO22: as long as one of the three objects in \Cref{eqn:rn} diverges at a rate above $\sqrt{K}$, we obtain $r_n/ \sqrt{K} \rightarrow \infty$. 
EK18 assume strong identification that translates to $\sum_i \left( \sum_{j \ne i} G_{ij} R_j \right)^2/ \sqrt{K} \rightarrow \infty$ in this setting, but $r_n/ \sqrt{K} \rightarrow \infty$ can also be achieved if either of the latter terms in $r_n$ diverges.
MS22 and MO22 assume $K \rightarrow \infty$. 
Without covariates, $G=P$, so $\sum_i \sum_{j \ne i} G_{ij}^2 = O(K)$, and hence $r_n/ \sqrt{K} \rightarrow \infty$.
Hence, to apply the asymptotic theory in this paper, it suffices to have either strong identification, or $K \rightarrow \infty$.
The only case ruled out is where $K$ is fixed, \emph{and} there is weak identifcation in that $\sum_i \left( \sum_{j \ne i} G_{ij} R_j \right)^2/ \sqrt{K}$ does not diverge.

\begin{assumption} \label{asmp:normality}
\begin{enumerate} [topsep=0pt,label=(\alph*)]
\item There exists $C < \infty$ such that $E[\eta_i^4] + E[\nu_i^4] \leq C$ for all $i$.
\item $E\left[\nu_{i}^{2}\right]$ and $E\left[\eta_{i}^{2}\right]$ are bounded away from 0 and $|\corr\left(\nu_{i},\eta_{i}\right)|$ is bounded away from 1. 
\item There exists $\underline{c}>0$ such that for any $c_1, c_2, c_3$ that are not all 0, \\ $\frac{1}{r_n} \sum_i \left( c_{3}\sum_{j\ne i}\left(G_{ij}+G_{ji}\right)R_{j}+c_{2}\sum_{j\ne i}G_{ji}R_{\Delta j} \right)^2$ \\$+ \frac{1}{r_n} \sum_i \left( c_{1}\sum_{j\ne i}\left(G_{ij}+G_{ji}\right)R_{\Delta j}+c_{2}\sum_{j\ne i}G_{ij}R_{j} \right)^2 $ \\ $+ \frac{1}{r_n} \Var\left( \sum_{i}\sum_{j\ne i} G_{ij}\left( c_1 \nu_i \nu_j + c_2 \nu_i \eta_j + c_3 \eta_i \eta_j \right) \right)$ $ \geq \underline{c}$.
\item $\frac{1}{r_n^2} \sum_{i} \left( \left(\sum_{j\ne i}G_{ij}R_{j}\right)^{4} + \left(\sum_{j\ne i}G_{ij}R_{\Delta j}\right)^{4} + 
\left(\sum_{j\ne i}G_{ji}R_{j}\right)^{4} + \left(\sum_{j\ne i}G_{ji}R_{\Delta j}\right)^{4} \right) \rightarrow0$.
\item $|| \frac{1}{r_n} G_{L}G_{L}^{\prime}||_{F}+|| \frac{1}{r_n} G_{U}G_{U}^{\prime}||_{F}\rightarrow 0$,
where $G_{L}$ is a lower-triangular matrix with elements $G_{L,ij}=G_{ij}1\left\{ i>j\right\} $
and $G_{U}$ is an upper-triangular matrix with elements $G_{U,ij}=G_{ij}1\left\{ i<j\right\} $.
\end{enumerate}
\end{assumption}

\Cref{asmp:normality} states high-level conditions that mimic EK18 so that a central limit theorem (CLT) can be applied. 
These conditions hence accommodate the $G$ that EK18 consider with covariates. 
Having bounded moments in (a) is standard.
Conditions (b) and (c) are sufficient to ensure that the variance is non-zero asymptotically.
In particular, (b) rules out perfect correlation: in the simulation, $\corr(\eta_i,\nu_i) = -1$ is the pathological case that makes the variance zero, but $\corr(\eta_i,\nu_i) =1$ still allows non-zero variance.
Conditions (d) and (e) ensure that the weights placed on the individual stochastic terms are not too large.
The condition that $r_n/\sqrt{K} \rightarrow \infty$ is implied by (e) when $G=P$: due to Lemma B3 of \citet{chao2012asymptotic}, under weak IV asymptotics where $P_{ii}\leq C<1$, we obtain $||G_{L}G_{L}^{\prime}||_{F} \leq C \sqrt{K}$.

Mechanically, if there is weak IV and fixed K, then $|| \frac{1}{r_n} G_{L}G_{L}^{\prime}||_{F} =  \frac{1}{K} O(\sqrt{K}) \ne o(1)$, so (e) fails when $r_n/\sqrt{K}$ does not diverge.
Notably, the conditions do not require $P_{ii} \rightarrow 0$ so the $\pi,\pi_{Y}$ coefficients need not be consistently estimated.

\begin{theorem} \label{thm:normality}
If \Cref{asmp:normality} holds and $\beta = \beta_0$, then $\hat{\beta}_{JIVE}-\beta_{JIVE} = T_{LM}/ T_{FS}$, and for $V = Var(T_{AR},T_{LM},T_{FS})$, 
\begin{equation} \label{eqn:thm1_distr}
V^{-1/2} \left(\begin{array}{c}
T_{AR} - \frac{1}{\sqrt{K}}\sum_{i}\sum_{j\ne i}G_{ij}R_{\Delta i}R_{\Delta j}\\
T_{LM}\\
T_{FS} -\frac{1}{\sqrt{K}}\sum_{i}\sum_{j\ne i}G_{ij}R_{i}R_{j}
\end{array}\right)\xrightarrow{d}N\left(\left(\begin{array}{c}
0\\
0\\
0
\end{array}\right),I_3\right).
\end{equation}
\end{theorem}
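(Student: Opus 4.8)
The plan is to handle the algebraic identity first and then prove the joint CLT by the Cram\'er--Wold device combined with a martingale central limit theorem. For the identity, note that the $1/\sqrt{K}$ factors cancel in $T_{LM}/T_{FS}$, and writing $e_i(\beta_0)=Y_i-X_i\beta_0$ gives $T_{LM}/T_{FS}=\left(\sum_i\sum_{j\ne i}G_{ij}Y_iX_j\right)/\left(\sum_i\sum_{j\ne i}G_{ij}X_iX_j\right)-\beta_0=\hat{\beta}_{JIVE}-\beta_0$ purely by arithmetic. Because $\beta=\beta_{JIVE}$ is the JIVE estimand, its defining first-order condition yields the orthogonality $\sum_i\sum_{j\ne i}G_{ij}R_{\Delta i}R_j=0$; combined with $\beta=\beta_0$ this identifies $\beta_0=\beta_{JIVE}$, so $\hat{\beta}_{JIVE}-\beta_{JIVE}=\hat{\beta}_{JIVE}-\beta_0=T_{LM}/T_{FS}$.

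Next I would substitute $e_i(\beta_0)=R_{\Delta i}+\nu_i$ and $X_i=R_i+\eta_i$ into each statistic and group the resulting bilinear forms by their degree in the mean-zero errors $(\nu_i,\eta_i)$. Each centered statistic then splits into a linear form $\tfrac{1}{\sqrt{K}}\sum_i(\alpha_i\nu_i+\delta_i\eta_i)$ with deterministic weights built from $G$, $R$, and $R_\Delta$, plus a degenerate bilinear form $\tfrac{1}{\sqrt{K}}\sum_i\sum_{j\ne i}G_{ij}(\cdot)$ in $\nu\nu$, $\nu\eta$, $\eta\eta$. Taking conditional expectations and using $E[\nu_i\mid Z,W]=E[\eta_i\mid Z,W]=0$ with independence across $i$ confirms that the constants subtracted in the theorem are exactly the means; in particular $E[T_{LM}]=\tfrac{1}{\sqrt{K}}\sum_i\sum_{j\ne i}G_{ij}R_{\Delta i}R_j=0$ by the same orthogonality, which is why $T_{LM}$ needs no recentering. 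Writing $\mu_{AR}:=\tfrac{1}{\sqrt{K}}\sum_i\sum_{j\ne i}G_{ij}R_{\Delta i}R_{\Delta j}$ and $\mu_{FS}:=\tfrac{1}{\sqrt{K}}\sum_i\sum_{j\ne i}G_{ij}R_iR_j$ fixes notation for the centered coordinates.

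By Cram\'er--Wold it then suffices to show that, for every fixed $(c_1,c_2,c_3)\ne 0$, the scalar $c_1(T_{AR}-\mu_{AR})+c_2T_{LM}+c_3(T_{FS}-\mu_{FS})$, standardized by its standard deviation, converges to $N(0,1)$. Matching coefficients, the $\nu_i$- and $\eta_i$-weights of its linear part are precisely the two bracketed expressions in condition (c) of \Cref{asmp:normality}, and its bilinear part is the one whose variance is the third term there; since every bilinear term carries an index that appears only once relative to any linear index, the linear--quadratic covariance vanishes by independence and mean-zero errors, so the total variance is additive and is lower-bounded by $\underline{c}\,r_n$. This nondegeneracy (together with condition (b), which keeps the $(\nu_i,\eta_i)$ covariance away from singularity) guarantees that $V$, scaled by $r_n/K$, has eigenvalues bounded away from zero, so $V^{-1/2}$ is well defined and the standardization is legitimate; the target $N(0,I_3)$ follows once each combination is shown to be asymptotically standard normal. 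To obtain the latter I would order the observations, set $\mathcal{F}_t=\sigma\{(\nu_s,\eta_s):s\le t\}$, and use the decomposition $G=G_L+G_U$ to write the combination as a sum $\sum_t\xi_t$ of martingale differences, where revealing index $t$ contributes its interaction with the already-revealed indices.

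The core of the argument is the martingale CLT (in the spirit of the quadratic-form CLTs of \citet{chao2012asymptotic} and EK18): I would verify that $\sum_t E[\xi_t^2\mid\mathcal{F}_{t-1}]$ converges in probability to the deterministic variance and that a conditional Lindeberg/Lyapunov condition holds. The first uses the bounded fourth moments of condition (a) together with condition (d), which forces the quartic contributions of the deterministic weights $\sum_{j\ne i}G_{ij}R_j$ and $\sum_{j\ne i}G_{ij}R_{\Delta j}$ to be negligible relative to $r_n^2$; the second is where condition (e) does the work, since $\|G_LG_L'\|_F$ and $\|G_UG_U'\|_F$ control the fourth-moment mass of the bilinear increments and rule out a dominating term. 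The hard part will be exactly this unified verification: because $r_n$ simultaneously covers the strong-identification regime (linear terms dominate) and the many-weak-instrument regime (bilinear terms dominate), the conditional-variance and Lindeberg bounds must be established with the same normalization across both regimes and for arbitrary $(c_1,c_2,c_3)$, which requires carefully bounding the fourth moments of the bilinear martingale increments by the Frobenius-norm quantities in (e). Once that bookkeeping is in place, the martingale CLT delivers $N(0,1)$ for each combination, and Cram\'er--Wold together with the nondegeneracy of $V$ yields the stated $N(0,I_3)$ limit.
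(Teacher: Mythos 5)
Your proposal is correct and takes essentially the same route as the paper's proof: a Cram\'er--Wold reduction, the decomposition of each centered statistic into a linear form plus a degenerate bilinear form in $(\nu_i,\eta_i)$, a martingale-difference construction via the triangular split $G=G_L+G_U$, and the martingale CLT, with \Cref{asmp:normality}(b)--(c) delivering nondegeneracy of the variance, (d) controlling the linear weights, and (e) controlling the bilinear increments through the Frobenius-norm bounds. Your treatment of the identity (pure arithmetic giving $T_{LM}/T_{FS}=\hat{\beta}_{JIVE}-\beta_0$, combined with $\beta_0=\beta_{JIVE}$ under the null and the orthogonality $\sum_{i}\sum_{j\ne i}G_{ij}R_{\Delta i}R_{j}=0$ used only for the centering $E[T_{LM}]=0$) is equivalent to the paper's substitution argument.
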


In \Cref{thm:normality}, $E[T_{FS}] = \frac{1}{\sqrt{K}}\sum_{i}\sum_{j\ne i}G_{ij}R_{i}R_{j}$ is the concentration parameter corresponding to the instrument strength.
In the model of \Cref{sec:challenges}, the mapping to the reduced-form $\pi$ can be found in \Cref{sec:sect2DGP}, so the concentration parameter is given by $E[T_{FS}] = \frac{1}{\sqrt{K}} \sum_k (c-1) \pi_k^2 = \frac{5}{8} \sqrt{K} (c-1) s^2$.\footnote{
This concentration parameter is comparable to the concentration parameter in just-identified IV.
With slight abuse of notation, suppose the just-identified IV model has a first stage equation with $X= Z \pi + v$ where $\pi = s$. 
Then, omitting variance normalizations, using the notation from \citet{lmmpy23}, the concentration parameter is $f_0 = \sqrt{n} s$, which determines if the TSLS estimator is consistent. 
In the L1O asymptotics, $\sqrt{K} (c-1)s^2 \approx ns^2/\sqrt{K}$ by using $n=(K+1)c$ and approximations $\sqrt{K/(K+1)} \approx 1$ and $(c-1)/\sqrt{c} \approx \sqrt{c}$. 
By comparing the L1O concentration parameter $ns^2/\sqrt{K}$ with the just-identified IV concentration parameter $f_0 = \sqrt{n} s$, I obtain the notions of weak identification in \Cref{remark:weakIV_def}.
}
If the instruments are strong, then $E[T_{FS}]\rightarrow\infty$, so $\hat{\beta}_{JIVE}-\beta_{JIVE}\xrightarrow{d}0$.
With weak IV, $E[T_{FS}]$ converges to some constant $C<\infty$, so comparing the JIVE $t$-statistic with the standard normal distribution leads to invalid inference even in large samples.

The asymptotic distribution follows from establishing a quadratic CLT that may be of independent interest: it is proven by rewriting the leave-one-out sums as a martingale difference array, and then applying the martingale CLT. 
While there are existing quadratic CLT available, they do not fit the context exactly. 
\citet{chao2012asymptotic} Lemma A2 requires $G$ to be symmetric, which works for $G=P$, but $G$ for UJIVE is not symmetric in general. 
EK18 Lemma D2 is established for scalar random variables, so I extend it to random vectors.

\Cref{thm:normality} states that $T_{LM}$ is mean zero and asymptotically normal. 
Hence, if we have access to the oracle variance of $T_{LM}$, we can simply use the statistic $T_{LM}/\sqrt{\Var(T_{LM})}$ for testing because it has a standard normal distribution under the null. 
Obtaining a consistent estimator is an issue addressed in the next subsection.


\subsection{Variance Estimation}

To test the null that $H_{0}:\beta=\beta_{0}$, we can calculate $T_{LM}$ using the null-imposed $\beta_{0}$ and an estimator for the variance of $\sqrt{K} T_{LM}$, $\hat{V}_{LM}$, defined later in this section. 
Then, reject if $KT_{LM}^{2}/\hat{V}_{LM}\geq\Phi\left(1-\alpha/2\right)^{2}$ for a size $\alpha$ test where $\Phi(.)$ is the standard normal CDF.
This procedure is valid when $T_{LM}$ is asymptotically normal with mean zero as we have established in the previous section, and when $\hat{V}_{LM}$ is consistent.

Before stating the variance estimator, I first decompose the variance expression in the equation below, which follows from substituting $e_i=R_{\Delta i} + \nu_i$ and $X_i = R_i + \eta_i$ into the variance.
For $V_{LM}:=\Var\left(\sum_{i}\sum_{j\ne i}G_{ij}e_{i}X_{j}\right)$,
\begin{equation} \label{lem:var_expr}
\begin{split}
V_{LM} & =\sum_{i}\sum_{j\ne i}\sum_{k\ne i}E\left[\nu_{i}^{2}\right]G_{ij}G_{ik}R_{j}R_{k}+\sum_{i}\sum_{j\ne i}G_{ij}^{2}E\left[\nu_{i}^{2}\right]E\left[\eta_{j}^{2}\right]+\sum_{i}\sum_{j\ne i}G_{ij}G_{ji}E\left[\eta_{i}\nu_{i}\right]E\left[\eta_{j}\nu_{j}\right]\\
 & \quad+2\sum_{i}\sum_{j\ne i}\sum_{k\ne i}E\left[\nu_{i}\eta_{i}\right]G_{ij}G_{ki}R_{j}R_{\Delta k}+\sum_{i}\sum_{j\ne i}\sum_{k\ne i}E\left[\eta_{i}^{2}\right]G_{ji}G_{ki}R_{\Delta j}R_{\Delta k}.
\end{split}
\end{equation}

With constant treatment effects, only the first line appears in the variance as $R_{\Delta} =0$. 
With $G=P$, the expression for $\Var\left(\sum_{i}\sum_{j\ne i}P_{ij}e_{i}X_{j}\right)$ matches the expression in EK18 Theorem 5.3, but their variance estimator cannot be used directly as they required consistent estimation of reduced-form coefficients. 
By adapting the leave-three-out (L3O) approach of \citet{anatolyev2023testing} (AS23), an unbiased and consistent variance estimator can be obtained. 
Intuitively, just as the own-observation bias in TSLS that involves a single sum can be addressed with L1O, an unbiased estimator for the variance expression that involves a triple sum can be obtained with L3O.
Let $\tau:=(\pi^\prime, \gamma^\prime)^\prime$ and $\tau_\Delta := ((\pi_Y-\pi \beta)^\prime, (\gamma_Y - \gamma \beta)^\prime)^\prime$ denote the coefficients on $Q$ when running the regression of $X$ and $e$ respectively. 
The variance estimator is:
\begin{equation} \label{eqn:Vhat_LM}
\hat{V}_{LM}:=A_1+A_2 +A_3 + A_4 +A_5 ,
\end{equation}
with
\begin{align*}
A_{1} & := \sum_i \sum_{j\ne i}\sum_{k\ne i}G_{ij}X_{j}G_{ik}X_{k}e_{i}\left(\beta_0\right)\left(e_{i}\left(\beta_0\right)-Q_{i}^{\prime}\hat{\tau}_{\Delta,-ijk}\right),\\
A_{2} & :=2\sum_i \sum_{j\ne i}\sum_{k\ne i}G_{ij}X_{j}G_{ki}e_{k}\left(\beta_0\right)e_{i}\left(\beta_0\right)\left(X_{i}-Q_{i}^{\prime}\hat{\tau}_{-ijk}\right),\\
A_{3} & :=\sum_i \sum_{j\ne i}\sum_{k\ne i}G_{ji}e_{j}\left(\beta_0\right)G_{ki}e_{k}\left(\beta_0\right)X_{i}\left(X_{i}-Q_{i}^{\prime}\hat{\tau}_{-ijk}\right),\\
A_{4} & :=-\sum_{i}\sum_{j\ne i}\sum_{k\ne j}G_{ji}^{2} X_{i}\check{M}_{ik,-ij}X_{k}e_{j}\left(\beta_0\right)\left(e_{j}\left(\beta_0\right)-Q_{j}^{\prime}\hat{\tau}_{\Delta,-ijk}\right),\\
A_{5} & :=-\sum_{i}\sum_{j\ne i}\sum_{k\ne j}G_{ij}G_{ji}e_{i}\left(\beta_0\right)\check{M}_{ik,-ij}X_{k}e_{j}\left(\beta_0\right)\left(X_{j}-Q_{j}^{\prime}\hat{\tau}_{-ijk}\right),
\end{align*}
where 
\begin{align*}
\hat{\tau}_{-ijk} & :=\left(\sum_{l\ne i,j,k}Q_{l}Q_{l}^{\prime}\right)^{-1}\sum_{l\ne i,j,k}Q_{l}X_{l},\\
\hat{\tau}_{\Delta,-ijk} & :=\left(\sum_{l\ne i,j,k}Q_{l}Q_{l}^{\prime}\right)^{-1}\sum_{l\ne i,j,k}Q_{l}e_{l}\left(\beta_0\right),\\
D_{ij} & :=M_{ii}M_{jj}-M_{ij}^{2}, \text{ and }\\
\check{M}_{ik,-ij} & :=\frac{M_{jj}M_{ik}-M_{ij}M_{jk}}{D_{ij}}=-Q_{i}^{\prime}\left(\sum_{l\ne i,j}Q_{l}Q_{l}^{\prime}\right)^{-1}Q_{k}.
\end{align*}

Following AS23, \Cref{asmp:L3O} ensures that the L3O estimator is well-defined.\footnote{
If these conditions are not satisfied, then we can follow the modification in AS23 so that the variance estimator is conservative.}
\begin{assumption} \label{asmp:L3O}
\begin{enumerate} [topsep=0pt,label=(\alph*)]
\item $\sum_{l\ne i,j,k}Q_{l}Q_{l}^{\prime}$ is invertible for every $i,j,k\in\left\{ 1,\cdots,n\right\} $. 
\item $\max_{i\ne j\ne k\ne i}D_{ijk}^{-1}=O_{P}(1)$, where $D_{ijk}:=M_{ii}D_{jk}-\left(M_{jj}M_{ik}^{2}+M_{kk}M_{ij}^{2}-2M_{jk}M_{ij}M_{ik}\right)$.
\end{enumerate}
\end{assumption}

\Cref{asmp:L3O}(a) corresponds to AS23 Assumption 1 and \Cref{asmp:L3O}(b) corresponds to AS23 Assumption 4. 
For consistent variance estimation, we additionally require regularity conditions that are stated in \Cref{asmp:Vconst_reg} of \Cref{sec:assumptions_inference}.
These conditions are satisfied when $G$ is a projection matrix. 
With these conditions, \Cref{thm:var_consistent} below claims that the variance estimator is consistent. 

\begin{theorem} \label{thm:var_consistent}
If $\beta = \beta_0$, Assumptions \ref{asmp:normality}-\ref{asmp:L3O} hold, and \Cref{asmp:Vconst_reg} in \Cref{sec:assumptions_inference} holds, then  $E\left[\hat{V}_{LM}\right]=V_{LM}$ and $\hat{V}_{LM}/V_{LM}\xrightarrow{p}1$.
\end{theorem}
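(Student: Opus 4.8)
The plan is to condition throughout on $Q=(Z,W)$, so that $R_i,R_{\Delta i}$ are nonrandom and all randomness comes from the independent mean-zero shocks $(\eta_i,\nu_i)$, and to prove the two claims in turn. First I establish exact unbiasedness $E[\hat V_{LM}]=V_{LM}$; then I upgrade it to ratio consistency. Given unbiasedness, $\hat V_{LM}/V_{LM}\xrightarrow{p}1$ follows from Chebyshev's inequality once I show $\Var(\hat V_{LM})/V_{LM}^2\to0$. Here I will use that \Cref{asmp:normality}(b)--(c) force $V_{LM}\ge\underline{c}\,r_n$ for some $\underline{c}>0$ (take $(c_1,c_2,c_3)=(0,1,0)$ in (c)), so that it suffices to prove $\Var(\hat V_{LM})=o(r_n^2)$.

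The algebraic workhorse for unbiasedness is the linear representation of the leave-three-out residuals. Writing $X_i=R_i+\eta_i$, $e_i(\beta_0)=R_{\Delta i}+\nu_i$ and using $E[\hat\tau_{-ijk}\mid Q]=\tau$, one obtains
\[
X_i-Q_i^\prime\hat\tau_{-ijk}=\eta_i-Q_i^\prime\Big(\sum_{l\ne i,j,k}Q_lQ_l^\prime\Big)^{-1}\sum_{l\ne i,j,k}Q_l\eta_l,
\]
and the analogous identity with $(\nu,\hat\tau_\Delta)$ in place of $(\eta,\hat\tau)$. The essential feature is that this residual is a linear combination of $\eta_i$ and of $\{\eta_l\}_{l\notin\{i,j,k\}}$, hence \emph{independent of the four shocks $(\eta_j,\eta_k,\nu_j,\nu_k)$}. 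I will also use two identities for the leave-two-out weights: $\check M_{ii,-ij}=1$, immediate from the ratio formula since its numerator is $M_{ii}M_{jj}-M_{ij}^2=D_{ij}$; and the reproducing identity $\sum_{k\ne i,j}\check M_{ik,-ij}R_k=-R_i$, which follows from $\check M_{ik,-ij}=-Q_i^\prime(\sum_{l\ne i,j}Q_lQ_l^\prime)^{-1}Q_k$ together with $R_k=Q_k^\prime\tau$.

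With these tools I compute each $E[A_m]$ by substituting the decompositions, factoring expectations across independent indices, and extracting a single second moment from each leave-out-residual factor (for instance $E[e_i(\beta_0)(e_i(\beta_0)-Q_i^\prime\hat\tau_{\Delta,-ijk})]=E[\nu_i^2]$). Tracking the diagonal $k=j$ contributions carefully, the three triple-sum estimators yield
\[
E[A_1]=T_1+T_2,\qquad E[A_2]=T_4+2T_3,\qquad E[A_3]=T_5+T_2,
\]
where $T_1,\dots,T_5$ denote the five summands of $V_{LM}$ in \eqref{lem:var_expr}, in the order written. Each of $A_1,A_2,A_3$ reproduces its intended triple-sum target but, through the $k=j$ diagonal, manufactures a spurious diagonal variance term. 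The correction terms $A_4,A_5$ are exactly what cancel the resulting double-counting: substituting $\check M_{ii,-ij}=1$ and using the reproducing identity to collapse the $R$-dependent pieces gives $E[A_4]=-T_2$ and $E[A_5]=-T_3$. Summing, $E[\hat V_{LM}]=(T_1+T_2)+(T_4+2T_3)+(T_5+T_2)+(-T_2)+(-T_3)=T_1+T_2+T_3+T_4+T_5=V_{LM}$.

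The principal obstacle is the variance bound $\Var(\hat V_{LM})=o(r_n^2)$. Because the leave-out residuals are \emph{linear} in the shocks, each summand of each $A_m$ is, after conditioning on $Q$, a degree-four monomial in $\{\eta_l,\nu_l\}$ with deterministic coefficients built from $G$, $R$, $R_\Delta$, and the leave-two-/leave-three-out projection weights. Expanding $\Var(\hat V_{LM})=E[\hat V_{LM}^2]-(E[\hat V_{LM}])^2$ therefore produces sums over as many as six free indices; by independence and mean-zero-ness only configurations in which every shock index is matched across the two copies survive, and the surviving groups are bounded using the fourth-moment bounds of \Cref{asmp:normality}(a), the boundedness of the leave-out weights guaranteed by \Cref{asmp:L3O} (via $\max D_{ijk}^{-1}=O_P(1)$ with invertibility from (a)), and the normalization conditions of \Cref{asmp:normality}(d)--(e) and \Cref{asmp:Vconst_reg}. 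Concretely, I would group the cross-terms by which indices coincide, bound each group by products of Frobenius-type norms of $G$ and of the $R,R_\Delta$ sequences entering $r_n$, and show each group is of smaller order than $r_n^2$; the hypotheses $\|r_n^{-1}G_LG_L^\prime\|_F+\|r_n^{-1}G_UG_U^\prime\|_F\to0$ and the fourth-power condition in \Cref{asmp:normality}(d) are precisely what drive these groups to zero. This mirrors the architecture of the AS23 consistency argument, but must be executed for a triple sum rather than a double sum, for the additional heterogeneity terms carrying $R_\Delta$, and for a possibly non-symmetric $G$; keeping every one of the many index-coincidence groups at order $o(r_n^2)$ is the crux of the proof.
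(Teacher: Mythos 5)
Your proposal is correct and takes essentially the same route as the paper: the unbiasedness half (L3O independence of $\hat{\tau}_{-ijk}$, the identities $\check{M}_{ii,-ij}=1$ and $\sum_{k\ne i,j}\check{M}_{ik,-ij}R_k=-R_i$, and the accounting in which the $A_4,A_5$ terms cancel the double-counted diagonal contributions of the first three terms) reproduces the paper's computation exactly, and the consistency half has the same architecture of exact unbiasedness, then Chebyshev, then $V_{LM}\succeq r_n$ (the paper derives this from \Cref{asmp:normality}(b) alone, you from (b) and (c) jointly---both are valid), then variance bounds for each $A$ term. The only shortfall is execution rather than ideas: the index-coincidence bounds you sketch in the final paragraph are precisely what the paper formalizes and proves as Lemmas~\ref{lem:a1}--\ref{lem:a3} (obtaining the stronger bound $\Var(\cdot)\leq C r_n$ for each constituent sum, driven by the sparsity conditions of \Cref{asmp:Vconst_reg} together with \Cref{asmp:normality}(a) and \Cref{asmp:L3O}, rather than by \Cref{asmp:normality}(d)--(e), which serve the CLT), and carrying out those bounds is the bulk of the paper's proof.
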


With many instruments and potentially many covariates, the reduced-form coefficients $\pi,\pi_{Y},\gamma,\gamma_Y$ are not consistently estimable. 
The usual approach to constructing variance estimators calculates residuals by using the estimated coefficients, but this approach no longer works when these estimated coefficients are inconsistent.
To be precise, applying Chebyshev's inequality for any $\epsilon>0$ yields:
\begin{equation} \label{eqn:L3O_chebyshev}
\Pr \left(\left|\frac{\hat{V}_{LM}-V_{LM}}{V_{LM}}\right|>\epsilon\right) \leq\frac{1}{\epsilon^{2}}\frac{ Var\left( \hat{V}_{LM} \right)}{V_{LM}^{2}} + \frac{1}{\epsilon^{2}}\frac{ \left( E \left[  \hat{V}_{LM} \right] - V_{LM} \right)^2 }{V_{LM}^{2}}.
\end{equation}
Without an unbiased estimator and when reduced-form coefficients cannot be consistently estimated, the second term in (\ref{eqn:L3O_chebyshev}) is not necessarily asymptotically negligible.
To overcome this problem, I use an unbiased variance estimator so that the second term is exactly zero.
Then, it suffices to show that the variance of individual components of the variance are asymptotically small compared to $V_{LM}^2$, so that the first term in (\ref{eqn:L3O_chebyshev}) is $o(1)$ by applying the Cauchy-Schwarz inequality.

To obtain an unbiased estimator, I use estimators for the reduced-form coefficients $\pi,\pi_{Y},\gamma,\gamma_Y$ that are unbiased and independent of objects that they are multiplied with.
The leave-three-out (L3O) approach has this unbiasedness property for linear regressions: when leaving three observations out in the inner-most sum of the $A$ expressions, the estimated coefficient $\hat{\tau}_{-ijk}$ is independent of $i,j,k$ and is unbiased for $\tau$. 
Then, when taking the expectation through a product of random variables of $i,j,k$ and $\hat{\tau}_{-ijk}$, $\tau$ can be used in place of the $\hat{\tau}_{-ijk}$ component, and the expectations of individual components can be isolated.
For instance,
\begin{equation}
\begin{split}
E&\left[\sum_{i}\sum_{j\ne i}\sum_{k\ne i,j}G_{ij}X_{j}G_{ik}X_{k}e_{i}\left(e_{i}-Q_{i}^{\prime}\hat{\tau}_{\Delta,-ijk}\right) \right] \\
 &=\sum_{i}\sum_{j\ne i}\sum_{k\ne i,j}G_{ij}E\left[X_{j}\right]G_{ik}E\left[X_{k}\right]E\left[e_{i}\left(e_{i}-Q_{i}^{\prime}\hat{\tau}_{\Delta,-ijk}\right)\right] =\sum_{i}\sum_{j\ne i}\sum_{k\ne i,j}G_{ij}R_{j}G_{ik}R_{k}E\left[\nu_{i}^{2}\right],
\end{split}
\end{equation}
which recovers the triple sums in the $V_{LM}$ expression of (\ref{lem:var_expr}).
Without leaving out observations $j$ and $k$, we would not be able to isolate $E[X_j]$ and $E[X_k]$ in the first equality.
Without leaving out observation $i$, we would not be able to isolate $\tau_{\Delta}$ on expectation to obtain $E[\nu_i^2]$ in the second equality.
An analogous argument applies to other components of $V$ in (\ref{eqn:thm1_distr}).
Assuming that the residuals have zero mean conditional on $Q$ is crucial: if we merely have $E[Q \zeta]=0$, this argument can no longer be applied.

\begin{remark}
While the proposed $\hat{V}_{LM}$ is motivated by AS23, the contexts and estimators are different. First, the statistic that we are estimating the variance for is different: AS23 demeaned their $\mathcal{F}$ statistic using $\hat{E}_\mathcal{F}$, where $\hat{E}_\mathcal{F}$ is estimated using L1O, so they are interested in the variance of $\mathcal{F} - \hat{E}_\mathcal{F}$ that is mean zero; I use a mean-zero L1O statistic directly in $T_{LM}$. Second, the expectation of their variance estimator takes the form of their (9), which is analogous to the sum of $A_1$ and $A_4$ using the notation above, so repeated applications of their estimator is insufficient to recover all five terms. Hence, to adjust for the $A_4$ and $A_5$ terms here, I additionally require another estimator, and its form is similarly motivated by a L3O reasoning.
\end{remark}

Inverting the test to obtain a confidence set is straightforward, as the test statistic $T_{LM}^2$ and variance estimator $\hat{V}_{LM}=B_{0}+B_{1}\beta_{0}+B_{2}\beta_{0}^{2}$ are quadratic in $\beta_0$ (for some $B_0,B_1,B_2$ that are functions of the data), so the confidence set is obtained by solving a quadratic inequality.\footnote{Details are relegated to the online appendix.}


\section{Power Properties} \label{sec:power}
This section characterizes power properties of the valid LM procedure. 
I first argue that we can restrict our attention to three statistics that are jointly normal by extending the argument from \citet{moreira2009maximum}.
Since the covariance matrix can be consistently estimated, the remainder of the section focuses on the 3-variable normal distribution with a known covariance matrix.
With this asymptotic distribution, I show that the two-sided LM test is the uniformly most powerful unbiased test within the interior of the parameter space. 

\subsection{Sufficient Statistics and Maximal Invariant} \label{sec:max_inv}

As is standard in the literature, I consider the canonical model without covariates where the reduced-form errors are normal and homoskedastic (e.g., \citet{andrews2006optimal, moreira2009maximum}). 
Suppose $\left(\eta,\zeta\right)$ in the model of \Cref{sec:setting} are jointly normal with known variance:
\begin{equation} \label{eqn:rf_hom_mod}
\left(\begin{array}{c}
\zeta_{i}\\
\eta_{i}
\end{array}\right)\sim N\left(0,\Omega\right)=N\left(0,\left[\begin{array}{cc}
\omega_{\zeta \zeta} & \omega_{\zeta\eta}\\
\omega_{\zeta\eta} & \omega_{\eta \eta}
\end{array}\right]\right).
\end{equation}
Define:
\[
\left(\begin{array}{c}
s_{1}\\
s_{2}
\end{array}\right):=\left(\begin{array}{c}
\left(Z^{\prime}Z\right)^{-1/2}Z^{\prime}Y\\
\left(Z^{\prime}Z\right)^{-1/2}Z^{\prime}X
\end{array}\right).
\]
I restrict attention to tests that are invariant to rotations of $Z$, i.e., transformations of the form $Z \rightarrow ZF^\prime$ where $F$ is a $K \times K$ orthogonal matrix.
In particular, an invariant test $\phi(s_1,s_2)$ is one for which $\phi(Fs_1, Fs_2) =\phi(s_1,s_2)$ for all $K \times K$ orthogonal matrices $F$.
If we focus on invariant tests, then the maximal invariant contains all relevant information from the data for inference.

Due to \citet{moreira2009maximum} Proposition 4.1, $\left(s_{1}^{\prime},s_{2}^{\prime}\right)^{\prime}$ are sufficient statistics for $\left(\pi_{Y}^{\prime},\pi^{\prime}\right)^{\prime}$.
Further, $\left(s_{1}^{\prime}s_{1},s_{1}^{\prime}s_{2},s_{2}^{\prime}s_{2}\right)$ is a maximal invariant, and 
\[
\left(\begin{array}{c}
s_{1}\\
s_{2}
\end{array}\right)\sim N\left(\left(\begin{array}{c}
\left(Z^{\prime}Z\right)^{1/2}\pi_{Y}\\
\left(Z^{\prime}Z\right)^{1/2}\pi
\end{array}\right),\Omega\otimes I_{K}\right).
\]
The maximal invariant $(s_{1}^{\prime}s_{1},s_{1}^{\prime}s_{2},s_{2}^{\prime}s_{2})$ is jointly normal with a mean that depends on $\Omega$ when $K\rightarrow \infty$.\footnote{
This result is stated in Online \Cref{sec:further_analytics_power}.
}
Extending the argument to allow for heterogeneous treatment effects, the object of interest is $\beta = \frac{\pi^\prime Z^\prime Z \pi_Y}{\pi^\prime Z^\prime Z \pi}$ (following EK18), which is invariant to rotations of the instrument.\footnote{
With rotation matrix $F^\prime$ such that $F^\prime F = I$, observe that $X= ZF^\prime F\pi + \eta$, so if we were to run the regression on $ZF^\prime$ instead of $Z$, we would obtain coefficients $F \pi$ instead of $\pi$. 
Then, the estimand is $ \frac{\pi^\prime F^\prime F Z^\prime Z  F^\prime F \pi_Y}{\pi^\prime F^\prime F Z^\prime Z F^\prime F \pi} =  \frac{\pi^\prime Z^\prime Z \pi_Y}{\pi^\prime Z^\prime Z \pi}$ as before.
}

To be robust to many instruments, heteroskedasticity, and non-normality, I use the leave-one-out (L1O) analog of the maximal invariant (following MS22; \citet{lim2024conditional}).\footnote{
With heteroskedasticity, the variances are not consistently estimable, so we cannot correct for the variances directly. 
These variances no longer feature in the L1O analog of the maximal invariant. 
} 
Without covariates such that $G=P$, the L1O analog $\frac{1}{\sqrt{K}} \sum_i \sum_{j\ne i} P_{ij} (Y_i Y_j, Y_i X_j, X_i X_j)$ is a linear transformation of $(T_{AR}, T_{LM}, T_{FS})$.\footnote{
To see that $\frac{1}{\sqrt{K}} \sum_i \sum_{j\ne i} P_{ij} (Y_i Y_j, Y_i X_j, X_i X_j)$ is a linear transformation, use the fact that $e= Y+ X\beta$. Then, $\frac{1}{\sqrt{K}} \sum_i \sum_{j\ne i} P_{ij} ((e_i+X_i\beta) (e_j+X_j\beta), (e_i+X_i\beta) X_j, X_i X_j) = (T_{AR} + 2 T_{LM} \beta + T_{FS}\beta^2, T_{LM} - T_{FS}\beta, T_{FS})$.}
In the remainder of this section, I focus on testing the null that $\beta_0 =0$ so $e\left( \beta_0 \right)= Y$ and the L1O of the maximal invariant is exactly $(T_{AR}, T_{LM}, T_{FS})$.
The results are generalized in the appendix.

The asymptotic problem involving $(T_{AR}, T_{LM}, T_{FS})$ is: 
\begin{equation} \label{eqn:asymp_prob_e}
\left(\begin{array}{c}
T_{AR}\\
T_{LM}\\
T_{FS}
\end{array}\right)\sim N\left(\mu,\Sigma\right),\mu=\left(\begin{array}{c}
\frac{1}{\sqrt{K}}\sum_{i}\sum_{j\ne i}P_{ij}R_{\Delta i}R_{\Delta j}\\
\frac{1}{\sqrt{K}}\sum_{i}\sum_{j\ne i}P_{ij}R_{\Delta i}R_{j}\\
\frac{1}{\sqrt{K}}\sum_{i}\sum_{j\ne i}P_{ij}R_{i}R_{j}
\end{array}\right),\Sigma=\left(\begin{array}{ccc}
\sigma_{11} & \sigma_{12} & \sigma_{13}\\
\cdot & \sigma_{22} & \sigma_{23}\\
\cdot & \cdot & \sigma_{33}
\end{array}\right).
\end{equation}
While $\mu_2 =0$ under the null, $\mu_2$ may not be zero under the alternative.
There are several restrictions in the $\mu$ vector, which is assumed to be finite.
Since $P$ is a projection matrix, $\sum_i \sum_{j \ne i} P_{ij} R_{i} R_j = \sum_i R_i (\sum_{j } P_{ij}  R_j - P_{ii} R_i) = \sum_i  M_{ii} R_i^2$. 
Since the annihilator matrix $M$ has positive entries on its diagonal, we obtain $\mu_3 \geq 0$ and a similar argument yields $\mu_1 \geq 0$.
With $\mu_2 = \sum_i \sum_{j \ne i} P_{ij} R_{\Delta i} R_j = \sum_i M_{ii} R_{\Delta i} R_i$, the Cauchy-Schwarz inequality implies $\mu_2^2 \leq \mu_1 \mu_3$.
Constant treatment effects implies $\mu_2^2 = \mu_1 \mu_3$, which is a special case of the environment here.
Even with covariates, if the regression is fully saturated with $G$ given by UJIVE, the same inequality restrictions hold.\footnote{
See \Cref{prop:UJIVE_restr} in Online \Cref{sec:further_analytics_power}.
} 
These properties do not contradict the joint normality: even though $\mu_3\geq 0$, $T_{FS}$ can still be negative when using the L1O statistic. 
The inequalities $\mu_1, \mu_3 \geq0$ and $\mu_2^2 \leq \mu_1 \mu_3$ are also the \emph{only} restrictions on $\mu$, as it can be shown that there exists a structural model where there are no further restrictions.\footnote{
Online \Cref{sec:details_power} establishes that there exists a structural model where $\Sigma$ is uninformative about $\mu$, and $\mu_1, \mu_3 \geq0$. 
Since the model in \Cref{sec:challenges} is binary, it is insufficient for such a general result, and a continuous $X$ is required. 
While the result establishes that there exists a structural model where there are no further restrictions, for any given structural model, there can still be further restrictions.
}

\subsection{Optimality Result}


With a size $\alpha$ test, the two-sided LM test against the alternative that $\mu_2 \ne 0$ rejects when $T_{LM}^2/\Var(T_{LM}) > \Phi(1-\alpha/2)^2$.
I consider the benchmark of a uniformly most powerful unbiased test (e.g., \citet{lehmann2005testing, moreira2009tests}).
\begin{proposition} \label{prop:umpu}
Consider a restriction of the alternative $\mu$ space to the interior i.e., $\mu_1, \mu_3 >0$ and $\mu_2^2 < \mu_1 \mu_3$. 
Then, within the class of tests that are functions of $(T_{AR}, T_{LM}, T_{FS})$, the two-sided LM test is the uniformly most powerful unbiased test for testing $H_0: \mu_2 =0$ against $H_1: \mu_2 \ne 0$ in the asymptotic problem of (\ref{eqn:asymp_prob_e}).
\end{proposition}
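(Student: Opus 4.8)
The plan is to recognize the asymptotic problem \eqref{eqn:asymp_prob_e} as a full-rank three-parameter exponential family and to apply the classical theory of two-sided unbiased tests in exponential families (e.g.\ \citet{lehmann2005testing}, Ch.~4). Writing the $N(\mu,\Sigma)$ density of $(T_{AR},T_{LM},T_{FS})$ in exponential form, the natural parameter is $\theta=\Sigma^{-1}\mu$ and $(T_{AR},T_{LM},T_{FS})$ is the minimal (complete) sufficient statistic; since $\Sigma$ is known and positive definite, $\theta$ ranges over all of $\R^3$. The first step is to reparametrize so that the hypothesized quantity $\mu_2$ becomes a single natural parameter. Setting $\psi_2:=\mu_2$ and retaining $\theta_1,\theta_3$ (so that $\theta_2=(\psi_2-\sigma_{12}\theta_1-\sigma_{23}\theta_3)/\sigma_{22}$), the linear form $\theta^\prime T$ rewrites as $\psi_2 S_2+\theta_1 S_1+\theta_3 S_3$, where $S_2:=T_{LM}/\sigma_{22}$, $S_1:=T_{AR}-(\sigma_{12}/\sigma_{22})T_{LM}$, and $S_3:=T_{FS}-(\sigma_{23}/\sigma_{22})T_{LM}$. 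Because $S_1$ and $S_3$ are exactly the population residuals of $T_{AR}$ and $T_{FS}$ projected on $T_{LM}$, they are uncorrelated with $T_{LM}$, and hence---by joint normality together with the fact that $\Sigma$ is fixed---globally independent of $S_2$ for every $\mu$.

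The second step applies the UMPU characterization to $H_0:\psi_2=0$ (equivalently $\mu_2=0$) against $\psi_2\ne0$. The restriction to the interior ($\mu_1,\mu_3>0$, $\mu_2^2<\mu_1\mu_3$) enters precisely here: it guarantees that the relevant null and alternative parameter points map under $\Sigma^{-1}$ to an open subset of the natural-parameter space, so that the differentiability-of-power and completeness arguments underlying the two-sided UMPU construction apply with no boundary complications. By that theory, the UMPU test rejects for extreme values of $S_2$ conditional on the nuisance-sufficient statistic $(S_1,S_3)$, with cutoffs $c_1<c_2$ determined by the similarity condition $E_{0}[\phi\mid S_1,S_3]=\alpha$ and the unbiasedness (zero-derivative) condition $E_{0}[\phi\,S_2\mid S_1,S_3]=\alpha\,E_0[S_2\mid S_1,S_3]$.

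The third step evaluates this conditional test using the independence established in step one. Since $S_2\perp(S_1,S_3)$ globally, the conditional null law of $S_2$ equals its marginal law $N(0,1/\sigma_{22})$, which is symmetric about $0$, and $E_0[S_2\mid S_1,S_3]=0$. Consequently the symmetric region $\{|S_2|>c\}$ satisfies the similarity condition and, because $s\mapsto s\,\indic\{|s|>c\}$ is odd, also satisfies the unbiasedness condition; uniqueness of the solution to the two conditions then identifies this as the UMPU region. Translating back, $|S_2|>c$ is equivalent to $T_{LM}^2/\sigma_{22}=T_{LM}^2/\Var(T_{LM})>\Phi(1-\alpha/2)^2$, which is exactly the two-sided LM test.

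I expect the main obstacle to lie not in the algebra but in carefully verifying the hypotheses of the exponential-family UMPU theorem: establishing completeness of $(S_1,S_3)$ for the nuisance parameters under $H_0$ and justifying the reduction from the conditional to the marginal test, which is where the global independence $S_2\perp(S_1,S_3)$ (a consequence of $\Sigma$ being known and of the residual construction) does the essential work. A secondary delicate point is articulating why the interior restriction is genuinely needed---namely to keep the reparametrized natural parameter in an open set so that the unbiasedness constraint is the correct local optimality condition; on the boundary $\mu_2^2=\mu_1\mu_3$ the parameter set is degenerate and the two-sided UMPU characterization no longer applies verbatim.
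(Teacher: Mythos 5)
Your overall strategy coincides with the paper's: both arguments cast the problem of (\ref{eqn:asymp_prob_e}) as a multiparameter exponential family, orthogonalize the nuisance components against $T_{LM}$ (your $(S_1,S_2,S_3)$ is, up to scaling, the paper's linear transformation $L$ from \citet{lehmann2005testing} Example 3.9.2), invoke the \citet{lehmann2005testing} Theorem 4.4.1(iv) characterization of UMPU tests, and then use the global independence of the orthogonalized LM statistic from the nuisance statistics, together with the symmetry of its null law, to reduce the conditional two-sided test to the marginal test $T_{LM}^2/\Var(T_{LM})>\Phi(1-\alpha/2)^2$.

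There is, however, a genuine gap at the point where you invoke that machinery. You assert that ``since $\Sigma$ is known and positive definite, $\theta=\Sigma^{-1}\mu$ ranges over all of $\R^3$.'' This is false: the proposition restricts $\mu$ to $\mathcal{M}=\{\mu:\mu_1>0,\ \mu_3>0,\ \mu_2^2<\mu_1\mu_3\}$, so the natural parameter ranges only over $\Sigma^{-1}\mathcal{M}$, a proper subset of $\R^3$. The Lehmann--Romano theory you appeal to assumes at the outset (their Section 4.4) that the parameter space is \emph{convex} and of full dimension; this hypothesis is what legitimizes the similarity, Neyman-structure, and generalized Neyman--Pearson steps. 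Because $\Sigma^{-1}\mathcal{M}$ is a linear image of $\mathcal{M}$, what must actually be proved is that $\mathcal{M}$ itself is convex, and this is not automatic: it is the content of the first half of the paper's proof, which verifies
\begin{equation*}
\left(\lambda\mu_{2a}+\left(1-\lambda\right)\mu_{2b}\right)^{2}<\left(\lambda\mu_{1a}+\left(1-\lambda\right)\mu_{1b}\right)\left(\lambda\mu_{3a}+\left(1-\lambda\right)\mu_{3b}\right)
\end{equation*}
for any two points of $\mathcal{M}$ via Cauchy--Schwarz-type bounds (equivalently, $\mathcal{M}$ is the interior of the cone of positive definite symmetric $2\times2$ matrices). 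Your remark that the interior restriction keeps parameters ``in an open subset'' gestures at the right issue but does not supply this verification; as written, your application of the UMPU theorem rests on a false premise, and the missing convexity argument is the step that must be added. Once convexity (and the resulting full dimensionality of the null nuisance-parameter set, which gives the completeness you flag as an obstacle) is in place, the remainder of your argument --- independence of $S_2$ from $(S_1,S_3)$, symmetric cutoffs, and the identification with the two-sided LM test --- goes through exactly as in the paper.
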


The argument for optimality applies a standard optimality result from \citet{lehmann2005testing} on the exponential family, which includes the normal distribution.
To apply the \citet{lehmann2005testing} result, we require a convex parameter space and the the existence of alternative values above and below the null value.
It can be verified that the restricted parameter space is still convex, and the restriction to the interior ensures the latter condition is satisfied.
The proposition claims optimality within the class of unbiased tests, and makes no statement about tests that are biased (i.e., where the power somewhere in the alternative space can be lower than the size).

\begin{remark}
Beyond the two-sided UMPU result, we may also consider other power properties. 
The one-sided LM test is shown to be the most powerful test against a particular subset of the alternative space.
Numerically, using a covariance matrix calibrated from an empirical application, the power of the two-sided LM test is also close to that of the nearly optimal test against a weighted average over a grid of alternative values, constructed using the algorithm from \citet{elliott2015nearly}.
Details are in Online \Cref{sec:more_power_curves}.
\end{remark}

Studying optimality in the over-identified IV environment has thus far been complicated.
With constant treatment effects, both $s_1^\prime s_1$ and $s_1^\prime s_2$ are informative of the object of interest $\beta$, because constant treatment effects implies $\mu_1 = \beta^2 \mu_3$ in addition to $\mu_2 = \beta \mu_3$.
However, once we impose $\mu_1>0$ under the null that $\beta=0$, we rule out constant treatment effects by focusing on the interior of the alternative space.
Then, the statistic associated with $\mu_1$ is no longer directly informative of $\beta$.
Imposing heterogeneity is hence the key to obtaining this UMPU result. 



\section{Simulations} \label{sec:app_sim}
This section focuses on the simple example from \Cref{sec:challenges}.
I report two sets of simulations that assess the size and one that assesses power. 
One set of size simulations uses a large $K$ while the other a small $K$.
Robustness checks that involve different data generating processes are relegated to the online appendix.\footnote{There are more simulation results using several different structural models in \Cref{sec:further_sim}, including settings with continuous treatment $X$, and with covariates.
The results are qualitatively similar in those simulations, suggesting that the numerical findings are not unique to the data-generating process chosen.}

\Cref{tab:sim_mte_base} in \Cref{sec:challenges} reports rejection rates under the null for a relatively large number of judges with $K=400$, each with a small number of cases at $c=5$.
L3O performs well across various designs, while existing procedures can substantially over-reject in at least one design. 
The LMorc column is included as an infeasible theoretical benchmark that uses an oracle variance.
The difference between LMorc and L3O is attributed to the variance estimation procedure. 

\Cref{tab:sim_mte_fixedK} reports rejection rates under the null for a small number of judges with $K=4$ and a large number of cases at $c=200$. 
Based on the theory in \Cref{sec:valid_inference}, L3O should be valid when the instrument is strong, i.e., in the cases with $E[T_{FS}] = .5c$, which is what we observe.
Notably, even when $E[T_{FS}] = 2$ or $E[T_{FS}]=0$, the over-rejection for L3O is not too severe. 
EK performs very well in the cases with $E[T_{FS}]=.5c$ as expected in their theory.
In contrast, MS and MO can over-reject severely with strong heterogeneity, even when instruments are strong. 

\Cref{tab:sim_mte_power} reports rejection rates under the alternative. 
When $E[T_{FS}]=0$, the instrument should be completely uninformative about the true parameter, so we should have 0.05 rejection rate for a valid test, which is what we observe for L3O.
When $E[T_{FS}]=2\sqrt{K}$, all procedures, including L3O, are very informative.
Considering the designs with $E[T_{AR}]=0$ is most interesting, because this is an environment where MS and MO are valid, and the theoretical optimality result excludes this case. 
Looking at the case with $E[T_{AR}]=0, E[T_{FS}]=2$, L3O is less powerful than MS and MO in small samples, but the loss is less than 7 percentage points.

\begin{table}
    \centering
    \caption{Rejection rates under the null for nominal size 0.05 test}
    \label{tab:sim_mte_fixedK}
    
\begin{tabular}{ll|rrrrrrrrr}
\toprule
\multicolumn{2}{c|}{Designs} & \multicolumn{9}{c}{Procedures}\\
$E[T_{AR}]$ & $E[T_{FS}]$ & TSLS & EK & MS & MO & $\tilde{X}$-t & $\tilde{X}$-AR& L3O & LMorc & ARorc\\
\midrule
&.5c  & 0.210 & 0.049 & 1.000 & 0.212 & 0.217 & 0.217 & 0.048 & 0.052 & 1.000\\
.5c & 2 & 0.642 & 0.018 & 1.000 & 0.806 & 0.269 & 0.816 & 0.043 & 0.049 & 1.000\\
&0 & 0.498 & 0.005 & 1.000 & 0.881 & 0.330 & 0.893 & 0.062 & 0.051 & 1.000\\
\midrule
&.5c & 0.075 & 0.064 & 1.000 & 0.075 & 0.073 & 0.077 & 0.063 & 0.061 & 0.913\\
2&2 & 0.462 & 0.013 & 0.999 & 0.436 & 0.296 & 0.516 & 0.095 & 0.049 & 0.931\\
&0 & 0.440 & 0.008 & 1.000 & 0.448 & 0.337 & 0.576 & 0.088 & 0.052 & 0.934\\
\midrule
&.5c& 0.052 & 0.048 & 0.061 & 0.045 & 0.050 & 0.046 & 0.046 & 0.048 & 0.069\\
0&2 & 0.376 & 0.088 & 0.075 & 0.044 & 0.238 & 0.123 & 0.101 & 0.045 & 0.071\\
&0 & 0.590 & 0.181 & 0.076 & 0.029 & 0.431 & 0.163 & 0.075 & 0.045 & 0.080\\
\bottomrule
\end{tabular}

    \justifying \small
    Notes: $K=4,c=200$. Designs and procedures are otherwise identical to \Cref{tab:sim_mte_base}.
\end{table}

\begin{table}
    \centering
    \caption{Rejection rates under the alternative for nominal size 0.05 test}
    \label{tab:sim_mte_power}
    
\begin{tabular}{ll|rrrrrrrrr}
\toprule
\multicolumn{2}{c|}{Designs} & \multicolumn{9}{c}{Procedures}\\
$E[T_{AR}]$ & $E[T_{FS}]$ & TSLS & EK & MS & MO & $\tilde{X}$-t & $\tilde{X}$-AR& L3O & LMorc & ARorc\\
\midrule
&$2 \sqrt{K}$ & 1.000 & 1.000 & NaN & 1.000 & 1.000 & 1.000 & 1.000 & 1.000 & 1.000\\
$2 \sqrt{K}$&$2$ & 0.310 & 0.173 & NaN & 0.539 & 0.117 & 0.563 & 0.240 & 0.225 & 1.000\\
&$0$ & 0.722 & 0.029 & NaN & 0.291 & 0.055 & 0.309 & 0.048 & 0.055 & 1.000\\
\midrule
&$2 \sqrt{K}$ & 1.000 & 1.000 & 1.000 & 1.000 & 1.000 & 1.000 & 1.000 & 1.000 & 1.000\\
2&$2 $ & 0.244 & 0.543 & 1.000 & 0.886 & 0.163 & 0.907 & 0.789 & 0.823 & 1.000\\
&$ 0$ & 0.998 & 0.090 & 1.000 & 0.157 & 0.169 & 0.221 & 0.073 & 0.054 & 1.000\\
\midrule
&$2 \sqrt{K}$ & 1.000 & 1.000 & 1.000 & 1.000 & 1.000 & 1.000 & 1.000 & 1.000 & 1.000\\
0&$2$ & 0.259 & 0.662 & 1.000 & 0.967 & 0.230 & 0.978 & 0.936 & 0.961 & 1.000\\
&$0$ & 1.000 & 0.373 & 0.059 & 0.048 & 0.333 & 0.117 & 0.067 & 0.055 & 0.054\\
\bottomrule
\end{tabular}

    \justifying \small
    Notes: $K=100,\beta=0.1,c=5$. Designs and procedures are otherwise identical to \Cref{tab:sim_mte_base}.
\end{table}

\section{Empirical Applications} \label{sec:empirical_applications}
\subsection{Returns to Education} \label{sec:returns_education}
\citet{angrist1991does} were interested in the impact of years of education (X) on log weekly wages (Y). 
They instrument for education using the quarter of birth (QOB).
I implement UJIVE using full interaction of QOB with the state of birth and year of birth (resulting in 1530 instruments) without other controls, which is similar to Table VII(2) of \citet{angrist1991does} that uses the same set of controls but without full saturation. 
The implementation here differs from the implementation of MS and MO in that I do not linearly partial out other covariates, but merely saturate on state and year of birth. 
This implementation is motivated by recent econometric research (e.g., \citet{blandhol2022tsls, sloczynski2020should}) that argue that the standard interpretation of estimands as a weighted average of LATE's is only retained with some parametric assumptions or when the specification controls for covariates richly, which can be achieved with full saturation.\footnote{
A further advantage of this implementation is that the code is fast: when $G$ is block-diagonal, it suffices to loop over blocks.
}
To ensure that the different procedures are directly comparable, I adapt the MS and MO inference procedures to target UJIVE, so that the estimand is the same across all procedures and differing results can be attributed purely to inference.
TSLS is consequently not a meaningful comparison as the estimand is different from the others.

The results are reported in \Cref{tab:ak91}. 
In addition to the aforementioned procedures, I include results from implementing the procedure in \citet{crudu2021inference} (CMS) that uses the $T_{AR}$ statistic like MS22, but uses a plug-in variance estimator like MO22.\footnote{
MS22 use a cross-fit variance estimator, while they refer to the CMS variance estimator as the ``naive" variance estimator.
MS22 argue that their cross-fit variance is more powerful, which corroborates how MS has a bounded confidence set while CMS does not.
}
Being robust to weak and many IV results in L3O having a longer confidence interval than EK. 
With full saturation, CMS and MO yield unbounded confidence sets, while L3O yields a bounded confidence set, showing how robustness to heterogeneity changes the shape of the confidence set in this context. 
The shape of the confidence set depends on the coefficient on $\beta_{0}^2$. 
In particular, for $\Psi_{2}:= \frac{1}{K}\sum_{i}\left(\sum_{j\ne i}G_{ij}X_{j}\right)^{2}X_{i}^{2}+\frac{1}{K}\sum_{i \ne j}^n G_{ij}^{2}X_{i}^{2}X_{j}^{2}$, MO is unbounded when $T_{FS}^{2}-q\Psi_{2}<0$ and L3O is unbounded when $T_{FS}^{2}-qB_{2}<0$, where $q$ is 3.84 for a 5\% test and $B_2$ is the coefficient on $\beta_0^2$ in the expression of $\hat{V}_{LM}$.
Consequently, in this application, we can think of $T_{FS}^2/ \Psi_2 = 0.102$ and $T_{FS}^2/ B_2 = 11.8$ as first-stage statistics for MO and L3O respectively that determine whether the confidence sets are bounded.\footnote{
These statistics are ``F" statistics with different variance estimators, suggesting that the instruments are meaningfully weak. 
The MS and MO variance estimators converge to the same object under weak identification such that $\frac{1}{K}\sum_{i} \sum_{j \ne i} G_{ij} R_i R_j \rightarrow 0$, which is not imposed by the asymptotic regime in this paper.  
}
Analogously, when solving a quartic equation in CMS, an unbounded set occurs as $T_{FS}^2/ \left( \frac{2}{K} \sum_{i \ne j}^n G_{ij}^{2}X_{i}^{2}X_{j}^{2} \right) = 0.0545$, where the denominator is their coefficient on $\beta_0^4$ in their variance estimator. 
In contrast, the MS confidence set is bounded with $T_{FS}^2/ \left( \frac{2}{K} \sum_{i \ne j}^n \frac{G_{ij}^{2}}{M_{ii} M_{jj} + M_{ij}^2} X_{i}^{2}X_{j}^{2} \right) = 23.9$.

Due to the $\check{M}$ terms in the L3O expression, it is difficult to compare the estimates directly. 
However, it is possible to compare the estimands of these coefficients in the judge example without covariates:
\begin{equation}
E\left[K\Psi_{2}\right]-E\left[B_{2}\right] = \sum_{i}M_{ii}R_{i}^{2}\left(R_{i}^{2}-3\left(1-2P_{ii}\right)E\left[\eta_{i}^{2}\right]\right).
\end{equation}
If $R_{i}^{2}>3\left(1-2P_{ii}\right)E\left[\eta_{i}^{2}\right]$, then MO is more likely unbounded. 
Intuitively, the MO estimator contains additional products of $R$ that are not present in the true variance, and there are products of $R$ and the error present in the true variance that MO does not account for, motivating the aforementioned difference. 
We can interpret this condition as MO being more likely unbounded when the signal-to-noise ratio $R_i^2/E\left[\eta_{i}^{2}\right]$ is sufficiently large.
In this application, by observing that the first-stage statistic of L3O is an order of magnitude larger than that of MO (i.e., $B_2$ is an order of magnitude smaller), and by comparing the CMS and MO first-stage statistics, there is evidence that $R_i^2/E\left[\eta_{i}^{2}\right]$ is large.\footnote{
Comparing the statistics between MO and MS implies $\frac{1}{K}\sum_{i}\left(\sum_{j\ne i}G_{ij}X_{j}\right)^{2}X_{i}^{2} < \frac{1}{K} \sum_{i} \sum_{j \ne i} G_{ij}^{2}X_{i}^{2}X_{j}^{2}$, which can equivalently be written as $\sum_i \sum_{j \ne i} \sum_{k \ne i,j} G_{ij} G_{ik} X_j X_k X_i^2 <0$.
With full saturation, observations $i,j$ have $G_{ij} <0 $ when they are in the same covariate group but have different instrument values.
Under the MS and MO asymptotic regimes where $\frac{1}{K} \sum_i \sum_{j \ne i} G_{ij} R_i R_j \rightarrow 0$ so $R_i^2/ E[\eta_i^2]$ is negligible, we obtain $\sum_{i} \sum_{j\ne i} G_{ij}^2 E[X_i^2] E[X_j^2]= \sum_{i} \sum_{j\ne i} G_{ij}^2  \left( R_i^2 + E[\eta_i^2] \right) \left( R_j^2 + E[\eta_j^2] \right) = \sum_{i} \sum_{j\ne i} G_{ij}^2 E[\eta_i^2] E[\eta_j^2] + o(1)$, and $\frac{1}{K}\sum_{i} \sum_{j\ne i} \sum_{k \ne i,j} G_{ij} G_{ik} E\left[ X_j X_k X_i^2 \right]  = \frac{1}{K}\sum_{i} \sum_{j\ne i} \sum_{k \ne i,j}  G_{ij} G_{ik} R_j R_k \left( R_i^2 + E[\eta_i^2] \right) = o(1)$ is asymptotically negligible.
Since the difference between MS and MO is the same magnitude as the MS statistic, $\frac{1}{K}\sum_{i} \sum_{j\ne i} \sum_{k \ne i,j} G_{ij} G_{ik} E\left[ X_j X_k X_i^2 \right]$ is of similar order as $\sum_{i} \sum_{j\ne i} G_{ij}^2 E[X_i^2] E[X_j^2]$, so $R_i^2 / E[\eta_i^2]$ is non-negligible in this application.
}
This result does not depend on heterogeneity, because the coefficient of $\beta_0^2$ depends only on how $X$ is combined in the variance estimator.

\begin{table}
\caption{95\% Confidence Sets for Returns to Education} \label{tab:ak91}
\centering
\begin{tabular}{lrrrrrrr}
\toprule
   & EK & CMS & MS & MO  & $\tilde{X}$-t & $\tilde{X}$-AR & L3O\\
\midrule
LB  & 0.033 & -$\infty$  & 0.019 &  -$\infty$ & 0.027 & 0.027 & 0.022\\
UB  & 0.173 & $\infty$  &  0.305 & $\infty$ & 0.179 & 0.189 & 0.210\\
Estimate  & 0.103 & 0.103 & 0.103& 0.103 & 0.103 & 0.103 & 0.103\\
CIlength & 0.140 & $\infty$ & 0.286 & $\infty$ & 0.152 & 0.161 & 0.188\\
\bottomrule
\end{tabular} \\
\justifying \small
Notes: Estimate reports the UJIVE. CMS implements the procedure from \citet{crudu2021inference}: use $T_{AR}$ with a plug-in variance.
Procedures are otherwise identical to \Cref{tab:sim_mte_base}.

\end{table}

\begin{table}
\caption{95\% Confidence Sets for Misdemeanor Prosecution} \label{tab:misdemeanor}
\centering
\begin{tabular}{lrrrrrrr}
\toprule
 & EK & CMS & MS & MO  & $\tilde{X}$-t & $\tilde{X}$-AR & L3O\\
\midrule
LB& -0.151 & $\emptyset$ & $\emptyset$ & -0.220 & -0.187 & -0.188 & -0.201\\
UB & -0.076 & $\emptyset$ & $\emptyset$ & -0.019 & -0.039 & -0.038 & -0.028\\
Estimate  & -0.113 & -0.113 & -0.113 & -0.113 & -0.113 & -0.113 & -0.113\\
CIlength  & 0.075 & $\emptyset$ & $\emptyset$ & 0.201 & 0.148 & 0.150 & 0.173\\
\bottomrule
\end{tabular} \\
\justifying \small
Notes: Procedures are identical to \Cref{tab:ak91}.
\end{table}

\subsection{Misdemeanor Prosecution}
\citet{agan2023misdemeanor} were interested in the effect of misdemeanor prosecution (X) on criminal complaint in two years (Y). 
They instrument for misdemeanor prosecution using the assistant district attorneys (ADAs) who decide if a case should be prosecuted in the Suffolk County District Attorney's Office in Massachusetts.
As \citet{agan2023misdemeanor} argued that as-if randomization holds conditional on court-by-time controls and that individual covariates are not required for relevance or exogeneity to hold in this context, the confidence set is constructed using full saturation of court-by-year and court-by-day-of-week fixed effects with no other controls for individual covariates. 

As reported in \Cref{tab:misdemeanor}, with full saturation, the UJIVE is $-0.11$, so not prosecuting decreases the probability of criminal involvement by 11 percentage points.\footnote{This result is smaller than $-0.36$ reported in their Table III(3) that uses TSLS with a leniency measure.
The result is more similar to the UJIVE robustness check in their Table A.1(5) of $-0.15$ with full saturation of the instrument, but their specification includes case/ defendant covariates, which results in a different estimator.
} 
The L3O confidence interval (CI) is more than twice that of EK: unlike \Cref{sec:returns_education} where $n/K = 221$, we have $n/K =11.9$ here, so a variance estimator that is robust to many IV has a larger impact on CI.
MS has an empty confidence set while L3O has a bounded set, showing how being robust to heterogeneity can change conclusions.\footnote{
An empty confidence set using the AR procedure also suggests that the model with constant treatment effects is rejected, so there is meaningful heterogeneity. 
Since the variances of MO and L3O converge to the same object under homogeneity, the difference between MO and L3O confidence sets also suggests that there is heterogeneity.
}
Mechanically, the confidence set for MS solves a quartic equation, so an empty set can occur, but it is difficult to characterize when this phenomenon occurs in general.

The L3O CI is also shorter than MO, so being robust to heterogeneity decreases the length of the CI. 
Considering how the length of the MO confidence set is longer than L3O while being oversized in simulations, there is a question of when MO is conservative. 
While it is difficult to compare the confidence intervals or variance estimators directly, it is possible to compare the null-imposed variance estimands in the judge example without covariates. 
It can be shown that:
\begin{align*}
E&\left[\hat{\Psi}_{MO}\right]-Var\left(\sum_{i}\sum_{j\ne i}P_{ij}e_{i}X_{j}\right) \\
 = &\sum_{i}M_{ii} R_{\Delta i}^{2} \left(R_{i}^{2} - (1- 2P_{ii}) E\left[\eta_{i}^{2}\right] \right) -2\sum_{i} M_{ii} (1- 2P_{ii}) E\left[\eta_{i}\nu_{i}\right]R_{i}R_{\Delta i}.
\end{align*}
Then, MO is conservative when: (i) $R_{i}^{2}> (1- 2P_{ii})E\left[\eta_{i}^{2}\right]$, and (ii) $E\left[\eta_{i}\nu_{i}\right]$ is negatively correlated with $R_{i}R_{\Delta i}$, when $P_{ii} < 1/2$. 
In (i), $R_{\Delta i}^2$ only affects the magnitude of the difference, and not the sign, so this condition can be interpreted as a condition on the signal-to-noise ratio as before.
Condition (ii) results from the $\sum_{i} M_{ii} (1- 2P_{ii}) E\left[\eta_{i}\nu_{i}\right]R_{i}R_{\Delta i}$ term that MO does not account for, and covariances can be positive or negative in general.

\section{Conclusion} \label{sec:conclusion}
This paper has documented how weak instruments and heterogeneity can interact to invalidate existing procedures in the environment of many instruments.
Addressing both problems simultaneously, this paper contributes a feasible and robust method for valid inference. 
The procedure is shown to be valid as the limiting distribution of commonly used statistics, including the LM statistic, in an environment with many weak instruments and heterogeneity, is normal, and a leave-three-out variance estimator is consistent for obtaining the variance of the LM statistic. 
Beyond its validity, the LM test is also optimal, as it is the uniformly most powerful unbiased test in the asymptotic distribution for the interior of the alternative space.
In light of the broader econometric literature on the value of saturated regressions and how many instruments can arise from them, this paper presents a highly applicable, robust, and powerful inference procedure for IV. 



\appendix



\section{High-level Assumptions for Inference} \label{sec:assumptions_inference}
Following AS23, to ease notation in the L3O derivations, I define:
\begin{align*}
    \check{M}_{il,-ijk} := \frac{M_{il} - M_{ij} \check{M}_{jl,-jk} - M_{ik} \check{M}_{kl,-jk}}{D_{ijk}/ D_{jk}},
\end{align*}
so that $X_i - Q_i^\prime \hat{\tau}_{-ijk} = \sum_{l \ne k} \check{M}_{il,-ijk} X_l$, for instance. 

\Cref{asmp:Vconst_reg} below states high-level conditions for consistency of the variance estimator. To ease notation, let $R_{mi}$ stand for either $R_{\Delta i}$ or $R_{i}$. Denote $\tilde{R}_i := \sum_{j\ne i} G_{ij} R_j$ and $\tilde{R}_{\Delta i} := \sum_{j \ne i} G_{ij} R_{\Delta j}$. 
Let $h_{2}\left(i,j\right)$ be a product of any number of $G_{i_{1}i_{2}},i_{1}\ne i_{2}$, and $\check{M}_{j_{1}j_{2}},j_{1}\ne j_{2}$ with $i_{1},i_{2},j_{1},j_{2}\in\left\{ i,j\right\}$. 
Similarly, $h_{3}\left(i,j\right)$ denotes a product of any number of $G_{i_{1}i_{2}},i_{1}\ne i_{2}$, and $\check{M}_{j_{1}j_{2}},j_{1}\ne j_{2}$ with $i_{1},i_{2},j_{1},j_{2}\in\left\{ i,j, k\right\}$ such that every index in $\left\{ i,j,k \right\}$ occurs at least once as an index of either $G_{i_{1}i_{2}}$ or $\check{M}_{j_{1}j_{2}}$.
Let $h_{4}\left(i,j,k,l\right)$ denote a product of any number of $G_{i_{1}i_{2}},i_{1}\ne i_{2}$ and $\check{M}_{j_{1}j_{2}},j_{1}\ne j_{2}$ with $i_{1},i_{2},j_{1},j_{2}\in\left\{ i,j,k,l\right\}$ such that every index in $\left\{ i,j,k,l\right\}$ occurs at least once as an index of either $G_{i_{1}i_{2}}$ or $\check{M}_{j_{1}j_{2}}$, and there is no partition such that $h_4(i_1,i_2,j_1,j_2) = h_2(i_1,i_2) h_2(j_1,j_2)$, where $i_1,i_2,j_1,j_2$ are all different indices.
For instance, $h_{4}(i,j,k,l)$ could be $G_{ij}\check{M}_{ik,-il}\check{M}_{lj,-ijk}$ but not $G_{ij}\check{M}_{lk,-il}$.
Let $\sum_{i\ne j}^{n}=\sum_{i}\sum_{j\ne i}$ so that sums without the $n$ superscript are still sums of individual indices, but sums with an $n$ superscript involves the sum over multiple indices. 
Objects like $\sum_{i\ne j\ne k}^{n}$ and $\sum_{i\ne j\ne k\ne l}^{n}$ are defined similarly. 
When I refer to the p-sum, I refer to the sum over p non-overlapping indices. 
For instance, a 3-sum is $\sum_{i\ne j\ne k}^{n}$. 
Let $F$ stand for either $G$ or $G^{\prime}$. 
$1\{ \cdot \}$ is an indicator function that takes the value 1 if the argument is true and 0 otherwise.
$I\left\{ \cdot\right\} $ is a function that takes value 1 if the argument is true and -1 if false. 
\begin{assumption} \label{asmp:Vconst_reg}
For some $C<\infty$, 
\begin{enumerate} [topsep=0pt,label=(\alph*)]
\item $\sum_{j}F_{ij}^{2}\leq C$, $\sum_{j\ne k}^{n}\left(\sum_{i\ne j,k}G_{ij}F_{ik}\right)^{2}\leq C\sum_{j\ne k}^{n}G_{jk}^{2}$, $\sum_{j\ne k}^{n}\left(\sum_{i\ne j,k}G_{ji}G_{ki}\right)^{2}\leq C\sum_{j\ne k}^{n}G_{jk}^{2}$, and
$|R_{mi}|\leq C$.
\item $\sum_{i\ne j\ne k}^{n}\left(\sum_{l\ne i,j,k}h_{4}\left(i,j,k,l\right)R_{ml}\right)^{2}\leq C\sum_{i}\tilde{R}_{mi}^{2}$,
$\sum_{i\ne j}^{n}\left(\sum_{k\ne i,j}\sum_{l\ne i,j,k}h_{4}\left(i,j,k,l\right)R_{ml}\right)^{2}\leq C\sum_{i}\tilde{R}_{mi}^{2}$,
and $\sum_{i}\left(\sum_{j\ne i}\sum_{k\ne i,j}\sum_{l\ne i,j,k}h_{4}\left(i,j,k,l\right)R_{ml}\right)^{2}\leq C\sum_{i}\tilde{R}_{mi}^{2}.$
\item $\sum_{i\ne j}^{n}\left(\sum_{k\ne i,j}h_{3}\left(i,j,k\right)R_{mk}\right)^{2}\leq C\sum_{i}\tilde{R}_{mi}^{2}$
and $\sum_{i}\left(\sum_{j\ne i}\sum_{k\ne i,j}h_{3}\left(i,j,k\right)R_{mk}\right)^{2}\leq C\sum_{i}\tilde{R}_{mi}^{2}$.
\item $\sum_{i}\left(\sum_{j\ne i}h_{2}\left(i,j\right)R_{mj}\right)^{2}\leq C\sum_{i}\tilde{R}_{mi}^{2}$. 
\end{enumerate}
\end{assumption}

The first condition requires the row and column sums of the squares of the $G$ elements to be bounded. \Cref{asmp:normality}(e) is insufficient because it does not rule out having $G_{ii}=K$ for some $i$ and 0 elsewhere in the $G$ matrix. 
These remaining conditions can be interpreted as (approximate) sparsity conditions on $M$ and $G$ as the p-sum of entries of $\check{M}$ and $G$ cannot be too large. 
The conditions primarily place a restriction on the types of $G$ that can be used: for instance, a $G$ matrix that contains all 1's is excluded.
Note that other elements of the covariance matrix can be analogously shown to be consistent using the same strategy by using the lemmas from \Cref{sec:main_text_proofs} by using $\tilde{R}_{Yi} := \sum_{j \ne i} G_{ij} R_{Yj}$ instead of $\tilde{R}_{\Delta i}$ where required. 

The judges example in \Cref{sec:challenges} satisfies this assumption when there are no covariates, $G=P$, and $R$ values are bounded. 
For condition (a), $\sum_{j}P_{ij}^{2}=P_{ii}\leq C$ and, since $P$ is idempotent, $\sum_{j\ne k}^{n}\left(\sum_{i\ne j,k}P_{ij}P_{ik}\right)^{2}=\sum_{j\ne k}^{n}\left(\sum_{i}P_{ij}P_{ik}-P_{jj}P_{jk}-P_{kk}P_{jk}\right)^{2}=\sum_{j\ne k}^{n}\left(P_{jk}-P_{jj}P_{jk}-P_{kk}P_{jk}\right)^{2}=\sum_{j\ne k}^{n}\left(1-P_{jj}-P_{kk}\right)^{2}P_{jk}^{2}\leq\sum_{j\ne k}^{n}P_{jk}^{2}$.
For any $\check{M}_{ij}$ and $G_{ij}$, these elements are nonzero only when $i$ and $j$ share the same judge $p$. 
Further, $R_{mi}=\pi_{mp(i)}$, where $\pi_{mp}$ can denote $\pi_{p}$ or $\pi_{\Delta p}$ in the model. 
Due to how the $h$ functions are defined, when every judge has at most $c$ cases, 
\begin{align*}
\sum_{i}&\left(\sum_{j\ne i}h_{2}\left(i,j\right)R_{mj}\right)^{2}  =\sum_{i}\left(\sum_{j\in\mathcal{N}_{p(i)}\backslash\{i\}}h_{2}\left(i,j\right)R_{mp(i)}\right)^{2}=\sum_{p}\sum_{i\in\mathcal{N}_{p}}\left(\sum_{j\in\mathcal{N}_{p}\backslash\{i\}}h_{2}\left(i,j\right)\pi_{mp}\right)^{2}\\
 & =\sum_{p}\sum_{i\in\mathcal{N}_{p}}\left(\sum_{j\in\mathcal{N}_{p}\backslash\{i\}}h_{2}\left(i,j\right)\pi_{mp}\right)^{2}\pi_{mp}^{2}\leq C\sum_{p}\sum_{i\in\mathcal{N}_{p}}\left(c-1\right)^{2}\pi_{mp}^{2}=C\sum_{i}\tilde{R}_{mi}^{2}.
\end{align*}

The same argument applies to the other components. 
For instance, 
\begin{align*}
 & \sum_{i}\left(\sum_{j\ne i}\sum_{k\ne i,j}\sum_{l\ne i,j,k}h_{4}\left(i,j,k,l\right)R_{ml}\right)^{2}
 =\sum_{p}\pi_{mp}^{2}\sum_{i\in\mathcal{N}_{p}}\left(\sum_{j\in\mathcal{N}_{p}\backslash\{i\}}\sum_{k\in\mathcal{N}_{p}\backslash\{i,j\}}\sum_{l\in\mathcal{N}_{p}\backslash\{i,j,k\}}h_{4}\left(i,j,k,l\right)\right)^{2}\\
 & \leq C\sum_{p}\sum_{i\in\mathcal{N}_{p}}\pi_{mp}^{2}\left(c-1\right)^{2}\left(c-2\right)^{2}\left(c-3\right)^{2}\leq C\sum_{i}\tilde{R}_{mi}^{2}.
\end{align*}
The upper bound is fairly loose because it merely counts the number of nonzero entries in $h_4$. 
When every judge has a large number of cases, since $h_4$ contains only entries from the projection matrix, the inner sum is still bounded and the assumption is satisfied.

\section{Supplement for Section 2} \label{sec:main_supp_sect2}

\begin{lemma} \label{lem:MS_estimand}
Consider the model of \Cref{sec:challenges}.
Suppose $h\ne0$ and $Ks^2>0$.
Then, $E\left[T_{AR}\right]\ne0$ for all real $\beta_{0}$.
\end{lemma}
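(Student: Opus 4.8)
The plan is to compute $E[T_{AR}]$ explicitly in this model as a nonnegative quadratic form and then show it cannot vanish for any $\beta_0$ under heterogeneity. First, since $e_i(\beta_0) = (R_{Yi} - R_i \beta_0) + (\zeta_i - \eta_i \beta_0)$ with the second term mean-zero conditional on $(Z,W)$ and independent across $i$, taking expectations and using $i \neq j$ gives, exactly as in the footnote to \Cref{thm:normality},
\[
E[T_{AR}] = \frac{1}{\sqrt K}\sum_i \sum_{j \neq i} P_{ij}\, R_{\Delta i}(\beta_0)\, R_{\Delta j}(\beta_0), \qquad R_{\Delta i}(\beta_0) := R_{Yi} - R_i \beta_0.
\]
Here $R_i = \lambda_{k(i)}$ and $R_{Yi}$ depends on $i$ only through its judge, so I write $R_{\Delta k}(\beta_0) = R_{Yk} - \lambda_k \beta_0$. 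Because $Z$ is a full set of judge dummies, $P$ is block diagonal with $P_{ij} = 1/c$ whenever $i,j$ share a judge, and collecting the $c(c-1)$ ordered within-judge pairs for each of the $K+1$ judges yields
\[
E[T_{AR}] = \frac{c-1}{\sqrt K}\sum_{k} R_{\Delta k}(\beta_0)^2 \ge 0,
\]
so it suffices to show that, for every $\beta_0$, the quantities $R_{\Delta k}(\beta_0) = R_{Yk} - \lambda_k \beta_0$ are not all zero.

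Next I would observe that $R_{\Delta k}(\beta_0) = 0$ for all $k$ is equivalent to all reduced-form points $(\lambda_k, R_{Yk})$ lying on the single line $y = \beta_0 x$ through the origin; in particular it would force them to be collinear. The hard part is therefore to rule out collinearity, and this is where heterogeneity enters. Using $R_{Yk} = R_{Y,\mathrm{base}} + \beta_k(\lambda_k - 1/2)$ together with the parameterization of $(\lambda_k, \beta_k)$ from \Cref{sec:simple_setting}, I would take the three points coming from the base judge, group 1, and group 2, namely
\[
\Big(\tfrac12,\ R_{Y,\mathrm{base}}\Big),\quad \Big(\tfrac12 - s,\ R_{Y,\mathrm{base}} - \beta s + h\Big),\quad \Big(\tfrac12 - \tfrac{s}{2},\ R_{Y,\mathrm{base}} - \tfrac{\beta s}{2} - h\Big),
\]
and compute the two secant slopes out of the base point: they equal $\beta - h/s$ and $\beta + 2h/s$, respectively. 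These coincide only if $3h/s = 0$, which is impossible since $Ks^2 > 0$ forces $s \neq 0$ and we assume $h \neq 0$. Hence the three points are not collinear, so no line (a fortiori no line through the origin) can contain all of $(\lambda_k, R_{Yk})$.

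Consequently $\sum_k R_{\Delta k}(\beta_0)^2 > 0$ for every real $\beta_0$, and since $(c-1)/\sqrt K > 0$ we obtain $E[T_{AR}] > 0 \neq 0$, as claimed. The only genuinely delicate step is the collinearity computation; everything else is bookkeeping with the block structure of $P$ and independence across observations. I would also note that the argument uses only three of the five mass points and the strict inequality $h \neq 0$, so it is insensitive to how the remaining judges are configured and does not require locating $R_{Y,\mathrm{base}}$ explicitly.
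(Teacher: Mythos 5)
Your proof is correct, but it takes a genuinely different route from the paper's. The paper also starts from $E[T_{AR}]=\frac{1}{\sqrt{K}}\sum_{i}\sum_{j\ne i}P_{ij}R_{\Delta i}(\beta_0)R_{\Delta j}(\beta_0)$, but then treats $E[T_{AR}]=0$ as a quadratic equation in $\beta_{0}$ and computes its discriminant from the model's reduced-form sums ($\sum_{k}\pi_{k}^{2}=\tfrac{5}{8}s^{2}K$, $\sum_{k}\pi_{k}\pi_{Yk}=\tfrac{5}{8}s^{2}\beta K$, $\sum_{k}\pi_{Yk}^{2}=(\tfrac{5}{8}s^{2}\beta^{2}+h^{2})K$), finding it equal to $-\tfrac{5}{2}s^{2}h^{2}K^{2}<0$, so no real root exists --- in effect verifying that the Cauchy--Schwarz inequality between the vectors $(\pi_{k})$ and $(\pi_{Yk})$ is strict. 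You instead exploit the block-diagonal structure of $P$ to write $E[T_{AR}]$ as the manifestly nonnegative quantity $\tfrac{c-1}{\sqrt{K}}\sum_{k}R_{\Delta k}(\beta_{0})^{2}$, reduce vanishing to all reduced-form points lying on the line $y=\beta_{0}x$, and rule that out with a three-point non-collinearity check whose secant slopes are exactly the LATEs $\beta-h/s$ and $\beta+2h/s$. Both arguments rest on the same underlying fact (non-proportionality of the reduced-form vectors when $h\ne0$), but yours buys a little more: it delivers the sign information $E[T_{AR}]>0$ rather than merely $E[T_{AR}]\ne0$, and it needs only three of the five mass points. One caveat: you take $R_{i}=\lambda_{k(i)}$, whereas the paper's implementation in \Cref{sec:sect2DGP} uses the centered coefficients $\pi_{k}=\lambda_{k}-\tfrac{1}{2}$ and $\pi_{Yk}$ with the base judge normalized to zero; since $T_{AR}$ is not invariant to such recentering, the numerical value of $E[T_{AR}]$ depends on the parameterization, but your argument is unaffected, because non-collinearity is translation-invariant and the secant slopes are Wald ratios, so the lemma's conclusion holds under either convention.
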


\textbf{MS Variance Estimand.}
The proposed variance estimator is:
\begin{align*}
\hat{\Phi}_{MS} := \frac{2}{K} \sum_{i} \sum_{j \ne i} \frac{P_{ij}^2}{M_{ii} M_{jj} + M_{ij}^2} (e_i M_{i \cdot} e) (e_j M_{j \cdot} e).
\end{align*}
By substituting $e_i = R_{\Delta i} + \nu_i$ and taking expectations,
\begin{align*}
E[e_i M_{i \cdot} e e_j M_{j \cdot} e] &= R_{\Delta i} R_{\Delta j} \left(\sum_k M_{ik} M_{jk} E[\nu_k^2] \right) + (M_{ii} M_{jj} + M_{ij}^2) \left(E[\nu_i^2] E[\nu_j^2] \right).
\end{align*}
This estimand can be positive or negative, but observe that $\sum_i \sum_{j \ne i}  R_{\Delta i} R_{\Delta j} = \left( \sum_i R_{\Delta i} \right)^2 - \sum_i R_{\Delta i}^2 = - \sum_i R_{\Delta i}^2 = - (n-c)h^2$ in the model of \Cref{sec:simple_setting}. 
Consequently, the negative heterogeneity component can far outweigh the positive components, resulting in a negative estimand when $h$ does not converge to 0.


\section{Main Proofs} \label{sec:main_text_proofs}

A quadratic CLT is used for \Cref{thm:normality}. Let
\[
T=\sum_{i}s_{i}^{\prime}v_{i}+\sum_{i}\sum_{j\ne i}G_{ij}v_{i}^{\prime}Av_{j},
\]
where $v_{i}$ is a finite-dimensional random vector independent over $i=1,\dots,n$ with bounded 4th moments, $s_{i}$ is a nonstochastic vector, and $A$ is a conformable matrix. 

\begin{lemma} \label{lem:CLT}
Suppose:
\begin{enumerate}
\item $\Var\left(T\right)^{-1/2}$ is bounded;
\item $\sum_{i}s_{il}^{4}\rightarrow0$; and 
\item $||G_{L}G_{L}^{\prime}||_{F}+||G_{U}G_{U}^{\prime}||_{F}\rightarrow0$,
where $G_{L}$ is a lower-triangular matrix with elements $G_{L,ij}=G_{ij}1\left\{ i>j\right\} $
and $G_{U}$ is an upper-triangular matrix with elements $G_{U,ij}=G_{ij}1\left\{ i<j\right\}$.
\end{enumerate}
Then, $\Var\left(T\right)^{-1/2}T\xrightarrow{d}N(0,1)$.
\end{lemma}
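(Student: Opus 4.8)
The plan is to realize $T$ as a sum of martingale differences relative to the natural filtration generated by the $v_i$, and then invoke a martingale CLT. Since the linear term has mean zero and, for $i\ne j$, independence together with $E[v_i]=0$ forces $E[v_i^\prime A v_j]=0$, we have $E[T]=0$, so $\Var(T)^{-1/2}T$ is correctly centered. Set $\mathcal{F}_j=\sigma(v_1,\dots,v_j)$ with $\mathcal{F}_0$ trivial. Because $E[v_i^\prime A v_k\mid\mathcal{F}_m]$ equals $v_i^\prime A v_k$ once $m\ge\max(i,k)$ and vanishes otherwise, the increment $E[T\mid\mathcal{F}_j]-E[T\mid\mathcal{F}_{j-1}]$ collects exactly the linear term at $j$ together with every cross term whose larger index is $j$. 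This yields the explicit martingale difference array
\[
D_j=s_j^\prime v_j+v_j^\prime w_{j-1},\qquad w_{j-1}:=\sum_{k<j}\bigl(G_{jk}A+G_{kj}A^\prime\bigr)v_k,
\]
where $w_{j-1}$ is $\mathcal{F}_{j-1}$-measurable, $E[D_j\mid\mathcal{F}_{j-1}]=0$, and $T=\sum_j D_j$. Write $\sigma_n^2:=\Var(T)=\sum_j E[D_j^2]$.

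I would then apply a standard martingale CLT (e.g.\ Hall--Heyde or McLeish): it suffices to verify (i) the conditional-variance stability $\sigma_n^{-2}\sum_j E[D_j^2\mid\mathcal{F}_{j-1}]\xrightarrow{p}1$ and (ii) a Lyapunov condition $\sigma_n^{-4}\sum_j E[D_j^4]\to0$, which implies the conditional Lindeberg condition. Condition (1) supplies $\sigma_n^{-1}=O(1)$, so $\sigma_n^{-4}$ is bounded and it is enough to control the unnormalized quantities.

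For the Lyapunov step I would split $E[D_j^4]\le 8E[(s_j^\prime v_j)^4]+8E[(v_j^\prime w_{j-1})^4]$. Bounded fourth moments give $E[(s_j^\prime v_j)^4]\le C\|s_j\|^4$, and since the dimension is fixed, $\|s_j\|^4\le C\sum_l s_{jl}^4$, so $\sum_j E[(s_j^\prime v_j)^4]\le C\sum_l\sum_j s_{jl}^4\to0$ by assumption (2). Conditioning gives $E[(v_j^\prime w_{j-1})^4\mid\mathcal{F}_{j-1}]\le C\|w_{j-1}\|^4$, and a fourth-moment expansion of the independent sum $w_{j-1}$ bounds $E\|w_{j-1}\|^4$ by $C\bigl(\sum_{k<j}(G_{jk}^2+G_{kj}^2)\bigr)^2$ up to lower-order terms. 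The crucial observation is that $\sum_{k<j}G_{jk}^2=(G_LG_L^\prime)_{jj}$ and $\sum_{k<j}G_{kj}^2=(G_UG_U^\prime)_{jj}$, whence $\sum_j\bigl(\sum_{k<j}G_{jk}^2\bigr)^2\le\|G_LG_L^\prime\|_F^2$ and similarly for the upper part; assumption (3) then forces these to vanish.

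The hard part will be the conditional-variance stability. Writing $\Sigma_j:=E[v_jv_j^\prime]$ and using independence of $v_j$ from $\mathcal{F}_{j-1}$ gives $E[D_j^2\mid\mathcal{F}_{j-1}]=s_j^\prime\Sigma_j s_j+2s_j^\prime\Sigma_j w_{j-1}+w_{j-1}^\prime\Sigma_j w_{j-1}$. Summing, the first piece is nonrandom and, together with $E[Q]$, accounts for $\sigma_n^2$ (since $E[L]=0$), where $L:=2\sum_j s_j^\prime\Sigma_j w_{j-1}$ and $Q:=\sum_j w_{j-1}^\prime\Sigma_j w_{j-1}$; thus $\sum_j E[D_j^2\mid\mathcal{F}_{j-1}]-\sigma_n^2=L+(Q-E[Q])$ and I must show $\Var(L)=o(\sigma_n^4)$ and $\Var(Q)=o(\sigma_n^4)$. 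Now $L$ is a linear form in the independent $v_k$, so $\Var(L)$ is a single sum whose coefficients combine $s$ (controlled by (2)) with off-diagonal $G$ mass (controlled by (3)). The genuinely delicate term is $\Var(Q)$: expanding $Q$ as a quadratic form in the $v_k$ and taking its variance produces fourth-order sums of products such as $\sum_j G_{jk}G_{jl}$ over triples and quadruples of indices, and these are precisely the entries whose squared sums equal $\|G_LG_L^\prime\|_F^2$ and $\|G_UG_U^\prime\|_F^2$. I expect the bulk of the work to be careful bookkeeping of these index sums, showing each is dominated by the Frobenius norms in assumption (3), with bounded fourth moments absorbing the moment factors. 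Once $\Var(L)$ and $\Var(Q)$ are shown to be $o(\sigma_n^4)$, conditions (i) and (ii) hold and the martingale CLT delivers $\sigma_n^{-1}T\xrightarrow{d}N(0,1)$.
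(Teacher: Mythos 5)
Your proposal takes essentially the same route as the paper's own proof: the identical martingale-difference decomposition (your $D_j = s_j^\prime v_j + v_j^\prime w_{j-1}$ is exactly the paper's $y_j$, collecting the linear term at $j$ with every cross term whose larger index is $j$), the same martingale CLT verified through a fourth-moment Lyapunov condition and conditional-variance stability, and the same reduction of all bounds to the diagonal and off-diagonal entries of $G_L G_L^\prime$ and $G_U G_U^\prime$, hence to $\|G_L G_L^\prime\|_F$ and $\|G_U G_U^\prime\|_F$. The bookkeeping you defer for $\Var(L)$ and $\Var(Q)$ is carried out in the paper precisely along the lines you anticipate (Cauchy--Schwarz plus bounded fourth moments, with the quadratic-form variance bounded by $\|G_L G_L^\prime\|_F^2$), so the plan is correct.
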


\begin{proof}[Proof of \Cref{thm:normality}]
By substituting $Y_i = R_{Yi}+ \zeta_i$, $X_i = R_i + \eta_i$, $\zeta_{i}=\nu_{i}+\beta\eta_{i}$ and $R_{Yi}=R_{\Delta i}-R_{i}\beta$ into the expression for $\hat{\beta}_{JIVE}$, 
\begin{align*}
\hat{\beta}_{JIVE}-\beta_{JIVE} & =\frac{\left(\sum_{i}\sum_{j\ne i}G_{ij}\left(R_{\Delta i}\eta_{j}+ \nu_{i}R_{j}+\nu_{i}\eta_{j}\right)\right)}{\sum_{i}\sum_{j\ne i}G_{ij}R_{i}R_{j}+\sum_{i}\sum_{j\ne i}G_{ij}\left(R_{i}\eta_{j}+R_{j}\eta_{i}+\eta_{i}\eta_{j}\right)}.
\end{align*}

To see the equivalence with the $T$ objects,
\begin{align*}
\frac{1}{\sqrt{K}}\sum_{i}\sum_{j\ne i}G_{ij}e_{i}X_{j} & =\frac{1}{\sqrt{K}}\sum_{i}\sum_{j\ne i}G_{ij}\left(\nu_{i}R_{j}+\nu_{i}\eta_{j}+R_{\Delta i}R_{j}+R_{\Delta i}\eta_{j}\right), \text{ and } \\
\sum_{i}\sum_{j\ne i}G_{ij}R_{\Delta i}R_{j} & =\sum_{i}\sum_{j\ne i}G_{ij}R_{Yi}R_{j}-\sum_{i}\sum_{j\ne i}G_{ij}R_{i}R_{j}\left(\frac{\sum_{i}\sum_{j\ne i}G_{ij}R_{Yi}R_{j}}{\sum_{i}\sum_{j\ne i}G_{ij}R_{i}R_{j}}\right)=0,
\end{align*}
while $T_{FS}$ is immediate. 

Next, I show that the joint distribution of $\sqrt{\frac{K}{r_n}} \left(T_{AR},T_{LM},T_{FS} \right)$ is asymptotically normal. 
Using the Cramer-Wold device, it suffices to show that $\sqrt{\frac{K}{r_n}} (c_{1}T_{AR}+c_{2}T_{LM}+c_{3}T_{FS})$ is normal for fixed $c$'s, where
\begin{align*}
 & \sqrt{\frac{K}{r_n}} (c_{1}T_{AR}+c_{2}T_{LM}+c_{3}T_{FS})
 =c_{1}\frac{1}{\sqrt{r_n}}\sum_{i}\sum_{j\ne i}G_{ij}\left(\nu_{i}R_{j}+\nu_{i}\nu_{j}+R_{\Delta i}R_{\Delta j}+R_{\Delta i}\nu_{j}\right)\\
 & \quad+c_{2}\frac{1}{\sqrt{r_n}}\sum_{i}\sum_{j\ne i}G_{ij}\left(\nu_{i}R_{j}+\nu_{i}\eta_{j}+R_{\Delta i}\eta_{j}\right)
 +c_{3}\frac{1}{\sqrt{r_n}}\sum_{i}\sum_{j\ne i}G_{ij}\left(\eta_{i}R_{j}+\eta_{i}\eta_{j}+R_{i}R_{j}+R_{i}\eta_{j}\right).
\end{align*}

The object $T=\sqrt{\frac{K}{r_n}} (c_{1}T_{AR}+c_{2}T_{LM}+c_{3}T_{FS})-c_{1}\frac{1}{\sqrt{r_n}}\sum_{i}\sum_{j\ne i}G_{ij}R_{\Delta i}R_{\Delta j}-c_{3}\frac{1}{\sqrt{r_n}}\sum_{i}\sum_{j\ne i}G_{ij}R_{i}R_{j}$ can be written in the CLT form by setting:
\begin{align*}
v_{i} & =\left(\eta_{i},\nu_{i}\right)^{\prime}, A =\left[\begin{array}{cc}
c_{3} & 0\\
c_{2} & c_{1}
\end{array}\right], \text{ and } \\
s_{i} & =\left[\begin{array}{c}
c_{3}\sum_{j\ne i}\left(G_{ij}+G_{ji}\right)R_{j}+c_{2}\sum_{j\ne i}G_{ji}R_{\Delta j}\\
c_{1}\sum_{j\ne i}\left(G_{ij}+G_{ji}\right)R_{\Delta j}+c_{2}\sum_{j\ne i}G_{ij}R_{j}
\end{array}\right], 
\end{align*}
so that
\[
T=\frac{1}{\sqrt{r_n}}\sum_{i}s_{i}^{\prime}v_{i}+\frac{1}{\sqrt{r_n}}\sum_{i}\sum_{j\ne i}G_{ij}v_{i}^{\prime}Av_{j}.
\]

Bounded 4th moments hold by \Cref{asmp:normality}(a). 
To apply the CLT from \Cref{lem:CLT}, I verify the following:
\begin{enumerate}
\item $\Var\left(T\right)^{-1/2}$ is bounded;
\item $\frac{1}{r_n^2}\sum_{i}s_{il}^{4}\rightarrow0$ for all $l$; and 
\item $||G_{L}G_{L}^{\prime}||_{F}+||G_{U}G_{U}^{\prime}||_{F}\rightarrow0$,
where $G_{L}$ is a lower-triangular matrix with elements $G_{L,ij}=\frac{1}{\sqrt{r_n}}G_{ij}1\left\{ i>j\right\} $
and $G_{U}$ is an upper-triangular matrix with elements $G_{U,ij}=\frac{1}{\sqrt{r_n}}G_{ij}1\left\{ i<j\right\} $.
\end{enumerate}
Condition (2) follows from \Cref{asmp:normality}(d) and applying the Cauchy-Schwarz inequality. 
Condition (3) follows from \Cref{asmp:normality}(e). 
For Condition (1), I show that \Cref{asmp:normality}(b) and (c) imply that, for any nonstochastic scalars $c_1, c_2, c_3$ that are finite and not all 0, $\Var(T)^{-1/2}$ is bounded.
Since $Cov\left(\sum_{i}s_{i}^{\prime}v_{i},\sum_{i}\sum_{j\ne i}G_{ij}v_{i}^{\prime}Av_{j}\right)=0$,
\begin{equation}
\Var\left(T\right) = \frac{1}{r_n} \Var\left(\sum_{i}s_{i}^{\prime}v_{i}\right)+\frac{1}{r_n} 
 \Var\left(\sum_{i}\sum_{j\ne i}G_{ij}v_{i}^{\prime}Av_{j}\right),
\end{equation}
so it suffices to show that either term is bounded below.

The first term can be expanded as follows:
\begin{align*}
\Var \left( \sum_i s_i^\prime v_i \right) &= \sum_i s_i^\prime \Var\left(  v_i \right) s_i 
= \sum_is_{i1}^2 E[\eta_i]^2+ 2 s_{i1} s_{i2} E[\eta_i \nu_i] + s_{i2}^2 E[\nu_i^2] \\
&= \sum_i(1-\rho_i)^2 E[\eta_i^2] s_{i1}^2 + \left( \rho_i s_{i1} \sqrt{E[\eta_i^2]} + s_{i2} \sqrt{E[\nu_i^2]} \right)^2 \geq \sum_i (1-\rho_i)^2 E[\eta_i^2] s_{i1}^2.
\end{align*}
A similar argument yields $\Var \left( \sum_i s_i^\prime v_i \right) \geq \sum_i (1-\rho_i)^2 E[\eta_i^2] s_{i2}^2$.
Due to \Cref{asmp:normality}(c), at least one of the following must hold: (i) $\Var\left(\sum_{i}\sum_{j\ne i}G_{ij}v_{i}^{\prime}Av_{j}\right) \geq \underline{c}$ (ii) $\frac{1}{r_n} \sum_i s_{i1}^2 \geq \underline{c}$, or (iii) $\frac{1}{r_n} \sum_i s_{i2}^2 \geq \underline{c}$. 
Hence, $\Var (T)^{-1/2}$ is bounded.

Finally, since $\nu_i,\eta_i$ are mean zero, the expectations are immediate: $E\left[T_{AR}\right]  =\sum_{i}\sum_{j\ne i}G_{ij}R_{\Delta j}R_{\Delta i}$ and $E\left[T_{FS}\right]=\sum_{i}\sum_{j\ne i}G_{ij}R_{j}R_{i}$. 
\end{proof}

The proof of \Cref{thm:var_consistent} involves several lemmas whose proofs are relegated to the online appendix.
The proof strategy is to bound the variances above by components that are in the $h(.)$ form so that \Cref{asmp:Vconst_reg} inequalities can be applied. 

Let $V_{mi}=R_{mi}+v_{mi}$ where $R_{mi}$ denotes the nonstochastic component while $v_{mi}$ denotes the mean zero stochastic component.
Following \Cref{eqn:rn}, $r_{n}:=\sum_{i}\tilde{R}_{i}^{2}+\sum_{i}\tilde{R}_{\Delta i}^{2}+\sum_{i}\sum_{j\ne i}G_{ij}^{2}$.
Let $C_{i},C_{ij},C_{ijk}$ denote nonstochastic objects that are non-negative and are bounded above by $C$. 
I use $h_{4}^{A}(.)$ and $h_{4}^{B}(.)$ to denote two different functions that satisfy
the definition for $h_{4}$. 

\begin{lemma} \label{lem:a0}
Under \Cref{asmp:Vconst_reg}, the following hold:
\begin{enumerate} [topsep=0pt,label=(\alph*)]
\item $\left|\sum_{i\ne j\ne k}^{n}C_{ijk}\left(\sum_{l\ne i,j,k}h_{4}^{A}\left(i,j,k,l\right)R_{ml}\right)\left(\sum_{l\ne i,j,k}h_{4}^{B}\left(i,j,k,l\right)R_{ml}\right)\right|\leq C\sum_{i}\tilde{R}_{mi}^{2}$, \\
$\left|\sum_{i\ne j}^{n}C_{ij}\left(\sum_{k\ne i,j}\sum_{l\ne i,j,k}h_{4}^{A}\left(i,j,k,l\right)R_{ml}\right)\left(\sum_{k\ne i,j}\sum_{l\ne i,j,k}h_{4}^{B}\left(i,j,k,l\right)R_{ml}\right)\right|\leq C\sum_{i}\tilde{R}_{mi}^{2}$, and \\ 
$\left|\sum_{i}C_{i}\left(\sum_{j\ne i}\sum_{k\ne i,j}\sum_{l\ne i,j,k}h_{4}^{A}\left(i,j,k,l\right)R_{ml}\right)\left(\sum_{j\ne i}\sum_{k\ne i,j}\sum_{l\ne i,j,k}h_{4}^{B}\left(i,j,k,l\right)R_{ml}\right)\right|\leq C\sum_{i}\tilde{R}_{mi}^{2}.$
\item $\left|\sum_{i\ne j}^{n}C_{ij}\left(\sum_{k\ne i,j}h_{3}^{A}\left(i,j,k\right)R_{mk}\right)\left(\sum_{k\ne i,j}h_{3}^{B}\left(i,j,k\right)R_{mk}\right)\right|\leq C\sum_{i}\tilde{R}_{mi}^{2}$ \\
and $\left|\sum_{i}C_{i}\left(\sum_{j\ne i}\sum_{k\ne i,j}h_{3}^{A}\left(i,j,k\right)R_{mk}\right)\left(\sum_{j\ne i}\sum_{k\ne i,j}h_{3}^{B}\left(i,j,k\right)R_{mk}\right)\right|\leq C\sum_{i}\tilde{R}_{mi}^{2}$.
\item $\left|\sum_{i}C_{i}\left(\sum_{j\ne i}h_{2}^{A}\left(i,j\right)R_{mj}\right)\left(\sum_{j\ne i}h_{2}^{B}\left(i,j\right)R_{mj}\right)\right|\leq C\sum_{i}\tilde{R}_{mi}^{2}$.
\end{enumerate}
\end{lemma}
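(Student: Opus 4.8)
The plan is to recognize \Cref{lem:a0} as the polarization of the squared bounds already postulated in \Cref{asmp:Vconst_reg}. Each statement replaces a single squared inner sum such as $\big(\sum_l h_4(i,j,k,l) R_{ml}\big)^2$ with a product of two such sums built from possibly-different weight functions $h_4^A$ and $h_4^B$, weighted by nonstochastic coefficients $C_{ijk}$ (or $C_{ij}$, $C_i$) that are uniformly bounded by $C$. Since each of $h_4^A$ and $h_4^B$ individually satisfies the defining conditions for an $h_4$ function, the squared sum formed from each separately is controlled by the matching inequality in \Cref{asmp:Vconst_reg}(b); all that remains is to pass from the squares to the cross products, which is an elementary Cauchy--Schwarz step.

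For the first display of part (a), I would set $a_{ijk} := \sum_{l\ne i,j,k} h_4^A(i,j,k,l) R_{ml}$ and $b_{ijk} := \sum_{l\ne i,j,k} h_4^B(i,j,k,l) R_{ml}$. Using $|C_{ijk}| \leq C$ together with the elementary inequality $|a_{ijk} b_{ijk}| \leq \tfrac{1}{2}(a_{ijk}^2 + b_{ijk}^2)$ applied termwise,
\[
\Big| \sum_{i\ne j \ne k}^n C_{ijk}\, a_{ijk}\, b_{ijk} \Big| \leq \frac{C}{2} \sum_{i\ne j\ne k}^n \big( a_{ijk}^2 + b_{ijk}^2 \big) \leq \frac{C}{2}\Big( C \sum_i \tilde{R}_{mi}^2 + C \sum_i \tilde{R}_{mi}^2 \Big) \leq C \sum_i \tilde{R}_{mi}^2,
\]
where the second inequality applies the first bound of \Cref{asmp:Vconst_reg}(b) separately to $\sum_{i\ne j\ne k}^n a_{ijk}^2$ and to $\sum_{i\ne j\ne k}^n b_{ijk}^2$, and constants are absorbed into $C$. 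The two remaining displays of part (a) are identical in form, differing only in the outer summation (a 2-sum and then a 1-sum) and in the inner object (which becomes a function of $(i,j)$ and then of $i$ alone); each is paired with the correspondingly-shaped second and third inequalities of \Cref{asmp:Vconst_reg}(b).

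For parts (b) and (c) I would run the same argument verbatim, invoking \Cref{asmp:Vconst_reg}(c) for the two $h_3$ statements and \Cref{asmp:Vconst_reg}(d) for the single $h_2$ statement. The structural conditions in the definitions of $h_3$ and $h_4$ (that every index appears at least once, and that $h_4$ admits no factorization into a product of two $h_2$ terms) are used only to guarantee that the assumed squared bounds apply to $h^A$ and $h^B$ individually; they play no further role once those bounds are in hand. There is no genuine obstacle here, since the entire content is Cauchy--Schwarz applied to assumed inequalities. The only point requiring care is bookkeeping: confirming that each cross-product statement is matched with the correctly-shaped squared bound in the assumption, and that the uniformly bounded nonstochastic coefficients $C_{ijk}$, $C_{ij}$, $C_i$ are handled by a single generic constant $C$ throughout.
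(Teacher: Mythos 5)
Your proposal is correct and takes essentially the same approach as the paper: reduce the cross product of the $h^A$ and $h^B$ sums to squared sums, then invoke the corresponding squared bound in \Cref{asmp:Vconst_reg} for each function separately, absorbing the bounded non-negative coefficients $C_{ijk}$, $C_{ij}$, $C_i$ into a generic constant. The only cosmetic difference is that the paper applies the Cauchy--Schwarz inequality at the level of the full sum, $\left|\sum C_i a_i b_i\right| \leq \left(\sum C_i a_i^2\right)^{1/2}\left(\sum C_i b_i^2\right)^{1/2}$, whereas you use the termwise bound $|a b| \leq \tfrac{1}{2}(a^2+b^2)$; both yield the identical conclusion with the identical use of the assumption.
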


\begin{lemma} \label{lem:a1}
Under \Cref{asmp:Vconst_reg}, the following hold:
\begin{enumerate} [topsep=0pt,label=(\alph*)]
\item $\Var\left(\sum_{i\ne j}^{n}G_{ij}F_{ij}V_{1i}V_{2i}V_{3j}V_{4j}\right)\leq Cr_{n}$.
\item $\Var\left(\sum_{i\ne j\ne k}^{n}G_{ij}F_{ij}\check{M}_{ik,-ij}V_{1i}V_{2k}V_{3j}V_{4j}\right)\leq Cr_{n}$.
\item $\Var\left(\sum_{i\ne j\ne l}^{n}G_{ij}F_{ij}\check{M}_{jl,-ij}V_{1i}V_{2i}V_{3j}V_{4l}\right)\leq Cr_{n}$.
\item $\Var\left(\sum_{i\ne j\ne k\ne l}^{n}G_{ij}F_{ij}V_{1i}\check{M}_{ik,-ij}V_{2k}V_{3j}\check{M}_{jl,-ijk}V_{4l}\right)\leq Cr_{n}$.
\end{enumerate}
\end{lemma}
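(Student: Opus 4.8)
The plan is to reduce each variance to a finite sum of covariances over two independent copies of the index tuple, and then match each surviving term to the inequalities of \Cref{asmp:Vconst_reg}, using \Cref{lem:a0} to handle cross-products. Writing $V_{mi}=R_{mi}+v_{mi}$, every summand factors into a purely deterministic weight — a product of entries of $G$, $F$, and $\check M$ (the latter being nonstochastic because we condition on $Q=(Z,W)$), i.e.\ exactly one of the $h_2,h_3,h_4$ templates — times the four factors $V_{mi}$, each of which splits into a bounded nonstochastic part $R_{mi}$ and a mean-zero part $v_{mi}\in\{\nu_i,\eta_i\}$ with bounded fourth moment by \Cref{asmp:normality}(a). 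Expanding the product of the $V$'s turns each statistic into a finite sum (at most $2^4$ terms per summand) indexed by the subset of factors taken to be stochastic.

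First I would write the variance as a double sum over a copy $(i,j,\dots)$ and an independent copy $(i',j',\dots)$. Because the $v$'s are independent across distinct indices and mean zero, a covariance between a term at one tuple and a term at another vanishes unless every index carrying a stochastic factor in the first copy is shared with the second copy (and conversely): an unshared $v$ would factor out as an unmatched first moment and integrate to zero. This pairing requirement is what collapses a sum over many apparently free indices down to something of order $r_n$, and it dictates which overlap patterns contribute.

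Next, for each admissible overlap pattern I would bound the matched stochastic moments by a constant via Cauchy--Schwarz and the bounded-moment assumption, leaving a deterministic residual: a product of the weight functions times $R_{m\cdot}$ factors summed over the indices that stay free. By construction these free-index sums are precisely the left-hand sides of the $h_2,h_3,h_4$ inequalities in \Cref{asmp:Vconst_reg}(b)--(d) and \Cref{lem:a0}, each bounded by $C\sum_i \tilde R_{mi}^2\le Cr_n$; patterns with no free $R$ index instead reduce to sums of squared weights such as $\sum_{i\ne j}^n G_{ij}^2\le r_n$ (using $\sum_j F_{ij}^2\le C$ from \Cref{asmp:Vconst_reg}(a) to absorb the $F$ factors). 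Part (a) is the two-index base case handled by the $h_2$-level bounds; parts (b) and (c) insert one $\check M$ factor and a third index, controlled by the $h_3$ bounds; part (d), carrying two $\check M$ factors and four indices, is the most elaborate.

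The main obstacle is the combinatorial bookkeeping in part (d): one must enumerate every way $(i,j,k,l)$ can overlap with $(i',j',k',l')$ so that all stochastic factors are paired, verify in each case that the residual deterministic weight is still a genuine $h_4$ (or lower) function rather than a degenerate product that factors into independent lower-order pieces escaping the $h_4$ definition, and then confirm the collected $R$-sum falls into a template covered by \Cref{asmp:Vconst_reg}. Tracking which indices are free and ensuring the weight does not degenerate is the delicate part; once a pattern is classified, the corresponding bound is a mechanical application of Cauchy--Schwarz together with the assumed inequalities and \Cref{lem:a0}.
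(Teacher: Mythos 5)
Your proposal is correct and follows essentially the same route as the paper's proof: decompose each $V_{mi}=R_{mi}+v_{mi}$, expand the product, exploit independence so that any covariance containing an unmatched mean-zero factor vanishes, bound the matched stochastic moments by Cauchy--Schwarz and bounded fourth moments, and control the surviving deterministic sums with the $h_2,h_3,h_4$ inequalities of \Cref{asmp:Vconst_reg} together with \Cref{lem:a0}. The paper carries out exactly this case-by-case bookkeeping, including the reduction of the no-free-$R$ patterns to $\sum_{i\ne j}^{n}G_{ij}^{2}$ via \Cref{asmp:Vconst_reg}(a) and the bound $\sum_{l}\check{M}_{il,-ijk}^{2}=O(1)$ for the $\check{M}$ factors.
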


\begin{lemma} \label{lem:a2}
Under \Cref{asmp:Vconst_reg}, the following hold:
\begin{enumerate} [topsep=0pt,label=(\alph*)]
\item $\Var\left(\sum_{i\ne j\ne k}^{n}G_{ij}F_{ik}V_{1j}V_{2k}V_{3i}V_{4i}\right)\leq Cr_{n}$.
\item $\Var\left(\sum_{i\ne j\ne k\ne l}^{n}G_{ij}F_{ik}\check{M}_{il,-ijk}V_{1j}V_{2k}V_{3i}V_{4l}\right)\leq Cr_{n}$.
\end{enumerate}
\end{lemma}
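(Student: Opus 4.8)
The plan is to follow the same variance-bounding template that underlies \Cref{lem:a1}, now adapted to the ``fork'' topology $G_{ij}F_{ik}$ in part (a) (and the three-star $G_{ij}F_{ik}\check{M}_{il,-ijk}$ in part (b)), in which the two or three $G/F/\check{M}$ factors all meet at the common index $i$. First I would substitute $V_{mi}=R_{mi}+v_{mi}$ into each summand and expand the product of the four $V$-factors into the $2^4=16$ terms indexed by which slots carry the mean-zero stochastic part $v$ and which carry the nonstochastic part $R$. Since the number of resulting pieces is a fixed constant, $\Var\left(\sum_p X_p\right)\le 16\sum_p\Var\left(X_p\right)$ by Cauchy-Schwarz, so it suffices to bound the variance of each piece separately by $Cr_n$.

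For a fixed piece, the nonstochastic $R$-factors (bounded by $|R_{mi}|\le C$ from \Cref{asmp:Vconst_reg}(a)) are absorbed into the coefficient, leaving $X_p=\sum b\,w$ where $b$ is a product of $G,F,\check{M}$ and $R$'s and $w$ is a product of $v$'s that are independent across individuals. I would then write $\Var\left(X_p\right)=\sum_{\omega,\omega'}b_\omega b_{\omega'}\,\mathrm{Cov}\left(w_\omega,w_{\omega'}\right)$ over pairs of index configurations and use independence together with $E[v]=0$ to argue that this covariance vanishes unless every individual carrying a lone $v$ in one configuration is matched to the same individual in the other. This forces index collapses, and enumerating the admissible matchings reduces $\Var\left(X_p\right)$ to a fixed list of multi-index sums of products of $G,F,\check{M}$ weighted by bounded moments such as $E[v_{mi}^2]$, $E[v_{mi}v_{m'i}]$, and $E[v_{mi}^4]$ (all bounded by \Cref{asmp:normality}(a)) and by $R$-factors.

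The decisive step is to match each surviving sum to the correct regularity inequality, and it is here that the new topology matters. In part (a), summing $G_{ij}F_{ik}$ over the shared vertex $i$ produces $\sum_{i\ne j,k}G_{ij}F_{ik}$, whose sum of squares is bounded by $C\sum_{j\ne k}^{n}G_{jk}^{2}\le Cr_n$ via the inequality $\sum_{j\ne k}^{n}\left(\sum_{i\ne j,k}G_{ij}F_{ik}\right)^{2}\le C\sum_{j\ne k}^{n}G_{jk}^{2}$ in \Cref{asmp:Vconst_reg}(a), which is precisely tailored to the fork. In part (b), the coefficient $G_{ij}F_{ik}\check{M}_{il,-ijk}$ is of the nonseparable $h_4(i,j,k,l)$ form, so once the $R$-factors are attached the relevant sums are controlled by the $h_4$ inequalities of \Cref{asmp:Vconst_reg}(b) together with their lower-order $h_3,h_2$ analogs in \Cref{asmp:Vconst_reg}(c)--(d), while cross-terms between two distinct $h_4$ products are absorbed using \Cref{lem:a0}. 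Each such bound returns a multiple of $\sum_i\tilde{R}_{mi}^2$ or $\sum_{j\ne k}^{n}G_{jk}^{2}$, both at most $r_n$.

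The main obstacle I anticipate is the combinatorial bookkeeping: correctly enumerating the index-matching patterns that survive in $\Var\left(X_p\right)$ and verifying for each that the contracted sum is genuinely $O(r_n)$ rather than a larger order. The delicate cases are the partially overlapping matchings --- for instance when $j$ in one configuration coincides with $k$ in the other, or when a stochastic vertex coincides with the common vertex $i$ --- since these must be shown to be dominated by the inequalities at hand rather than to generate an uncontrolled $h_4$ that fails to factor. Tracking which indices remain free and pairing each contracted sum with the correct member of \Cref{asmp:Vconst_reg} and \Cref{lem:a0} is where all the care lies; once the right inequality is identified, each individual bound is routine.
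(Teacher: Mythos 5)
Your proposal follows essentially the same route as the paper's proof: expand each $V_{mi}=R_{mi}+v_{mi}$, reduce to a fixed number of pieces whose variances are bounded separately, use independence and mean-zero of the $v$'s to kill all but the index-matched covariance configurations, and then close each surviving contracted sum with the fork inequality $\sum_{j\ne k}^{n}\bigl(\sum_{i\ne j,k}G_{ij}F_{ik}\bigr)^{2}\leq C\sum_{j\ne k}^{n}G_{jk}^{2}$ from \Cref{asmp:Vconst_reg}(a) in part (a), and with the $h_{4}/h_{3}/h_{2}$ inequalities of \Cref{asmp:Vconst_reg}(b)--(d) together with \Cref{lem:a0} in part (b). The only cosmetic difference is that the paper keeps the factors sharing the common vertex (e.g.\ $V_{3i}V_{4i}$) bundled and decomposes only the remaining slots, whereas you expand all four factors; this changes the bookkeeping slightly but not the substance of the argument.
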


\begin{lemma} \label{lem:a3}
Under \Cref{asmp:Vconst_reg}, the following hold:
\begin{enumerate} [topsep=0pt,label=(\alph*)]
\item $\Var\left(\sum_{i\ne j}^{n}G_{ji}^{2}V_{1i}V_{2i}V_{3j}V_{4j}\right)\leq Cr_{n}$;
\item $\Var\left(\sum_{i\ne j\ne k}^{n}G_{ji}^{2}\check{M}_{ik,-ij}V_{1i}V_{2k}V_{3j}V_{4j}\right)\leq Cr_{n}$;
\item $\Var\left(\sum_{i\ne j\ne l}^{n}G_{ji}^{2}\check{M}_{jl,-ij}V_{1i}V_{2i}V_{3j}V_{4l}\right)\leq Cr_{n}$;
\item $\Var\left(\sum_{i\ne j\ne k\ne l}^{n}G_{ji}^{2}V_{1i}\check{M}_{ik,-ij}V_{2k}V_{3j}\check{M}_{jl,-ijk}V_{4l}\right)\leq Cr_{n}$;
\item $\Var\left(\sum_{i\ne j\ne k}^{n}G_{ji}F_{ki}V_{1j}V_{2k}V_{3i}V_{4i}\right)\leq Cr_{n}$;
\item $\Var\left(\sum_{i\ne j\ne k\ne l}^{n}G_{ji}F_{ki}\check{M}_{il,-ijk}V_{1j}V_{2k}V_{3i}V_{4l}\right)\leq Cr_{n}$.
\end{enumerate}
\end{lemma}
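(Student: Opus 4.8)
The plan is to prove all six bounds by one technique --- variance expansion followed by matching each surviving term to the $h_2/h_3/h_4$ forms controlled by \Cref{asmp:Vconst_reg} and \Cref{lem:a0} --- while first observing that parts (a)--(d) are, up to relabeling, already covered by \Cref{lem:a1}. Swapping the summation indices $i \leftrightarrow j$ turns the coefficient $G_{ji}^2$ in (a)--(d) into $G_{ij}^2 = G_{ij} F_{ij}$ with $F = G$, converts each $\check{M}_{\cdot,-ij}$ into an equally admissible $\check{M}$, and merely permutes the generic factors $V_1,\dots,V_4$ among themselves. Since \Cref{lem:a1} is stated for arbitrary factors of the form $V_{mi} = R_{mi} + v_{mi}$ and for every admissible $h$-function, each of (a)--(d) follows directly from the corresponding part of \Cref{lem:a1}. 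The genuinely new cases are (e)--(f), whose coefficient $G_{ji} F_{ki}$ shares the index $i$ as the \emph{column} of both factors rather than as the row as in \Cref{lem:a2}; when $G$ is not symmetric this cannot be reduced to \Cref{lem:a2} by relabeling.

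For the new cases I would substitute $V_{mi} = R_{mi} + v_{mi}$ into the summand and expand the variance as a double sum over two index configurations. Because the $v_{mi}$ are mean zero, independent across distinct indices, and have bounded fourth moments by \Cref{asmp:normality}(a), a covariance term survives only when the stochastic indices of the two configurations overlap. I would then classify the surviving terms by how many indices are forced to carry a deterministic factor $R_{ml}$ (as opposed to a canceling $v$). Each such term factors as a product of $G$ and $\check{M}$ entries --- which by the index-counting in the definitions of $h_3$ and $h_4$ forms a valid $h$-function --- times the retained $R_{ml}$ factors and a bounded product of moments of the $v$'s.

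To bound the resulting sums I would apply \Cref{lem:a0} to the terms carrying deterministic $R_{ml}$ factors, which yields $C \sum_i \tilde{R}_{mi}^2 \le C r_n$, and bound the purely stochastic terms by sums of the form $\sum_{i \ne j}^n G_{ij}^2 \le r_n$ via \Cref{eqn:rn}. Two features are special to (e)--(f). First, the diagonal and near-diagonal overlaps produce coefficients of degree up to four in $G$; I would control these by absorbing the excess squared factors through the bounded sums $\sum_j F_{ij}^2 \le C$ of \Cref{asmp:Vconst_reg}(a), retaining a single factor $\sum_{i \ne j}^n G_{ij}^2 \le r_n$. Second, the contraction over the shared column index $i$ is dominated exactly by the inequality $\sum_{j \ne k}^n \big( \sum_{i \ne j,k} G_{ji} G_{ki} \big)^2 \le C \sum_{j \ne k}^n G_{jk}^2$ in \Cref{asmp:Vconst_reg}(a), which is stated precisely for this column-sharing structure.

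The main obstacle is the combinatorial bookkeeping for (e)--(f): one must enumerate all overlap patterns of the two configurations, check that after deleting the canceling $v$-factors the remaining coefficient still satisfies the requirement that every surviving index appear in some $G$ or $\check{M}$ factor (so that the $h$-form bounds of \Cref{lem:a0} apply), and confirm that each contraction falls under either the row- or the column-version of \Cref{asmp:Vconst_reg}(a). Because the shared index sits in the column slot, some contractions that were routed through the row inequality in \Cref{lem:a2} must instead be routed through the column inequality; verifying this matching term by term is the delicate part, but once it is done every surviving term is dominated by $C r_n$, giving $\Var(\cdot) \le C r_n$ in each of the six cases.
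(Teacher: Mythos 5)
Your plan coincides in substance with the paper's proof: the paper disposes of this lemma in a single sentence, asserting that it is ``entirely analogous'' to \Cref{lem:a1} and \Cref{lem:a2} with $G_{ji}$ used in place of $G_{ij}$, and your treatment of parts (e)--(f) --- expansion of the variance by overlap pattern, $h$-function bounds via \Cref{lem:a0}, absorption of excess squared factors through $\sum_{j}F_{ij}^{2}\leq C$, and the column-sharing inequality $\sum_{j\ne k}^{n}\bigl(\sum_{i\ne j,k}G_{ji}G_{ki}\bigr)^{2}\leq C\sum_{j\ne k}^{n}G_{jk}^{2}$ of \Cref{asmp:Vconst_reg}(a) --- is exactly the content that sentence leaves implicit. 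You are also right that (e)--(f) cannot in general be obtained from \Cref{lem:a2} by relabeling, since relabeling cannot move the shared index from the column slot to the row slot of both factors.

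The flaw is in your shortcut for (a)--(d). Part (a) does reduce to \Cref{lem:a1}(a) by swapping $i\leftrightarrow j$, and parts (b) and (c) also reduce, but to each other's counterparts rather than to ``the corresponding part'': relabeled (b) has the paired factors at the \emph{first} index of $G_{ij}$ and the $\check{M}$ row at the second, which is the pattern of \Cref{lem:a1}(c), while relabeled (c) matches \Cref{lem:a1}(b). More importantly, part (d) does not reduce at all. Matching the coefficient $G_{\sigma(j)\sigma(i)}^{2}=G_{ij}^{2}$ forces the relabeling $\sigma$ to swap $i$ and $j$, but then the two-observation leave-out factor $\check{M}_{ik,-ij}$ becomes $\check{M}_{j\sigma(k),-ij}$, rowed at the \emph{second} index of $G_{ij}$, whereas in \Cref{lem:a1}(d) it is rowed at the first and the three-observation leave-out factor at the second; no renaming of dummies can undo this, because relabeling preserves which index slot of $G$ each $\check{M}$ row is attached to (and the alternative $F=G'$ instance of \Cref{lem:a1}(d) yields $G_{ij}G_{ji}$, never a perfect square). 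Since $G$ for UJIVE is not symmetric, $G_{ji}^{2}\ne G_{ij}^{2}$ and (d) here is the transpose mirror of \Cref{lem:a1}(d), a genuinely different sum. The gap is reparable: (d) must be handled the way you handle (e)--(f), by re-running the \Cref{lem:a1}(d) expansion with $G_{ji}$ in place of $G_{ij}$ and routing the contractions through the transposed versions of \Cref{asmp:Vconst_reg} --- which is precisely what the paper means by ``analogous'' --- but a bare citation of \Cref{lem:a1} does not close it.
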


\begin{proof} [Proof of \Cref{thm:var_consistent}]
\textbf{Proof of Unbiasedness.}
The variance expression is:
\small
\begin{equation} \label{eqn:VLM_expr_proof}
\begin{split}
V_{LM} & =\sum_{i}\left(E\left[\nu_{i}^{2}\right]\left(\sum_{j\ne i}G_{ij}R_{j}\right)^{2}+2\left(\sum_{j\ne i}G_{ij}R_{j}\right)\left(\sum_{j\ne i}G_{ji}R_{\Delta j}\right)E\left[\nu_{i}\eta_{i}\right]+E\left[\eta_{i}^{2}\right]\left(\sum_{j\ne i}G_{ji}R_{\Delta j}\right)^{2}\right)\\
 & \quad+\sum_{i}\sum_{j\ne i}G_{ij}^{2}E\left[\nu_{i}^{2}\right]E\left[\eta_{j}^{2}\right]+\sum_{i}\sum_{j\ne i}G_{ij}G_{ji}E\left[\eta_{i}\nu_{i}\right]E\left[\eta_{j}\nu_{j}\right].
\end{split}
\end{equation}
\normalsize

To ease notation, let:
\begin{align*}
A_{1i} & :=\sum_{j\ne i}\sum_{k\ne i}G_{ij}X_{j}G_{ik}X_{k}e_{i}\left(e_{i}-Q_{i}^{\prime}\hat{\tau}_{\Delta,-ijk}\right), 
&A_{2i}  :=\sum_{j\ne i}\sum_{k\ne i}G_{ij}X_{j}G_{ki}e_{k}e_{i}\left(X_{i}-Q_{i}^{\prime}\hat{\tau}_{-ijk}\right),\\
A_{3i} & :=\sum_{j\ne i}\sum_{k\ne i}G_{ji}e_{j}G_{ki}e_{k}X_{i}\left(X_{i}-Q_{i}^{\prime}\hat{\tau}_{-ijk}\right),
&A_{4ij}  :=X_{i}\sum_{k\ne j}\check{M}_{ik,-ij}X_{k}e_{j}\left(e_{j}-Q_{j}^{\prime}\hat{\tau}_{\Delta,-ijk}\right), \text{ and }\\
A_{5ij} & :=e_{i}\sum_{k\ne j}\check{M}_{ik,-ij}X_{k}e_{j}\left(X_{j}-Q_{j}^{\prime}\hat{\tau}_{-ijk}\right).
\end{align*}

Take expectation of $A_{1}$:
\small
\begin{align*}
E & \left[\sum_{i}\sum_{j\ne i}\sum_{k\ne i}G_{ij}X_{j}G_{ik}X_{k}e_{i}\left(e_{i}-Q_{i}^{\prime}\hat{\tau}_{\Delta,-ijk}\right)\right]\\
 &= \sum_{i}\sum_{j\ne i}\sum_{k\ne i}G_{ij}R_{j}G_{ik}R_{k}E\left[\nu_{i}^{2}\right]+\sum_{i}\sum_{j\ne i}G_{ij}^{2}E\left[\nu_{i}^{2}\right]E\left[\eta_{j}^{2}\right].
\end{align*}
\normalsize

Similarly, $E\left[A_{2i}\right] =\left(\sum_{j\ne i}G_{ij}R_{j}\right)\left(\sum_{j\ne i}G_{ji}R_{\Delta j}\right)E\left[\nu_{i}\eta_{i}\right]+\sum_{j\ne i}G_{ij}G_{ji}E\left[\eta_{i}\nu_{i}\right]E\left[\eta_{j}\nu_{j}\right]$ and $E\left[A_{3i}\right] =E\left[\eta_{i}^{2}\right]\left(\sum_{j\ne i}G_{ji}R_{\Delta j}\right)^{2}+\sum_{j\ne i}G_{ji}^{2}E\left[\eta_{i}^{2}\right]E\left[\nu_{j}^{2}\right]$.

For the $A_4$ and $A_5$ terms, observe that $X_{i}-Q_{i}^{\prime}\hat{\tau}_{-ij} =X_{i}-Q_{i}^{\prime}\sum_{k\ne i,j}\left(\sum_{l\ne i,j}Q_{l}Q_{l}^{\prime}\right)^{-1}Q_{k}X_{k}
=X_{i}+\sum_{k\ne i,j}\check{M}_{ik,-ij}X_{k} =\sum_{k\ne j}\check{M}_{ik,-ij}X_{k}$, where the final equality follows from $\check{M}_{ii,-ij}=1$. Then, 
\begin{align*}
E\left[A_{4ij}\right] & =E\left[X_{i}\sum_{k\ne j}\check{M}_{ik,-ij}X_{k}e_{j}\left(X_{j}-Q_{j}^{\prime}\hat{\tau}_{\Delta,-ijk}\right)\right]
 =\sum_{k\ne j}E\left[X_{i}\check{M}_{ik,-ij}X_{k}e_{j}\left(X_{j}-Q_{j}^{\prime}\hat{\tau}_{\Delta,-ijk}\right)\right]\\
 & =E\left[X_{i}\sum_{k\ne j}\check{M}_{ik,-ij}X_{k}\right]E\left[e_{j}\left(e_{j}-Q_{j}^{\prime}\hat{\tau}_{\Delta,-ijk}\right)\right] 
 =E\left[\eta_{i}^{2}\right]E\left[\nu_{j}^{2}\right].
\end{align*}
Similarly, $E\left[A_{5ij}\right]=E\left[\eta_{i}\nu_{i}\right]E\left[\eta_{j}\nu_{j}\right]$.
Combining these expressions yields the unbiasedness result.

\textbf{Proof of Consistency. } 
By Chebyshev's inequality,
\begin{align*}
\Pr & \left(\left|\frac{\hat{V}_{LM}-\Var\left(\sum_{i}\sum_{j\ne i}G_{ij}e_{i}X_{j}\right)}{\Var\left(\sum_{i}\sum_{j\ne i}G_{ij}e_{i}X_{j}\right)}\right|>\epsilon\right)\\
 & \leq\frac{1}{\epsilon^{2}}\frac{\Var\left(\sum_{i}\left(A_{1i}+2A_{2i}+A_{3i}\right)-\sum_{i}\sum_{j\ne i}G_{ji}^{2}A_{4ij}-\sum_{i}\sum_{j\ne i}G_{ij}G_{ji}A_{5ij}\right)}{\left[\Var\left(\sum_{i}\sum_{j\ne i}G_{ij}e_{i}X_{j}\right)\right]^{2}}
\end{align*}

Observe that the numerator can be written as the variance of the estimator only because $\hat{V}_{LM}$ is unbiased. 
I first establish the order of the denominator. 
Let $\tilde{R}_i := \sum_{j\ne i} G_{ij} R_j$, $\tilde{R}_{\Delta i} := \sum_{j \ne i} G_{ji} R_{\Delta j}$ and $\rho_{i}:=corr(\eta_{i}\nu_{i})$.

Since $E[\nu_i^2]$ and $E[\eta_i^2]$ are bounded away from zero and $|\textup{corr}(\eta_i\nu_i)|$ is bounded away from one by \Cref{asmp:normality}(b), the first line of the $V_{LM}$ expression in \Cref{eqn:VLM_expr_proof} has order at least $\sum_i \tilde{R}_i^2 + \sum_i \tilde{R}_{\Delta i}^2$, and the second line has order at least $\sum_i \sum_{j \ne i} G_{ij}^2$. To see this, for some $\underline{c}>0$, the first line is:
\small
\begin{align*}
 & \sum_{i}E\left[\nu_{i}^{2}\right]\tilde{R}_{i}^{2}+2\tilde{R}_{\Delta i}\tilde{R}_{i}E\left[\nu_{i}\eta_{i}\right]+\tilde{R}_{\Delta i}^{2}E\left[\eta_{i}^{2}\right] 
 =\sum_{i}E\left[\nu_{i}^{2}\right]\tilde{R}_{i}^{2}+2\tilde{R}_{\Delta i}\tilde{R}_{i}\rho_{i}\sqrt{E\left[\nu_{i}^{2}\right]E\left[\eta_{i}^{2}\right]}+\tilde{R}_{\Delta i}^{2}E\left[\eta_{i}^{2}\right]\\
 & \geq\sum_{i}\left(E\left[\nu_{i}^{2}\right]\tilde{R}_{i}^{2}+\tilde{R}_{\Delta i}^{2}E\left[\eta_{i}^{2}\right]\right)\left(1-|\rho_{i}|\right)
 +\sum_{i}|\rho_{i}|\left(E\left[\nu_{i}^{2}\right]\tilde{R}_{i}^{2}+\tilde{R}_{\Delta i}^{2}E\left[\eta_{i}^{2}\right]-2\tilde{R}_{\Delta i}\tilde{R}_{i}\sqrt{E\left[\nu_{i}^{2}\right]E\left[\eta_{i}^{2}\right]}\right)\\
 & =\sum_{i}\left(E\left[\nu_{i}^{2}\right]\tilde{R}_{i}^{2}+\tilde{R}_{\Delta i}^{2}E\left[\eta_{i}^{2}\right]\right)\left(1-|\rho_{i}|\right)+\sum_{i}|\rho_{i}|\left(\sqrt{E\left[\nu_{i}^{2}\right]\tilde{R}_{i}^{2}}-\sqrt{\tilde{R}_{\Delta i}^{2}E\left[\eta_{i}^{2}\right]}\right)^{2}\\
 & \geq\sum_{i}\left(E\left[\nu_{i}^{2}\right]\tilde{R}_{i}^{2}+\tilde{R}_{\Delta i}^{2}E\left[\eta_{i}^{2}\right]\right)\left(1-|\rho_{i}|\right) \geq \underline{c} \sum_{i}\left(\tilde{R}_{i}^{2}+\tilde{R}_{\Delta i}^{2}\right),
\end{align*}
\normalsize

and, similarly, the second line is:
\begin{align*}
 & \sum_{i}\sum_{j\ne i}G_{ij}^{2}E\left[\nu_{i}^{2}\right]E\left[\eta_{j}^{2}\right]+\sum_{i}\sum_{j\ne i}G_{ij}G_{ji}E\left[\eta_{i}\nu_{i}\right]E\left[\eta_{j}\nu_{j}\right]\\
 & =\frac{1}{2}\sum_{i}\sum_{j\ne i}G_{ij}^{2}E\left[\nu_{i}^{2}\right]E\left[\eta_{j}^{2}\right]\left(1-\rho_{i}^{2}\right)+\frac{1}{2}\sum_{i}\sum_{j\ne i}G_{ji}^{2}E\left[\nu_{j}^{2}\right]E\left[\eta_{i}^{2}\right]\left(1-\rho_{j}^{2}\right)\\
 & \quad+\frac{1}{2}\sum_{i}\sum_{j\ne i} E\left[\nu_{i}^{2}\right]E\left[\eta_{j}^{2}\right](G_{ij}^{2}\rho_{i}^{2} + G_{ji}^2 \rho_{j}^{2})+\sum_{i}\sum_{j\ne i}G_{ij}G_{ji}\rho_{i}\rho_{j}\sqrt{E\left[\nu_{i}^{2}\right]E\left[\eta_{j}^{2}\right]}\sqrt{E\left[\nu_{j}^{2}\right]E\left[\eta_{i}^{2}\right]} \\
 & \geq\frac{1}{2}\sum_{i}\sum_{j\ne i}G_{ij}^{2}E\left[\nu_{i}^{2}\right]E\left[\eta_{j}^{2}\right]\left(1-\rho_{i}^{2}\right)+\frac{1}{2}\sum_{i}\sum_{j\ne i}G_{ji}^{2}E\left[\nu_{j}^{2}\right]E\left[\eta_{i}^{2}\right]\left(1-\rho_{j}^{2}\right) \geq \underline{c} \sum_{i}\sum_{j\ne i}G_{ij}^{2}.
\end{align*}
Consequently, 
\begin{equation} \label{eqn:VLM_order}
V_{LM} \succeq \sum_i \tilde{R}_i^2 + \sum_i \tilde{R}_{\Delta i}^2 + \sum_i \sum_{j \ne i} G_{ij}^2 =: r_n.
\end{equation}
Hence, since $r_n \rightarrow \infty$ by \Cref{asmp:normality}(d), $V_{LM}$ also diverges. 
By repeated application of the Cauchy-Schwarz inequality, it suffices to show that the variance of each of the 5 $A$ terms above has order at most $r_n$ (i.e., bounded by any of the three terms in \Cref{eqn:VLM_order}). 
If this is true, then since the denominator has order at least $r_n^2$, the variance estimator is consistent. 
The A1 and A2 terms have the form:
\begin{align*}
\sum_{i}\sum_{j\ne i} & G_{ij}F_{ik}V_{1j}\sum_{k\ne i}V_{2k}V_{3i}\left(V_{4i}-Q_{i}^{\prime}\hat{\tau}_{4,-ijk}\right)=  \sum_{i}\sum_{j\ne i}\sum_{k\ne i}\sum_{l\ne j,k}G_{ij}F_{ik}V_{1j}V_{2k}V_{3i}\check{M}_{il,-ijk}V_{4l}\\
= & \sum_{i}\sum_{j\ne i}\sum_{k\ne i,j}\sum_{l\ne i,j,k}G_{ij}F_{ik}\check{M}_{il,-ijk}V_{1j}V_{2k}V_{3i}V_{4l} +\sum_{i}\sum_{j\ne i}\sum_{k\ne i,j}G_{ij}F_{ik}V_{1j}V_{2k}V_{3i}V_{4i}\\
 & +\sum_{i}\sum_{j\ne i}\sum_{l\ne i,j}G_{ij}F_{ij}\check{M}_{il,-ij}V_{1j}V_{2j}V_{3i}V_{4l} +\sum_{i}\sum_{j\ne i}G_{ij}F_{ij}V_{1j}V_{2j}V_{3i}V_{4i}.
\end{align*}
In particular, A1 uses $F=G,V_{1}=X,V_{2}=X,V_{3}=e,V_{4}=e$, while A2 uses $F=G^{\prime},V_{1}=X,V_{2}=e,V_{3}=e,V_{4}=X$ . 
By applying the Cauchy-Schwarz inequality, it suffices to show that the variance of each of the sums has order at most $r_{n}$. 
The terms $\sum_{i}\sum_{j\ne i}G_{ij}F_{ij}V_{1j}V_{2j}V_{3i}V_{4i}$ and $\sum_{i}\sum_{j\ne i}\sum_{l\ne i,j}G_{ij}F_{ij}\check{M}_{il,-ij}V_{1j}V_{2j}V_{3i}V_{4l}$ are identical to the result in \Cref{lem:a1}, with the latter result being obtained by switching the $i$ and $j$ indices. 
The remaining terms have a variance that has a bounded order by \Cref{lem:a2}. 
For A3, we can use $G_{ji}$ in place of $G_{ij}$ above, and use $F=G^{\prime},V_{1}=e,V_{2}=e,V_{3}=X,V_{4}=X$ so that the order is bounded above due to \Cref{lem:a3}. 
A4 and A5 can be written as:
\begin{align*}
\sum_{i}\sum_{j\ne i} & G_{ji}F_{ij}V_{1i}\sum_{k\ne j}\check{M}_{ik,-ij}V_{2k}V_{3j}\left(V_{4j}-Q_{j}^{\prime}\hat{\tau}_{4,-ijk}\right) \\
&=  \sum_{i}\sum_{j\ne i}\sum_{k\ne j}\sum_{l\ne i,k}G_{ji}F_{ij}V_{1i}\check{M}_{ik,-ij}V_{2k}V_{3j}\check{M}_{jl,-ijk}V_{4l}\\
= & \sum_{i}\sum_{j\ne i}\sum_{k\ne i,j}\sum_{l\ne i,j,k}G_{ji}F_{ij}\check{M}_{ik,-ij}\check{M}_{jl,-ijk}V_{1i}V_{2k}V_{3j}V_{4l} +\sum_{i}\sum_{j\ne i}\sum_{k\ne i,j}G_{ji}F_{ij}\check{M}_{ik,-ij}V_{1i}V_{2k}V_{3j}V_{4j}\\
 & +\sum_{i}\sum_{j\ne i}\sum_{l\ne i,j}G_{ji}F_{ij}\check{M}_{jl,-ij}V_{1i}V_{2i}V_{3j}V_{4l} +\sum_{i}\sum_{j\ne i}G_{ji}F_{ij}V_{1i}V_{2i}V_{3j}V_{4j}.
\end{align*}
In particular, A4 uses $F=G^{\prime},V_{1}=X,V_{2}=X,V_{3}=e,V_{4}=e$, while A5 uses $F=G,V_{1}=e,V_{2}=X,V_{3}=e,V_{4}=X$ . 
By applying the Cauchy-Schwarz inequality, it suffices to show that the variance of each of the sums has order at most $r_{n}$. 
This result is immediate from \Cref{lem:a1} and \Cref{lem:a3}. 
\end{proof}

\begin{proof} [Proof of \Cref{prop:umpu}]

Let $\mu\in\mathcal{M}= \{\mu:  \mu_1>0, \mu_3 >0, \mu_2^2 < \mu_1 \mu_3 \} $. I first show that $\mathcal{M}$ is convex.
For $\lambda\in(0,1)$, it suffices to show, for $\mu_{a}$ and $\mu_{b}$
that satisfy $\mu_{2a}^{2} < \mu_{1a}\mu_{3a}$ and $\mu_{2b}^{2}<\mu_{1b}\mu_{3b}$,
that $\left(\lambda\mu_{2a}+\left(1-\lambda\right)\mu_{2b}\right)^{2}<\left(\lambda\mu_{1a}+\left(1-\lambda\right)\mu_{1b}\right)\left(\lambda\mu_{3a}+\left(1-\lambda\right)\mu_{3b}\right)$.
This set is intersected with the set that satisfies $\mu_{1}>0$ and $\mu_{3}>0$, which is clearly convex. The following is negative:
\small
\begin{align*}
 & \left(\lambda\mu_{2a}+\left(1-\lambda\right)\mu_{2b}\right)^{2}-\left(\lambda\mu_{1a}+\left(1-\lambda\right)\mu_{1b}\right)\left(\lambda\mu_{3a}+\left(1-\lambda\right)\mu_{3b}\right)\\
= & \lambda^{2}\left(\mu_{2a}^{2}-\mu_{1a}\mu_{3a}\right)+\left(1-\lambda\right)^{2}\left(\mu_{2b}^{2}-\mu_{1b}\mu_{3b}\right)+\lambda\left(1-\lambda\right)\left(2\mu_{2a}\mu_{2b}-\mu_{1b}\mu_{3a}-\mu_{1a}\mu_{3b}\right)\\
< & \lambda\left(1-\lambda\right)\left(2\sqrt{\mu_{1a}\mu_{1b}\mu_{1b}\mu_{3b}}-\mu_{1b}\mu_{3a}-\mu_{1a}\mu_{3b}\right)
<  -\lambda\left(1-\lambda\right)\left(\sqrt{\mu_{1b}\mu_{3a}}-\sqrt{\mu_{1a}\mu_{3b}}\right)^{2}\leq0.
\end{align*}
\normalsize
The first inequality occurs from applying $\mu_{2a}^{2}<\mu_{1a}\mu_{3a}$ and $\mu_{2b}^{2}<\mu_{1b}\mu_{3b}$, so $\mathcal{M}$ is convex.
Let $m\sim N(\mu,\Sigma)$ denote a statistic drawn from the asymptotic distribution, with $m_{i}$ being a component of the vector $m$, so that $m_{2}$ is the LM statistic. 
Using the linear transformation from \citet{lehmann2005testing} Example 3.9.2 Case 3, we can transform the statistics and parameter such that $m_{2}$ is orthogonal to all other components. 
In particular, consider the following transformation $L$:
\begin{align*}
L:=\left(\begin{array}{ccc}
\sqrt{\frac{\sigma_{22}}{\sigma_{11}\sigma_{22}-\sigma_{12}^{2}}} & -\frac{\sigma_{12}}{\sigma_{22}}\sqrt{\frac{\sigma_{22}}{\sigma_{11}\sigma_{22}-\sigma_{12}^{2}}} & 0\\
0 & \frac{1}{\sqrt{\sigma_{22}}} & 0\\
0 & -\frac{\sigma_{23}}{\sigma_{22}}\sqrt{\frac{\sigma_{22}}{\sigma_{33}\sigma_{22}-\sigma_{23}^{2}}} & \sqrt{\frac{\sigma_{22}}{\sigma_{33}\sigma_{22}-\sigma_{23}^{2}}}
\end{array}\right).
\end{align*}
Then,
\begin{align*}
Lm \sim N \left( L\mu, \left(\begin{array}{ccc}
1 & 0 & \frac{\sigma_{13}\sigma_{22}-\sigma_{12}\sigma_{23}}{\left(\sigma_{11}\sigma_{22}-\sigma_{12}^{2}\right)\left(\sigma_{33}\sigma_{22}-\sigma_{23}^{2}\right)}\\
0 & 1 & 0\\
\frac{\sigma_{13}\sigma_{22}-\sigma_{12}\sigma_{23}}{\left(\sigma_{11}\sigma_{22}-\sigma_{12}^{2}\right)\left(\sigma_{33}\sigma_{22}-\sigma_{23}^{2}\right)} & 0 & 1
\end{array}\right) \right).
\end{align*}
The parameter space of $L\mu\in\mathcal{L}$ is also convex because $L$ is a linear transformation: take any $\mu_{a},\mu_{b}\in\mathcal{M}$, then observe that $\lambda L\mu_{a}+(1-\lambda)L\mu_{b}=L\left(\lambda\mu_{a}+\left(1-\lambda\right)\mu_{b}\right)$.
Since $\mathcal{M}$ is convex, and every element in $\mathcal{M}$ is linearly transformed into the space on $\mathcal{L}$, we have $\lambda\mu_{a}+\left(1-\lambda\right)\mu_{b}\in\mathcal{M}$ and hence $L\left(\lambda\mu_{a}+\left(1-\lambda\right)\mu_{b}\right)\in\mathcal{L}$.
Since $Lm$ is normally distributed and $\mathcal{L}$ is convex with rank 3, the problem is in the exponential class, using the definition from \citet{lehmann2005testing} Section 4.4. 
Since the joint distribution is in the exponential class and the restriction to the interior ensures that there are points in the parameter space that are above and below the null, the uniformly most powerful unbiased test follows the form of \citet{lehmann2005testing} Theorem 4.4.1(iv), by using $U=m_2$ and $T=\left( \sqrt{\frac{\sigma_{22}}{\sigma_{33}\sigma_{22}-\sigma_{23}^{2}}}m_{3}-\frac{\sigma_{23}}{\sqrt{\sigma_{22}\left(\sigma_{33}\sigma_{22}-\sigma_{23}^{2}\right)}}m_{2}, \sqrt{\frac{\sigma_{22}}{\sigma_{11}\sigma_{22}-\sigma_{12}^{2}}}m_{1}-\frac{\sigma_{12}}{\sqrt{\sigma_{22}\left(\sigma_{11}\sigma_{22}-\sigma_{12}^{2}\right)}}m_{2}\right)^\prime$ in their notation.
To calculate the critical values of the \citet{lehmann2005testing} Theorem 4.4.1(iv) result, observe that $\left[Lm\right]_{2}$ is orthogonal to $\left[Lm\right]_{1}$ and $\left[Lm\right]_{3}$, so the distribution of $\left[Lm\right]_{2}$ conditional on $\left[Lm\right]_{1}$ and $\left[Lm\right]_{3}$ is standard normal. 
Since $\left[Lm\right]_{2}$ is standard normal, it is symmetric around $0$ under the null, so the solution to the critical value is $\pm1.96$ for a 5\% test, due to simplification in \citet{lehmann2005testing} Section 4.2. 
The resulting test is hence identical to the two-sided LM test. 
\end{proof}

\bibliographystyle{ecta}
\bibliography{mwiv}

\clearpage
\section{Online Appendix}

\subsection{Data Generating Process} \label{sec:sect2DGP}
Data is generated from an environment with $E[\varepsilon_i] =0$, and $\int_0^1 f(v) dv = \beta$. 
To run a regression on judge indicators (without an intercept) in the reduced-form system, I make a transformation $\check{X} = 2X - 1$ so that the reduced-form equations can be written as:
\begin{align*}
\check{X}_{i} & =Z_{i}^{\prime}\pi+\eta_{i}, \text{ and } Y_{i} =Z_{i}^{\prime}\pi_{Y}+\zeta_{i},
\end{align*}
so $\pi_k = \pi_{Yk}=0$ for the base judge. 
The reduced-form errors are: $\eta_{i} =I\left\{ \lambda_{k(i)}-v_{i}\geq0\right\} -\pi_{k(i)}$ and $\zeta_i = 1\left\{ \lambda_{k(i)}-v_{i}\geq0\right\} f\left(v_{i}\right)+\varepsilon_{i}-\pi_{Yk(i)}$ respectively.
With $\pi_{\Delta k} = \pi_{Yk}- \pi_k \beta$, the reduced-form parameters for the groups of judges are derived in \Cref{tab:parRFmte}.
Since the coefficient of the base judge is normalized to zero, the implementation without covariates in simulations excludes the intercept and uses indicators for all judges, instead of omitting the base judge and having an intercept. 
This implementation results in a block diagonal projection matrix, which aids computational speed, while retaining the interpretation of $\pi$'s in the reduced-form model.
The $f(v)$ that delivers the parameters in \Cref{tab:parRFmte} is
\begin{equation}
f\left(v\right)=\begin{cases}
-s\beta+h & v\in[0,\frac{1}{2}-s]\\
\frac{1}{s}\left(1-s\right)\left(-\frac{1}{2}s\beta-h\right)-\frac{1}{s}\left(1-2s\right)\left(-s\beta+h\right) & v\in(\frac{1}{2}-s,\frac{1}{2}-\frac{1}{2}s]\\
\frac{1}{s}\left(1-s\right)\left(\frac{1}{2}s\beta+h\right) & v\in(\frac{1}{2}-\frac{1}{2}s,\frac{1}{2}]\\
\frac{1}{s}\left(1+s\right)\left(\frac{1}{2}s\beta-h\right) & v\in(\frac{1}{2},\frac{1}{2}+\frac{1}{2}s]\\
\frac{1}{s}\left(1+2s\right)\left(s\beta+h\right)-\frac{1}{s}\left(1+s\right)\left(\frac{1}{2}s\beta-h\right) & v\in(\frac{1}{2}+\frac{1}{2}s,\frac{1}{2}+s]\\
\frac{\beta-\left(\frac{1}{2}+s\right)\left(s\beta+h\right)}{\frac{1}{2}-s} & v\in(\frac{1}{2}+s,1]
\end{cases}.
\end{equation}

\begin{table}
    \centering
    \caption{Parameters for Simple Example} \label{tab:parRFmte}
\begin{tabular}{cccccc}
\toprule
$\lambda_{k}$ & $\frac{1}{2}-s$ & $\frac{1}{2}-\frac{1}{2}s$ & $\frac{1}{2}$ & $\frac{1}{2}+\frac{1}{2}s$ & $\frac{1}{2}+s$ \\
\midrule
$\beta_{k}$ & $\beta-\frac{h}{s}$ & $\beta+2\frac{h}{s}$ & NA & $\beta-2\frac{h}{s}$ & $\beta+\frac{h}{s}$ \\ 
\midrule
$\pi_{k}$ & $-s$ & $-\frac{1}{2}s$ & $0$ & $\frac{1}{2}s$ & $s$ \\
\midrule
$\pi_{Yk}$ & $-s\beta+h$ & $-\frac{1}{2}s\beta-h$ & $0$ & $\frac{1}{2}s\beta-h$ & $s\beta+h$ \\
\midrule
$\pi_{\Delta k}$ & $h$ & $-h$ & $0$ & $-h$ & $h$ \\
\bottomrule 
\end{tabular}
\end{table}

To generate the data in the simulation, I draw $v_i \sim U[0,1]$ as implied by the structural model, then generate $\zeta_i \mid v_i \sim N(\sigma_{\varepsilon v}v_i , \sigma_{\varepsilon \varepsilon})$.
Hence, $\sigma_{\varepsilon v}$ and $\sigma_{\varepsilon \varepsilon}$ control the correlation between $\eta_i$ and $\zeta_i$, with $\sigma_{\varepsilon \varepsilon}=0$ corresponding to perfect correlation.
In the base case, I set $\sigma_{\varepsilon \varepsilon} = 0.1$ and $\sigma_{\varepsilon v} = 0.3$.
With the given $\pi_k, \pi_{Yk}$, the observable variables are generated from $\check{X}_i = I\{ \pi_{k(i)} > v_i \}$ and $Y_i = \pi_{Yk(i)} + \zeta_i$.

\subsection{Comparing Variance Estimands} \label{sec:compare_var_estimands}

\textbf{Derivations for Constructed Instrument}
Using the notation for the just-identified IV AR test in \Cref{sec:constructedIV} when $\beta = \beta_0$,
\begin{align*}
\hat{\varepsilon}_{i} & =e_{i}-\tilde{X}_{i}\frac{\sum_{i}e_{i}\tilde{X}_{i}}{\sum_{i}\tilde{X}_{i}^{2}} 
=\frac{e_{i}\sum_{i}\tilde{X}_{i}^{2}-\tilde{X}_{i}\sum_{i}e_{i}\tilde{X}_{i}}{\sum_{i}\tilde{X}_{i}^{2}}, \text{ and } \\
\hat{V} & =\frac{\sum_{i}\tilde{X}_{i}^{2}\hat{\varepsilon}_{i}^{2}}{\left(\sum_{i}\tilde{X}_{i}^{2}\right)^{2}} 
=\frac{\sum_{i}\tilde{X}_{i}^{2}\left(e_{i}\sum_{j}\tilde{X}_{j}^{2}-\tilde{X}_{i}\sum_{j}e_{j}\tilde{X}_{j}\right)^{2}}{\left(\sum_{i}\tilde{X}_{i}^{2}\right)^{4}}\\
 & =\frac{\sum_{i}\tilde{X}_{i}^{2}e_{i}^{2}\left(\sum_{j}\tilde{X}_{j}^{2}\right)^{2}+\sum_{i}\tilde{X}_{i}^{4}\left(\sum_{j}e_{j}\tilde{X}_{j}\right)^{2}-2\sum_{i}\tilde{X}_{i}^{3}e_{i}\left(\sum_{j}\tilde{X}_{j}^{2}\right)\left(\sum_{j}e_{j}\tilde{X}_{j}\right)}{\left(\sum_{i}\tilde{X}_{i}^{2}\right)^{4}}.
\end{align*}

Applying the asymptotic result that $\frac{1}{n}\sum_{j}e_{j}\tilde{X}_{j}\xrightarrow{p}0$ from \Cref{thm:normality},
\begin{align*}
t_{AR}^{2} & =\frac{\frac{\left(\sum_{i}\tilde{X}_{i}e_{i}\right)^{2}}{\left(\sum_{i}\tilde{X}_{i}^{2}\right)^{2}}}{\frac{\sum_{i}\tilde{X}_{i}^{2}e_{i}^{2}\left(\sum_{j}\tilde{X}_{j}^{2}\right)^{2}+\sum_{i}\tilde{X}_{i}^{4}\left(\sum_{j}e_{j}\tilde{X}_{j}\right)^{2}-2\sum_{i}\tilde{X}_{i}^{3}e_{i}\left(\sum_{j}\tilde{X}_{j}^{2}\right)\left(\sum_{j}e_{j}\tilde{X}_{j}\right)}{\left(\sum_{i}\tilde{X}_{i}^{2}\right)^{4}}}\\
 & =\frac{\left(\frac{1}{\sqrt{n}}\sum_{i}\tilde{X}_{i}e_{i}\right)^{2}\left(\frac{1}{n}\sum_{i}\tilde{X}_{i}^{2}\right)^{2}}{\frac{1}{n}\sum_{i}\tilde{X}_{i}^{2}e_{i}^{2}\left(\frac{1}{n}\sum_{j}\tilde{X}_{j}^{2}\right)^{2}+\frac{1}{n}\sum_{i}\tilde{X}_{i}^{4}\left(\frac{1}{n}\sum_{j}e_{j}\tilde{X}_{j}\right)^{2}-2\frac{1}{n}\sum_{i}\tilde{X}_{i}^{3}e_{i}\left(\frac{1}{n}\sum_{j}\tilde{X}_{j}^{2}\right)\left(\frac{1}{n}\sum_{j}e_{j}\tilde{X}_{j}\right)}\\
 & =\frac{\left(\frac{1}{\sqrt{n}}\sum_{i}\tilde{X}_{i}e_{i}\right)^{2}\left(\frac{1}{n}\sum_{i}\tilde{X}_{i}^{2}\right)^{2}}{\frac{1}{n}\sum_{i}\tilde{X}_{i}^{2}e_{i}^{2}\left(\frac{1}{n}\sum_{j}\tilde{X}_{j}^{2}\right)^{2}+o_{P}(1)}=\frac{\left(\frac{1}{\sqrt{n}}\sum_{i}\tilde{X}_{i}e_{i}\right)^{2}}{\frac{1}{n}\sum_{i}\tilde{X}_{i}^{2}e_{i}^{2}}+o_{P}(1), \text { and }
\end{align*}
\begin{align*}
E\left[\sum_{i}\tilde{X}_{i}^{2}e_{i}^{2}\right] & =\sum_{i}E\left[\left(\sum_{j\ne i}P_{ij}\left(R_{j}+\eta_{j}\right)\right)^{2}\left(R_{\Delta i}+\nu_{i}\right)^{2}\right] \\
&=\sum_{i}E\left[\left(M_{ii}^{2}R_{i}^{2}+\left(\sum_{j\ne i}P_{ij}^{2}\eta_{j}^{2}\right)\right)\left(R_{\Delta i}^{2}+\nu_{i}^{2}\right)\right]\\
 & =\sum_{i}\left(M_{ii}^{2}R_{i}^{2}R_{\Delta i}^{2}+\sum_{j\ne i}P_{ij}^{2}R_{\Delta i}^{2}E\left[\eta_{j}^{2}\right]+M_{ii}^{2}R_{i}^{2}E\left[\nu_{i}^{2}\right]+\sum_{j\ne i}P_{ij}^{2}E\left[\nu_{i}^{2}\right]E\left[\eta_{j}^{2}\right]\right).
\end{align*}

\textbf{Clustering by Judges}. If we just use the JIVE t-ratio with clustering, weak identification is still a problem, and we should similarly get over-rejection. 
The just-identified AR with clustered standard errors uses the following estimator:
\[
\hat{V}_{clus}=\frac{\sum_{i}\sum_{j\in\mathcal{N}_{i}}\tilde{X}_{i}\hat{\varepsilon}_{i}\tilde{X}_{j}\hat{\varepsilon}_{j}}{\left(\sum_{i}\tilde{X}_{i}^{2}\right)^{2}},
\]
where $\mathcal{N}_{i}$ is the neighborhood of $i$ (i.e., the set of observations that share the same cluster as $i$). 
Expanding $\hat{V}_{clus}$ using the same steps as before,
\begin{align*}
\hat{V}_{clus} & =\frac{\sum_{i}\sum_{j\in\mathcal{N}_{i}}\tilde{X}_{i}\tilde{X}_{j}\left(e_{i}e_{j}\left(\sum_{k}\tilde{X}_{k}^{2}\right)^{2}-2\tilde{X}_{i}e_{j}\left(\sum_{k}e_{k}\tilde{X}_{k}\right)\left(\sum_{k}\tilde{X}_{k}^{2}\right)+\tilde{X}_{i}\tilde{X}_{j}\left(\sum_{k}e_{k}\tilde{X}_{k}\right)^{2}\right)}{\left(\sum_{i}\tilde{X}_{i}^{2}\right)^{4}}.
\end{align*}
Using the fact that $\frac{1}{n}\sum_{k}e_{k}\tilde{X}_{k}=o_{P}(1)$, the dominant term is: $\sum_{i}\sum_{j\in\mathcal{N}_{i}}\tilde{X}_{i}\tilde{X}_{j}e_{i}e_{j}$, which is analogous to the previous derivation. 
The expansion steps are analogous to that required to derive $V_{LM}$, so they are omitted.
Then,
\begin{align*}
E&\left[\sum_{i}\sum_{j\in\mathcal{N}_{i}}\tilde{X}_{i}\tilde{X}_{j}e_{i}e_{j}\right]  \\
&=E\left[\sum_{i}\sum_{j\in\mathcal{N}_{i}}\left(\sum_{k\ne i}P_{ik}\left(R_{k}+\eta_{k}\right)\right)\left(\sum_{k\ne j}P_{jk}\left(R_{k}+\eta_{k}\right)\right)\left(R_{\Delta i}+\nu_{i}\right)\left(R_{\Delta j}+\nu_{j}\right)\right]\\
 & =E\left[\sum_{i}\sum_{j\in\mathcal{N}_{i}}\left(M_{ii}R_{i}+\sum_{k\ne i}P_{ik}\eta_{k}\right)\left(M_{jj}R_{j}+\sum_{k\ne j}P_{jk}\eta_{k}\right)\left(R_{\Delta i}R_{\Delta j}+\nu_{i}R_{\Delta j}+R_{\Delta i}\nu_{j}+\nu_{i}\nu_{j}\right)\right]\\
 & =\sum_{i}\sum_{j\in\mathcal{N}_{i}}\left(M_{ii}M_{jj}R_{i}R_{j}R_{\Delta i}R_{\Delta j}+R_{\Delta i}R_{\Delta j}\sum_{k\ne i,j}P_{ik}P_{jk}E\left[\eta_{k}^{2}\right]\right)\\
 & \quad+2\sum_{i}\sum_{j\in\mathcal{N}_{i}}M_{ii}R_{i}R_{\Delta j}P_{ji}E\left[\eta_{i}\nu_{i}\right]+\sum_{i}M_{ii}^{2}R_{i}^{2}E\left[\nu_{i}^{2}\right]\\
 & \quad+\sum_{i}\sum_{k\ne i}P_{ik}^{2}E\left[\nu_{i}^{2}\right]E\left[\eta_{k}^{2}\right]+\sum_{i}\sum_{j\in\mathcal{N}_{i}\backslash\{i\}}P_{ij}^{2}E\left[\nu_{i}\eta_{i}\right]E\left[\nu_{j}\eta_{j}\right].
\end{align*}

By applying the fact that the entries of the projection matrix are nonzero only when the observations share the same judge, the expression simplifies further:
\begin{align*}
E\left[\sum_{i}\sum_{j\in\mathcal{N}_{i}}\tilde{X}_{i}\tilde{X}_{j}e_{i}e_{j}\right] & =\sum_{i}\sum_{j\in\mathcal{N}_{i}}M_{ii}M_{jj}R_{i}R_{j}R_{\Delta i}R_{\Delta j}+\sum_{i}M_{ii}^{2}R_{\Delta i}^{2}E\left[\eta_{i}^{2}\right]\\
 & \quad+2\sum_{i}M_{ii}R_{i}R_{\Delta i}E\left[\eta_{i}\nu_{i}\right]+\sum_{i}M_{ii}^{2}R_{i}^{2}E\left[\nu_{i}^{2}\right] \\
 &\quad +\sum_{i}\sum_{j\ne i}P_{ij}^{2}\left(E\left[\nu_{i}^{2}\right]E\left[\eta_{j}^{2}\right]+E\left[\nu_{i}\eta_{i}\right]E\left[\nu_{j}\eta_{j}\right]\right).
\end{align*}

Compared to the true variance in \Cref{cor:var_expr}, due to the own-observation bias, we have an extra $\sum_{i}\sum_{j\in\mathcal{N}_{i}}M_{ii}M_{jj}R_{i}R_{j}R_{\Delta i}R_{\Delta j}$ term, and the estimand here has $\sum_{i}M_{ii}R_{i}R_{\Delta i}E\left[\eta_{i}\nu_{i}\right]$
instead of $\sum_{i}M_{ii}^{2}R_{i}R_{\Delta i}E\left[\eta_{i}\nu_{i}\right]$.
Even though $\sum_{i}\sum_{j\in\mathcal{N}_{i}}M_{ii}M_{jj}R_{i}R_{j}R_{\Delta i}R_{\Delta j} \geq 0$, $\sum_i M_{ii} (1-M_{ii}) R_{i}R_{\Delta i}E\left[\eta_{i}\nu_{i}\right]$ could be positive or negative, so the clustered variance estimand could either over or underestimate the true variance.

\textbf{Comparing MO Variance Estimand with L3O.}
The \citet{matsushita2022jackknife} variance estimator presented in \Cref{eqn:VMO_def} is biased in general. 
The model of \Cref{sec:simple_setting} implies:
\begin{equation} \label{lem:VMO_expr}
\begin{split}
E\left[\hat{\Psi}_{MO}\right] & =\sum_{i}M_{ii}^{2}R_{i}^{2}R_{\Delta i}^{2}+\sum_{i}M_{ii}^{2}R_{i}^{2}E\left[\nu_{i}^{2}\right]+\sum_{i}\sum_{j\ne i}P_{ij}^{2}E\left[\nu_{i}^{2}\right]E\left[\eta_{j}^{2}\right]+\sum_{i}\sum_{j\ne i}P_{ij}^{2}R_{\Delta i}^{2}E\left[\eta_{j}^{2}\right]\\
 & \quad+\sum_{i}\sum_{j\ne i}P_{ij}^{2}\left(R_{i}R_{\Delta i}R_{j}R_{\Delta j}+E\left[\eta_{i}\nu_{i}\right]R_{j}R_{\Delta j}+R_{i}R_{\Delta i}E\left[\eta_{j}\nu_{j}\right]+E\left[\eta_{i}\nu_{i}\right]E\left[\eta_{j}\nu_{j}\right]\right).
\end{split}
\end{equation}
As a corollary of \Cref{lem:var_expr}, when $G=P$, by observing that $P$ is symmetric, and that since $PR=I$, we have $\sum_{j\ne i}P_{ij}R_{j}=\sum_{j\ne i}P_{ji}R_{j}=M_{ii}R_{i}$, so
\begin{equation} \label{cor:var_expr}
\begin{split}
\Var\left(\sum_{i}\sum_{j\ne i}P_{ij}e_{i}X_{j}\right) & =\sum_{i}E\left[\nu_{i}^{2}\right]M_{ii}^{2}R_{i}^{2}+\sum_{i}\sum_{j\ne i}P_{ij}^{2}\left(E\left[\nu_{i}^{2}\right]E\left[\eta_{j}^{2}\right]+E\left[\eta_{i}\nu_{i}\right]E\left[\eta_{j}\nu_{j}\right]\right)\\
 & \quad+2\sum_{i}E\left[\nu_{i}\eta_{i}\right]M_{ii}^{2}R_{i}R_{\Delta i}+\sum_{i}E\left[\eta_{i}^{2}\right]M_{ii}^{2}R_{\Delta i}^{2}.
\end{split}
\end{equation}
If the $R_{\Delta}$'s are zero, then $\hat{\Psi}_{MO}$ is unbiased.
Nonetheless, heterogeneity results in many excess terms in the expectation of the variance estimator, generating bias and inconsistency in general. 
However, $\hat{\Psi}_{MO}$ can be consistent when forcing weak identification and weak heterogeneity. 
If it is assumed that $\frac{1}{\sqrt{K}}\sum_{i}M_{ii}R_{i}^{2}\rightarrow C_{S}<\infty$ and $\frac{1}{\sqrt{K}}\sum_{i}M_{ii}R_{\Delta i}^{2}\rightarrow C<\infty$ with weak identification and weak heterogeneity, then the excess terms in $\frac{1}{K}E\left[\hat{\Psi}_{MO}\right]$ can be written as $\frac{1}{\sqrt{K}}\frac{1}{\sqrt{K}}\sum_{i}M_{ii}R_{i}^{2}=\frac{1}{\sqrt{K}}O(1)=o(1)$ and $\frac{1}{\sqrt{K}}\frac{1}{\sqrt{K}}\sum_{i}M_{ii}R_{\Delta i}^{2}=o(1)$.
However, when identification or heterogeneity is strong, $\frac{1}{K}\sum_{i}M_{ii}R_{i}^{2}$ or $\frac{1}{K}\sum_{i}M_{ii}R_{\Delta i}^{2}$ is nonnegligible and the variance estimator is inconsistent. 
The variance estimator adapted from MS22 has similar properties.
In contrast, the L3O variance estimator is robust regardless of whether the identification is weak or strong. 

In general, heterogeneity does not make the MO variance estimator any more or less conservative than L3O. In the simple case with judge instruments and $G=P$, we have:
\begin{align*}
E&\left[\hat{\Psi}_{MO}\right]-Var\left(\sum_{i}\sum_{j\ne i}P_{ij}e_{i}X_{j}\right) =\sum_{i}M_{ii}^{2}R_{i}^{2}R_{\Delta i}^{2}+\sum_{i}\sum_{j\ne i}P_{ij}^{2}R_{i}R_{\Delta i}R_{j}R_{\Delta j}\\
 & \quad+\sum_{i}\sum_{j\ne i}P_{ij}^{2}R_{\Delta i}^{2}E\left[\eta_{j}^{2}\right]-\sum_{i}M_{ii}^{2}R_{\Delta i}^{2}E\left[\eta_{i}^{2}\right]
 +2\sum_{i}\sum_{j\ne i}P_{ij}^{2}E\left[\eta_{i}\nu_{i}\right]R_{j}R_{\Delta j} \\
 &\quad -2\sum_{i}E\left[\nu_{i}\eta_{i}\right]M_{ii}^{2}R_{i}R_{\Delta i} \\
 &= \sum_{i}M_{ii} R_{\Delta i}^{2} R_{i}^{2} -\sum_{i}M_{ii} (1- 2P_{ii}) R_{\Delta i}^{2}E\left[\eta_{i}^{2}\right] -2\sum_{i} M_{ii} (1- 2P_{ii}) E\left[\eta_{i}\nu_{i}\right]R_{i}R_{\Delta i} \\
 &= \sum_{i}M_{ii} R_{\Delta i}^{2} \left(R_{i}^{2} - (1- 2P_{ii}) E\left[\eta_{i}^{2}\right] \right) -2\sum_{i} M_{ii} (1- 2P_{ii}) E\left[\eta_{i}\nu_{i}\right]R_{i}R_{\Delta i}
\end{align*}
which can be positive or negative. 
The second equality uses the fact that $P$ and $M$ are non-zero only for observations that share the same judge, and when that occurs, they have the same $R, R_Y, E[\eta_i^2]$, and $E[\zeta_i^2]$, and that $\sum_{j \ne i}P_{ij}^2 = P_{ii} M_{ii}$.

To compare the confidence sets of MO and L3O, observe that the shape of the confidence set depends on the coefficient on $\beta_{0}^2$. 
In particular, for $\Psi_{2}:=\sum_{i}\left(\sum_{j\ne i}P_{ij}X_{j}\right)^{2}X_{i}^{2}+\sum_{i}\sum_{j\ne i}P_{ij}^{2}X_{i}^{2}X_{j}^{2}$, MO is unbounded when $\left(\sum_{i\ne j}^{n}P_{ij}X_{i}X_{j}\right)^{2}-q\Psi_{2}<0$ and L3O is unbounded when $\left(\sum_{i\ne j}^{n}P_{ij}X_{i}X_{j}\right)^{2}-qB_{2}<0$. 
In the simple judges case without covariates, the expected coefficients can be compared. With
\begin{align*}
E\left[\Psi_{2}\right]&=\sum_{i}\left(\left(\sum_{j\ne i}P_{ij}R_{j}\right)^{2}+\sum_{j\ne i}P_{ij}^{2}E\left[\eta_{j}^{2}\right]\right) E[X_i^2]+\sum_{i}\sum_{j\ne i}P_{ij}^{2}E[X_i^2] E[X_j^2], 
\end{align*}
the difference is:
\begin{align*}
E\left[\Psi_{2}\right]-E\left[B_{2}\right] &= \sum_{i}\left(\sum_{j\ne i}P_{ij}R_{j}\right)^{2}R_{i}^{2}+\sum_{i}\sum_{j\ne i}P_{ij}^{2}\left(R_{i}^{2}R_{j}^{2}+3E\left[\eta_{i}^{2}\right]R_{j}^{2}\right) \\
&\quad -3\sum_{i}\left(\sum_{j\ne i}P_{ij}R_{j}\right)^{2}E\left[\eta_{i}^{2}\right]\\
 &= \sum_{i}M_{ii}R_{i}^{2}\left(R_{i}^{2}-3\left(1-2P_{ii}\right)E\left[\eta_{i}^{2}\right]\right),
\end{align*}
where the second equality uses the same trick as before. 

\textbf{Derivation of LM Variance.}
\begin{proof}[Proof of \Cref{lem:var_expr}]
Expanding the variance,
\begin{align*}
\Var&\left(\sum_{i}\sum_{j\ne i}G_{ij}e_{i}X_{j}\right) =E\left[\left(\sum_{i}\sum_{j\ne i}G_{ij}e_{i}X_{j}\right)^{2}\right] =E\left[\sum_{i}\sum_{j\ne i}\sum_{k}\sum_{l\ne k}G_{ij}e_{i}X_{j}G_{kl}e_{k}X_{l}\right]\\
 & =\sum_{i}\sum_{j\ne i}\sum_{k}\sum_{l\ne k}G_{ij}G_{kl}E\left[\nu_{i}X_{j}\nu_{k}X_{l}\right]+\sum_{i}\sum_{j\ne i}\sum_{k}\sum_{l\ne k}G_{ij}G_{kl}E\left[\nu_{i}X_{j}R_{\Delta k}X_{l}\right]\\
 & \quad+\sum_{i}\sum_{j\ne i}\sum_{k}\sum_{l\ne k}G_{ij}G_{kl}E\left[R_{\Delta i}X_{j}\nu_{k}X_{l}\right]+\sum_{i}\sum_{j\ne i}\sum_{k}\sum_{l\ne k}G_{ij}G_{kl}E\left[R_{\Delta i}X_{j}R_{\Delta k}X_{l}\right]
\end{align*}

The first term is:
\begin{align*}
\sum_{i} & \sum_{j\ne i}\sum_{k}\sum_{l\ne k}G_{ij}G_{kl}E\left[\nu_{i}X_{j}\nu_{k}X_{l}\right]\\
 & =\sum_{i}\sum_{j\ne i}\sum_{k}\sum_{l\ne k}G_{ij}G_{kl}E\left[\nu_{i}R_{j}\nu_{k}R_{l}+\nu_{i}\eta_{j}\nu_{k}R_{l}+\nu_{i}R_{j}\nu_{k}\eta_{l}+\nu_{i}\eta_{j}\nu_{k}\eta_{l}\right]\\
 & =\sum_{i}\sum_{j\ne i}\sum_{k}\sum_{l\ne k}G_{ij}G_{kl}E\left[\nu_{i}R_{j}\nu_{k}R_{l}+\nu_{i}\eta_{j}\nu_{k}\eta_{l}\right]\\
 & =\sum_{i}\sum_{j\ne i}\sum_{k\ne i}E\left[\nu_{i}^{2}\right]G_{ij}G_{ik}R_{j}R_{k}+\sum_{i}\sum_{j\ne i}\left(\sum_{l\ne i}G_{ij}G_{il}E\left[\nu_{i}\eta_{j}\nu_{i}\eta_{l}\right]+\sum_{l\ne j}G_{ij}G_{jl}E\left[\nu_{i}\eta_{j}\nu_{j}\eta_{l}\right]\right)\\
 & \quad+\sum_{i}\sum_{j\ne i}\left(\sum_{k\ne i,j}G_{ij}G_{ki}E\left[\nu_{i}\eta_{j}\nu_{k}\eta_{i}\right]+\sum_{k\ne i,j}G_{ij}G_{kj}E\left[\nu_{i}\eta_{j}\nu_{k}\eta_{j}\right]\right)\\
 & =\sum_{i}\sum_{j\ne i}\sum_{k\ne i}E\left[\nu_{i}^{2}\right]G_{ij}G_{ik}R_{j}R_{k}+\sum_{i}\sum_{j\ne i}\left(G_{ij}^{2}E\left[\nu_{i}^{2}\eta_{j}^{2}\right]+G_{ij}G_{ji}E\left[\nu_{i}\eta_{i}\eta_{j}\nu_{j}\right]\right)\\
 & =\sum_{i}\sum_{j\ne i}\sum_{k\ne i}E\left[\nu_{i}^{2}\right]G_{ij}G_{ik}R_{j}R_{k}+\sum_{i}\sum_{j\ne i}\left(G_{ij}^{2}E\left[\nu_{i}^{2}\right]E\left[\eta_{j}^{2}\right]+G_{ij}G_{ji}E\left[\nu_{i}\eta_{i}\right]E\left[\eta_{j}\nu_{j}\right]\right)
\end{align*}

In the next few terms, the expansion steps are analogous, so intermediate steps are omitted for brevity. 
The second to fourth terms can be expressed as:
\begin{align*}
\sum_{i} & \sum_{j\ne i}\sum_{k}\sum_{l\ne k}G_{ij}G_{kl}E\left[\nu_{i}X_{j}R_{\Delta k}X_{l}\right]
=\sum_{i}E\left[\nu_{i}\eta_{i}\right]\sum_{j\ne i}G_{ij}R_{j}\sum_{k\ne i}G_{ki}R_{\Delta k}, \\
\sum_{i} & \sum_{j\ne i}\sum_{k}\sum_{l\ne k}G_{ij}G_{kl}E\left[R_{\Delta i}X_{j}\nu_{k}X_{l}\right]
 = \sum_{i}\sum_{j\ne i}\sum_{l\ne i}G_{ji}G_{il}E\left[\eta_{i}\nu_{i}\right]R_{\Delta j}R_{l}, \text { and } \\
\sum_{i} & \sum_{j\ne i}\sum_{k}\sum_{l\ne k}G_{ij}G_{kl}E\left[R_{\Delta i}X_{j}R_{\Delta k}X_{l}\right]
  =\sum_{i}E\left[\eta_{i}^{2}\right]\sum_{j\ne i}\sum_{k\ne i}G_{ji}G_{ki}R_{\Delta j}R_{\Delta k}.
\end{align*}

The expression stated in the equation combines these expressions. 
\end{proof}

\subsection{Further Details for Power} \label{sec:details_power}

\subsubsection{Further Analytic Results} \label{sec:further_analytics_power}

\begin{lemma} \label{lem:max_inv}
$\left(s_{1}^{\prime},s_{2}^{\prime}\right)^{\prime}$ are sufficient statistics for $\left(\pi_{Y}^{\prime},\pi^{\prime}\right)^{\prime}$.
Further, for transformations of the form $Z \rightarrow ZF^\prime$ where $F$ is a $K \times K$ orthogonal matrix, $\left(s_{1}^{\prime}s_{1},s_{1}^{\prime}s_{2},s_{2}^{\prime}s_{2}\right)$ is a maximal invariant, and 
\[
\left(\begin{array}{c}
s_{1}\\
s_{2}
\end{array}\right)\sim N\left(\left(\begin{array}{c}
\left(Z^{\prime}Z\right)^{1/2}\pi_{Y}\\
\left(Z^{\prime}Z\right)^{1/2}\pi
\end{array}\right),\Omega\otimes I_{K}\right).
\]
\end{lemma}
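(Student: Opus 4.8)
The plan is to treat the distributional claim and the sufficiency claim by direct computation, and to prove the maximal-invariant claim through a Gram-matrix argument. Throughout I work in the canonical normal homoskedastic model \eqref{eqn:rf_hom_mod}, so there are no covariates and the rows $(\zeta_i,\eta_i)$ are i.i.d.\ $N(0,\Omega)$. For the distribution, I would write $s_1 = (Z'Z)^{1/2}\pi_Y + (Z'Z)^{-1/2}Z'\zeta$ and $s_2 = (Z'Z)^{1/2}\pi + (Z'Z)^{-1/2}Z'\eta$, which are affine functions of the jointly normal vector $(\zeta',\eta')'$ and hence jointly normal with means $(Z'Z)^{1/2}\pi_Y$ and $(Z'Z)^{1/2}\pi$. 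Since $\Var(\zeta)=\omega_{\zeta\zeta}I_n$, $\Var(\eta)=\omega_{\eta\eta}I_n$, and $\mathrm{Cov}(\zeta,\eta)=\omega_{\zeta\eta}I_n$, each block of the covariance of $(s_1',s_2')'$ is of the form $(Z'Z)^{-1/2}Z'(\omega_{\cdot\cdot}I_n)Z(Z'Z)^{-1/2}=\omega_{\cdot\cdot}I_K$, using $(Z'Z)^{-1/2}(Z'Z)(Z'Z)^{-1/2}=I_K$; this gives exactly $\Omega\otimes I_K$.

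For sufficiency I would invoke the factorization theorem on the Gaussian likelihood. Writing the entries of $\Omega^{-1}$ as $a,b,d$, the log-likelihood is $-\tfrac12\sum_i\big[a(Y_i-Z_i'\pi_Y)^2+2b(Y_i-Z_i'\pi_Y)(X_i-Z_i'\pi)+d(X_i-Z_i'\pi)^2\big]$. Expanding, the terms quadratic in the data carry no parameters and the terms quadratic in the parameters carry no data; the only parameter--data coupling is a linear combination of $\pi_Y'Z'Y$, $\pi'Z'Y$, $\pi_Y'Z'X$, and $\pi'Z'X$, which depends on the data only through $Z'Y$ and $Z'X$. Hence the likelihood factors through $(Z'Y,Z'X)$, and since $(s_1,s_2)=((Z'Z)^{-1/2}Z'Y,(Z'Z)^{-1/2}Z'X)$ is an invertible linear transformation of $(Z'Y,Z'X)$, it is itself sufficient for $(\pi_Y,\pi)$.

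For the maximal-invariant claim I would first pin down the group action. Under $Z\mapsto\tilde Z=ZF'$ with $F$ a $K\times K$ orthogonal matrix, $\tilde Z'\tilde Z=F(Z'Z)F'$ and therefore $(\tilde Z'\tilde Z)^{-1/2}=F(Z'Z)^{-1/2}F'$, so $\tilde s_1=F(Z'Z)^{-1/2}F'FZ'Y=Fs_1$ and likewise $\tilde s_2=Fs_2$; this confirms that rotating $Z$ acts on $(s_1,s_2)$ by $(s_1,s_2)\mapsto(Fs_1,Fs_2)$, matching the notion of invariance used in the text. Invariance of $(s_1's_1,s_1's_2,s_2's_2)$ is then immediate from $F'F=I_K$.

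The substantive step is maximality: given two pairs $(s_1,s_2)$ and $(\tilde s_1,\tilde s_2)$ with equal triples $(s_1's_1,s_1's_2,s_2's_2)$, I must produce $F\in O(K)$ with $Fs_1=\tilde s_1$ and $Fs_2=\tilde s_2$. Stacking the vectors into $K\times2$ matrices $S=[s_1,s_2]$ and $\tilde S=[\tilde s_1,\tilde s_2]$, the hypothesis is exactly $S'S=\tilde S'\tilde S=:G$. I would fix one eigendecomposition $G=V\Sigma^2V'$ and, in the generic rank-$2$ case, set $U=SV\Sigma^{-1}$ and $\tilde U=\tilde SV\Sigma^{-1}$; a short computation shows $U'U=\tilde U'\tilde U=I_2$, i.e.\ both have orthonormal columns, while $S=U\Sigma V'$ and $\tilde S=\tilde U\Sigma V'$. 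Extending $U$ and $\tilde U$ to orthonormal bases of $\R^K$ and letting $F$ be the orthogonal map carrying the first basis to the second yields $FU=\tilde U$ and hence $FS=\tilde S$. I expect this maximality argument to be the main obstacle: it is the only place the rotation structure is genuinely used, it requires $K\ge2$ (satisfied in the many-instrument regime), and it needs care in the degenerate configurations where $G$ is singular or has repeated eigenvalues, which I would handle by a limiting/perturbation argument or by restricting to the generic set on which the maximal invariant is defined.
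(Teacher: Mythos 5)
Your proposal is correct, and on two of the three claims it takes a genuinely different route from the paper. The distributional computation is the same in both: write $s_1=(Z'Z)^{1/2}\pi_Y+(Z'Z)^{-1/2}Z'\zeta$, $s_2=(Z'Z)^{1/2}\pi+(Z'Z)^{-1/2}Z'\eta$ and cancel $(Z'Z)^{-1/2}(Z'Z)(Z'Z)^{-1/2}=I_K$ blockwise. For sufficiency, however, the paper does not use factorization: it stacks $(Y',X')'$ with the fitted values $(PY,PX)$, writes down their joint normal law, and shows the conditional distribution of $(Y,X)$ given $(PY,PX)$ has mean $(PY,PX)$ and a variance free of $(\pi_Y,\pi)$, so the conditional law is parameter-free; your Neyman factorization of the Gaussian log-likelihood through $(Z'Y,Z'X)$ is shorter, more standard, and avoids computing a conditional covariance. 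The larger difference is maximality. The paper's argument supposes $s_1^{*}=Gs_1$ for some matrix $G$ with $G'G\ne I$ and concludes $s_1'G'Gs_1\ne s_1's_1$ --- but this step does not follow (a non-orthogonal $G$ can preserve the norm of the particular vector $s_1$), it presupposes the two points are linearly related, and it treats $s_1$ in isolation, whereas maximality requires a \emph{single} $F$ aligning $s_1$ and $s_2$ simultaneously while matching the cross term $s_1's_2$. Your Gram-matrix argument on the stacked $K\times 2$ matrices $S=[s_1,s_2]$, $\tilde S=[\tilde s_1,\tilde s_2]$ addresses exactly this joint requirement: equality of $S'S$ and $\tilde S'\tilde S$ yields, via the eigendecomposition of the common Gram matrix and extension of the resulting orthonormal columns to bases of $\mathbb{R}^K$, an orthogonal $F$ with $FS=\tilde S$. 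Together with your explicit verification that $Z\mapsto ZF'$ induces $(s_1,s_2)\mapsto(Fs_1,Fs_2)$ (which the paper takes for granted), your write-up is the more rigorous of the two; just make sure the final version spells out the rank-deficient and repeated-eigenvalue cases you currently defer to a perturbation argument, since the degenerate configurations ($\pi=0$ or collinear $s_1,s_2$) are not excluded by the model.
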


\begin{proposition} \label{prop:max_inv_distr}
With \Cref{eqn:rf_hom_mod}, $K\rightarrow\infty$ and $\frac{1}{\sqrt{K}}\left(\pi_{Y}^{\prime}Z^{\prime}Z\pi_{Y},\pi^{\prime}Z^{\prime}Z\pi_{Y},\pi^{\prime}Z^{\prime}Z\pi\right)\rightarrow\left(C_{YY},C_{Y},C_{S}\right)$,
\begin{equation} \label{eqn:max_inv_distr}
\frac{1}{\sqrt{K}}\left(\begin{array}{c}
s_{1}^{\prime}s_{1}-K\omega_{\zeta \zeta}-C_{YY}\\
s_{1}^{\prime}s_{2}-K\omega_{\zeta\eta}-C_{Y}\\
s_{2}^{\prime}s_{2}-K\omega_{\eta \eta} -C_{S}
\end{array}\right)\xrightarrow{d}N\left(\left(\begin{array}{c}
0\\
0\\
0
\end{array}\right),\Sigma\right)
\end{equation}
for some variance matrix $\Sigma$. If $C_{YY}, C_Y, C_S < \infty$,
\[
\Sigma=\left(\begin{array}{ccc}
2\omega_{\zeta \zeta}^{2} & 2\omega_{\zeta\eta}\omega_{\zeta \zeta} & 2\omega_{\zeta\eta}^{2}\\
2\omega_{\zeta\eta}\omega_{\zeta \zeta} & \omega_{\zeta \zeta} \omega_{\eta \eta} +\omega_{\zeta\eta}^{2} & 2\omega_{\zeta\eta}\omega_{\eta \eta}\\
2\omega_{\zeta\eta}^{2} & 2\omega_{\zeta\eta} \omega_{\eta\eta} & 2 \omega_{\eta \eta}^2
\end{array}\right).
\]
\end{proposition}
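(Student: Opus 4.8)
The plan is to use \Cref{lem:max_inv} to reduce the claim to a central limit theorem for a sum of independent quadratic forms, and then to compute the limiting covariance via Gaussian moment formulas. By \Cref{lem:max_inv}, under \Cref{eqn:rf_hom_mod} the stacked vector $(s_1^\prime, s_2^\prime)^\prime$ is Gaussian with mean $(\mu_1^\prime,\mu_2^\prime)^\prime$, where $\mu_1 := (Z^\prime Z)^{1/2}\pi_Y$ and $\mu_2 := (Z^\prime Z)^{1/2}\pi$, and covariance $\Omega\otimes I_K$. Writing $\mu_{1k},\mu_{2k}$ for the coordinates, this says that the pairs $(s_{1k},s_{2k})$, $k=1,\dots,K$, are mutually independent, each bivariate normal with mean $(\mu_{1k},\mu_{2k})$ and covariance $\Omega$. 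The three entries of the maximal invariant are coordinate sums, $s_1^\prime s_1=\sum_k s_{1k}^2$, $s_1^\prime s_2=\sum_k s_{1k}s_{2k}$, and $s_2^\prime s_2=\sum_k s_{2k}^2$, so collecting $T_k:=(s_{1k}^2, s_{1k}s_{2k}, s_{2k}^2)^\prime$ the object of interest is $\frac{1}{\sqrt K}\sum_k(T_k-\E[T_k])$ provided the deterministic centering matches. That match is immediate: $\E[s_{1k}^2]=\mu_{1k}^2+\omega_{\zeta\zeta}$, and summing gives $\|\mu_1\|^2+K\omega_{\zeta\zeta}=\pi_Y^\prime Z^\prime Z\pi_Y+K\omega_{\zeta\zeta}$, with $\frac{1}{\sqrt K}\pi_Y^\prime Z^\prime Z\pi_Y\to C_{YY}$ by assumption; the other two coordinates are analogous, using $\mu_1^\prime\mu_2=\pi_Y^\prime Z^\prime Z\pi$ and $\|\mu_2\|^2=\pi^\prime Z^\prime Z\pi$.

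Next I would establish asymptotic normality of $\frac{1}{\sqrt K}\sum_k W_k$ with $W_k:=T_k-\E[T_k]$, which are independent and mean zero. By the Cramér--Wold device it suffices to treat $\frac{1}{\sqrt K}\sum_k t^\prime W_k$ for an arbitrary fixed $t\in\R^3$, to which I apply the Lyapunov CLT. The normalizing variance is $\frac{1}{K}\sum_k\Var(t^\prime W_k)$, which converges to $t^\prime\Sigma t$ (computed below) and is bounded away from zero in the relevant directions, while the Lyapunov numerator is $\frac{1}{K^2}\sum_k\E[(t^\prime W_k)^4]$. Since each $W_k$ is a centered quadratic in a bivariate Gaussian, $\E[(t^\prime W_k)^4]$ is a polynomial in $(\mu_{1k},\mu_{2k})$ whose mean-free part is $O(1)$ and whose mean-dependent part is controlled by powers of $\mu_{1k},\mu_{2k}$. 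The growth condition enters here: because $\sum_k\mu_{1k}^2=\pi_Y^\prime Z^\prime Z\pi_Y=O(\sqrt K)$ and likewise for the other concentration parameters, any sum $\sum_k\mu_{1k}^a\mu_{2k}^b$ with $a+b\ge1$ is $o(K)$, so $\sum_k\E[(t^\prime W_k)^4]=O(K)$ and the Lyapunov ratio is $O(1/K)\to0$. This gives convergence to $N(0,\Sigma)$ with $\Sigma=\lim\frac{1}{K}\sum_k\Var(W_k)$, which is the ``some variance matrix'' of the general statement.

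It remains to identify $\Sigma$ under the finiteness hypothesis. Writing $p=\omega_{\zeta\zeta}$, $q=\omega_{\eta\eta}$, $r=\omega_{\zeta\eta}$ and $(U,V)=(s_{1k},s_{2k})$, I would evaluate the six entries of $\Var(W_k)$ by Isserlis' theorem: $\Var(U^2)=2p^2+4\mu_{1k}^2 p$, $\textup{Cov}(U^2,UV)=2pr+(\text{mean terms})$, $\textup{Cov}(U^2,V^2)=2r^2$, $\Var(UV)=pq+r^2+(\text{mean terms})$, $\textup{Cov}(UV,V^2)=2qr+(\text{mean terms})$, and $\Var(V^2)=2q^2+4\mu_{2k}^2 q$. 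Dividing by $K$ and summing, each mean-dependent term carries a factor $\frac{1}{K}\sum_k\mu_{1k}^a\mu_{2k}^b=\frac{1}{K}O(\sqrt K)=o(1)$ for $a+b\ge1$, so only the mean-free parts survive, giving precisely the stated $\Sigma$. I expect the main obstacle to be the bookkeeping in this final step together with the Lyapunov check: one must verify carefully, via the growth condition on the concentration parameters, that every mean-dependent contribution to both the fourth-moment bound and the limiting covariance is asymptotically negligible, so that $\Sigma$ collapses to the clean mean-free form displayed in the proposition.
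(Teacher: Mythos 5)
Your proposal is correct and follows essentially the same route as the paper's proof: both exploit the coordinate-wise independence of $(s_{1k},s_{2k})$ from \Cref{lem:max_inv}, apply a CLT for independent non-identically-distributed terms, compute the covariance entries via Isserlis' theorem, and use the finiteness of $(C_{YY},C_Y,C_S)$ (so that $\frac{1}{K}\sum_k\mu_{1k}^a\mu_{2k}^b\to 0$ for $a+b\geq 1$) to kill all mean-dependent contributions. The only difference is organizational: the paper first splits $s_{jk}$ into mean plus error, applies a Lindeberg CLT to a seven-dimensional vector of linear and pure-error quadratic forms, and recombines through a linear map $A$ to get $\Sigma=AVA'$, whereas you keep the means inside the quadratics and apply Cram\'er--Wold with a Lyapunov check directly to the centered three-vector; both are valid and yield the same $\Sigma$.
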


The proof of \Cref{prop:max_inv_distr} relies on $K\rightarrow\infty$ because objects like $s_{1}^{\prime}s_{1}$ can be written as a sum of $K$ objects.
With an appropriate representation to obtain independence, a CLT can be applied to yield normality. 
Compared to MS22, \Cref{prop:max_inv_distr} does not require constant treatment effects and characterizes the distribution without orthogonalizing the sufficient statistics. 
Nonetheless, the form of the covariance matrix is similar to MS22.

Instead of using the maximal invariant, we use the L1O analog
\begin{align*}
(T_{YY}, T_{YX}, T_{XX}) := \frac{1}{\sqrt{K}} \sum_i \sum_{j\ne i} P_{ij} (Y_i Y_j, Y_i X_j, X_i X_j),
\end{align*} 
which relates to the JIVE directly as $\hat{\beta}_{JIVE} = T_{YX}/T_{XX}$, so the asymptotic problem is: 
\begin{equation} \label{eqn:asymp_prob}
\left(\begin{array}{c}
T_{YY}\\
T_{YX}\\
T_{XX}
\end{array}\right)\sim N\left(\mu,\Sigma\right),\mu=\left(\begin{array}{c}
\frac{1}{\sqrt{K}}\sum_{i}\sum_{j\ne i}P_{ij}R_{Yi}R_{Yj}\\
\frac{1}{\sqrt{K}}\sum_{i}\sum_{j\ne i}P_{ij}R_{Yi}R_{j}\\
\frac{1}{\sqrt{K}}\sum_{i}\sum_{j\ne i}P_{ij}R_{i}R_{j}
\end{array}\right),\Sigma=\left(\begin{array}{ccc}
\sigma_{11} & \sigma_{12} & \sigma_{13}\\
\cdot & \sigma_{22} & \sigma_{23}\\
\cdot & \cdot & \sigma_{33}
\end{array}\right).
\end{equation}

With abuse of notation, $\mu$ and $\Sigma$ refer to the asymptotic mean and variances of $(T_{YY}, T_{YX}, T_{XX})$ instead of $(T_{AR}, T_{LM}, T_{FS})$ so that the statistics do not depend on the hypothesized null, but these are identical when $\beta_0 =0$. 
Using the same argument as \Cref{sec:power}, $\mu_1, \mu_3 \geq 0$ and $\mu_2^2 \leq \mu_1 \mu_3$.

Even with covariates, if the regression is fully saturated with $G$ given by UJIVE, \Cref{prop:UJIVE_restr} below shows that the same inequality restrictions hold. 
To be precise, saturation is defined in Section 2 of \citet{evdokimov2018inference}.
All individuals can be partitioned into $L$ covariate groups, so with group index $G_i \in \{1, \cdots, L \}$, we have covariates $W_{i,l} = 1\{ G_i  = l\}$. 
We also have an instrument $S_i$ that takes $M+1$ possible values in each group, and these values for every group $l$ are labeled $s_{l0}, \cdots, s_{lM}$.
Then, the vector of instruments has dimension $K = M L$ and $Z_{i, lm} = 1\{ S_i  = s_{lm}\}$. 
Adapting \Cref{eqn:asymp_prob} to the case with covariates,
\begin{align*}
\mu=\left(
\frac{1}{\sqrt{K}}\sum_{i}\sum_{j\ne i}G_{ij}R_{Yi}R_{Yj},
\frac{1}{\sqrt{K}}\sum_{i}\sum_{j\ne i}G_{ij}R_{Yi}R_{j},
\frac{1}{\sqrt{K}}\sum_{i}\sum_{j\ne i}G_{ij}R_{i}R_{j}\right)^\prime.
\end{align*}
\begin{proposition} \label{prop:UJIVE_restr}
If $\mu$ is defined such that $G=\left(I-\diag\left(H_{Q}\right)\right)^{-1}H_{Q}-\left(I-\diag\left(H_{W}\right)\right)^{-1}H_{W}$, and the regression is fully saturated, then $\mu_{1} \geq 0,\mu_{3}\geq0$ and $\mu_{2}^{2}\leq\mu_{1}\mu_{3}$. 
\end{proposition}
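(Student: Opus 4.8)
The plan is to exploit full saturation to collapse the UJIVE matrix $G$ into an explicit block form, after which all three claims reduce to statements about within-group sums of squares together with one application of the Cauchy--Schwarz inequality. First I would record the projection structure. Under saturation each observation $i$ lies in a cell determined by its group $g(i)\in\{1,\dots,L\}$ and its instrument value; the column space of $Q=(Z,W)$ is the span of the cell indicators, while that of $W$ is the span of the group indicators. Hence $H_Q$ and $H_W$ are the cell-averaging and group-averaging operators: writing $c(i)$ for the cell of $i$, $n_c$ for a cell size and $n_g$ for a group size,
\[
[H_Q]_{ij}=\frac{1}{n_{c(i)}}\,1\{c(i)=c(j)\},\qquad [H_W]_{ij}=\frac{1}{n_{g(i)}}\,1\{g(i)=g(j)\}.
\]
Since $\diag(H_Q)_{ii}=1/n_{c(i)}$ and $\diag(H_W)_{ii}=1/n_{g(i)}$, the factors $(I-\diag(H_Q))^{-1}$ and $(I-\diag(H_W))^{-1}$ are diagonal with entries $n_{c(i)}/(n_{c(i)}-1)$ and $n_{g(i)}/(n_{g(i)}-1)$, and the cell sizes cancel to leave
\[
G_{ij}=\frac{1}{n_{c(i)}-1}\,1\{c(i)=c(j)\}-\frac{1}{n_{g(i)}-1}\,1\{g(i)=g(j)\}.
\]
This identification of which indicators span which column space, and the ensuing cancellation, is where I expect the only genuine work to lie; everything afterward is bookkeeping and a standard inequality.

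Next I would use that, by saturation, $R_i$ and $R_{Yi}$ are constant within cells, and evaluate the double sums in $\mu$ cell-by-cell and group-by-group. For any two within-cell-constant sequences $u_i,w_i$, the first ("cell") piece of $\sum_i\sum_{j\ne i}G_{ij}u_iw_j$ contributes $\sum_g\sum_{i\in g}u_iw_i$ (a cell of size $n_c$ has $n_c(n_c-1)$ off-diagonal pairs, each of weight $1/(n_c-1)$), and the second ("group") piece contributes $-\sum_g\big(S^u_gS^w_g-\sum_{i\in g}u_iw_i\big)/(n_g-1)$ with $S^u_g:=\sum_{i\in g}u_i$. Combining and simplifying gives the bilinear identity
\[
\sum_i\sum_{j\ne i}G_{ij}u_iw_j=\sum_g\frac{n_g\sum_{i\in g}u_iw_i-S^u_gS^w_g}{n_g-1}=\sum_g\frac{n_g}{n_g-1}\sum_{i\in g}(u_i-\bar u_g)(w_i-\bar w_g),
\]
where $\bar u_g:=S^u_g/n_g$. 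Applying this with $(u,w)=(R_Y,R_Y),(R_Y,R),(R,R)$ and setting $a_{gi}:=R_{Yi}-\bar R_{Yg}$, $b_{gi}:=R_i-\bar R_g$, $w_g:=n_g/(n_g-1)>0$ yields
\[
\sqrt{K}\,\mu_1=\sum_g w_g\!\sum_{i\in g}a_{gi}^2,\qquad \sqrt{K}\,\mu_2=\sum_g w_g\!\sum_{i\in g}a_{gi}b_{gi},\qquad \sqrt{K}\,\mu_3=\sum_g w_g\!\sum_{i\in g}b_{gi}^2.
\]

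Finally I would read off the conclusion. Non-negativity of $\mu_1$ and $\mu_3$ is immediate, each being a positively weighted sum of squares. For the remaining inequality, regard $a=(a_{gi})$ and $b=(b_{gi})$ as vectors indexed by $(g,i)$ under the positive weights $w_g$, so that $\sqrt{K}\mu_1=\langle a,a\rangle_w$, $\sqrt{K}\mu_2=\langle a,b\rangle_w$, and $\sqrt{K}\mu_3=\langle b,b\rangle_w$; the weighted Cauchy--Schwarz inequality then gives $(\sqrt{K}\mu_2)^2\le(\sqrt{K}\mu_1)(\sqrt{K}\mu_3)$, i.e. $\mu_2^2\le\mu_1\mu_3$. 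Throughout I would invoke the standing requirement that every cell and every group contains at least two observations, which is exactly what makes the UJIVE weights and these expressions well defined. This argument is the saturated-covariates analogue of the $G=P$ computation already used in \Cref{sec:max_inv}, where $\sum_i\sum_{j\ne i}P_{ij}R_iR_j=\sum_i M_{ii}R_i^2$ played the same role that the within-group variance representation plays here.
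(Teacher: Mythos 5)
Your proof is correct and follows essentially the same route as the paper's: both exploit full saturation to write $G$ in its explicit within-cell/within-group block form, reduce $\mu_1,\mu_2,\mu_3$ to within-covariate-group quadratic forms, and close with the Cauchy--Schwarz inequality. The only difference is cosmetic: you represent each $\mu_k$ as a weighted sum of demeaned within-group (co)variances $\sum_g \tfrac{n_g}{n_g-1}\sum_{i\in g}(u_i-\bar u_g)(w_i-\bar w_g)$, whereas the paper arrives at the equivalent pairwise-difference form $\sum_w \tfrac{1}{n_w^W-1}\sum_{q<q'} n_q^Q n_{q'}^Q(\pi_q-\pi_{q'})(\pi_{Yq}-\pi_{Yq'})$, the two being linked by the standard identity $n_g\sum_q n_q(\pi_q-\bar\pi_g)^2=\sum_{q<q'}n_q n_{q'}(\pi_q-\pi_{q'})^2$.
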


Using the asymptotic problem of \Cref{eqn:asymp_prob}, testing $H_{0}:\mu_{2}/\mu_{3}=\beta^{*}$ is identical to testing $H_{0}:\mu_{2}-\beta^{*}\mu_{3}=0$. 
Since $\beta^{*}$ is fixed, and I consider alternatives of the form: $H_{A}:\mu_{2}-\beta^{*}\mu_{3}=h_{A}$.
The LM statistic corresponds to $T_{YX}-\beta^{*}T_{XX}$, so it can be used to test the null directly. 
I focus on the most common case of $\beta^{*}=0$, and it is analogous to extend the argument for $\beta^{*}\ne0$. 
Let $\mu^{A}$ denote the mean under the alternative and $\mu^{H}$ under the null.
The remainder of this section presents theoretical results for power, and numerical results beyond the environment covered by theory are relegated to \Cref{sec:more_power_curves}.

The one-sided test is the most powerful test for testing against a particular subset of alternatives $\mathcal{S} :=\left\{ \left(\mu_1^A, \mu_2^A, \mu_3^A \right):\mu_1^A-\frac{\sigma_{12}}{\sigma_{22}}\mu_2^A\geq0, \mu_3^A-\frac{\sigma_{23}}{\sigma_{22}}\mu_2^A\geq0\right\}$.
While $\mathcal{S}$ may not be empirically interpretable, this set is constructed so that standard \citet{lehmann2005testing} arguments can be applied to conclude that the one-sided LM test is the most powerful test. 
The proposition makes no statement about alternative hypotheses that are not in $\mathcal{S}$. 
A more powerful test can be constructed when $\mu_{2}^{A}$ is large and covariance $\sigma_{23},\sigma_{12}$ are large. 

\begin{proposition} \label{prop:lm_opt}
The one-sided LM test is the most powerful test for testing any alternative hypothesis $\left(\mu_1^A, \mu_2^A, \mu_3^A \right)\in\mathcal{S}$ in the asymptotic problem of \Cref{eqn:asymp_prob}.
\end{proposition}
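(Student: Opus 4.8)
The plan is to cast the problem as a three–parameter Gaussian location family and apply the Neyman–Pearson lemma against a carefully chosen \emph{least favorable} point of the null. Write $m=(T_{YY},T_{YX},T_{XX})^\prime\sim N(\mu,\Sigma)$ with $\Sigma$ known and positive definite, and let $e_2=(0,1,0)^\prime$, so the LM statistic is $m_2=e_2^\prime m$. I take $\mu_2^A>0$ and consider the upper one‑sided LM test $\phi^{*}=1\{m_2>\Phi^{-1}(1-\alpha)\sqrt{\sigma_{22}}\}$; the case $\mu_2^A<0$ is symmetric with the lower‑tailed test. The first fact I would record is that the marginal law of $m_2$ is $N(\mu_2,\sigma_{22})$, which under any null configuration ($\mu_2=0$) equals $N(0,\sigma_{22})$ irrespective of the nuisance coordinates $\mu_1,\mu_3$. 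Hence $E_{\tilde\mu}[\phi^{*}]=\alpha$ for \emph{every} $\tilde\mu$ in the null set, so $\phi^{*}$ is an exact level‑$\alpha$ test over the whole composite null $\{\mu:\mu_2=0,\ \mu_1,\mu_3\ge0\}$.

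Next, for a fixed alternative $\mu^A\in\mathcal S$ I would exhibit the least favorable null point. Using $\Sigma e_2=(\sigma_{12},\sigma_{22},\sigma_{23})^\prime$, define
\begin{equation*}
\mu^H := \mu^A-\frac{\mu_2^A}{\sigma_{22}}\,\Sigma e_2=\left(\mu_1^A-\tfrac{\sigma_{12}}{\sigma_{22}}\mu_2^A,\ 0,\ \mu_3^A-\tfrac{\sigma_{23}}{\sigma_{22}}\mu_2^A\right)^\prime.
\end{equation*}
By construction $\mu_2^H=0$, and the two defining inequalities of $\mathcal S$ are exactly $\mu_1^H\ge0$ and $\mu_3^H\ge0$; together with $\mu_2^H=0$ these place $\mu^H$ in the null parameter space (which, under $\mu_2=0$, requires only $\mu_1,\mu_3\ge0$). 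The Gaussian likelihood ratio between $\mu^A$ and $\mu^H$ is monotone in $(\mu^A-\mu^H)^\prime\Sigma^{-1}m=\tfrac{\mu_2^A}{\sigma_{22}}\,e_2^\prime m=\tfrac{\mu_2^A}{\sigma_{22}}\,m_2$, which is increasing in $m_2$ since $\mu_2^A>0$. The Neyman–Pearson lemma therefore identifies the most powerful test of the \emph{simple} null $\mu^H$ against $\mu^A$ as the one rejecting for large $m_2$, with threshold making its size against $\mu^H$ equal to $\alpha$; because $m_2\sim N(0,\sigma_{22})$ under $\mu^H$, that threshold is $\Phi^{-1}(1-\alpha)\sqrt{\sigma_{22}}$, i.e. the test is exactly $\phi^{*}$.

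Finally I would close with the standard least‑favorable‑point inclusion, following \citet{lehmann2005testing}. Let $\mathcal T_\alpha$ be the class of tests with size at most $\alpha$ over the composite null, and $\mathcal T_\alpha^H$ the class with $E_{\mu^H}[\phi]\le\alpha$ against the single point $\mu^H$. Since $\mu^H$ lies in the null, $\mathcal T_\alpha\subseteq\mathcal T_\alpha^H$; Neyman–Pearson gives that $\phi^{*}$ maximizes power against $\mu^A$ within $\mathcal T_\alpha^H$, while the first paragraph shows $\phi^{*}\in\mathcal T_\alpha$. Consequently $\phi^{*}$ maximizes power against $\mu^A$ within the smaller class $\mathcal T_\alpha$ as well, which is the claimed optimality; as $\mu^A\in\mathcal S$ was arbitrary and $\phi^{*}$ does not depend on it, the one‑sided LM test is most powerful against every alternative in $\mathcal S$. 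The one genuinely substantive step—the main obstacle—is recognizing \emph{why} $\mathcal S$ takes the stated form: projecting $\mu^A$ onto the hyperplane $\{\mu_2=0\}$ along the direction $\Sigma e_2$ is the unique way to obtain a sub‑null whose Neyman–Pearson statistic depends on the data only through $m_2$, and $\mathcal S$ is precisely the set of alternatives for which this projected point stays feasible (keeps $\mu_1,\mu_3\ge0$). Verifying this feasibility and the monotonicity of the likelihood ratio is the crux; the remaining steps are routine.
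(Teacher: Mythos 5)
Your proposal is correct and follows essentially the same route as the paper's own proof: you construct the identical least favorable null point $\mu^{H}=\bigl(\mu_1^A-\tfrac{\sigma_{12}}{\sigma_{22}}\mu_2^A,\,0,\,\mu_3^A-\tfrac{\sigma_{23}}{\sigma_{22}}\mu_2^A\bigr)^\prime$, compute the same Gaussian likelihood ratio that is monotone in the LM statistic, and close with the same least-favorable-point inclusion argument that the paper invokes via \citet{lehmann2005testing} Theorem 3.8.1(i). Your write-up merely makes explicit a few steps the paper leaves implicit (exact size over the whole composite null and the two-class inclusion), so there is nothing substantive to add.
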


For a given $\left(\mu_{1}^{A},\mu_{2}^{A},\mu_{3}^{A}\right)$ in the alternative space, LM (which just uses the second element) is justified as being most powerful because it is identical to the Neyman-Pearson test when testing against a point null $\mu^H$ with $\mu_{1}^{H}=\mu_{1}^{A}-\frac{\sigma_{12}}{\sigma_{22}}\mu_{2}^{A}$, $\mu_2^H=0$ and $\mu_{3}^{H}=\mu_{3}^{A}-\frac{\sigma_{23}}{\sigma_{22}}\mu_{2}^{A}$.
The inequalities in $\mathcal{S}$ are imposed so that $\mu_1^H, \mu_3^H \geq 0$, ensuring that $\mu^H$ is in the null space, so LM is the most powerful test. 
In contrast, if the inequalities fail in the alternative space, then $(\mu_{1}^{A}-\frac{\sigma_{12}}{\sigma_{22}}\mu_{2}^{A},0,\mu_{3}^{A}-\frac{\sigma_{23}}{\sigma_{22}}\mu_{2}^{A})$ is not in the null space, and the \citet{lehmann2005testing} argument cannot be applied.

\subsubsection{Existence of Structural Model}

This section presents a structural model, then argues that any reduced-form model in the form of \Cref{eqn:asymp_prob} can be justified by this structural model.

\begin{example} \label{ex:contX}
Consider a linear potential outcomes model with an instrument $Z$ that is a vector of indicators for judges, each with $c=5$ cases, a continuous endogenous variable $X$, and outcome $Y$:
\begin{equation} \label{eqn:sf}
\begin{split}
X_{i}(z)  =z^{\prime}\pi+v_{i}, \qquad 
Y_{i}(x)  =x\left(\beta+\xi_{i}\right)+\varepsilon_{i}, \text{ and } \\
\left(\begin{array}{c}
\varepsilon_{i}\\
\xi_{i}\\
v_{i}
\end{array}\right) \mid k(i) = k
\sim N\left(\left(\begin{array}{c}
0\\
0\\
0
\end{array}\right),\left(\begin{array}{ccc}
\sigma_{\varepsilon\varepsilon} & \sigma_{\varepsilon\xi} & \sigma_{\varepsilon v}\\
\cdot & \sigma_{\xi\xi} & \sigma_{\xi v k}\\
\cdot & \cdot & \sigma_{vv}
\end{array}\right)\right).
\end{split}
\end{equation}
Due to the judge design, $X_{i}=\pi_{k(i)}+v_{i}$, where $k(i)$ is the judge that observation $i$ is assigned to. 
The strength of the instrument is $C_S=\frac{1}{\sqrt{K}}\sum_{k} (c-1) \pi_{k}^{2}$.
The $\pi_k$'s are constructed as such: with $s = \sqrt{C_S/\sqrt{K}/(c-1)}$, set $\pi_k=0$ for the base judge, $\pi_k=-s$ for half the judges and $\pi_k=s$ for the other half.
The heterogeneity covariances $\sigma_{\xi vk}$ are constructed so that $\sum_{k}\pi_{k}=0,\sum_{k}\sigma_{\xi vk}=0$, and $\sum_{k}\pi_{k}\sigma_{\xi vk}=0$.
With $C_H$ characterizing the heterogeneity in the model, and $h = \sqrt{C_H/\sqrt{K}/(c-1)}$, set $\sigma_{\xi v k}=0$ of the base judge; among judges with $\pi_{k}=s$, half of them have $\sigma_{\xi vk}=h$ and the other half $\sigma_{\xi vk}=-h$. 
The same construction of $\sigma_{\xi vk}$ applies for judges with $\pi_{k}=-s$. 
\end{example}

In this model, the individual treatment effect is $\beta_{i}=\beta+\xi_{i}$.
We can interpret $v_{i}$ as the noise associated with the first-stage regression, $\varepsilon_{i}$ as the noise in the intercept of the outcome equation, and $\xi_{i}$ as the individual-level treatment effect heterogeneity.
Further, $\sigma_{\xi vk}$ characterizes the extent of treatment effect heterogeneity. 
The observed outcome in a model with constant treatment effects is $Y_i(X_i) = X_i\beta + \check{\varepsilon}_i$, with $E[\check{\varepsilon}_i]$=0.
When $\sigma_{\xi vk}=0$, regardless of the values of $\sigma_{\varepsilon \xi}, \sigma_{\xi \xi}$, the observed outcome of \Cref{eqn:sf} can be written as $Y_i(X_i) =  X_i\beta + \check{\varepsilon}_i$ where $E[\check{\varepsilon}_i] = E[X_i \xi_i +\varepsilon_i] = E[X_i E[\xi_i\mid X_i]] =0$, which resembles the constant treatment effect case.

\begin{lemma} \label{lem:RF_in_SF}
Consider the model of \Cref{ex:contX}. If $\sqrt{K}s^{2}\rightarrow \tilde{C}_{S}<\infty$
and $\sqrt{K}h^{2}\rightarrow \tilde{C}_{H}<\infty$, then
\begin{align*}
\sigma_{11} & =\frac{4}{\sigma_{33}}\left(\sigma_{22}-\frac{\sigma_{23}^{2}}{2\sigma_{33}}\right)^{2}+o(1),\quad
\sigma_{12} =2\frac{\sigma_{23}}{\sigma_{33}}\left(\sigma_{22}-\frac{\sigma_{23}^{2}}{2\sigma_{33}}\right)+o(1), \quad
\sigma_{13}  =\frac{\sigma_{23}^{2}}{\sigma_{33}}+o(1), \\
\sigma_{22} & =\frac{c-1}{c}\left(\sigma_{vv}\left(\sigma_{\varepsilon\varepsilon}+\sigma_{vv}\beta^{2}+\sigma_{vv}\sigma_{\xi\xi}+2\sigma_{\varepsilon v}\beta\right)+\left(\sigma_{vv}\beta+\sigma_{\varepsilon v}\right)^{2}\right)+o(1),\\
\sigma_{33}&=2\frac{c-1}{c}\sigma_{vv}^{2}+o(1), \quad 
\sigma_{23}=2\frac{c-1}{c}\sigma_{vv}\left(\sigma_{vv}\beta+\sigma_{\varepsilon v}\right)+o(1), \text{ and } \\
\left(\begin{array}{c}
\mu_{1}\\
\mu_{2}\\
\mu_{3}
\end{array}\right) & =\left(\begin{array}{c}
\sqrt{K}\left(c-1\right)\left(s^{2}\beta^{2}+h^{2}\right)\\
\sqrt{K}\left(c-1\right)s^{2}\beta\\
\sqrt{K}\left(c-1\right)s^{2}
\end{array}\right)=\left(c-1\right)\left(\begin{array}{c}
C_{S}\beta^{2}+C_{H}\\
C_{S}\beta\\
C_{S}
\end{array}\right).
\end{align*}
\end{lemma}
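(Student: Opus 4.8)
The plan is to push the structural model of \Cref{ex:contX} through the reduced-form representation of \Cref{sec:setting} and then compute the first two moments of $(T_{YY},T_{YX},T_{XX})$ directly, using the joint normality of $(\varepsilon_i,\xi_i,v_i)$ and the joint asymptotic normality guaranteed by \Cref{thm:normality}. First I would identify the reduced-form objects: since $X_i=\pi_{k(i)}+v_i$ and $Y_i=X_i(\beta+\xi_i)+\varepsilon_i$, conditioning on the judge $k(i)$ gives $R_i=\pi_{k(i)}$, $\eta_i=v_i$, $R_{Yi}=\pi_{k(i)}\beta+\sigma_{\xi v k(i)}$ (so $R_{\Delta i}=\sigma_{\xi v k(i)}$), $\zeta_i=\pi_{k(i)}\xi_i+\beta v_i+v_i\xi_i+\varepsilon_i-\sigma_{\xi v k(i)}$, and $\nu_i=\zeta_i-\beta\eta_i$. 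Because $R=Z\pi$ and $R_Y=Z\pi_Y$ lie in the column space of $Z$, the identity $\sum_{j\ne i}P_{ij}R_j=M_{ii}R_i$ collapses the means in \eqref{eqn:asymp_prob} to single sums, $\mu_3=\frac{1}{\sqrt K}\sum_i M_{ii}R_i^2$, $\mu_2=\frac{1}{\sqrt K}\sum_i M_{ii}R_{Yi}R_i$, and $\mu_1=\frac{1}{\sqrt K}\sum_i M_{ii}R_{Yi}^2$. Substituting $M_{ii}=(c-1)/c$, grouping the $c$ observations per judge, and invoking the design restrictions $\sum_k\pi_k\sigma_{\xi vk}=0$, $\sum_k\pi_k^2=Ks^2$, and $\sum_k\sigma_{\xi vk}^2=Kh^2$, the cross terms drop out and I recover the stated $\mu$.

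For the covariance entries I would expand each (co)variance of the quadratic forms and evaluate the resulting fourth moments by Isserlis'/Wick's theorem. The decisive simplification is that the weak regime $\sqrt K s^2\to\tilde C_S$ and $\sqrt K h^2\to\tilde C_H$ forces $s,h\to0$: any contribution weighted by the deterministic parts $R_i,R_{Yi}$---for instance $\frac1K\sum_i M_{ii}^2 R_i^2\,E[\eta_i^2]=\tfrac{(c-1)^2}{c}\sigma_{vv}s^2$---is $o(1)$, so only the purely stochastic terms carrying two factors of $P$ survive. Writing $\sum_{i\ne j}P_{ij}^2=\sum_{\text{judges}}(c-1)/c$ and the leading, judge-independent moments $\omega_{\eta\eta}:=E[\eta_i^2]=\sigma_{vv}$, $\omega_{\zeta\eta}:=E[\zeta_i\eta_i]\to\sigma_{vv}\beta+\sigma_{\varepsilon v}$, and $\omega_{\zeta\zeta}:=E[\zeta_i^2]\to\sigma_{\varepsilon\varepsilon}+\sigma_{vv}\beta^2+\sigma_{vv}\sigma_{\xi\xi}+2\sigma_{\varepsilon v}\beta$, each entry reduces to one Wick pairing count. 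Thus $\Var(\sum_{i\ne j}P_{ij}X_iX_j)=2\sum_{i\ne j}P_{ij}^2\omega_{\eta\eta}^2+o(K)$ yields $\sigma_{33}=2\frac{c-1}{c}\sigma_{vv}^2$; the analogous counts give $\sigma_{23}=2\frac{c-1}{c}\omega_{\zeta\eta}\omega_{\eta\eta}$ and $\sigma_{22}=\frac{c-1}{c}(\omega_{\zeta\zeta}\omega_{\eta\eta}+\omega_{\zeta\eta}^2)$, matching the claim. Here the cubic Gaussian moments such as $E[v_i^2\xi_i]$ vanish by oddness, and the $\pi_k\sigma_{\xi vk}$ corrections to $\omega_{\zeta\eta}$ sum to zero by $\sum_k\pi_k\sigma_{\xi vk}=0$.

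The three entries involving $T_{YY}$ are handled identically, giving $\sigma_{11}=2\frac{c-1}{c}\omega_{\zeta\zeta}^2$, $\sigma_{12}=2\frac{c-1}{c}\omega_{\zeta\eta}\omega_{\zeta\zeta}$, and $\sigma_{13}=2\frac{c-1}{c}\omega_{\zeta\eta}^2$; the stated expressions are then just algebraic re-encodings, since $\sigma_{23}/\sigma_{33}=\omega_{\zeta\eta}/\omega_{\eta\eta}$ and $\sigma_{22}-\sigma_{23}^2/(2\sigma_{33})=\frac{c-1}{c}\omega_{\zeta\zeta}\omega_{\eta\eta}$ reproduce $\sigma_{13}=\sigma_{23}^2/\sigma_{33}$, $\sigma_{12}=2\frac{\sigma_{23}}{\sigma_{33}}(\sigma_{22}-\frac{\sigma_{23}^2}{2\sigma_{33}})$, and $\sigma_{11}=\frac{4}{\sigma_{33}}(\sigma_{22}-\frac{\sigma_{23}^2}{2\sigma_{33}})^2$. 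Equivalently, the leading-order $\Sigma$ is $\frac{c-1}{c}$ times the homoskedastic-normal matrix of \Cref{prop:max_inv_distr} with effective error covariance $(\omega_{\zeta\zeta},\omega_{\zeta\eta},\omega_{\eta\eta})$; the one conceptual point to flag is that these statistics use $Y$, not the null-imposed $e$, so the relevant outcome residual is $\zeta=\nu+\beta\eta$ and $\beta$ enters through $\omega_{\zeta\zeta}$ and $\omega_{\zeta\eta}$ even though $\nu$ is $\beta$-free. The main obstacle is therefore not conceptual but the bookkeeping: carefully justifying that every deterministic-weighted term is $o(1)$, and carrying out the exhaustive Wick index-matching for all six entries, where tracking which $O(s)$, $O(h)$, $O(sh)$ corrections genuinely cancel via the design restrictions versus merely being absorbed into $o(1)$ is most error-prone.
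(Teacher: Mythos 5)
Your proposal is correct and follows essentially the same route as the paper's proof: identify the reduced-form objects ($R_i=\pi_{k(i)}$, $R_{Yi}=\pi_{k(i)}\beta+\sigma_{\xi v k(i)}$, $\eta_i=v_i$), compute the means from the judge structure together with the design restrictions $\sum_k\pi_k\sigma_{\xi vk}=0$, $\sum_k\pi_k^2=Ks^2$, $\sum_k\sigma_{\xi vk}^2=Kh^2$, and obtain the covariance entries by expanding the quadratic-form variances via Isserlis' theorem, where—exactly as in the paper—every $\pi_k$- or $\sigma_{\xi vk}$-weighted contribution is killed by $s,h\to 0$ so only the $\sum_{i\ne j}P_{ij}^2$-weighted Wick pairings of the limiting moments $(\omega_{\zeta\zeta},\omega_{\zeta\eta},\omega_{\eta\eta})$ survive, after which the relations for $\sigma_{11},\sigma_{12},\sigma_{13}$ follow by the same algebraic substitution. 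The only difference is bookkeeping order: the paper carries the judge-specific terms (e.g., $\sum_k\omega_{\zeta\zeta k}\pi_k^2$) explicitly through the computation and discards them at the end, whereas you discard them at the outset.
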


\begin{proposition} \label{prop:exist_SF}
In the model of \Cref{ex:contX} with $\sqrt{K}s^{2}\rightarrow \tilde{C}_{S}<\infty$
and $\sqrt{K}h^{2}\rightarrow \tilde{C}_{H}<\infty$, for any $\sigma_{22},\sigma_{23},\sigma_{33}$
such that $\sigma_{22},\sigma_{33}>0$, $\sigma_{23}^{2}\leq\sigma_{22}\sigma_{33}$
and $\mu$ such that $\mu_{1} \geq 0,\mu_{3}>0$, $\mu_{2}^{2}\leq\mu_{1}\mu_{3}$, the following values of structural parameters:
\begin{align*}
\tilde{C}_{S} & =\mu_{3}/\left(c-1\right), \quad 
\beta =\mu_{2}/\mu_{3}, \quad 
h  =\sqrt{\frac{1}{\sqrt{K}}\frac{1}{c-1}\left(\mu_{1}-\frac{\mu_{2}^{2}}{\mu_{3}}\right)},\\
\Sigma_{SF} & =\left(\begin{array}{ccc}
\sigma_{\varepsilon\varepsilon} & \sigma_{\varepsilon\xi} & \sigma_{\varepsilon v}\\
. & \sigma_{\xi\xi} & \sigma_{\xi vk}\\
. & . & \sigma_{vv}
\end{array}\right)=\left(\begin{array}{ccc}
\frac{1}{\sigma_{vv}}\frac{c}{c-1}\left(\sigma_{22}-\frac{\sigma_{23}^{2}}{\sigma_{33}}\right)+\frac{\sigma_{\varepsilon v}^{2}}{\sigma_{vv}} & 0 & \sigma_{\varepsilon v}\\
. & \frac{h}{\sigma_{vv}} & \pm h\\
. & . & \sigma_{vv}
\end{array}\right),\\
\sigma_{vv} & =\sqrt{\frac{\sigma_{33}c}{2\left(c-1\right)}}, \text{ and } \quad 
\sigma_{\varepsilon v} =\frac{1}{\sigma_{vv}}\left(\frac{\sigma_{23}c}{2\left(c-1\right)}-\sigma_{vv}^{2}\beta\right),
\end{align*}
satisfy the equations in \Cref{lem:RF_in_SF}, and $det\left(\Sigma_{SF}\right)/h\rightarrow C_{D}\geq0$. 
\end{proposition}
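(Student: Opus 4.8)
The plan is to treat this as a direct substitution-and-verification argument: the listed structural parameters are exactly the inverse of the structural-to-reduced-form map recorded in \Cref{lem:RF_in_SF}, so the proof reduces to (i) checking that feeding them back reproduces the target variances $(\sigma_{22},\sigma_{23},\sigma_{33})$ and the target mean $\mu$, and (ii) confirming that the resulting $\Sigma_{SF}$ is a genuine (positive semidefinite) covariance matrix, which is what the statement $\det(\Sigma_{SF})/h \to C_D \ge 0$ certifies.

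First I would verify the three variance equations. The $\sigma_{33}$ and $\sigma_{23}$ relations are immediate: substituting $\sigma_{vv}^2 = \frac{\sigma_{33}c}{2(c-1)}$ into $\sigma_{33} = 2\frac{c-1}{c}\sigma_{vv}^2$ returns $\sigma_{33}$, and writing $A := \sigma_{vv}\beta + \sigma_{\varepsilon v}$ the definition of $\sigma_{\varepsilon v}$ gives $\sigma_{vv}A = \frac{\sigma_{23}c}{2(c-1)}$, so $\sigma_{23} = 2\frac{c-1}{c}\sigma_{vv}A$ returns $\sigma_{23}$. For $\sigma_{22}$ the key cancellation is that the chosen $\sigma_{\varepsilon\varepsilon}$ satisfies $\sigma_{vv}\sigma_{\varepsilon\varepsilon} - \sigma_{\varepsilon v}^2 = \frac{c}{c-1}(\sigma_{22} - \sigma_{23}^2/\sigma_{33})$; feeding this together with $2A^2 = \frac{\sigma_{23}^2 c}{(c-1)\sigma_{33}}$ into the $\sigma_{22}$ formula collapses the right-hand side to $\sigma_{22} + \frac{c-1}{c}\sigma_{vv}^2\sigma_{\xi\xi}$, and since $\sigma_{vv}^2\sigma_{\xi\xi} = \sigma_{vv}h = o(1)$ (because $\sqrt{K}h^2 \to \tilde C_H$ forces $h \to 0$), the equation holds up to the $o(1)$ already present in \Cref{lem:RF_in_SF}. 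The $\mu$ equations invert algebraically: $\mu_3 = (c-1)\tilde C_S$ gives $\tilde C_S = \mu_3/(c-1)$; $\mu_2 = \mu_3\beta$ gives $\beta = \mu_2/\mu_3$; and $\mu_1 = \mu_3\beta^2 + \sqrt{K}(c-1)h^2$ with $\beta=\mu_2/\mu_3$ yields $h = \sqrt{\frac{1}{\sqrt K (c-1)}(\mu_1 - \mu_2^2/\mu_3)}$, the constraint $\mu_2^2 \le \mu_1\mu_3$ guaranteeing a nonnegative radicand so that $h$ is real.

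Next I would check feasibility and positive semidefiniteness of $\Sigma_{SF}$. The hypotheses $\sigma_{33} > 0$ and $\mu_3 > 0$ make $\sigma_{vv} > 0$ and $\beta,\tilde C_S$ well-defined, while $\sigma_{23}^2 \le \sigma_{22}\sigma_{33}$ makes $\sigma_{22} - \sigma_{23}^2/\sigma_{33} \ge 0$, hence $\sigma_{\varepsilon\varepsilon} \ge 0$ and $\sigma_{\xi\xi} = h/\sigma_{vv} \ge 0$. To confirm PSD I would exhibit the principal minors of the arrow-shaped matrix (zeros in the $(1,2)$ and $(2,1)$ entries): $\sigma_{\varepsilon\varepsilon}\,h/\sigma_{vv} \ge 0$, $\sigma_{\varepsilon\varepsilon}\sigma_{vv} - \sigma_{\varepsilon v}^2 = \frac{c}{c-1}(\sigma_{22}-\sigma_{23}^2/\sigma_{33}) \ge 0$, and $(h/\sigma_{vv})\sigma_{vv} - h^2 = h(1-h) \ge 0$ for $h \le 1$. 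Finally, expanding the determinant gives $\det(\Sigma_{SF}) = \frac{h}{\sigma_{vv}}\frac{c}{c-1}(\sigma_{22} - \sigma_{23}^2/\sigma_{33}) - \sigma_{\varepsilon\varepsilon}h^2$, so $\det(\Sigma_{SF})/h \to \frac{1}{\sigma_{vv}}\frac{c}{c-1}(\sigma_{22} - \sigma_{23}^2/\sigma_{33}) =: C_D \ge 0$; note this is independent of the sign of $\sigma_{\xi vk} = \pm h$ since only $\sigma_{\xi vk}^2 = h^2$ enters the determinant and the relevant minors.

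The main obstacle is not any single computation but the bookkeeping of asymptotic order: I must track which terms vanish as $h \to 0$ and which survive, so that the $\sigma_{22}$ match holds only up to $o(1)$ and the determinant is $O(h)$ rather than $O(1)$. The substantive content beyond inverting the map is precisely the determinant/minor check, since matching the three target variances and the mean vector consumes exactly the available structural degrees of freedom, and the remaining slack must still describe a valid covariance matrix; the nonnegativity of $C_D$ is what makes $\Sigma_{SF}$ a legitimate covariance matrix for large $K$. I would also note that the per-judge construction of $\sigma_{\xi vk}$ already used in \Cref{lem:RF_in_SF} is compatible with PSD for both signs, as just argued.
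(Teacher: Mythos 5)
Your proposal is correct and takes essentially the same route as the paper: invert the $\mu$-equations algebraically (with $\mu_2^2\le\mu_1\mu_3$ ensuring $h$ is real), verify the $\sigma_{33},\sigma_{23},\sigma_{22}$ equations by direct substitution with the $\sigma_{vv}^2\sigma_{\xi\xi}=\sigma_{vv}h=o(1)$ term absorbed into the $o(1)$ of \Cref{lem:RF_in_SF}, and compute $\det(\Sigma_{SF})/h\to\sigma_{\varepsilon\varepsilon}-\sigma_{\varepsilon v}^2/\sigma_{vv}=\frac{1}{\sigma_{vv}}\frac{c}{c-1}\left(\sigma_{22}-\sigma_{23}^2/\sigma_{33}\right)\ge 0$. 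The only difference is that you carry out the principal-minor positive-semidefiniteness check inside the proof, whereas the paper states that consequence in the discussion following the proposition; this is a harmless (and slightly more self-contained) addition.
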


Due to \Cref{prop:exist_SF}, since the principal submatrices of $\Sigma_{SF}$ are positive semidefinite asymptotically, $\Sigma_{SF}$ is a symmetric positive semidefinite matrix asymptotically. 
The proposition thus implies that when the $\sigma$'s and $\mu$ satisfy the conditions, there exists structural parameters that can generate the given $\mu$ and $\Sigma$ asymptotically. 
Hence, there are no further restrictions on $\mu$ from the observed $\Sigma$ in \Cref{ex:contX}.

\subsubsection{Numerical Results for Power} \label{sec:more_power_curves}
This section presents numerical results for power in environments not covered by the theory. 
I first consider one-sided tests beyond the set $\mathcal{S}$ covered by the theory, then weighted average power for two-sided tests rather than the class of unbiased tests. 

The power envelope is achieved by a test that is valid across the entire composite null space, and is most powerful for testing against a particular point in the alternative space.
To obtain this test, I implement the algorithm from \citet{elliott2015nearly} (EMW) where all weight on the alternative are placed on a single point while being valid across a composite null. 
Then, testing against every point in the alternative space requires a different critical value.
For the numerical exercises in this subsection, I use a $\Sigma$ matrix of the form:
\begin{equation}
\Sigma=\left(\begin{array}{ccc}
2 & 2\rho & 2\rho^2\\
\cdot & 1+\rho^2 & 2\rho\\
\cdot & \cdot & 2
\end{array}\right),
\end{equation}
which corresponds to the $\Sigma$ matrix in \Cref{prop:max_inv_distr} with $\omega_{\zeta\zeta}=\omega_{\eta\eta}=1,\omega_{\zeta \eta} = \rho$.

In the numerical exercises, I display the rejection rate across 500 independent draws from $X^* \sim N(\mu,\Sigma)$ at each point on the $\mu_2$ axis, across several $\mu_1,\mu_3$ values for a 5\% test. 
The composite null uses a grid of $\mu_1 \in [0,5], \mu_3\in [0,5]$ in 0.5 increments, and assumes the variance is known. 

\Cref{fig:os_adv} uses a one-sided LM test, with a large covariance at $\rho=0.9$. 
When data is generated from the null, since LM and EMW are valid tests, their rejection rate is at most 0.05.
EMW has exact size when testing a weighted average of values in the null space and is valid across the entire space, so when data is generated from one particular point in the null, EMW can be conservative. 
Consistent with \Cref{prop:lm_opt}, when $\mu_{2}$ is small enough for $\mu_1=1,\mu_3=4$, LM achieves the power envelope, but as $\mu_{2}$ gets larger, the gap widens substantially. 
This phenomenon occurs because EMW still uses the same null grid, but now it no longer needs to have correct size for testing against the point $(\mu_{1}^{A}-\frac{\sigma_{12}}{\sigma_{22}}\mu_{2}^{A},0,\mu_{3}^{A}-\frac{\sigma_{23}}{\sigma_{22}}\mu_{2}^{A})$, as that point is no longer in the null space. 

In \Cref{fig:os_emp}, $\Sigma$ is calibrated by using the $\Sigma$ matrix calculated from the \citet{angrist1991does} application, so after appropriate normalizations, $\rho=0.37$.
With such a low covariance, LM is basically indistinguishable from the EMW bound.
Hence, even though there are gains to be made theoretically, in the empirical application considered, the gains are small. 

Instead of considering a point alternative, we may be more interested in testing against a composite alternative. 
Here, the alternative grid for EMW places equal weight on alternatives $(\mu_1^A, \mu_2^A, \mu_3^A) \in [0,5] \times [-2,2] \times [0,5]$ in increments of 0.5 (excluding $\mu_2=0$) subject to inequality constraints. 
Figures \ref{fig:uw_adv} and \ref{fig:uw_emp} present one such possibility by allowing EMW to place equal weight on several points within the alternative space.
The resulting test is the nearly optimal test for a weighted average of values the null space against the uniformly weighted average of alternative values.
Hence, there is no guarantee that its power is necessarily higher than the LM test at every point in the alternative space.
While there are weighted-average power curves that substantially outperform LM, this result is compatible with \Cref{prop:umpu}. 
EMW is a biased test as there are points in the alternative space that are not a part of the grid where LM outperforms EMW. 
Nonetheless, \Cref{fig:uw_emp} suggests that, when using the empirical covariance, LM does not perform substantially worse than EMW.

\begin{figure}
    \centering
    \caption{One-sided test with $\rho=0.9$}
    \label{fig:os_adv}
    \includegraphics[scale=0.45]{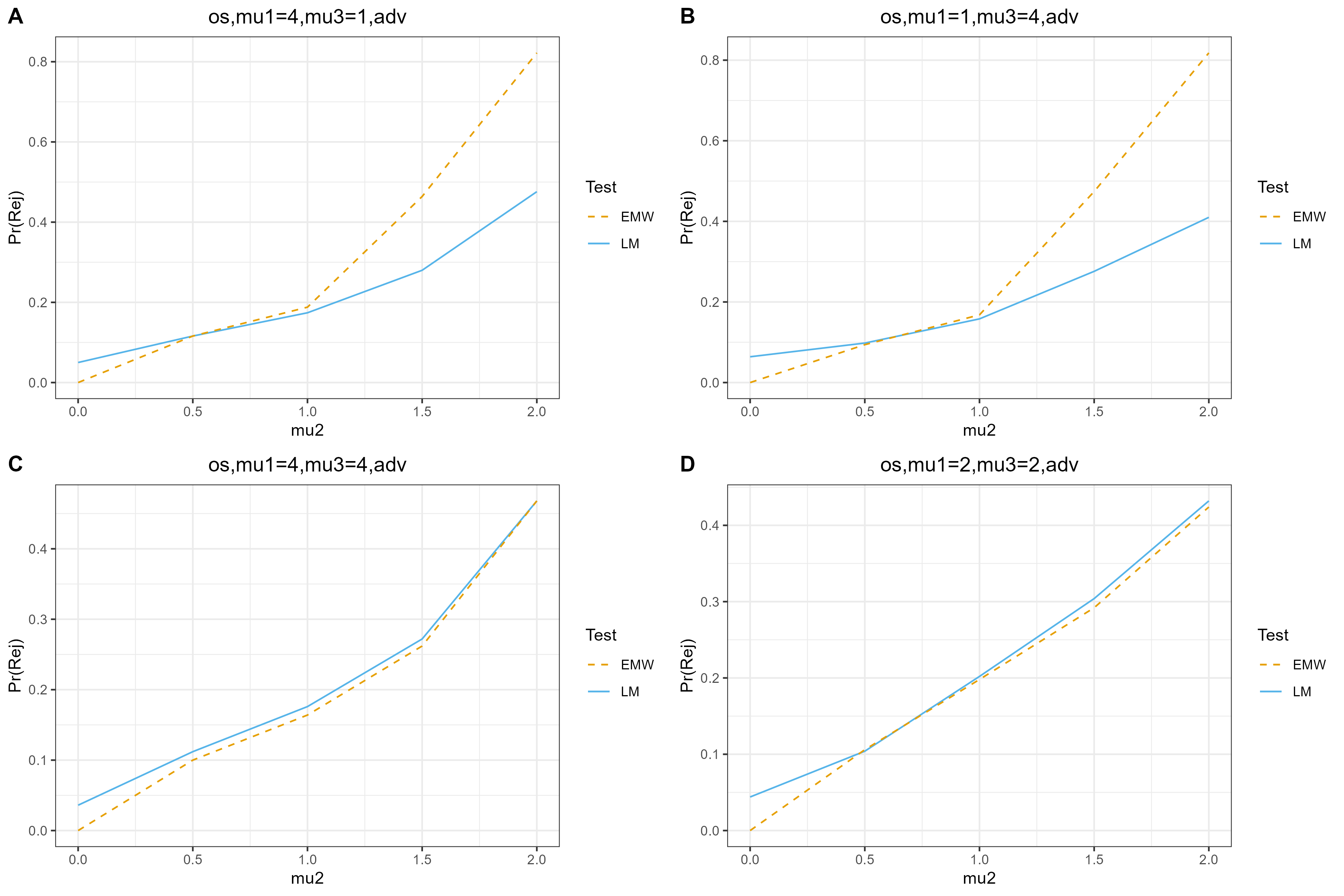}\\
\end{figure}

\begin{figure}
    \centering
    \caption{One-sided test with $\rho=0.37$}
    \label{fig:os_emp}
    \includegraphics[scale=0.45]{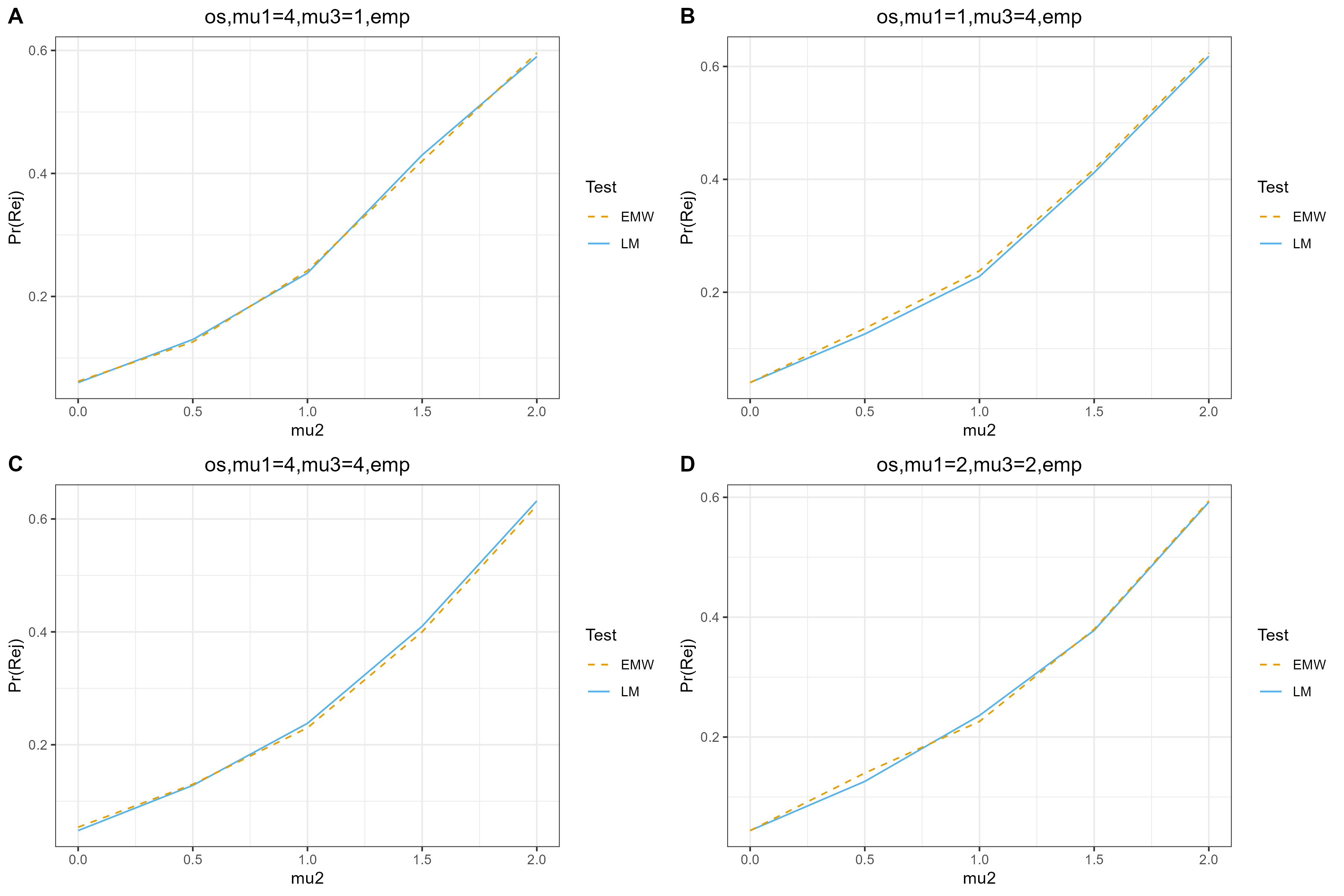}\\
\end{figure}



\begin{figure}
    \centering
    \caption{Uniform Weighting on grid of alternatives with $\rho=0.9$}
    \label{fig:uw_adv}
    \includegraphics[scale=0.45]{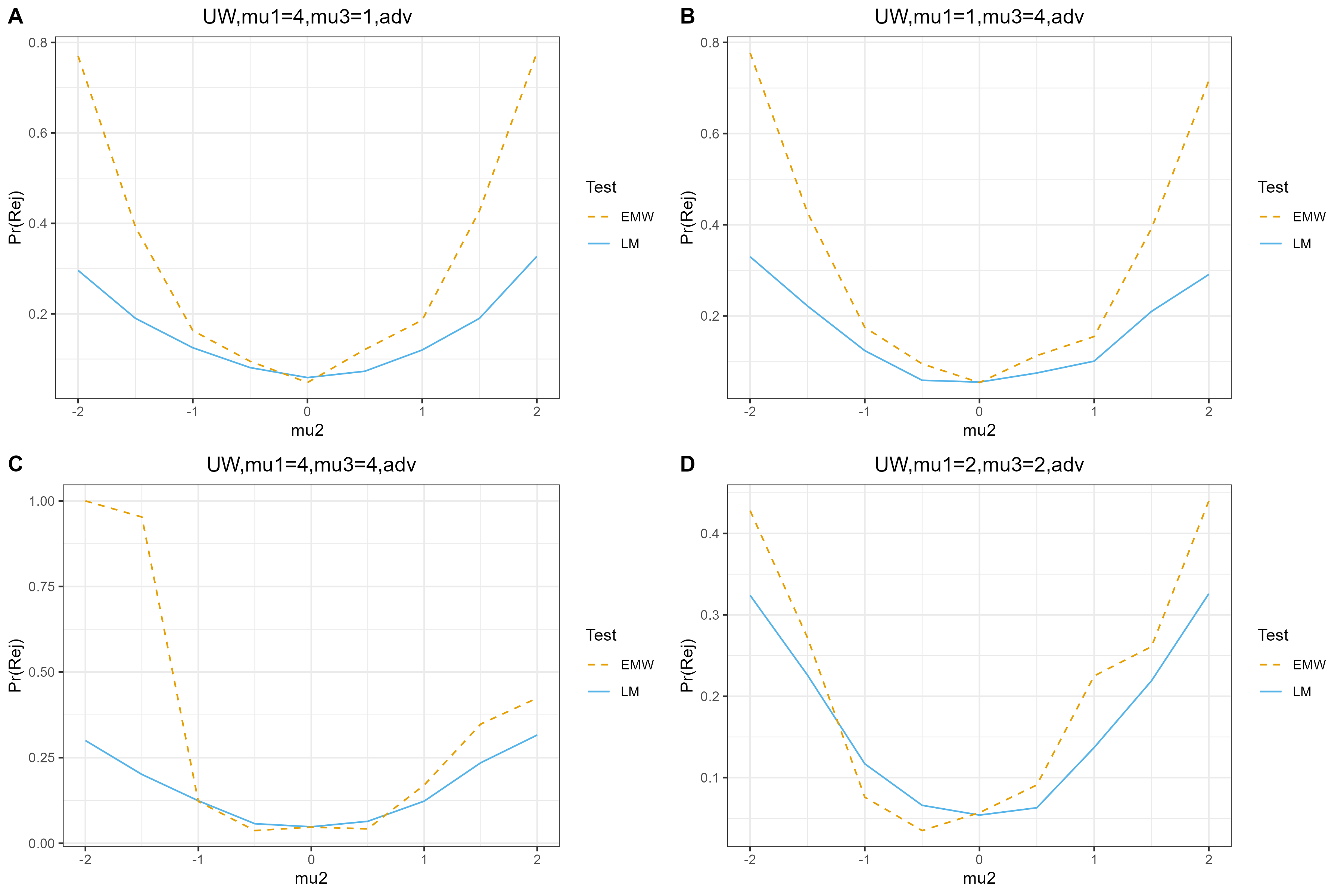}\\
\end{figure}

\begin{figure}
    \centering
    \caption{Uniform Weighting on grid of alternatives with $\rho=0.37$}
    \label{fig:uw_emp}
    \includegraphics[scale=0.45]{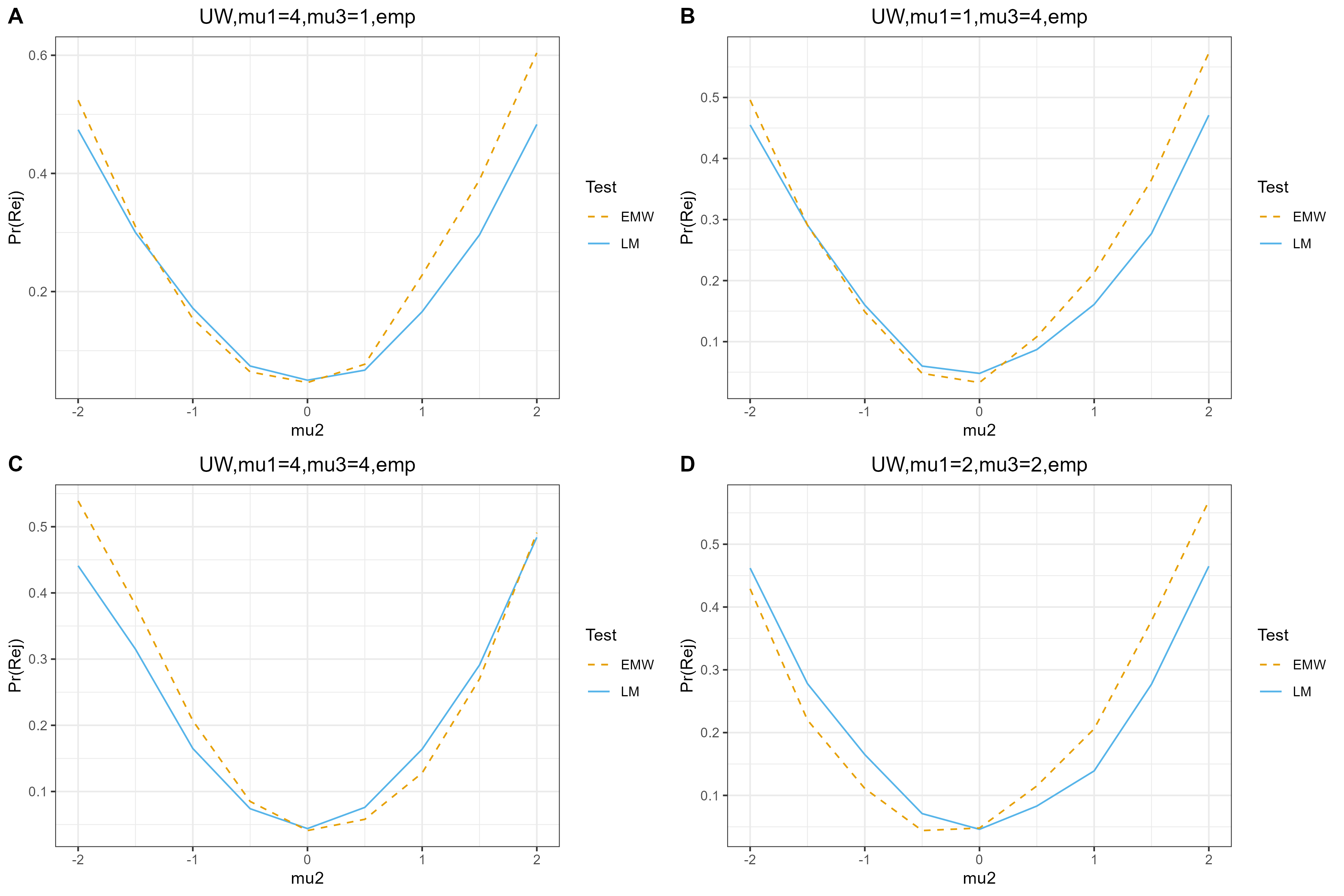}\\
\end{figure}

\subsection{Constructing Confidence Sets} \label{sec:implementation}

Expressions for the test are given in \Cref{sec:valid_inference}, which can be efficiently implemented using matrix operations.
Inverting the test to obtain a confidence set is also straightforward in this procedure, as the bounds of the confidence set are derived in closed-form in this section.

To invert the LM test to obtain a confidence set, use $e_{i}=Y_{i}-X_{i}\beta_{0}$ and expand the $A$ expressions in \Cref{eqn:Vhat_LM} so that they are written in terms of $X$ and $Y$. 
The two-sided test rejects: $\left(\sum_{i}\sum_{j\ne i}G_{ij}e_{i}X_{j}\right)^{2}/\hat{V}_{LM}\geq q = \Phi(1-\alpha/2)^2$.
Let $T_{YX}:=\frac{1}{\sqrt{K}}\sum_{i}\sum_{j\ne i}G_{ij}Y_{i}X_{j}$. 
Then, $\sum_{i}\sum_{j\ne i}G_{ij}e_{i}X_{j}= \sqrt{K} \left( T_{YX}- T_{XX}\beta_{0} \right)$, so squaring it results in a term that is quadratic in $\beta_{0}^{2}$.
With $\hat{V}_{LM}=B_{0}+C_{1}\beta_{0}+B_{2}\beta_{0}^{2}$ quadratic in $\beta_0$, the analysis for the shape of the confidence intervals is similar to the AR procedure for just-identified IV. 
Calculations for coefficients is similar to that of the L3O variance.

\begin{proposition} \label{lem:CI}
The two-sided LM test does not reject $\beta_0$ when $\left(K T_{XX}^{2}-qB_{2}\right)\beta_{0}^{2}-\left(2KT_{YX}T_{XX}+qB_{1}\right)\beta_{0}+\left(KT_{YX}^{2}-qB_{0}\right)\leq0$.
Let $D:=\left(2KT_{YX}T_{XX}+qB_{1}\right)^{2}-4\left(KT_{XX}^{2}-qB_{2}\right)\left(KT_{YX}^{2}-qB_{0}\right)$.
If $D\geq0$ and $KT_{XX}^{2}-qB_{2}\geq0$, then the upper and lower
bounds of confidence set are:
\[
\frac{\left(2KT_{YX}T_{XX}+qB_{1}\right)\pm\sqrt{D}}{2\left(KT_{XX}^{2}-qB_{2}\right)}.
\]
If $D<0$ and $KT_{XX}^{2}-qB_{2}<0$, then the confidence set is empty.
Otherwise, the confidence set is unbounded. 
\end{proposition}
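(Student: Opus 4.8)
The plan is to invert the acceptance region of the two-sided LM test into a confidence set by reducing the acceptance condition to a single quadratic inequality in $\beta_0$ and then reading off its solution set. First I would rewrite the statistic with the null-imposed residual $e_i = Y_i - X_i\beta_0$. Since $G$ does not depend on $\beta_0$, linearity gives $\sqrt{K}\,T_{LM} = \sum_i\sum_{j\ne i}G_{ij}e_iX_j = \sqrt{K}\,(T_{YX}-T_{XX}\beta_0)$, where $T_{YX}:=\frac{1}{\sqrt{K}}\sum_i\sum_{j\ne i}G_{ij}Y_iX_j$ and $T_{XX}$ are free of $\beta_0$; hence the numerator $KT_{LM}^2$ equals $K(T_{YX}-T_{XX}\beta_0)^2$, which is quadratic in $\beta_0$.

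The step requiring care is to verify that the estimator $\hat{V}_{LM}$ in \Cref{eqn:Vhat_LM} is itself a quadratic polynomial $B_0 + B_1\beta_0 + B_2\beta_0^2$. The point is that every factor carrying the null is affine in $\beta_0$: $e_i(\beta_0)$ is affine, and because $\hat{\tau}_{\Delta,-ijk}$ is the leave-three-out regression coefficient of $e(\beta_0)$ on $Q$, the residual $e_i(\beta_0) - Q_i^\prime\hat{\tau}_{\Delta,-ijk} = \sum_{l}\check{M}_{il,-ijk}\,e_l(\beta_0)$ is also affine in $\beta_0$, whereas every factor built only from $X$ (such as $X_i - Q_i^\prime\hat{\tau}_{-ijk}$) is constant in $\beta_0$. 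Each of the five terms $A_1,\dots,A_5$ is therefore a product of at most two affine-in-$\beta_0$ factors, so no power above $\beta_0^2$ can arise, and collecting terms yields $B_0,B_1,B_2$. I expect this bookkeeping, tracking which factors inside the triple and quadruple sums carry the $\beta_0$-dependence, to be the main obstacle; the rest is elementary.

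With that established, the test accepts $\beta_0$ exactly when $KT_{LM}^2 - q\hat{V}_{LM}\le 0$ for $q=\Phi(1-\alpha/2)^2$. Substituting the two quadratics and collecting powers of $\beta_0$ produces $\left(KT_{XX}^{2}-qB_{2}\right)\beta_{0}^{2}-\left(2KT_{YX}T_{XX}+qB_{1}\right)\beta_{0}+\left(KT_{YX}^{2}-qB_{0}\right)\le0$, which is precisely the stated inequality, confirming the reduction claimed in the proposition.

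Finally I would solve this inequality by the sign analysis familiar from inverting the just-identified Anderson--Rubin statistic. Writing $a=KT_{XX}^{2}-qB_{2}$ for the leading coefficient and $D$ for the displayed discriminant, the boundary values $\beta_0$ are the roots $\frac{\left(2KT_{YX}T_{XX}+qB_{1}\right)\pm\sqrt{D}}{2\left(KT_{XX}^{2}-qB_{2}\right)}$. When $a>0$ and $D\ge0$ the parabola opens upward and the solution set is the bounded interval between these two roots; when the parabola remains strictly positive the acceptance set is empty; and in the remaining sign configurations the acceptance set is the complement of a bounded interval, or all of $\R$, hence unbounded. Reading off these three shapes completes the argument, and this last part is entirely routine once the quadratic reduction is in place.
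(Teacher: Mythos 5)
Your proposal follows the paper's own route: write $\sqrt{K}\,T_{LM}=\sqrt{K}\,(T_{YX}-T_{XX}\beta_0)$, observe that $\hat{V}_{LM}$ is a quadratic polynomial $B_0+B_1\beta_0+B_2\beta_0^2$ in $\beta_0$, and invert the resulting quadratic inequality. Your structural argument for quadraticity (each $A$-term is a product of at most two factors affine in $\beta_0$, with $e_i(\beta_0)-Q_i'\hat{\tau}_{\Delta,-ijk}$ affine because $\hat{\tau}_{\Delta,-ijk}$ is linear in $e(\beta_0)$, and all factors built from $X$ alone constant) is a tidier packaging of what the paper does by brute force --- it expands each of $A_1,\dots,A_5$ explicitly in powers of $\beta_0$ --- but it is the same idea, and the reduction to $\left(KT_{XX}^{2}-qB_{2}\right)\beta_{0}^{2}-\left(2KT_{YX}T_{XX}+qB_{1}\right)\beta_{0}+\left(KT_{YX}^{2}-qB_{0}\right)\leq0$ is identical.

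One point deserves attention: your final sign analysis is the textbook-correct one, and it does \emph{not} reproduce the proposition's assignment of the degenerate cases. Writing $a=KT_{XX}^2-qB_2$, you correctly conclude that the acceptance set $\{f\le 0\}$ is empty when $a>0$ and $D<0$ (upward parabola strictly above zero), and is the whole real line --- hence unbounded --- when $a<0$ and $D<0$ (downward parabola strictly below zero). The proposition instead declares the confidence set empty when $a<0$ and $D<0$, and files $a>0,\ D<0$ under ``otherwise, unbounded''; that assignment contradicts the proposition's own first sentence, since a downward parabola with no real roots is everywhere negative, so every $\beta_0$ is accepted. The paper's proof never works these cases out (it ends with ``the rest of the lemma is immediate from properties of solving quadratic inequalities''), so the discrepancy is a defect in the stated proposition rather than a gap in your reasoning --- but as written, your proof establishes a conclusion that differs from the statement in the two $D<0$ cases, and you should flag this explicitly instead of silently substituting the correct case assignment.
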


Due to $+qB_{1},-qB_{2}$ in the expression of the upper and lower bounds, the confidence set is not necessarily centered around $\hat{\beta}_{JIVE}=T_{YX}/T_{XX}$. 

\subsection{Further Simulation Results} \label{sec:further_sim}
This section reports simulation results from several structural models to assess how well various procedures control for size.
Since the nominal size is 0.05, and data is generated under the null, the target rejection rate is 0.05.
Across the board, the L3O method performs well, and for all existing procedures, there is at least one design where they perform badly. 

\subsubsection{Continuous Treatment}
This subsection reports results for a simulation based on \Cref{ex:contX} that has a continuous $X$. 
\Cref{tab:sim_cont} reports results with $K=500$ and \Cref{tab:sim_cont_K40} reports results for $K=40$.
The L3O rejection rates are closer to the nominal rate than the existing procedures in the literature, albeit worse with a smaller $K$. 
ARorc has high rejection rates with strong heterogeneity and EK has high rejection rates with weak instruments. 
Notably, with perfect correlation and an irrelevant instrument, EK can achieve 100\% rejection in the simulation with $K=500$.
The procedures that use the LM statistic are MO, $\tilde{X}$-AR, L3O and LMorc; they differ only in variance estimation.
Hence, while $\tilde{X}$-AR and MO over-reject, the extent of over-rejection is smaller than ARorc and EK in the adversarial cases.

\begin{table}
    \centering
    \caption{Rejection rates under the null for nominal size 0.05 test for continuous $X$}
    \label{tab:sim_cont}
    
\begin{tabular}{lrrrrrrrr}
\toprule
  & TSLS & EK & ARorc & MO  & $\tilde{X}$-t & $\tilde{X}$-AR & L3O & LMorc\\
\midrule
$C_H=C_S=3\sqrt{K},\sigma_{\varepsilon v}=0$ & 0.061 & 0.017 & 1.000 & 0.061 & 0.079 & 0.078 & 0.042 & 0.044\\
$C_H=2\sqrt{K},C_S=2\sqrt{K}$ & 0.952 & 0.022 & 1.000 & 0.073 & 0.087 & 0.084 & 0.058 & 0.055\\
$C_H=2\sqrt{K},C_S=2$& 1.000 & 0.009 & 1.000 & 0.096 & 0.076 & 0.127 & 0.053 & 0.050\\
$C_H=2\sqrt{K},C_S=0$ & 1.000 & 0.006 & 1.000 & 0.103 & 0.061 & 0.127 & 0.059 & 0.052\\
\addlinespace
$C_H=3,C_S=3\sqrt{K}$ & 0.986 & 0.033 & 0.109 & 0.057 & 0.062 & 0.064 & 0.056 & 0.047\\
$C_H=3,C_S=3$ & 1.000 & 0.036 & 0.168 & 0.055 & 0.078 & 0.087 & 0.055 & 0.047\\
$C_H=3,C_S=0$& 1.000 & 0.048 & 0.184 & 0.058 & 0.106 & 0.088 & 0.053 & 0.057\\
\addlinespace
$C_H=0,C_S=2\sqrt{K}$ & 1.000 & 0.089 & 0.049 & 0.063 & 0.083 & 0.080 & 0.061 & 0.058\\
$C_H=0,C_S=2$ & 1.000 & 0.207 & 0.045 & 0.054 & 0.243 & 0.135 & 0.057 & 0.045\\
$C_H=0,C_S=0$ & 1.000 & 0.337 & 0.051 & 0.042 & 0.413 & 0.127 & 0.045 & 0.048\\
$C_H=C_S=0,\sigma_{\varepsilon v}=1$ & 1.000 & 1.000 & 0.044 & 0.042 & 1.000 & 0.157 & 0.052 & 0.044\\
\bottomrule
\end{tabular}

    \justifying \small
    Notes: Data generating process corresponds to \Cref{ex:contX}. 
    Unless mentioned otherwise, simulations use $K=500,c=5,\beta=0,\sigma_{\varepsilon \varepsilon}= \sigma_{vv}=1,  \sigma_{\varepsilon \xi} =0, =\sigma_{\varepsilon v} = 0.8, \sigma_{\xi \xi} = 1+h$ for $h^2 <1$ with 1000 simulations. 
    The table displays rejection rates of various procedures (in columns) for various designs (in rows).
    $C_H=0$ uses $\xi_i=0$ for all $i$, which uses $\sigma_{\xi\xi}=\sigma_{\varepsilon\xi}=\sigma_{\xi v}=0$, corresponding to constant treatment effects.
    Procedures are described in \Cref{tab:sim_mte_base}.
\end{table}

\begin{table}
    \centering
    \caption{Rejection Rates under the null for nominal size 0.05 test for Continuous $X$ with $K=40$}
    
\begin{tabular}{lrrrrrrrr}
\toprule
  & TSLS & EK & ARorc & MO  & $\tilde{X}$-t & $\tilde{X}$-AR & L3O & LMorc\\
\midrule
$C_H=C_S=3\sqrt{K},\sigma_{\varepsilon v}=0$ & 0.072 & 0.022 & 0.525 & 0.051 & 0.074 & 0.068 & 0.039 & 0.055\\
$C_H=2\sqrt{K},C_S=2\sqrt{K}$ & 0.238 & 0.034 & 0.388 & 0.051 & 0.074 & 0.077 & 0.055 & 0.062\\
$C_H=2\sqrt{K},C_S=2$ & 0.547 & 0.033 & 0.475 & 0.083 & 0.096 & 0.133 & 0.077 & 0.053\\
$C_H=2\sqrt{K},C_S=0$ & 0.651 & 0.013 & 0.511 & 0.072 & 0.088 & 0.102 & 0.068 & 0.054\\
\addlinespace
$C_H=3,C_S=3\sqrt{K}$ & 0.213 & 0.025 & 0.109 & 0.048 & 0.057 & 0.063 & 0.055 & 0.046\\
$C_H=3,C_S=3$ & 0.658 & 0.032 & 0.129 & 0.045 & 0.074 & 0.063 & 0.064 & 0.055\\
$C_H=3,C_S=0$ & 0.849 & 0.049 & 0.127 & 0.063 & 0.109 & 0.103 & 0.087 & 0.057\\
\addlinespace
$C_H=0,C_S=2\sqrt{K}$ & 0.853 & 0.105 & 0.049 & 0.064 & 0.068 & 0.098 & 0.085 & 0.056\\
$C_H=0,C_S=2$ & 0.999 & 0.152 & 0.048 & 0.045 & 0.201 & 0.132 & 0.098 & 0.037\\
$C_H=0,C_S=0$ & 1.000 & 0.342 & 0.052 & 0.051 & 0.439 & 0.143 & 0.080 & 0.049\\
$C_H=C_S=0,\sigma_{\varepsilon v}=1$& 1.000 & 1.000 & 0.045 & 0.040 & 1.000 & 0.179 & 0.082 & 0.045\\
\bottomrule
\end{tabular}

    \label{tab:sim_cont_K40}
    \justifying \small
    Note: Designs are identical to \Cref{tab:sim_cont}, but $K=40$ here.
\end{table}

\subsubsection{Binary Treatment} \label{sec:binary_sim}
This subsection presents a structural model with a binary $X$. 
Data is generated from a judge model with $J=K+1$ judges, each with $c=5$ cases, and cases are indexed by $i$. 
The structural model is:
\begin{align*}
Y_{i}(x) & =x(\beta + \xi_i)+\varepsilon_{i}, \text{ and }\\
X_{i}(z) & =I\left\{ z^{\prime}\pi-v_{i}\geq0\right\}.
\end{align*}
Our unobservables are generated as follows.
Draw $v_{i}\sim U[-1,1]$, then generate residuals from:
\[
\varepsilon_{i}\mid v_{i}\sim\begin{cases}
\begin{array}{c}
N\left(\sigma_{\varepsilon v},\sigma_{\varepsilon \varepsilon}\right)\\
N\left(-\sigma_{\varepsilon v},\sigma_{\varepsilon \varepsilon} \right)
\end{array} & \begin{array}{c}
if\\
if
\end{array}\begin{array}{c}
v_{i}\geq0\\
v_{i}<0
\end{array}\end{cases},
\]
\[
\xi_{i} \mid v_i \geq 0=\begin{cases}
\begin{array}{c}
\sigma_{\xi vk}\\
-\sigma_{\xi vk}
\end{array} & \begin{array}{c}
w.p.\\
w.p.
\end{array}\begin{array}{c}
p\\
1-p
\end{array}\end{cases}, \text{ and} \quad  
\xi_{i} \mid v_i <0=\begin{cases}
\begin{array}{c}
\sigma_{\xi vk}\\
-\sigma_{\xi vk}
\end{array} & \begin{array}{c}
w.p.\\
w.p.
\end{array}\begin{array}{c}
1-p\\
p
\end{array}\end{cases}.
\]

The process for determining $s,h$ and $\pi_k \in \{ 0,-s,s \},\sigma_{\xi vk} \in \{ 0,-h,h \}$ are identical to \Cref{ex:contX}, as $s$ controls the strength of the instrument, $h$ the extent of heterogeneity, and $\beta$ is the object of interest.
Then, the problem's variances and covariances are determined by $\left(p,\sigma_{\varepsilon v},\sigma_{\varepsilon \varepsilon}\right)$.
The JIVE estimand is shown to be $\beta$ in Supplementary Appendix E.2.
A simulation is run with $K=100$, so the sample size is smaller than the normal experiment in \Cref{ex:contX}.


\begin{table}
    \centering
    \caption{Rejection Rates under the null for nominal size 0.05 test for binary $X$}
    
\begin{tabular}{lrrrrrrrr}
\toprule
  & TSLS & EK & ARorc & MO  & $\tilde{X}$-t & $\tilde{X}$-AR & L3O & LMorc\\
\midrule
$C_H=C_S=3\sqrt{K},\sigma_{\varepsilon v}=0$ & 0.046 & 0.049 & 0.059 & 0.045 & 0.045 & 0.045 & 0.049 & 0.054\\
$C_H=2\sqrt{K},C_S=2\sqrt{K}$ & 0.097 & 0.047 & 0.177 & 0.037 & 0.038 & 0.041 & 0.051 & 0.052\\
$C_H=2\sqrt{K},C_S=2$ & 0.727 & 0.059 & 1.000 & 0.127 & 0.051 & 0.143 & 0.058 & 0.051\\
$C_H=2\sqrt{K},C_S=0$ & 0.891 & 0.037 & 1.000 & 0.204 & 0.067 & 0.247 & 0.059 & 0.045\\
\addlinespace
$C_H=3,C_S=3\sqrt{K}$& 0.092 & 0.060 & 0.051 & 0.055 & 0.057 & 0.056 & 0.055 & 0.047\\
$C_H=3,C_S=3$& 0.996 & 0.089 & 0.888 & 0.059 & 0.086 & 0.096 & 0.055 & 0.048\\
$C_H=3,C_S=0$ & 1.000 & 0.124 & 0.999 & 0.101 & 0.289 & 0.181 & 0.068 & 0.052\\
\addlinespace
$C_H=0,C_S=2\sqrt{K}$ & 0.408 & 0.058 & 0.055 & 0.043 & 0.046 & 0.046 & 0.045 & 0.041\\
$C_H=0,C_S=2$ & 1.000 & 0.212 & 0.052 & 0.061 & 0.188 & 0.108 & 0.078 & 0.057\\
$C_H=0,C_S=0$ & 1.000 & 0.654 & 0.046 & 0.034 & 0.750 & 0.149 & 0.069 & 0.039\\
$C_H=C_S=0,\sigma_{\varepsilon \varepsilon}=0$ & 1.000 & 1.000 & 0.053 & 0.057 & 1.000 & 0.173 & 0.076 & 0.053\\
\bottomrule
\end{tabular}

    \label{tab:sim_bin}
    \justifying \small
    Note: The data generating process corresponds to \Cref{sec:binary_sim}.
    Unless stated otherwise, designs use $K=100,c=5,\beta=0,p=7/8$,
$\sigma_{\varepsilon \varepsilon}=0.1,\sigma_{\varepsilon v}=0.5$ with 1000 simulations. 
\end{table}

Results are presented in \Cref{tab:sim_bin}, and are qualitatively similar to \Cref{sec:challenges}. 
The oracle test consistently obtains rejection rates close to the nominal 5\% rate across all designs, in accordance with the normality result, even with heterogeneous treatment effects and non-normality of errors due to the binary setup. 
The L3O rejection rate is close to the nominal rate even with a smaller sample size. 
EK, ARorc and MO continue to have high rejection rates in the adversarial designs.

\subsubsection{Incorporating Covariates} \label{sec:covar_sim}
This section presents a data-generating process that involves covariates. 
Instead of judges, consider a model where there are $K$ states. 
Let $t=1,\cdots, K$ index the state and let $W$ denote the control vector that is an indicator for states.
With a binary exogenous variable (say an indicator for birth being in the fourth quarter) $B \in \{ 0,1 \}$, the value of the instrument is given by $k = t \times B$.
Then, the instrument vector $Z$ is an indicator for all possible values of $k$.
The structural model is:
\begin{align*}
Y_{i}(x) & =x(\beta + \xi_i)+w^\prime \gamma + \varepsilon_{i}, \text{ and }\\
X_{i}(z) & =I\left\{ z^{\prime}\pi + w^\prime \gamma  -v_{i}\geq0\right\}. 
\end{align*}

\begin{table}
    \centering
    \caption{Rejection Rates under the null for nominal size 0.05 test for binary $X$ with covariates}
    
\begin{tabular}{lrrrrrrrr}
\toprule
  & TSLS & EK & ARorc & MO  & $\tilde{X}$-t & $\tilde{X}$-AR & L3O & LMorc\\
\midrule
$C_H=C_S=3\sqrt{K},\sigma_{\varepsilon v}=0$& 0.048 & 0.123 & 0.049 & 0.052 & 0.047 & 0.055 & 0.054 & 0.060\\
$C_H=2\sqrt{K},C_S=2\sqrt{K}$ & 0.072 & 0.111 & 0.052 & 0.044 & 0.041 & 0.046 & 0.050 & 0.053\\
$C_H=2\sqrt{K},C_S=2$ & 0.171 & 0.016 & 0.471 & 0.083 & 0.012 & 0.092 & 0.060 & 0.050\\
$C_H=2\sqrt{K},C_S=0$& 0.259 & 0.002 & 0.960 & 0.126 & 0.008 & 0.135 & 0.047 & 0.058\\
\addlinespace
$C_H=3,C_S=3\sqrt{K}$ & 0.065 & 0.132 & 0.048 & 0.053 & 0.056 & 0.054 & 0.060 & 0.049\\
$C_H=3,C_S=3$& 0.131 & 0.015 & 0.108 & 0.040 & 0.003 & 0.042 & 0.044 & 0.050\\
$C_H=3,C_S=0$ & 0.247 & 0.003 & 0.300 & 0.087 & 0.004 & 0.091 & 0.062 & 0.053\\
\addlinespace
$C_H=0,C_S=2\sqrt{K}$ & 0.084 & 0.099 & 0.054 & 0.041 & 0.036 & 0.043 & 0.048 & 0.050\\
$C_H=0,C_S=2$& 0.178 & 0.006 & 0.058 & 0.043 & 0.002 & 0.044 & 0.052 & 0.051\\
$C_H=0,C_S=0$ & 0.246 & 0.006 & 0.048 & 0.063 & 0.005 & 0.069 & 0.081 & 0.050\\
$C_H=C_S=0,\sigma_{\varepsilon \varepsilon}=0$ & 1.000 & 0.497 & 0.042 & 0.013 & 0.147 & 0.049 & 0.092 & 0.035\\
\bottomrule
\end{tabular}

    \label{tab:sim_bincov}
    \justifying \small
    Note: The data generating process corresponds to \Cref{sec:covar_sim}. 
    Unless stated otherwise, designs use $K=48,c=5,\beta=0,p=7/8$,
$\sigma_{\varepsilon \varepsilon}=0.5,\sigma_{\varepsilon v}=0.1$, and $g=0.1$ with 1000 simulations. 
\end{table}

In the simulation, every state has 10 observations, of which 5 have $B=1$ and the other 5 have $B=0$. 
The process for generating $(v_i, \varepsilon_i, \xi_i)$, $\pi_{k}, \sigma_{\xi vk}$, and $s,h$ is identical to the binary case. 
Hence, $\pi_0=\sigma_{\xi v0}$ for the base group, which constitutes half the observations. 
For $k\ne 0$, $\pi_k$ is the coefficient for observations from state $t=k$ and have $B=1$, and $\sigma_{\xi vk}$ is the corresponding heterogeneity term. 
Whenever $\pi_{t}=s$, set $\gamma_t=g$; whenever $\pi_{t}=-s$, set $\gamma_t=-g$.
In this setup, it can be shown that the UJIVE estimand is $\beta$, and the proof is in Supplementary Appendix E.2.
\Cref{tab:sim_bincov} reports the associated simulation results, which are qualitatively similar to the results described before.

\subsection{Proofs of Lemmas 1 and 2}

\begin{proof}[Proof of \Cref{lem:MS_estimand}]
Suppose not. Then, for some real $\beta_{0}$,
\begin{align*}
E\left[T_{AR}\right] & =\sum_{i}\sum_{j\ne i}P_{ij}R_{\Delta i}R_{\Delta j}
=\sum_{i}\sum_{j\ne i}P_{ij}\left(R_{Yi}R_{Yj}-R_{i}R_{Yj}\beta_{0}-R_{Yi}R_{j}\beta_{0}+R_{i}R_{j}\beta_{0}^{2}\right)=0.
\end{align*}

Solving for $\beta_{0}$,
\small
\[
\beta_{0}=\frac{2\sum_{i}\sum_{j\ne i}P_{ij}R_{i}R_{Yj}\pm\sqrt{4\left(\sum_{i}\sum_{j\ne i}P_{ij}R_{i}R_{Yj}\right)^{2}-4\left(\sum_{i}\sum_{j\ne i}P_{ij}R_{i}R_{j}\right)\left(\sum_{i}\sum_{j\ne i}P_{ij}R_{Yi}R_{Yj}\right)}}{2\left(\sum_{i}\sum_{j\ne i}P_{ij}R_{i}R_{j}\right)}.
\]
\normalsize

In our structural model, $R_{i}=\pi_{k(i)}$ and $R_{Yi}=\pi_{Yk(i)}$.
The term in the square root can be written as:
\[
D=4\left(\sum_{k}\pi_{k}\pi_{Yk}\right)^{2}-4\left(\sum_{k}\pi_{k}^{2}\right)\left(\sum_{k}\pi_{Yk}^{2}\right)
\]

Using \Cref{tab:parRFmte}, $\sum_k \pi_k^2 = \frac{5}{8}s^2 K$, $\sum_k \pi_{Yk}^2 = \left( \frac{5}{8} s^2\beta^2 +h^2 \right) K$, and $\sum_k \pi_k \pi_{Yk} = \frac{5}{8}s^2 \beta K$, we obtain 
\begin{align*}
\frac{1}{4} D = \left( \frac{5}{8}s^2 \beta K \right)^2 - \left( \frac{5}{8}s^2 K \right) \left( \frac{5}{8} s^2\beta^2 +h^2 \right) K = - \frac{5}{8}s^2 h^2 K^2 \leq 0.
\end{align*}
Since $h\ne0$ and $Ks^2 >0$, there are no real roots of $\beta_{0}$, a contradiction. 
\end{proof}

\begin{proof} [Proof of \Cref{lem:CLT}]
I rewrite the quadratic term to produce a martingale difference array: 
\begin{align*}
\sum_{i}\sum_{j\ne i}G_{ij}v_{i}^{\prime}Av_{j} & =\sum_{i}\sum_{j<i}G_{ij}v_{i}^{\prime}Av_{j}+\sum_{i}\sum_{j>i}G_{ij}v_{i}^{\prime}Av_{j}\\
 & =\sum_{i}\sum_{j<i}\left(G_{ij}v_{i}^{\prime}Av_{j}+G_{ji}v_{j}^{\prime}Av_{i}\right).
\end{align*}

Hence, $\sum_{i}s_{i}^{\prime}v_{i}+\sum_{i}\sum_{j\ne i}G_{ij}v_{i}^{\prime}Av_{j} =\sum_{i}y_{i}$, where
\begin{align*}
y_{i} &=s_{i}^{\prime}v_{i}+\sum_{j<i}\left(G_{ij}v_{i}^{\prime}Av_{j}+G_{ji}v_{j}^{\prime}Av_{i}\right)
=s_{i}^{\prime}v_{i}+v_{i}^{\prime}A\left(\sum_{j<i}G_{ij}v_{j}\right)+\left(\sum_{j<i}G_{ji}v_{j}^{\prime}\right)Av_{i}\\
 & =s_{i}^{\prime}v_{i}+v_{i}^{\prime}A\left(G_{L}v\right)_{i\cdot}^{\prime}+\left(G_{U}^{\prime}v\right)_{i\cdot}Av_{i}.
\end{align*}

Let $\mathcal{F}_{i}$ denote the filtration of $y_{1},\dots,y_{i-1}$.
To apply the martingale CLT, we require:
\begin{enumerate}
\item $\sum_{i}E\left[|y_{i}|^{2+\epsilon}\right]\rightarrow0$.
\item Conditional variance converges to 1, i.e., $P\left(|\sum_{i}E\left[B^{2}y_{i}^{2}\mid\mathcal{F}_{i}\right]-1|>\eta\right)\rightarrow0$,
where $B=\Var\left(T\right)^{-1/2}$.
\end{enumerate}
The 4th moments of $v_{i}$ are bounded.
With $\epsilon=2$, we want $\sum_{i}E\left[y_{i}^{4}\right]\rightarrow0$.
Using Loeve's $c_{r}$ inequality, it suffices that, for any element $l$ of the $v_{i}$ vector,
\begin{align*}
\sum_{i}s_{il}^{4}E\left[v_{il}^{4}\right] & \rightarrow0, \text{ and } \quad 
\sum_{i}E\left[v_{il}^{4}\left(G_{L}v\right)_{il}^{4}\right] \rightarrow0.
\end{align*}

The first condition is immediate from condition (2). 
The second condition holds by condition (3) using the proof in EK18. 
To be precise,
\begin{align*}
\sum_{i}E\left[v_{il}^{4}\left(G_{L}v\right)_{il}^{4}\right] & =\sum_{i}E\left[v_{il}^{4}\right]E\left[\left(G_{L}v\right)_{il}^{4}\right]\preceq\sum_{i}E\left[\left(G_{L}v\right)_{il}^{4}\right]\\
 & =\sum_{i}\sum_{j}G_{L,ij}^{4}E\left[v_{il}^{4}\right]+3\sum_{i}\sum_{j}\sum_{k\ne j}G_{L,ij}^{2}G_{L,ik}^{2}E\left[v_{il}^{2}\right]E\left[v_{jl}^{2}\right]\\
 & \preceq\sum_{i}\sum_{j}\sum_{k}G_{L,ij}^{2}G_{L,ik}^{2}
 =\sum_{i}\left(G_{L}G_{L}^{\prime}\right)_{ii}^{2} \\
 &\leq\sum_{i}\sum_{j}\left(G_{L}G_{L}^{\prime}\right)_{ij}^{2}=||G_{L}G_{L}^{\prime}||_{F}^{2}.
\end{align*}

The argument for $G_{U}$ is analogous. 
Now, I turn to showing convergence of the conditional variance. 
With abuse of notation, let $W_{i}=s_{i}^{\prime}v_{i}$ and $X_{i}=v_{i}^{\prime}A\left(G_{L}v\right)_{i\cdot}^{\prime}+v_{i}^{\prime}A\left(G_{U}^{\prime}v\right)_{i\cdot}^{\prime}$.
Since $\Var\left(BT\right)=B^{2}\sum_{i}E\left[W_{i}^{2}\right]+B^{2}\sum_{i}E\left[X_{i}^{2}\right]=1$,
\begin{align*}
\sum_{i}E\left[B^{2}y_{i}^{2}\mid\mathcal{F}_{i}\right]-1&=B^{2}\sum_{i}\left(E\left[X_{i}^{2}\mid\mathcal{F}_{i}\right]-E\left[X_{i}^{2}\right]\right) +2B^{2}\sum_{i}E\left[W_{i}X_{i}\mid\mathcal{F}_{i}\right] \\
&\quad +B^{2}\sum_{i}\left(E\left[W_{i}^{2}\mid\mathcal{F}_{i}\right]-E\left[W_{i}^{2}\right]\right).
\end{align*}

The previous observations in the filtration do not feature, so $E\left[W_{i}^{2}\mid\mathcal{F}_{i}\right]-E\left[W_{i}^{2}\right]=0$.
It suffices to show that the RHS converges to 0.
For the $\sum_{i}E\left[W_{i}X_{i}\mid\mathcal{F}_{i}\right]$ term,
\begin{align*}
B^{2}\sum_{i}E\left[W_{i}X_{i}\mid\mathcal{F}_{i}\right] & =B^{2}\sum_{i}E\left[W_{i}\left(v_{i}^{\prime}A\left(G_{L}v\right)_{i\cdot}^{\prime}+v_{i}^{\prime}A\left(G_{U}^{\prime}v\right)_{i\cdot}^{\prime}\right)\mid\mathcal{F}_{i}\right]\\
 & =B^{2}\sum_{i}E\left[W_{i}v_{i}^{\prime}A\right]\left(G_{L}v\right)_{i\cdot}^{\prime}+B^{2}\sum_{i}E\left[W_{i}v_{i}^{\prime}A\right]\left(G_{U}^{\prime}v\right)_{i\cdot}^{\prime}.
\end{align*}

It suffices to show that the respective squares converge to 0. 
Due to bounded fourth moments, and applying the Cauchy-Schwarz inequality repeatedly, for some n-vector $\delta_v$ with $||\delta_v||_2 \leq C$,
\[
E\left[\left(\sum_{i} E[W_i v_i^\prime] A \left(G_{L}v\right)_{i\cdot}^{\prime}\right)^{2}\right]\preceq
\delta_v^\prime G_L G_L^\prime \delta_v \leq ||\delta_v||_2^2 ||G_L G_L^\prime||_2 \preceq ||G_{L}G_{L}^{\prime}||_{F},
\]
and the same argument can be applied to the $G_{U}$ term. 
Finally, 
\begin{align*}
\sum_{i}\left(E\left[X_{i}^{2}\mid\mathcal{F}_{i}\right]-E\left[X_{i}^{2}\right]\right) & =\sum_{i}\bigg(E\left[\left(v_{i}^{\prime}A\left(G_{L}v\right)_{i\cdot}^{\prime}+v_{i}^{\prime}A\left(G_{U}^{\prime}v\right)_{i\cdot}^{\prime}\right)^{2}\mid\mathcal{F}_{i}\right] \\
&\quad-E\left[\left(v_{i}^{\prime}A\left(G_{L}v\right)_{i\cdot}^{\prime}+v_{i}^{\prime}A\left(G_{U}^{\prime}v\right)_{i\cdot}^{\prime}\right)^{2}\right]\bigg).
\end{align*}

It suffices to consider the $G_{L}$ term, as the $G_{U}$ and cross terms are analogous: 
\begin{align*}
\sum_{i} & \left(E\left[\left(v_{i}^{\prime}A\left(G_{L}v\right)_{i\cdot}^{\prime}\right)^{2}\mid\mathcal{F}_{i}\right]-E\left[\left(v_{i}^{\prime}A\left(G_{L}v\right)_{i\cdot}^{\prime}\right)^{2}\right]\right)\\
 & =\sum_{i}\left(\left(G_{L}v\right)_{i\cdot}A^{\prime}E\left[v_{i}v_{i}^{\prime}\right]A\left(G_{L}v\right)_{i\cdot}^{\prime}-E\left[\left(G_{L}v\right)_{i\cdot}A^{\prime}v_{i}v_{i}^{\prime}A\left(G_{L}v\right)_{i\cdot}^{\prime}\right]\right).
\end{align*}

Since $\sum_{i}\left(G_{L}v\right)_{i\cdot}A^{\prime}E\left[v_{i}v_{i}^{\prime}\right]A\left(G_{L}v\right)_{i\cdot}^{\prime}$ is demeaned, it suffices to show that its variance converges to 0.
Due to bounded moments, 
\[
\Var\left(\sum_{i}\left(G_{L}v\right)_{i\cdot}A^{\prime}E\left[v_{i}v_{i}^{\prime}\right]A\left(G_{L}v\right)_{i\cdot}^{\prime}\right)\preceq \sum_{i}\sum_{j} \left(G_L G_L^\prime \right)^{2}=||G_{L}G_{L}^{\prime}||_{F}^{2},
\]
which suffices for the result. 
\end{proof}

The full and latest version of the paper, including proofs of results in the appendices, can be found at \url{https://lutheryap.github.io/files/mwiv_het_wp.pdf}.

\clearpage 
\section{Supplementary Appendix, Not for Publication}
\small
\singlespacing

\subsection{Proofs for Appendix C}

\begin{proof} [Proof of \Cref{lem:a0}]
I begin with part (c). By applying the Cauchy-Schwarz inequality,
\begin{align*}
 & \left|\sum_{i}C_{i}\left(\sum_{j\ne i}h_{2}^{A}\left(i,j\right)R_{mj}\right)\left(\sum_{j\ne i}h_{2}^{B}\left(i,j\right)R_{mj}\right)\right|\\
 & \leq\left(\sum_{i}C_{i}\left(\sum_{j\ne i}h_{2}^{A}\left(i,j\right)R_{mj}\right)^{2}\right)^{1/2}\left(\sum_{i}C_{i}\left(\sum_{j\ne i}h_{2}^{B}\left(i,j\right)R_{mj}\right)^{2}\right)^{1/2}\\
 & \leq\max_{i}C_{i}\left(\sum_{i}\left(\sum_{j\ne i}h_{2}^{A}\left(i,j\right)R_{mj}\right)^{2}\right)^{1/2}\left(\sum_{i}\left(\sum_{j\ne i}h_{2}^{B}\left(i,j\right)R_{mj}\right)^{2}\right)^{1/2}\\
 & \leq\max_{i}C_{i}\left(\sum_{i}\tilde{R}_{mi}^{2}\right)^{1/2}\left(\sum_{i}\tilde{R}_{mi}^{2}\right)^{1/2}\leq C\sum_{i}\tilde{R}_{mi}^{2}.
\end{align*}

The proof of all other parts are entirely analogous. 
\end{proof}

\begin{proof} [Proof of \Cref{lem:a1}]

\textbf{Proof of \Cref{lem:a1}(a)}.

Using the decomposition from AS23,
\begin{align*}
\Var &  \left(\sum_{i}\sum_{j\ne i}G_{ij}F_{ij}V_{1i}V_{2i}V_{3j}V_{4j}\right)\\
 & =\sum_{i\ne j}^{n}G_{ij}^{2}F_{ij}^{2}\Var\left(V_{1i}V_{2i}V_{3j}V_{4j}\right)+\sum_{i\ne j}^{n}G_{ij}F_{ij}G_{ji}F_{ji}Cov\left(V_{1i}V_{2i}V_{3j}V_{4j},V_{1j}V_{2j}V_{3i}V_{4i}\right)\\
 & \quad+\sum_{i\ne j\ne k}^{n}G_{ij}F_{ij}G_{kj}F_{kj}Cov\left(V_{1i}V_{2i}V_{3j}V_{4j},V_{1k}V_{2k}V_{3j}V_{4j}\right)
 \\
 &\quad +\sum_{i\ne j\ne k}^{n}G_{ij}F_{ij}G_{jk}F_{jk}Cov\left(V_{1i}V_{2i}V_{3j}V_{4j},V_{1j}V_{2j}V_{3k}V_{4k}\right)\\
 & \quad+\sum_{i\ne j\ne k}^{n}G_{ij}F_{ij}G_{ik}F_{ik}Cov\left(V_{1i}V_{2i}V_{3j}V_{4j},V_{1i}V_{2i}V_{3k}V_{4k}\right)
  \\
  &\quad +\sum_{i\ne j\ne k}^{n}G_{ij}F_{ij}G_{ki}F_{ki}Cov\left(V_{1i}V_{2i}V_{3j}V_{4j},V_{1k}V_{2k}V_{3i}V_{4i}\right)\\
 & \leq2\left[\max_{i,j}\Var\left(V_{1i}V_{2i}V_{3j}V_{4j}\right)\right]\sum_{i}\left(\left(\sum_{j\ne i}G_{ij}F_{ij}\right)^{2}+\left(\sum_{j\ne i}G_{ij}F_{ij}\right)\left(\sum_{j\ne i}G_{ji}F_{ji}\right)\right).
\end{align*}

Notice that the terms in $\sum_{i\ne j}^{n}$ are absorbed into the sum over $k$ so that the final expression can be written as $\sum_{i}\sum_{j\ne i}\sum_{k\ne i}$.
Then, due to \Cref{asmp:Vconst_reg}(a) and the Cauchy-Schwarz inequality,
\begin{align*}
\sum_{i}\left(\sum_{j\ne i}G_{ij}F_{ij}\right)^{2} & \leq\sum_{i}\left(\sum_{j\ne i}G_{ij}^{2}\right)\left(\sum_{j\ne i}F_{ij}^{2}\right)\leq C\sum_{i}\sum_{j\ne i}G_{ij}^{2},
\end{align*}

and
\begin{align*}
\left|\sum_{i}\left(\sum_{j\ne i}G_{ij}F_{ij}\right)\left(\sum_{j\ne i}G_{ji}F_{ji}\right)\right| & \leq\left(\sum_{i}\left(\sum_{j\ne i}G_{ij}F_{ij}\right)^{2}\right)^{1/2}\left(\sum_{i}\left(\sum_{j\ne i}G_{ji}F_{ji}\right)^{2}\right)^{1/2}\\
 & \leq C\left(\sum_{i}\sum_{j\ne i}G_{ij}^{2}\right)^{1/2}\left(\sum_{i}\sum_{j\ne i}G_{ji}^{2}\right)^{1/2}=C\sum_{i}\sum_{j\ne i}G_{ij}^{2}.
\end{align*}

\textbf{Proof of \Cref{lem:a1}(b).} 
Expand the term:
\begin{align*}
\sum_{i\ne j\ne k}^{n}G_{ij}F_{ij}\check{M}_{ik,-ij}V_{1i}V_{2k}V_{3j}V_{4j}
= \sum_{i\ne j\ne k}^{n}G_{ij}F_{ij}\check{M}_{ik,-ij}\left(R_{1i}R_{2k}+v_{1i}R_{2k}+R_{1i}v_{2k}+v_{1i}v_{2k}\right)V_{3j}V_{4j}.
\end{align*}
Consider the final sum with 4 stochastic terms. 
The 6-sums have zero covariances due to independent sampling. 
The 5-sums also have zero covariances, because at least one of $v_{1}$ or $v_{2}$ needs to have different indices. 
Within the 4-sum, the covariance is nonzero only for $j_{2}\ne j$. We require $i_{2}$ to be equal to either $i$ or $k$ and $k_{2}$ the other index. 
Hence, by bounding covariances above by Cauchy-Schwarz,
\begin{align*}
\Var &  \left(\sum_{i\ne j\ne k}^{n}G_{ij}F_{ij}\check{M}_{ik,-ij}v_{1i}v_{2k}V_{3j}V_{4j}\right)\\
\leq & \max_{i,j,k}\Var\left(v_{1i}v_{2k}V_{3j}V_{4j}\right)\sum_{i}\sum_{j\ne i}\sum_{k\ne i,j}\sum_{l\ne i,j,k}\left(G_{ij}F_{ij}G_{il}F_{il}\check{M}_{ik,-ij}\check{M}_{ik,-il}+G_{ij}F_{ij}G_{kl}F_{kl}\check{M}_{ik,-ij}\check{M}_{ki,-kl}\right)\\
 & +\max_{i,j,k}\Var\left(v_{1i}v_{2k}V_{3j}V_{4j}\right)3!\sum_{i\ne j\ne k}^{n}G_{ij}^{2}F_{ij}^{2}\check{M}_{ik,-ij}^{2}\\
\leq & \max_{i,j,k}\Var\left(v_{1i}v_{2k}V_{3j}V_{4j}\right)\left(\sum_{i\ne j\ne k\ne l}^{n}G_{ij}^{2}G_{il}^{2}\check{M}_{ik,-ij}^{2}\right)^{1/2}\left(\sum_{i\ne j\ne k\ne l}^{n}F_{ij}^{2}F_{il}^{2}\check{M}_{ik,-ij}^{2}\right)^{1/2}\\
 & +\max_{i,j,k}\Var\left(v_{1i}v_{2k}V_{3j}V_{4j}\right)\left(\sum_{i\ne j\ne k\ne l}^{n}G_{ij}^{2}G_{kl}^{2}\check{M}_{ik,-ij}^{2}\right)^{1/2}\left(\sum_{i\ne j\ne k\ne l}^{n}F_{ij}^{2}F_{kl}^{2}\check{M}_{ik,-ij}^{2}\right)^{1/2}\\
 & +\max_{i,j,k}\Var\left(v_{1i}v_{2k}V_{3j}V_{4j}\right)3!\sum_{i\ne j\ne k}^{n}G_{ij}^{2}F_{ij}^{2}\check{M}_{ik,-ij}^{2}.
\end{align*}

To obtain the first inequality, observe that once we have fixed 3
indices, there are $3!$ permutations of the $v_{1i}v_{2k}V_{3j}V_{4j}$
that we can calculate covariances for. They are all bounded above
by the variance. In the various combinations, we may have different
combinations of $G$ and $F$, but they are bounded above by the expression.
To be precise, the 3-sum is:
\begin{align*}
 & \sum_{i\ne j\ne k}^{n}G_{ij}F_{ij}\check{M}_{ik,-ij}\left(G_{ij}F_{ij}\check{M}_{ik,-ij}+G_{ik}F_{ik}\check{M}_{ij,-ik}+G_{ji}F_{ji}\check{M}_{jk,-ji}\right)\\
 & +\sum_{i\ne j\ne k}^{n}G_{ij}F_{ij}\check{M}_{ik,-ij}\left(G_{jk}F_{jk}\check{M}_{ji,-jk}+G_{ki}F_{ki}\check{M}_{kj,-ki}+G_{kj}F_{kj}\check{M}_{ki,-kj}\right).
\end{align*}
Apply Cauchy-Schwarz to the sum and apply the commutative property
of summations to obtain the upper bound. For instance,
\begin{align*}
 \left(\sum_{i\ne j\ne k}^{n}G_{ij}F_{ij}\check{M}_{ik,-ij}G_{jk}F_{jk}\check{M}_{ji,-jk}\right)^{2}
\leq \left(\sum_{i\ne j\ne k}^{n}G_{ij}^{2}F_{ij}^{2}\check{M}_{ik,-ij}^{2}\right)\left(\sum_{i\ne j\ne k}^{n}G_{jk}^{2}F_{jk}^{2}\check{M}_{ji,-jk}^{2}\right).
\end{align*}
Then, observe that $\sum_{i}\sum_{j\ne i}\sum_{k\ne i,j}G_{jk}^{2}F_{jk}^{2}\check{M}_{ji,-jk}^{2}=\sum_{j}\sum_{k\ne j}\sum_{i\ne j,k}G_{jk}^{2}F_{jk}^{2}\check{M}_{ji,-jk}^{2} \\=\sum_{i}\sum_{j\ne i}\sum_{k\ne i,j}G_{ij}^{2}F_{ij}^{2}\check{M}_{ik,-ij}^{2}$.
Due to AS23 Equation (22), $\sum_{l}\check{M}_{il-ijk}^{2}=O(1)$,
so $\sum_{i\ne j\ne k}^{n}G_{ij}^{2}F_{ij}^{2}\check{M}_{ik,-ij}^{2}\leq C\sum_{i}\sum_{j\ne i}G_{ij}^{2}F_{ij}^{2}\leq C\sum_{i}\sum_{j\ne i}G_{ij}^{2}$.
Similarly, $\sum_{i\ne j\ne k\ne l}^{n}G_{ij}^{2}G_{kl}^{2}\check{M}_{ik,-ij}^{2}=O(1)\sum_{i\ne j\ne k}^{n}G_{ij}^{2}\check{M}_{ik,-ij}^{2}=O(1)\sum_{i\ne j}^{n}G_{ij}^{2}$,
which delivers the order required. 

To deal with 3 stochastic terms,
\begin{align*}
\Var &  \left(\sum_{i\ne j\ne k}^{n}G_{ij}F_{ij}\check{M}_{ik,-ij}R_{1i}v_{2k}V_{3j}V_{4j}\right)
=\Var\left(\sum_{i\ne j}^{n}v_{2i}V_{3j}V_{4j}\left(\sum_{k\ne i,j}G_{kj}F_{kj}\check{M}_{ki,-kj}R_{1k}\right)\right)\\
 & \leq\sum_{i\ne j}^{n}\Var\left(v_{2i}V_{3j}V_{4j}\right)\left(\sum_{k\ne i,j}G_{kj}F_{kj}\check{M}_{ki,-kj}R_{1k}\right)\left[\sum_{k\ne i,j}G_{kj}F_{kj}\check{M}_{ki,-kj}R_{1k}+\sum_{k\ne i,j}G_{ki}F_{ki}\check{M}_{kj,-ki}R_{1k}\right]\\
 & +\max_{i,j}\Var\left(v_{2i}V_{3j}V_{4j}\right)\sum_{i\ne j}^{n}\sum_{l\ne i,j}\left(\sum_{k\ne i,j}G_{kj}F_{kj}\check{M}_{ki,-kj}R_{1k}\right)\left(\sum_{k\ne i,l}G_{kl}F_{kl}\check{M}_{ki,-kl}R_{1k}\right)\\
 & \leq\sum_{i}\sum_{j\ne i}\Var\left(v_{2i}V_{3j}V_{4j}\right)\left(\sum_{k\ne i,j}G_{kj}F_{kj}\check{M}_{ki,-kj}R_{1k}\right) \\
 &\quad \left[\sum_{k\ne i,j}G_{kj}F_{kj}\check{M}_{ki,-kj}R_{1k}+\sum_{k\ne i,j}G_{ki}F_{ki}\check{M}_{kj,-ki}R_{1k}\right]\\
 & +\max_{i,j}\Var\left(v_{2i}V_{3j}V_{4j}\right)\sum_{i}\left(\sum_{j\ne i}\sum_{k\ne i,j}G_{kj}F_{kj}\check{M}_{ki,-kj}R_{1k}\right)^{2} \\
 &\quad -\max_{i,j}\Var\left(v_{2i}V_{3j}V_{4j}\right)\sum_{i}\sum_{j\ne i}\left(\sum_{k\ne i,j}G_{kj}F_{kj}\check{M}_{ki,-kj}R_{1k}\right)^{2}\\
 & \leq\max_{i,j}\Var\left(v_{2i}V_{3j}V_{4j}\right)\sum_{i}\left(\sum_{j\ne i}\sum_{k\ne i,j}G_{kj}F_{kj}\check{M}_{ki,-kj}R_{1k}\right)^{2} \\
 &+\sum_{i\ne j}^{n}\Var\left(v_{2i}V_{3j}V_{4j}\right)\left(\sum_{k\ne i,j}G_{kj}F_{kj}\check{M}_{ki,-kj}R_{1k}\right)\left(\sum_{k\ne i,j}G_{ki}F_{ki}\check{M}_{kj,-ki}R_{1k}\right)
\end{align*}

To get the first inequality, observe that, if for $l\ne i,j$, we have $v_{2l}$ instead of $V_{3l}V_{4l}$, the covariance must be 0. 
We can then bound the order by using \Cref{asmp:Vconst_reg} and \Cref{lem:a0}.
Similarly,
\begin{align*}
\Var &  \left(\sum_{i\ne j\ne k}^{n}G_{ij}F_{ij}\check{M}_{ik,-ij}v_{1i}R_{2k}V_{3j}V_{4j}\right)=\Var\left(\sum_{i\ne j}^{n}v_{1i}V_{3j}V_{4j}\left(\sum_{k\ne i,j}G_{ij}F_{ij}\check{M}_{ik,-ij}R_{2k}\right)\right)\\
 & \leq\max_{i,j}\Var\left(v_{1i}V_{3j}V_{4j}\right)\sum_{i}\left(\sum_{j\ne i}\sum_{k\ne i,j}G_{ij}F_{ij}\check{M}_{ik,-ij}R_{2k}\right)^{2} \\
 &+\sum_{i}\sum_{j\ne i}\Var\left(v_{1i}V_{3j}V_{4j}\right)\left(\sum_{k\ne i,j}G_{ij}F_{ij}\check{M}_{ik,-ij}R_{2k}\right)\left(\sum_{k\ne i,j}G_{ji}F_{ji}\check{M}_{jk,-ij}R_{2k}\right).
\end{align*}
since the expansion in the intermediate steps are entirely analogous. 

Turning to the sum with two stochastic objects,
\begin{align*}
\Var &  \left(\sum_{i\ne j\ne k}^{n}G_{ij}F_{ij}\check{M}_{ik,-ij}R_{1i}R_{2k}V_{3j}V_{4j}\right)
=\Var\left(\sum_{i}V_{3i}V_{4i}\left(\sum_{j\ne i}\sum_{k\ne i,j}G_{ji}F_{ji}\check{M}_{jk,-ij}R_{1j}R_{2k}\right)\right)\\
 & =\sum_{i}\Var\left(V_{3i}V_{4i}\right)\left(\sum_{j\ne i}\sum_{k\ne i,j}G_{ji}F_{ji}\check{M}_{jk,-ij}R_{1j}R_{2k}\right)^{2} \\
 &\leq\max_{i}\Var\left(V_{3i}V_{4i}\right)\sum_{i}\left(\sum_{j\ne i}\sum_{k\ne i,j}G_{ji}F_{ji}\check{M}_{jk,-ij}R_{1j}R_{2k}\right)^{2}.
\end{align*}
With these inequalities, applying \Cref{asmp:Vconst_reg} suffices for the result.

\textbf{Proof of \Cref{lem:a1}(c)}. Expand the term:
\begin{align*}
\sum_{i\ne j\ne l}^{n}G_{ij}F_{ij}\check{M}_{jl,-ij}V_{1i}V_{2i}V_{3j}V_{4l}
= \sum_{i\ne j\ne l}^{n}G_{ij}F_{ij}\check{M}_{jl,-ij}V_{1i}V_{2i}\left(R_{3j}R_{4l}+R_{3j}v_{4l}+v_{3j}R_{4l}+v_{3j}v_{4l}\right).
\end{align*}

With four stochastic objects, 
\begin{align*}
\Var &  \left(\sum_{i\ne j\ne l}^{n}G_{ij}F_{ij}\check{M}_{jl,-ij}V_{1i}V_{2i}v_{3j}v_{4l}\right)\\
\leq & \max_{i,j,k}\Var\left(V_{1i}V_{2i}v_{3j}v_{4l}\right)\sum_{i\ne j\ne l}^{n}\sum_{i_{2}\ne i,j,l}\left(G_{ij}F_{ij}\check{M}_{jl,-ij}G_{i_{2}j}F_{i_{2}j}\check{M}_{jl,-i_{2}j}+G_{ij}F_{ij}\check{M}_{jl,-ij}G_{i_{2}l}F_{i_{2}l}\check{M}_{lj,-i_{2}l}\right)\\
 & +\max_{i,j,k}\Var\left(V_{1i}V_{2i}v_{3j}v_{4l}\right)3!\sum_{i\ne j\ne l}^{n}G_{ij}^{2}F_{ij}^{2}\check{M}_{jl,-ij}^{2}.
\end{align*}

Simplifying the first line,
\begin{align*}
 & \sum_{i\ne j\ne l}^{n}\sum_{i_{2}\ne i,j,l}\left(G_{ij}F_{ij}\check{M}_{jl,-ij}G_{i_{2}j}F_{i_{2}j}\check{M}_{jl,-i_{2}j}+G_{ij}F_{ij}\check{M}_{jl,-ij}G_{i_{2}l}F_{i_{2}l}\check{M}_{lj,-i_{2}l}\right)\\
 & \leq\left(\sum_{i\ne j\ne l\ne i_{2}}^{n}G_{ij}^{2}G_{i_{2}j}^{2}\check{M}_{jl,-ij}^{2}\right)^{1/2}\left(\sum_{i\ne j\ne l\ne i_{2}}^{n}F_{ij}^{2}F_{i_{2}j}^{2}\check{M}_{jl,-i_{2}j}^{2}\right)^{1/2} \\
 &+\left(\sum_{i\ne j\ne l\ne i_{2}}^{n}G_{ij}^{2}G_{i_{2}l}^{2}\check{M}_{jl,-ij}^{2}\right)^{1/2}\left(\sum_{i\ne j\ne l\ne i_{2}}^{n}F_{ij}^{2}F_{i_{2}l}^{2}\check{M}_{lj,-i_{2}j}^{2}\right)^{1/2}.
\end{align*}

These terms have the required order due to a proof analogous to \Cref{lem:a1}(b). Next,
\begin{align*}
\Var & \left(\sum_{i\ne j\ne l}^{n}G_{ij}F_{ij}\check{M}_{jl,-ij}V_{1i}V_{2i}R_{3j}v_{4l}\right)
=\Var\left(\sum_{i}\sum_{j\ne i}V_{1i}V_{2i}v_{4j}\left(\sum_{l\ne i,j}G_{il}F_{il}\check{M}_{lj,-il}R_{3l}\right)\right)\\
 & \leq\sum_{i}\sum_{j\ne i}\Var\left(V_{1i}V_{2i}v_{4j}\right)\left(\sum_{l\ne i,j}G_{il}F_{il}\check{M}_{lj,-il}R_{3l}\right)\left[\sum_{l\ne i,j}G_{il}F_{il}\check{M}_{lj,-il}R_{3l}+\sum_{l\ne i,j}G_{jl}F_{jl}\check{M}_{li,-jl}R_{3l}\right]\\
 & +\max_{i,j}\Var\left(V_{1i}V_{2i}v_{4j}\right)\sum_{i}\sum_{j\ne i}\sum_{i_{2}\ne i,j}\left(\sum_{l\ne i,j}G_{il}F_{il}\check{M}_{lj,-il}R_{3l}\right)\left(\sum_{k\ne i_{2},l}G_{kl}F_{kl}\check{M}_{ki_{2},-kl}R_{1k}\right)\\
 & \leq\max_{i,j}\Var\left(V_{1i}V_{2i}v_{4j}\right)\sum_{i}\left(\sum_{j\ne i}\sum_{l\ne i,j}G_{il}F_{il}\check{M}_{lj,-il}R_{3l}\right)^{2} \\
 &+\sum_{i}\sum_{j\ne i}\Var\left(V_{1i}V_{2i}v_{4j}\right)\left(\sum_{l\ne i,j}G_{il}F_{il}\check{M}_{lj,-il}R_{3l}\right)\left(\sum_{l\ne i,j}G_{jl}F_{jl}\check{M}_{li,-jl}R_{3l}\right).
\end{align*}

Further, $\Var\left(\sum_{i\ne j\ne l}^{n}G_{ij}F_{ij}\check{M}_{jl,-ij}V_{1i}V_{2i}v_{3j}R_{4l}\right)$ can be bounded by a similar argument. 
Turning to the sum with two stochastic objects,
\begin{align*}
\Var &  \left(\sum_{i\ne j\ne l}^{n}G_{ij}F_{ij}\check{M}_{jl,-ij}V_{1i}V_{2i}R_{3j}R_{4l}\right)
 =\sum_{i}\Var\left(V_{1i}V_{2i}\right)\left(\sum_{j\ne i}\sum_{l\ne i,j}G_{ij}F_{ij}\check{M}_{jl,-ij}R_{3j}R_{4l}\right)^{2}.
\end{align*}
These inequalities suffice for the result due to \Cref{asmp:Vconst_reg}. 

\textbf{Proof of \Cref{lem:a1}(d).}
Expand the term: 
\begin{align*}
 & \sum_{i\ne j\ne k\ne l}^{n}G_{ij}F_{ij}\check{M}_{ik,-ij}\check{M}_{jl,-ijk}V_{1i}V_{2k}V_{3j}V_{4l}\\
 & =\sum_{i\ne j\ne k\ne l}^{n}G_{ij}F_{ij}\check{M}_{ik,-ij}\check{M}_{jl,-ijk}V_{1i}R_{2k}\left(R_{3j}R_{4l}+R_{3j}v_{4l}+v_{3j}R_{4l}+v_{3j}v_{4l}\right)\\
 & \quad+\sum_{i\ne j\ne k\ne l}^{n}G_{ij}F_{ij}\check{M}_{ik,-ij}\check{M}_{jl,-ijk}V_{1i}v_{2k}\left(R_{3j}R_{4l}+R_{3j}v_{4l}+v_{3j}R_{4l}+v_{3j}v_{4l}\right).
\end{align*}

Consider the $v_{2k}$ line first. 
We only have the 4-sum to contend with. 
For 5-sum and above, at least one of the errors can be factored out as a zero expectation. 
Hence, by using Cauchy-Schwarz and the same argument as above,
\begin{align*}
 & \Var\left(\sum_{i\ne j\ne k\ne l}^{n}G_{ij}F_{ij}\check{M}_{ik,-ij}\check{M}_{jl,-ijk}V_{1i}v_{2k}v_{3j}v_{4l}\right)\\
 & \leq\max_{i,j,k,l}\Var\left(V_{1i}v_{2k}v_{3j}v_{4l}\right)4!\sum_{i\ne j\ne k\ne l}^{n}G_{ij}^{2}F_{ij}^{2}\check{M}_{ik,-ij}^{2}\check{M}_{jl,-ijk}^{2}\\
 & \leq C\sum_{i\ne j\ne k}^{n}G_{ij}^{2}F_{ij}^{2}\check{M}_{ik,-ij}^{2}\leq C\sum_{i\ne j}^{n}G_{ij}^{2}F_{ij}^{2}\leq C\left(\sum_{i\ne j}^{n}G_{ij}^{2}\right)^{1/2}\left(\sum_{i\ne j}^{n}F_{ij}^{2}\right)^{1/2}.
\end{align*}

By using the same expansion step as before,
\begin{align*}
\Var &  \left(\sum_{i\ne j\ne k}^{n}G_{ij}F_{ij}\check{M}_{ik,-ij}V_{1i}v_{2k}v_{3j}\left(\sum_{l\ne i,j,k}\check{M}_{jl,-ijk}R_{4l}\right)\right)\\
\leq & \max_{i,j,k}\Var\left(V_{1i}v_{2k}v_{3j}\left(\sum_{l\ne i,j,k}\check{M}_{jl,-ijk}R_{4l}\right)\right) \\
&\quad \sum_{i \ne j \ne k \ne i_2}^n \left(G_{ij}F_{ij}G_{i_{2}j}F_{i_{2}j}\check{M}_{ik,-ij}\check{M}_{i_{2}k,-ij}+G_{ij}F_{ij}G_{i_{2}k}F_{i_{2}k}\check{M}_{ij,-ik}\check{M}_{i_{2}j,-ik}\right)\\
 & +\max_{i,j,k}\Var\left(V_{1i}v_{2k}v_{3j}\left(\sum_{l\ne i,j,k}\check{M}_{jl,-ijk}R_{4l}\right)\right)3!\sum_{i}\sum_{j\ne i}\sum_{k\ne i,j}G_{ij}^{2}F_{ij}^{2}\check{M}_{ik,-ij}^{2}.
\end{align*}

The $\sum_{i\ne j\ne k\ne i_{2}}^{n}\left(G_{ij}F_{ij}G_{i_{2}j}F_{i_{2}j}\check{M}_{ik,-ij}\check{M}_{i_{2}k,-ij}+G_{ij}F_{ij}G_{i_{2}k}F_{i_{2}k}\check{M}_{ij,-ik}\check{M}_{i_{2}j,-ik}\right)$
term has the required order due to the same argument as the proof
of \Cref{lem:a1}(b). Next, 
\footnotesize
\begin{align*}
\Var & \left(\sum_{i\ne j\ne k\ne l}^{n}G_{ij}F_{ij}\check{M}_{ik,-ij}\check{M}_{jl,-ijk}V_{1i}v_{2k}R_{3j}v_{4l}\right)=\Var\left(\sum_{i\ne j\ne k}^{n}V_{1i}v_{2k}v_{4j}\left(\sum_{l\ne i,j,k}G_{il}F_{il}\check{M}_{ik,-il}\check{M}_{lj,-ilk}R_{3l}\right)\right)\\
\leq & \max_{i,j,k}\Var\left(V_{1i}v_{2k}v_{4j}\right)\sum_{i\ne j\ne k}^{n}\sum_{i_{2}\ne i,j,k}\left(\sum_{l\ne i,j,k}G_{il}F_{il}\check{M}_{ik,-il}\check{M}_{lj,-ilk}R_{3l}\right)\left(\sum_{l\ne i_{2},j,k}G_{i_{2}l}F_{i_{2}l}\check{M}_{i_{2}k,-i_{2}l}\check{M}_{lj,-i_{2}lk}R_{3l}\right)\\
 & +\max_{i,j,k}\Var\left(V_{1i}v_{2k}v_{4j}\right)\sum_{i\ne j\ne k}^{n}\sum_{i_{2}\ne i,j,k}\left(\sum_{l\ne i,j,k}G_{il}F_{il}\check{M}_{ik,-il}\check{M}_{lj,-ilk}R_{3l}\right)\left(\sum_{l\ne i_{2},j,k}G_{i_{2}l}F_{i_{2}l}\check{M}_{i_{2}j,-i_{2}l}\check{M}_{lk,-i_{2}lj}R_{3l}\right)\\
 & +\max_{i,j,k}\Var\left(V_{1i}v_{2k}v_{4j}\right)3!\sum_{i\ne j\ne k}^{n}\check{M}_{ik,-ij}^{2}\left(\sum_{l\ne i,j,k}G_{il}F_{il}\check{M}_{ik,-il}\check{M}_{lj,-ilk}R_{3l}\right)^{2}\\
\leq & \max_{i,j,k}\Var\left(V_{1i}v_{2k}v_{4j}\right) \sum_{k}\sum_{j\ne k}\left(\sum_{i\ne k,j}\sum_{l\ne i,j,k}G_{il}F_{il}\check{M}_{ik,-il}\check{M}_{lj,-ilk}R_{3l}\right)^{2} \\
&\quad -\max_{i,j,k}\Var\left(V_{1i}v_{2k}v_{4j}\right) \sum_{k}\sum_{j\ne k}\sum_{i\ne k,j}\left(\sum_{l\ne i,j,k}G_{il}F_{il}\check{M}_{ik,-il}\check{M}_{lj,-ilk}R_{3l}\right)^{2} \\
 & +\max_{i,j,k}\Var\left(V_{1i}v_{2k}v_{4j}\right)\sum_{k}\sum_{j\ne k}\left(\sum_{i\ne k,j}\sum_{l\ne i,j,k}G_{il}F_{il}\check{M}_{ik,-il}\check{M}_{lj,-ilk}R_{3l}\right)\left(\sum_{i\ne k,j}\sum_{l\ne i,j,k}G_{il}F_{il}\check{M}_{ij,-il}\check{M}_{lk,-ilj}R_{3l}\right)\\
 & -\max_{i,j,k}\Var\left(V_{1i}v_{2k}v_{4j}\right)\sum_{k}\sum_{j\ne k}\sum_{i\ne k,j}\left(\sum_{l\ne i,j,k}G_{il}F_{il}\check{M}_{ik,-il}\check{M}_{lj,-ilk}R_{3l}\right)\left(\sum_{l\ne i,j,k}G_{il}F_{il}\check{M}_{ij,-il}\check{M}_{lk,-ilj}R_{3l}\right)\\
 & +\max_{i,j,k}\Var\left(V_{1i}v_{2k}v_{4j}\right)3!\sum_{i\ne j\ne k}^{n}\check{M}_{ik,-ij}^{2}\left(\sum_{l\ne i,j,k}G_{il}F_{il}\check{M}_{ik,-il}\check{M}_{lj,-ilk}R_{3l}\right)^{2}.
\end{align*}
\small

The first term in the $v_{2k}$ line is then:
\footnotesize
\begin{align*}
\Var &  \left(\sum_{i\ne j\ne k\ne l}^{n}G_{ij}F_{ij}\check{M}_{ik,-ij}\check{M}_{jl,-ijk}V_{1i}v_{2k}R_{3j}R_{4l}\right) =\Var\left(\sum_{i\ne j}^{n}G_{ij}F_{ij}\check{M}_{ij,-ik}V_{1i}v_{2j}\sum_{k\ne i,j}\sum_{l\ne i,j,k}\check{M}_{kl,-ijk}R_{3k}R_{4l}\right)\\
 & \leq\max_{i,j}\Var\left(V_{1i}v_{2j}\right)\sum_{i\ne j}^{n}\left(G_{ij}F_{ij}\sum_{k\ne i,j}\sum_{l\ne i,j,k}\check{M}_{ij,-ik}\check{M}_{kl,-ijk}R_{3k}R_{4l}\right)^{2}\\
 & +\max_{i,j}\Var\left(V_{1i}v_{2j}\right)\sum_{i\ne j}^{n}\left(G_{ij}F_{ij}\sum_{k\ne i,j}\sum_{l\ne i,j,k}\check{M}_{ij,-ik}\check{M}_{kl,-ijk}R_{3k}R_{4l}\right) \\
&\quad \left(G_{ji}F_{ji}\sum_{k\ne i,j}\sum_{l\ne i,j,k}\check{M}_{ji,-jk}\check{M}_{kl,-ijk}R_{3k}R_{4l}\right)\\
 & +\max_{i,j}\Var\left(V_{1i}v_{2j}\right)\sum_{i \ne j \ne i_2}^n \left(G_{ij}F_{ij}\sum_{k\ne i,j}\sum_{l\ne i,j,k}\check{M}_{ij,-ik}\check{M}_{kl,-ijk}R_{3k}R_{4l}\right) \\
 &\quad \left(G_{i_{2}j}F_{i_{2}j}\sum_{k\ne i_{2},j}\sum_{l\ne i_{2},j,k}\check{M}_{i_{2}j,-i_{2}k}\check{M}_{kl,-i_{2}jk}R_{3k}R_{4l}\right)\\
 & \leq\max_{i,j}\Var\left(V_{1i}v_{2j}\right)\sum_{i\ne j}^{n}\left(G_{ij}F_{ij}\sum_{k\ne i,j}\sum_{l\ne i,j,k}\check{M}_{ij,-ik}\check{M}_{kl,-ijk}R_{3k}R_{4l}\right)^{2}\\
 & +\max_{i,j}\Var\left(V_{1i}v_{2j}\right)\sum_{i\ne j}^{n}\left(G_{ij}F_{ij}\sum_{k\ne i,j}\sum_{l\ne i,j,k}\check{M}_{ij,-ik}\check{M}_{kl,-ijk}R_{3k}R_{4l}\right) \\
&\quad\left(G_{ji}F_{ji}\sum_{k\ne i,j}\sum_{l\ne i,j,k}\check{M}_{ji,-jk}\check{M}_{kl,-ijk}R_{3k}R_{4l}\right)\\
 & +\max_{i,j}\Var\left(V_{1i}v_{2j}\right)\sum_{j}\left(\sum_{i\ne j}\sum_{k\ne i,j}\sum_{l\ne i,j,k}G_{ij}F_{ij}\check{M}_{ij,-ik}\check{M}_{kl,-ijk}R_{3k}R_{4l}\right)^{2} \\
 &\quad - \max_{i,j}\Var\left(V_{1i}v_{2j}\right)\sum_j \sum_{i\ne j}\left(\sum_{k\ne i,j}\sum_{l\ne i,j,k}G_{ij}F_{ij}\check{M}_{ij,-ik}\check{M}_{kl,-ijk}R_{3k}R_{4l}\right)^{2}.
\end{align*}
\small

Now, we turn back to the $R_{2k}$ expression to complete the proof:

\[
\sum_{i\ne j\ne k\ne l}^{n}G_{ij}F_{ij}\check{M}_{ik,-ij}\check{M}_{jl,-ijk}V_{1i}R_{2k}\left(R_{3j}R_{4l}+R_{3j}v_{4l}+v_{3j}R_{4l}+v_{3j}v_{4l}\right).
\]

Consider the term with three stochastic terms first, and simplify it using the same strategy as before:
\footnotesize
\begin{align*}
 & \Var\left(\sum_{i\ne j\ne k\ne l}^{n}G_{ij}F_{ij}\check{M}_{ik,-ij}\check{M}_{jl,-ijk}V_{1i}R_{2k}v_{3j}v_{4l}\right) =\Var\left(\sum_{i\ne j\ne k}^{n}G_{ij}F_{ij}V_{1i}v_{3j}v_{4k}\sum_{l\ne i,j,k}\check{M}_{il,-ij}\check{M}_{jk,-ijl}R_{2l}\right)\\
 & \leq \max_{i,j,k}\Var\left(V_{1i}v_{3j}v_{4k}\right)\bigg(\sum_{k \ne j}^n \left(\sum_{i\ne k,j}\sum_{l\ne i,j,k}G_{ij}F_{ij}\check{M}_{il,-ij}\check{M}_{jk,-ijl}R_{2l}\right)^{2} \\
&\quad-\sum_{k \ne j}^n \sum_{i\ne k,j}\left(\sum_{l\ne i,j,k}G_{ij}F_{ij}\check{M}_{il,-ij}\check{M}_{jk,-ijl}R_{2l}\right)^{2}\bigg)\\
 & +\max_{i,j,k}\Var\left(V_{1i}v_{3j}v_{4k}\right)\sum_{k}\sum_{j\ne k}\left(\sum_{i\ne k,j}\sum_{l\ne i,j,k}G_{ij}F_{ij}\check{M}_{il,-ij}\check{M}_{jk,-ijl}R_{2l}\right)\left(\sum_{i\ne k,j}\sum_{l\ne i,j,k}G_{ik}F_{ik}\check{M}_{il,-ik}\check{M}_{kj,-ikl}R_{2l}\right)\\
 & -\max_{i,j,k}\Var\left(V_{1i}v_{3j}v_{4k}\right)\sum_{k}\sum_{j\ne k}\sum_{i\ne k,j}\left(\sum_{l\ne i,j,k}G_{ij}F_{ij}\check{M}_{il,-ij}\check{M}_{jk,-ijl}R_{2l}\right)\left(\sum_{l\ne i,j,k}G_{ik}F_{ik}\check{M}_{il,-ik}\check{M}_{kj,-ikl}R_{2l}\right)\\
 & +\max_{i,j,k}\Var\left(V_{1i}v_{3j}v_{4k}\right)3!\sum_{i\ne j\ne k}^{n}\left(G_{ij}F_{ij}\sum_{l\ne i,j,k}\check{M}_{il,-ij}\check{M}_{jk,-ijl}R_{2l}\right)^{2}.
\end{align*}

Next,
\begin{align*}
 & \Var\left(\sum_{i}\sum_{j\ne i}\sum_{k\ne i,j}\sum_{l\ne i,j,k}G_{ij}F_{ij}\check{M}_{ik,-ij}\check{M}_{jl,-ijk}V_{1i}R_{2k}v_{3j}R_{4l}\right)\\
 & \leq \max_{i,j}\Var\left(V_{1i}v_{3j}\right)\sum_{i\ne j}^{n} \left(G_{ij}F_{ij}\sum_{k\ne i,j}\sum_{l\ne i,j,k}\check{M}_{ik,-ij}\check{M}_{jl,-ijk}R_{2k}R_{4l}\right)^{2} \\
 &+ \max_{i,j}\Var\left(V_{1i}v_{3j}\right)\sum_{i\ne j}^{n}\left(G_{ij}F_{ij}\sum_{k\ne i,j}\sum_{l\ne i,j,k}\check{M}_{ik,-ij}\check{M}_{jl,-ijk}R_{2k}R_{4l}\right)\left(G_{ji}F_{ji}\sum_{k\ne i,j}\sum_{l\ne i,j,k}\check{M}_{jk,-ij}\check{M}_{il,-ijk}R_{2k}R_{4l}\right) \\
 & +\max_{i,j}\Var\left(V_{1i}v_{3j}\right)\sum_{j}\left(\sum_{i\ne j}G_{ij}F_{ij}\sum_{k\ne i,j}\sum_{l\ne i,j,k}\check{M}_{ik,-ij}\check{M}_{jl,-ijk}R_{2k}R_{4l}\right)^{2}\\
 &-\max_{i,j}\Var\left(V_{1i}v_{3j}\right)\sum_{j\ne i}^n\left(G_{ij}F_{ij}\sum_{k\ne i,j}\sum_{l\ne i,j,k}\check{M}_{ik,-ij}\check{M}_{jl,-ijk}R_{2k}R_{4l}\right)^{2}.
\end{align*}
\small

Finally,
\begin{align*}
\Var&\left(\sum_{i\ne j\ne k\ne l}^{n}G_{ij}F_{ij}\check{M}_{ik,-ij}\check{M}_{jl,-ijk}V_{1i}R_{2k}R_{3j}R_{4l}\right) \\
&\quad=\sum_{i}\Var\left(V_{1i}\right)\left(\sum_{j\ne i}\sum_{k\ne i,j}\sum_{l\ne i,j,k}G_{ij}F_{ij}\check{M}_{ik,-ij}\check{M}_{jl,-ijk}R_{2k}R_{3j}R_{4l}\right)^{2}.
\end{align*}
\end{proof}

\begin{proof} [Proof of \Cref{lem:a2}]
\textbf{Proof of \Cref{lem:a2}(a).}
Expand the term:
\[
\sum_{i\ne j\ne k}^{n}G_{ij}F_{ik}V_{1j}V_{2k}V_{3i}V_{4i}=\sum_{i\ne j\ne k}^{n}G_{ij}F_{ik}V_{3i}V_{4i}\left(R_{1j}R_{2k}+R_{1j}v_{2k}+v_{1j}R_{2k}+v_{1j}v_{2k}\right).
\]

With four stochastic objects, 
\begin{align*}
Var  \left(\sum_{i\ne j\ne k}^{n}G_{ij}F_{ik}V_{3i}V_{4i}v_{1j}v_{2k}\right)
&\leq  \max_{i,j,k}\Var\left(V_{3i}V_{4i}v_{1j}v_{2k}\right)\sum_{i\ne j\ne k}^{n}\sum_{i_{2}\ne i,j,k}\left(G_{ij}F_{ik}G_{i_{2}j}F_{i_{2}k}+G_{ij}F_{ik}G_{i_{2}k}F_{i_{2}j}\right)\\
 & +\max_{i,j,k}\Var\left(V_{1i}V_{2i}v_{3j}v_{4l}\right)3!\sum_{i\ne j\ne k}^{n}G_{ij}^{2}F_{ik}^{2}.
\end{align*}

Observe that, due to \Cref{asmp:Vconst_reg}(a), 
\begin{align*}
\sum_{i\ne j\ne k\ne l}^{n}G_{ij}F_{ik}G_{lj}F_{lk} & =\sum_{j\ne k}^{n}\left(\sum_{i\ne j,k}G_{ij}F_{ik}\right)\left(\sum_{l\ne j,k}G_{lj}F_{lk}-G_{ij}F_{ik}\right)\\
 & =\sum_{j\ne k}^{n}\left(\sum_{i\ne j,k}G_{ij}F_{ik}\right)^{2}-\sum_{j\ne k\ne i}^{n}G_{ij}^{2}F_{ik}^{2}
\end{align*}
has the required order, which suffices for the bound. Next,
\begin{align*}
\Var &  \left(\sum_{i\ne j\ne k}^{n}G_{ij}F_{ik}V_{3i}V_{4i}R_{1j}v_{2k}\right)\\
 & =\Var\left(\sum_{i}\sum_{j\ne i}F_{ij}V_{3i}V_{4i}v_{2j}\left(\sum_{k\ne i,j}G_{ik}R_{1k}\right)\right)\\
 & \leq\sum_{i}\sum_{j\ne i}\Var\left(V_{3i}V_{4i}v_{2j}\right)\left(\sum_{k\ne i,j}F_{ij}G_{ik}R_{1k}\right)\left[\sum_{k\ne i,j}F_{ij}G_{ik}R_{1k}+\sum_{k\ne i,j}F_{ji}G_{jk}R_{1k}\right]\\
 & +\max_{i,j}\Var\left(V_{3i}V_{4i}v_{2j}\right)\sum_{i}\sum_{j\ne i}\sum_{i_{2}\ne i,j}\left(\sum_{k\ne i,j}F_{ij}G_{ik}R_{1k}\right)\left(\sum_{k\ne i_{2},l}F_{i_{2}j}G_{i_{2}k}R_{1k}\right)\\
 & \leq\max_{i,j}\Var\left(V_{3i}V_{4i}v_{2j}\right)\sum_{i}\left(\sum_{j\ne i}\sum_{k\ne i,j}F_{ij}G_{ik}R_{1k}\right)^{2}\\
&\quad+\sum_{i}\sum_{j\ne i}\Var\left(V_{3i}V_{4i}v_{2j}\right)\left(\sum_{k\ne i,j}F_{ij}G_{ik}R_{1k}\right)\left(\sum_{k\ne i,j}F_{ji}G_{jk}R_{1k}\right).
\end{align*}

Similarly,
\begin{align*}
\Var &  \left(\sum_{i\ne j\ne k}^{n}G_{ij}F_{ik}V_{3i}V_{4i}v_{1j}R_{2k}\right) =\Var\left(\sum_{i}\sum_{j\ne i}V_{3i}V_{4i}v_{1j}\left(\sum_{k\ne i,j}G_{ij}F_{ik}R_{2k}\right)\right)\\
 & \leq\max_{i,j}\Var\left(V_{3i}V_{4i}v_{1j}\right)\sum_{i}\left(\left(\sum_{j\ne i}\sum_{k\ne i,j}G_{ij}F_{ik}R_{2k}\right)^{2}+\sum_{j\ne i}\left(\sum_{k\ne i,j}G_{ij}F_{ik}R_{2k}\right)\left(\sum_{k\ne i,j}G_{ji}F_{jk}R_{2k}\right)\right).
\end{align*}

Turning to the sum with two stochastic objects,
\begin{align*}
\Var &  \left(\sum_{i\ne j\ne k}^{n}G_{ij}F_{ik}V_{3i}V_{4i}R_{1j}R_{2k}\right)=\Var\left(\sum_{i}V_{3i}V_{4i}\left(\sum_{j\ne i}\sum_{k\ne i,j}G_{ij}F_{ik}R_{1j}R_{2k}\right)\right)\\
 & \leq\max_{i}\Var\left(V_{3i}V_{4i}\right)\sum_{i}\left(\sum_{j\ne i}\sum_{k\ne i,j}G_{ij}F_{ik}R_{1j}R_{2k}\right)^{2}.
\end{align*}

\textbf{Proof of \Cref{lem:a2}(b).}

Decompose the term:
\begin{align*}
 & \sum_{i\ne j\ne k\ne l}^{n}G_{ij}F_{ik}\check{M}_{il,-ijk}V_{1j}V_{2k}V_{3i}V_{4l}\\
 & =\sum_{i\ne j\ne k\ne l}^{n}G_{ij}F_{ik}\check{M}_{il,-ijk}V_{3i}R_{1j}\left(R_{2k}R_{4l}+R_{2k}v_{4l}+v_{2k}R_{4l}+v_{2k}v_{4l}\right)\\
 & \quad+\sum_{i\ne j\ne k\ne l}^{n}G_{ij}F_{ik}\check{M}_{il,-ijk}V_{3i}v_{1j}\left(R_{2k}R_{4l}+R_{2k}v_{4l}+v_{2k}R_{4l}+v_{2k}v_{4l}\right).
\end{align*}

Consider the $v_{1j}$ line first. 

\begin{align*}
 & \Var\left(\sum_{i\ne j\ne k\ne l}^{n}G_{ij}F_{ik}\check{M}_{il,-ijk}V_{3i}v_{1j}v_{2k}v_{4l}\right)\leq\max_{i,j,k,l}\Var\left(V_{3i}v_{1j}v_{2k}v_{4l}\right)4!\sum_{i\ne j\ne k\ne l}^{n}G_{ij}^{2}F_{ik}^{2}\check{M}_{il,-ijk}^{2}.
\end{align*}

Next, by using the same expansion and simplification steps as before,
\begin{align*}
 & \Var\left(\sum_{i\ne j\ne k\ne l}^{n}G_{ij}F_{ik}\check{M}_{il,-ijk}V_{3i}v_{1j}v_{2k}R_{4l}\right)
 =\Var\left(\sum_{i\ne j\ne k}^{n}G_{ij}F_{ik}V_{3i}v_{1j}v_{2k}\sum_{l\ne i,j,k}\check{M}_{il,-ijk}R_{4l}\right)\\
 & \leq\max_{i,j,k}\Var\left(V_{3i}v_{1j}v_{2k}\right)\sum_{k}\sum_{j\ne k}\left(\left(\sum_{i\ne j,k}\sum_{l\ne i,j,k}G_{ij}F_{ik}\check{M}_{il,-ijk}R_{4l}\right)^{2}-\sum_{i\ne j,k}\left(\sum_{l\ne i,j,k}G_{ij}F_{ik}\check{M}_{il,-ijk}R_{4l}\right)^{2}\right)\\
 & +\max_{i,j,k}\Var\left(V_{3i}v_{1j}v_{2k}\right)\sum_{k}\sum_{j\ne k}\left(\sum_{i\ne j,k}\sum_{l\ne i,j,k}G_{ij}F_{ik}\check{M}_{il,-ijk}R_{4l}\right)\left(\sum_{i\ne j,k}\sum_{l\ne i,j,k}G_{ik}F_{ij}\check{M}_{il,-ijk}R_{4l}\right)\\
 & -\max_{i,j,k}\Var\left(V_{3i}v_{1j}v_{2k}\right)\sum_{k}\sum_{j\ne k}\sum_{i\ne j,k}\left(\sum_{l\ne i,j,k}G_{ij}F_{ik}\check{M}_{il,-ijk}R_{4l}\right)\left(\sum_{l\ne i,j,k}G_{ik}F_{ij}\check{M}_{il,-ijk}R_{4l}\right)\\
 & +\max_{i,j,k}\Var\left(V_{3i}v_{1j}v_{2k}\right)3!\sum_{i\ne j\ne k}^{n}G_{ij}^{2}F_{ik}^{2}\left(\sum_{l\ne i,j,k}\check{M}_{il,-ijk}R_{4l}\right)^{2}
\end{align*}
and
\begin{align*}
 & \Var\left(\sum_{i\ne j\ne k\ne l}^{n}G_{ij}F_{ik}\check{M}_{il,-ijk}V_{3i}v_{1j}R_{2k}v_{4l}\right)
 =\Var\left(\sum_{i\ne j\ne k}^{n}G_{ij}F_{ik}V_{3i}v_{1j}v_{4k}\sum_{l\ne i,j,k}\check{M}_{ik,-ijl}R_{2l}\right)\\
 & \leq\max_{i,j,k}\Var\left(V_{3i}v_{1j}v_{4k}\right)\sum_{k}\sum_{j\ne k}\left(\left(\sum_{i\ne j,k}\sum_{l\ne i,j,k}G_{ij}F_{ik}\check{M}_{ik,-ijl}R_{2l}\right)^{2}-\sum_{i\ne j,k}\left(\sum_{l\ne i,j,k}G_{ij}F_{ik}\check{M}_{ik,-ijl}R_{2l}\right)^{2}\right)\\
 & +\max_{i,j,k}\Var\left(V_{3i}v_{1j}v_{2k}\right)\sum_{k}\sum_{j\ne k}\left(\sum_{i\ne j,k}\sum_{l\ne i,j,k}G_{ij}F_{ik}\check{M}_{ik,-ijl}R_{2l}\right)\left(\sum_{i\ne j,k}\sum_{l\ne i,j,k}G_{ik}F_{ij}\check{M}_{il,-ijk}R_{2l}\right)\\
 & -\max_{i,j,k}\Var\left(V_{3i}v_{1j}v_{2k}\right)\sum_{k}\sum_{j\ne k}\sum_{i\ne j,k}\left(\sum_{l\ne i,j,k}G_{ij}F_{ik}\check{M}_{ik,-ijl}R_{2l}\right)\left(\sum_{l\ne i,j,k}G_{ik}F_{ij}\check{M}_{il,-ijk}R_{2l}\right)\\
 & +\max_{i,j,k}\Var\left(V_{3i}v_{1j}v_{4k}\right)3!\sum_{i\ne j\ne k}^{n}G_{ij}^{2}F_{ik}^{2}\left(\sum_{l\ne i,j,k}\check{M}_{ik,-ijl}R_{2l}\right)^{2}
\end{align*}
with $\left(\sum_{l\ne i,j,k}\check{M}_{ik,-ijl}R_{2l}\right)^{2}\leq C$.
Finally,
\footnotesize
\begin{align*}
 & \Var\left(\sum_{i\ne j\ne k\ne l}^{n}G_{ij}F_{ik}\check{M}_{il,-ijk}V_{3i}v_{1j}R_{2k}R_{4l}\right)
 =\Var\left(\sum_{i\ne j}^{n}G_{ij}V_{3i}v_{1j}\sum_{k\ne i,j}\sum_{l\ne i,j,k}F_{ik}\check{M}_{il,-ijk}R_{2k}R_{4l}\right)\\
 & \leq\max_{i,j}\Var\left(V_{3i}v_{1j}\right)\sum_{i\ne j}^{n}G_{ij}\left(\sum_{k\ne i,j}\sum_{l\ne i,j,k}F_{ik}\check{M}_{il,-ijk}R_{2k}R_{4l}\right)^{2}\\
 & +\max_{i,j}\Var\left(V_{3i}v_{1j}\right)\sum_{i\ne j}^{n}\left(\sum_{k\ne i,j}\sum_{l\ne i,j,k}G_{ij}F_{ik}\check{M}_{il,-ijk}R_{2k}R_{4l}\right)\left(\sum_{k\ne i,j}\sum_{l\ne i,j,k}G_{ji}F_{jk}\check{M}_{jl,-ijk}R_{2k}R_{4l}\right)\\
 & +\max_{i,j}\Var\left(V_{3i}v_{1j}\right)\sum_{j}\left(\left(\sum_{i\ne j}\sum_{k\ne i,j}\sum_{l\ne i,j,k}G_{ij}F_{ik}\check{M}_{il,-ijk}R_{2k}R_{4l}\right)^{2}-\sum_{i\ne j}\left(\sum_{k\ne i,j}\sum_{l\ne i,j,k}G_{ij}F_{ik}\check{M}_{il,-ijk}R_{2k}R_{4l}\right)^{2}\right).
\end{align*}
\small

Now, return to the $R_{1j}$ line: $\sum_{i\ne j\ne k\ne l}^{n}G_{ij}F_{ik}\check{M}_{il,-ijk}V_{3i}R_{1j}\left(R_{2k}R_{4l}+R_{2k}v_{4l}+v_{2k}R_{4l}+v_{2k}v_{4l}\right)$,
so
\footnotesize
\begin{align*}
 & \Var\left(\sum_{i\ne j\ne k\ne l}^{n}G_{ij}F_{ik}\check{M}_{il,-ijk}V_{3i}R_{1j}v_{2k}v_{4l}\right)
 =\Var\left(\sum_{i\ne j\ne k}^{n}G_{il}F_{ik}V_{3i}v_{2k}v_{4j}\sum_{l\ne i,j,k}\check{M}_{ij,-ilk}R_{1l}\right)\\
 & \leq\max_{i,j,k}\Var\left(V_{3i}v_{2k}v_{4j}\right)\left(\sum_{j}\sum_{k\ne j}\left(\sum_{i\ne j,k}\sum_{l\ne i,j,k}G_{il}F_{ik}\check{M}_{ij,-ilk}R_{1l}\right)^{2}-\sum_{j}\sum_{k\ne j}\sum_{i\ne j,k}\left(\sum_{l\ne i,j,k}G_{il}F_{ik}\check{M}_{ij,-ilk}R_{1l}\right)^{2}\right)\\
 & +\max_{i,j,k}\Var\left(V_{3i}v_{2k}v_{4j}\right)\sum_{j}\sum_{k\ne j}\left(\sum_{i\ne j,k}\sum_{l\ne i,j,k}G_{il}F_{ik}\check{M}_{ij,-ilk}R_{1l}\right)\left(\sum_{i\ne j,k}\sum_{l\ne i,j,k}G_{il}F_{ij}\check{M}_{ik,-ilj}R_{1l}\right)\\
 & -\max_{i,j,k}\Var\left(V_{3i}v_{2k}v_{4j}\right)\sum_{j}\sum_{k\ne j}\sum_{i\ne j,k}\left(\sum_{l\ne i,j,k}G_{il}F_{ik}\check{M}_{ij,-ilk}R_{1l}\right)\left(\sum_{l\ne i,j,k}G_{il}F_{ij}\check{M}_{ik,-ilj}R_{1l}\right)\\
 & +\max_{i,j,k}\Var\left(V_{3i}v_{2k}v_{4j}\right)3!\sum_{i\ne j\ne k}^{n}\left(F_{ik}\sum_{l\ne i,j,k}G_{il}\check{M}_{ij,-ilk}R_{1l}\right)^{2},
\end{align*}
and
\begin{align*}
 & \Var\left(\sum_{i\ne j\ne k\ne l}^{n}G_{ij}F_{ik}\check{M}_{il,-ijk}V_{3i}R_{1j}v_{2k}R_{4l}\right)
 =\Var\left(\sum_{i\ne j}^{n}F_{ij}V_{3i}v_{2j}\sum_{k\ne i,j}\sum_{l\ne i,j,k}G_{ik}\check{M}_{il,-ijk}R_{1k}R_{4l}\right)\\
 & \leq\max_{i,j}\Var\left(V_{3i}v_{2j}\right)\sum_{i\ne j}^{n}\left(\sum_{k\ne i,j}\sum_{l\ne i,j,k}F_{ij}G_{ik}\check{M}_{il,-ijk}R_{1k}R_{4l}\right)^{2}\\
 & +\max_{i,j}\Var\left(V_{3i}v_{2j}\right)\sum_{i\ne j}^{n}\left(\sum_{k\ne i,j}\sum_{l\ne i,j,k}F_{ij}G_{ik}\check{M}_{il,-ijk}R_{1k}R_{4l}\right)\left(\sum_{k\ne i,j}\sum_{l\ne i,j,k}F_{ij}G_{jk}\check{M}_{jl,-ijk}R_{1k}R_{4l}\right)\\
 & +\max_{i,j}\Var\left(V_{3i}v_{2j}\right)\sum_{j}\left(\left(\sum_{i\ne j}\sum_{k\ne i,j}\sum_{l\ne i,j,k}F_{ij}G_{ik}\check{M}_{il,-ijk}R_{1k}R_{4l}\right)^{2}-\sum_{i\ne j}\left(\sum_{k\ne i,j}\sum_{l\ne i,j,k}F_{ij}G_{ik}\check{M}_{il,-ijk}R_{1k}R_{4l}\right)^{2}\right).
\end{align*}
\small

The $\sum_{i\ne j\ne k\ne l}^{n}G_{ij}F_{ik}\check{M}_{il,-ijk}V_{3i}R_{1j}R_{2k}v_{4l}$
term is symmetric, because it does not matter which $R_{m}$ we use.
Finally,
\begin{align*}
\Var\left(\sum_{i\ne j\ne k\ne l}^{n}G_{ij}F_{ik}\check{M}_{il,-ijk}V_{3i}R_{1j}R_{2k}R_{4l}\right) & =\sum_{i}\Var\left(V_{3i}\right)\left(\sum_{j\ne i}\sum_{k\ne i,j}\sum_{l\ne i,j,k}G_{ij}F_{ik}\check{M}_{il,-ijk}R_{1j}R_{2k}R_{4l}\right)^{2}.
\end{align*}
\end{proof}

\begin{proof} [Proof of \Cref{lem:a3}]
The proof of \Cref{lem:a3} is entirely analogous to Lemmas \ref{lem:a1} and \ref{lem:a2} just that $G_{ji}$ is used in place of $G_{ij}$.
\end{proof}

\subsection{Proofs for Appendix D} \label{sec:proof_appD}

\subsubsection{Proofs for Propositions in Appendix D} 

\begin{proof}[Proof of \Cref{prop:max_inv_distr}]
Let
\[
\left(\begin{array}{c}
\Pi_{Y}\\
\Pi
\end{array}\right):=\left(\begin{array}{c}
\left(Z^{\prime}Z\right)^{1/2}\pi_{Y}\\
\left(Z^{\prime}Z\right)^{1/2}\pi
\end{array}\right).
\]

With this definition, $\left(\pi_{Y}^{\prime}Z^{\prime}Z\pi_{Y},\pi^{\prime}Z^{\prime}Z\pi_{Y},\pi^{\prime}Z^{\prime}Z\pi\right)=\left(\Pi_{Y}^{\prime}\Pi_{Y},\Pi_{Y}^{\prime}\Pi,\Pi^{\prime}\Pi\right)$, and 
\[
\left(\begin{array}{c}
s_{1}\\
s_{2}
\end{array}\right)\sim N\left(\left(\begin{array}{c}
\Pi_{Y}\\
\Pi
\end{array}\right),\Omega\otimes I_{K}\right).
\]

Split $s_{1}$ and $s_{2}$ into the $\Pi$ component and a random
normal component: $s_{1k}=\Pi_{Yk}+z_{1k}$ and $s_{2k}=\Pi_{k}+z_{2k}$.
Then, for all $k$,
\[
\left(\begin{array}{c}
z_{1k}\\
z_{2k}
\end{array}\right)\sim N\left(\left(\begin{array}{c}
0\\
0
\end{array}\right),\left[\begin{array}{cc}
\omega_{\zeta \zeta} & \omega_{\zeta\eta}\\
\omega_{\zeta\eta} & \omega_{\eta \eta}
\end{array}\right]\right), \text{ and }
\]
\begin{align*}
\left(\begin{array}{c}
s_{1}^{\prime}s_{1}\\
s_{1}^{\prime}s_{2}\\
s_{2}^{\prime}s_{2}
\end{array}\right) & =\left(\begin{array}{c}
\sum_{k}s_{1k}^{2}\\
\sum_{k}s_{1k}s_{2k}\\
\sum_{k}s_{2k}^{2}
\end{array}\right)
=\left(\begin{array}{c}
\sum_{k}\left(\Pi_{Yk}+z_{1k}\right)^{2}\\
\sum_{k}\left(\Pi_{Yk}+z_{1k}\right)\left(\Pi_{k}+z_{2k}\right)\\
\sum_{k}\left(\Pi_{k}+z_{2k}\right)^{2}
\end{array}\right)\\
 & =\left(\begin{array}{c}
\sum_{k}\Pi_{Yk}^{2}+2\sum_{k}\Pi_{Yk}z_{1k}+\sum_{k}z_{1k}^{2}\\
\sum_{k}\Pi_{Yk}\Pi_{k}+\sum_{k}\Pi_{Yk}z_{2k}+\sum_{k}\Pi_{k}z_{1k}+\sum_{k}z_{1k}z_{2k}\\
\sum_{k}\Pi_{k}^{2}+2\sum_{k}\Pi_{k}z_{2k}+\sum_{k}z_{2k}^{2}
\end{array}\right).
\end{align*}

Under the assumption, $\Pi^{\prime}\Pi/\sqrt{K}\rightarrow C_{S}$, so $\frac{1}{\sqrt{K}}\sum_{k}\Pi_{k}^{2}\rightarrow C_{S}$. 
By applying the Lindeberg CLT due to bounded moments, 
\[
\frac{1}{\sqrt{K}}\left(\begin{array}{c}
\sum_{k}\Pi_{k}z_{1k}\\
\sum_{k}\Pi_{Yk}z_{1k}\\
\sum_{k}\Pi_{Yk}z_{2k}\\
\sum_{k}\Pi_{k}z_{2k}\\
\sum_{k}z_{1k}z_{2k}\\
\sum_{k}z_{2k}^{2}\\
\sum_{k}z_{1k}^{2}
\end{array}\right)\stackrel{a}{\sim}N\left(\left(\begin{array}{c}
0\\
0\\
0\\
0\\
\sqrt{K}\omega_{\zeta\eta}\\
\sqrt{K}\omega_{\eta \eta}\\
\sqrt{K}\omega_{\zeta \zeta}
\end{array}\right),V\right),
\]
where $V$ is some variance matrix. 
By assumption, $\frac{1}{\sqrt{K}}\sum_{k}\Pi_{Yk}\Pi_{k}\rightarrow C_{Y}$ and $\frac{1}{\sqrt{K}}\sum_{k}\Pi_{Yk}^{2}\rightarrow C_{YY}$, so
\begin{align*}
\frac{1}{\sqrt{K}}\left(\begin{array}{c}
s_{1}^{\prime}s_{1}\\
s_{1}^{\prime}s_{2}\\
s_{2}^{\prime}s_{2}
\end{array}\right) & =\frac{1}{\sqrt{K}}\left(\begin{array}{c}
\sum_{k}\Pi_{Yk}^{2}+2\sum_{k}\Pi_{Yk}z_{1k}+\sum_{k}z_{1k}^{2}\\
\sum_{k}\Pi_{Yk}\Pi_{k}+\sum_{k}\Pi_{Yk}z_{2k}+\sum_{k}\Pi_{k}z_{1k}+\sum_{k}z_{1k}z_{2k}\\
\sum_{k}\Pi_{k}^{2}+2\sum_{k}\Pi_{k}z_{2k}+\sum_{k}z_{2k}^{2}
\end{array}\right)\\
 & \stackrel{a}{\sim}\left(\begin{array}{c}
C_{YY}\\
C_{Y}\\
C
\end{array}\right)+A\frac{1}{\sqrt{K}}\left(\begin{array}{c}
\sum_{k}\Pi_{k}z_{1k}\\
\sum_{k}\Pi_{Yk}z_{1k}\\
\sum_{k}\Pi_{Yk}z_{2k}\\
\sum_{k}\Pi_{k}z_{2k}\\
\sum_{k}z_{1k}z_{2k}\\
\sum_{k}z_{2k}^{2}\\
\sum_{k}z_{1k}^{2}
\end{array}\right), \text{ where }
\end{align*}

\[
A=\left(\begin{array}{ccccccc}
0 & 2 & 0 & 0 & 0 & 0 & 1\\
1 & 0 & 1 & 0 & 1 & 0 & 0\\
0 & 0 & 0 & 2 & 0 & 1 & 0
\end{array}\right).
\]

This means:
\begin{align*}
\frac{1}{\sqrt{K}}\left(\begin{array}{c}
s_{1}^{\prime}s_{1}\\
s_{1}^{\prime}s_{2}\\
s_{2}^{\prime}s_{2}
\end{array}\right) & \stackrel{a}{\sim}N\left(\left(\begin{array}{c}
C_{YY}+\sqrt{K}\omega_{\zeta \zeta}\\
C_{Y}+\sqrt{K}\omega_{\zeta\eta}\\
C+\sqrt{K}\omega_{\eta \eta}
\end{array}\right),AVA^{\prime}\right).
\end{align*}

Let $\Sigma=AVA^{\prime}$ to obtain the result as stated. 
To derive $\Sigma$ explicitly, I derive $V$ by applying the Isserlis' Theroem. 
As a special case of the Isserlis' Theorem for $X$'s that are multivariate normal and mean zero, 
\[
E\left[X_{1}X_{2}X_{3}X_{4}\right]=E\left[X_{1}X_{2}\right]E\left[X_{3}X_{4}\right]+E\left[X_{1}X_{3}\right]E\left[X_{2}X_{4}\right]+E\left[X_{1}X_{4}\right]E\left[X_{2}X_{3}\right].
\]

Another corrolary is that if $n$ is odd, then there is no such pairing, so the moment is always zero. 
Hence,
\begin{align*}
E\left[z_{1k}^{2}z_{2k}^{2}\right] & =E\left[z_{1k}^{2}\right]E\left[z_{2k}^{2}\right]+2E\left[z_{1k}z_{2k}\right]E\left[z_{1k}z_{2k}\right]
=\omega_{\zeta \zeta}\omega_{\eta \eta}+2\omega_{\zeta\eta}^{2}, \text{ and } \\
\Var\left(z_{1k}z_{2k}\right) & =\omega_{\zeta \zeta}\omega_{\eta \eta}+\omega_{\zeta\eta}^{2}.
\end{align*}

Similarly,
\begin{align*}
\Var\left(z_{2k}^{2}\right) & =E\left[z_{2k}^{4}\right]-\omega_{\eta \eta}^2 =3\omega_{\eta \eta}^2-\omega_{\eta \eta}^2=2\omega_{\eta \eta}^2, \\
Cov\left(z_{1k},z_{1k}z_{2k}\right)&=E\left[z_{1k}^{2}z_{2k}\right]-E\left[z_{1k}\right]E\left[z_{1k}z_{2k}\right]=0, \\
Cov\left(z_{1k}^{2},z_{1k}z_{2k}\right) & =E\left[z_{1k}^{3}z_{2k}\right]-E\left[z_{1k}^{2}\right]E\left[z_{1k}z_{2k}\right]\\
 & =3\omega_{\zeta\eta}\omega_{\zeta \zeta}-\omega_{\zeta \zeta}\omega_{\zeta\eta} =2\omega_{\zeta\eta}\omega_{\zeta \zeta}, \\
Cov\left(z_{1k}^{2},z_{2k}^{2}\right) & =E\left[z_{1k}^{2}z_{2k}^{2}\right]-\omega_{\zeta \zeta}\omega_{\eta \eta} =2\omega_{\zeta\eta}^{2}, \text{ and }
\end{align*}

\[
V=\left[\begin{array}{cc}
V_{11} & 0 \\ 0 & V_{22}
\end{array}\right], \text{ where }
\]

\[
V_{11}=\left[\begin{array}{ccccccc}
\frac{1}{K}\sum_{k}\Pi_{k}^{2}\omega_{\zeta \zeta} & \frac{1}{K}\sum_{k}\Pi_{k}\Pi_{Yk}\omega_{\zeta \zeta} & \frac{1}{K}\sum_{k}\Pi_{k}\Pi_{Yk}\omega_{\zeta\eta} & \frac{1}{K}\sum_{k}\Pi_{k}^{2}\omega_{\zeta\eta} \\
. & \frac{1}{K}\sum_{k}\Pi_{Yk}^{2}\omega_{\zeta \zeta} & \frac{1}{K}\sum_{k}\Pi_{Yk}^{2}\omega_{\zeta\eta} & \frac{1}{K}\sum_{k}\Pi_{k}\Pi_{Yk}\omega_{\zeta\eta} \\
. & . & \frac{1}{K}\sum_{k}\Pi_{Yk}^{2}\omega_{\eta \eta} & \frac{1}{K}\sum_{k}\Pi_{k}\Pi_{Yk}\omega_{\eta \eta} \\
. & . & . & \frac{1}{K}\sum_{k}\Pi_{k}^{2}\omega_{\eta \eta} 
\end{array}\right],
\]

\[
V_{22}=\left[\begin{array}{ccccccc}
\omega_{\zeta \zeta}\omega_{\eta \eta}+\omega_{\zeta\eta}^{2} & 2\omega_{\zeta\eta}\omega_{\eta \eta} & 2\omega_{\zeta\eta}\omega_{\zeta \zeta}\\
 .& 2\omega_{\eta \eta}^2 & 2\omega_{\zeta\eta}^{2}\\
 .& . & 2\omega_{\zeta \zeta}^2
\end{array}\right].
\]

If $\frac{1}{K}\sum_{k}\Pi_{k}^{2}\rightarrow0,\frac{1}{K}\sum_{k}\Pi_{k}\Pi_{Yk}\rightarrow0,\frac{1}{K}\sum_{k}\Pi_{Yk}^{2}\rightarrow0$ under weak identification, then we obtain the $\Sigma$ expression stated in the proposition.
\end{proof}

\begin{proof} [Proof of \Cref{prop:UJIVE_restr}]
Use $n_{q}^{Q}$ and $n_{w}^{W}$ to denote the number of observations in the instrument and covariate groups respectively, so 
\begin{align*}
\mu_{3} & =\sum_{i}\sum_{j\ne i}G_{ij}R_{i}R_{j} =\sum_{q}\frac{n_{q}^{Q}}{n_{q}^{Q}-1}\sum_{i\in\mathcal{N}_{q}^{Q}}\sum_{j\in\mathcal{N}_{q}^{Q},j\ne i}\frac{1}{n_{q}^{Q}}R_{i}R_{j}-\sum_{w}\frac{n_{w}^{W}}{n_{w}^{W}-1}\sum_{i\in\mathcal{N}_{w}^{W}}\sum_{j\in\mathcal{N}_{w}^{W},j\ne i}\frac{1}{n_{w}^{W}}R_{i}R_{j}\\
 & =\sum_{q}\frac{1}{n_{q}^{Q}-1}\sum_{i\in\mathcal{N}_{q}^{Q}}\sum_{j\in\mathcal{N}_{q}^{Q},j\ne i}R_{i}R_{j}-\sum_{w}\frac{1}{n_{w}^{W}-1}\sum_{i\in\mathcal{N}_{w}^{W}}\sum_{j\in\mathcal{N}_{w}^{W},j\ne i}R_{i}R_{j}\\
 & =\sum_{w}\left(\sum_{q\in\mathcal{M}_{w}}\frac{1}{n_{q}^{Q}-1}\sum_{i\in\mathcal{N}_{q}^{Q}}\sum_{j\in\mathcal{N}_{q}^{Q},j\ne i}R_{i}R_{j}-\frac{1}{n_{w}^{W}-1}\sum_{i\in\mathcal{N}_{w}^{W}}\sum_{j\in\mathcal{N}_{w}^{W},j\ne i}R_{i}R_{j}\right)\\
 & =\sum_{w}\left(\sum_{q\in\mathcal{M}_{w}}\frac{n_{q}^{Q}\left(n_{q}^{Q}-1\right)}{n_{q}^{Q}-1}\left(\pi_{q}+\gamma_{w}\right)^{2}-\frac{1}{n_{w}^{W}-1}\sum_{i\in\mathcal{N}_{w}^{W}}\sum_{j\in\mathcal{N}_{w}^{W},j\ne i}R_{i}R_{j}\right).
\end{align*}

Considering the second term,
\begin{align*}
\sum_{i\in\mathcal{N}_{w}^{W}}\sum_{j\in\mathcal{N}_{w}^{W},j\ne i}R_{i}R_{j} & =\sum_{i\in\mathcal{N}_{w}^{W}}\sum_{j\in\mathcal{N}_{w}^{W},j\ne i}\left(\pi_{q(i)}+\gamma_{w}\right)\left(\pi_{q(j)}+\gamma_{w}\right)\\
 & =\sum_{q\in\mathcal{M}_{w}}\sum_{i\in\mathcal{N}_{q}}\sum_{j\in\mathcal{N}_{w}^{W},j\ne i}\left(\pi_{q(i)}+\gamma_{w}\right)\left(\pi_{q(j)}+\gamma_{w}\right)\\
 & =\sum_{q\in\mathcal{M}_{w}}n_{q}^{Q}\left(n_{q}^{Q}-1\right)\left(\pi_{q}+\gamma_{w}\right)^{2}+\sum_{q\in\mathcal{M}_{w}}\sum_{i\in\mathcal{N}_{q}}\sum_{j\in\mathcal{N}_{w}^{W},j\ne i}\left(\pi_{q(i)}+\gamma_{w}\right)\left(\pi_{q(j)}+\gamma_{w}\right)\\
 & =\sum_{q\in\mathcal{M}_{w}}n_{q}^{Q}\left(n_{q}^{Q}-1\right)\left(\pi_{q}+\gamma_{w}\right)^{2}+\sum_{q\in\mathcal{M}_{w}}\sum_{q^{\prime}\in\mathcal{M}_{w},q^{\prime}\ne q}n_{q}^{Q}n_{q^{\prime}}^{Q}\left(\pi_{q}+\gamma_{w}\right)\left(\pi_{q^{\prime}}+\gamma_{w}\right).
\end{align*}

Since
\begin{align*}
 & \sum_{q\in\mathcal{M}_{w}}n_{q}^{Q}\left(\pi_{q}+\gamma_{w}\right)^{2}-\frac{1}{n_{w}^{W}-1}\sum_{q\in\mathcal{M}_{w}}n_{q}^{Q}\left(n_{q}^{Q}-1\right)\left(\pi_{q}+\gamma_{w}\right)^{2} =\sum_{q\in\mathcal{M}_{w}}n_{q}^{Q}\left(\frac{n_{w}^{W}-n_{q}^{Q}}{n_{w}^{W}-1}\right)\left(\pi_{q}+\gamma_{w}\right)^{2},
\end{align*}
and $n_{w}^{W}=\sum_{q\in\mathcal{M}_{w}}n_{q}^{Q}$, we obtain
\begin{align*}
\mu_{3} & =\sum_{w}\left(\sum_{q\in\mathcal{M}_{w}}n_{q}^{Q}\left(\frac{n_{w}^{W}-n_{q}^{Q}}{n_{w}^{W}-1}\right)\left(\pi_{q}+\gamma_{w}\right)^{2}-\frac{1}{n_{w}^{W}-1}\sum_{q\in\mathcal{M}_{w}}\sum_{q^{\prime}\in\mathcal{M}_{w},q^{\prime}\ne q}n_{q}^{Q}n_{q^{\prime}}^{Q}\left(\pi_{q}+\gamma_{w}\right)\left(\pi_{q^{\prime}}+\gamma_{w}\right)\right)\\
 & =\sum_{w}\frac{1}{n_{w}^{W}-1}\left(\sum_{q\in\mathcal{M}_{w}}\sum_{q^{\prime}\in\mathcal{M}_{w},q^{\prime}\ne q}n_{q}^{Q}n_{q^{\prime}}^{Q}\left(\pi_{q}+\gamma_{w}\right)^{2}-\sum_{q\in\mathcal{M}_{w}}\sum_{q^{\prime}\in\mathcal{M}_{w},q^{\prime}\ne q}n_{q}^{Q}n_{q^{\prime}}^{Q}\left(\pi_{q}+\gamma_{w}\right)\left(\pi_{q^{\prime}}+\gamma_{w}\right)\right)\\
 & =\sum_{w}\frac{1}{n_{w}^{W}-1}\left(\sum_{q\in\mathcal{M}_{w}}\sum_{q^{\prime}\in\mathcal{M}_{w},q^{\prime}\ne q}n_{q}^{Q}n_{q^{\prime}}^{Q}\left(\pi_{q}+\gamma_{w}\right)\left(\pi_{q}-\pi_{q^{\prime}}\right)\right).
\end{align*}

Then, observe that
\begin{align*}
 & \sum_{q\in\mathcal{M}_{w}}\sum_{q^{\prime}\in\mathcal{M}_{w},q^{\prime}\ne q}n_{q}^{Q}n_{q^{\prime}}^{Q}\left(\pi_{q}+\gamma\right)\left(\pi_{q}-\pi_{q^{\prime}}\right)\\
 & =\sum_{q\in\mathcal{M}_{w}}\sum_{q^{\prime}\in\mathcal{M}_{w},q^{\prime}<q}n_{q}^{Q}n_{q^{\prime}}^{Q}\left(\pi_{q}+\gamma\right)\left(\pi_{q}-\pi_{q^{\prime}}\right)+\sum_{q\in\mathcal{M}_{w}}\sum_{q^{\prime}\in\mathcal{M}_{w},q^{\prime}>q}n_{q}^{Q}n_{q^{\prime}}^{Q}\left(\pi_{q}+\gamma\right)\left(\pi_{q}-\pi_{q^{\prime}}\right)\\
 & =\sum_{q\in\mathcal{M}_{w}}\sum_{q^{\prime}\in\mathcal{M}_{w},q^{\prime}<q}n_{q}^{Q}n_{q^{\prime}}^{Q}\left(\pi_{q}+\gamma\right)\left(\pi_{q}-\pi_{q^{\prime}}\right)-\sum_{q\in\mathcal{M}_{w}}\sum_{q^{\prime}\in\mathcal{M}_{w},q^{\prime}<q}n_{q}^{Q}n_{q^{\prime}}^{Q}\left(\pi_{q^{\prime}}+\gamma\right)\left(\pi_{q}-\pi_{q^{\prime}}\right)\\
 & =\sum_{q\in\mathcal{M}_{w}}\sum_{q^{\prime}\in\mathcal{M}_{w},q^{\prime}<q}n_{q}^{Q}n_{q^{\prime}}^{Q}\left(\pi_{q}-\pi_{q^{\prime}}\right)\left(\pi_{q}+\gamma-\pi_{q^{\prime}}-\gamma\right) =\sum_{q\in\mathcal{M}_{w}}\sum_{q^{\prime}\in\mathcal{M}_{w},q^{\prime}<q}n_{q}^{Q}n_{q^{\prime}}^{Q}\left(\pi_{q}-\pi_{q^{\prime}}\right)^{2},
\end{align*}
where the second equality switches the indices of $q$ and $q^{\prime}$ in the second element. Hence,
\begin{align*}
\mu_{3} & =\sum_{w}\frac{1}{n_{w}^{W}-1}\left(\sum_{q\in\mathcal{M}_{w}}\sum_{q^{\prime}\in\mathcal{M}_{w},q^{\prime}<q}n_{q}^{Q}n_{q^{\prime}}^{Q}\left(\pi_{q}-\pi_{q^{\prime}}\right)^{2}\right)\geq 0.
\end{align*}
Analogously,
\begin{align*}
\mu_{2} & =\sum_{w}\left(\sum_{q\in\mathcal{M}_{w}}\frac{1}{n_{q}^{Q}-1}\sum_{i\in\mathcal{N}_{q}^{Q}}\sum_{j\in\mathcal{N}_{q}^{Q},j\ne i}R_{i}R_{Yj}-\frac{1}{n_{w}^{W}-1}\sum_{i\in\mathcal{N}_{w}^{W}}\sum_{j\in\mathcal{N}_{w}^{W},j\ne i}R_{i}R_{Yj}\right)\\
 & =\sum_{w}\left(\sum_{q\in\mathcal{M}_{w}}n_{q}^{Q}\left(\pi_{q}+\gamma_{w}\right)\left(\pi_{Yq}+\gamma_{w}\right)-\frac{1}{n_{w}^{W}-1}\sum_{q\in\mathcal{M}_{w}}n_{q}^{Q}\left(n_{q}^{Q}-1\right)\left(\pi_{q}+\gamma_{w}\right)\left(\pi_{Yq}+\gamma_{w}\right)\right)\\
 & \quad-\sum_{w}\frac{1}{n_{w}^{W}-1}\sum_{q\in\mathcal{M}_{w}}\sum_{q^{\prime}\in\mathcal{M}_{w},q^{\prime}\ne q}n_{q}^{Q}n_{q^{\prime}}^{Q}\left(\pi_{q}+\gamma_{w}\right)\left(\pi_{Yq^{\prime}}+\gamma_{w}\right)\\
 & =\sum_{w}\bigg(\sum_{q\in\mathcal{M}_{w}}n_{q}^{Q}\left(\frac{n_{w}^{W}-n_{q}^{Q}}{n_{w}^{W}-1}\right)\left(\pi_{q}+\gamma_{w}\right)\left(\pi_{Yq}+\gamma_{w}\right) \\
&\quad-\frac{1}{n_{w}^{W}-1}\sum_{q\in\mathcal{M}_{w}}\sum_{q^{\prime}\in\mathcal{M}_{w},q^{\prime}\ne q}n_{q}^{Q}n_{q^{\prime}}^{Q}\left(\pi_{q}+\gamma_{w}\right)\left(\pi_{Yq^{\prime}}+\gamma_{w}\right)\bigg)\\
 & =\sum_{w}\frac{1}{n_{w}^{W}-1}\left(\sum_{q\in\mathcal{M}_{w}}\sum_{q^{\prime}\in\mathcal{M}_{w},q^{\prime}\ne q}n_{q}^{Q}n_{q^{\prime}}^{Q}\left(\pi_{q}+\gamma_{w}\right)\left(\pi_{Yq}-\pi_{Yq^{\prime}}\right)\right)\\
 & =\sum_{w}\frac{1}{n_{w}^{W}-1}\left(\sum_{q\in\mathcal{M}_{w}}\sum_{q^{\prime}\in\mathcal{M}_{w},q^{\prime}<q}n_{q}^{Q}n_{q^{\prime}}^{Q}\left(\pi_{q}-\pi_{q^{\prime}}\right)\left(\pi_{Yq}-\pi_{Yq^{\prime}}\right)\right),
\end{align*}
and 
\[
\mu_{1}=\sum_{w}\frac{1}{n_{w}^{W}-1}\left(\sum_{q\in\mathcal{M}_{w}}\sum_{q^{\prime}\in\mathcal{M}_{w},q^{\prime}<q}n_{q}^{Q}n_{q^{\prime}}^{Q}\left(\pi_{Yq}-\pi_{Yq^{\prime}}\right)^{2}\right)\geq0.
\]

\end{proof}

\begin{proof}[Proof of \Cref{prop:lm_opt}]
Fix any alternative $\left(\pi^{A},\pi_{Y}^{A}\right)\in\mathcal{S}$ with a corresponding $\left(\mu_{1}^{A},\mu_{2}^{A},\mu_{3}^{A}\right)$.
Due to the restriction in $\mathcal{S}$, 

\[
\left(\begin{array}{c}
\mu_{1}^{H}\\
\mu_{2}^{H}\\
\mu_{3}^{H}
\end{array}\right)=\left(\begin{array}{c}
\mu_{1}^{A}-\frac{\sigma_{12}}{\sigma_{22}}\mu_{2}^{A}\\
0\\
\mu_{3}^{A}-\frac{\sigma_{23}}{\sigma_{22}}\mu_{2}^{A}
\end{array}\right)
\]
is in the null space. 
The Neyman-Pearson test for $\mu^{H}$ against $\mu^{A}$ rejects large values of:
\[
\log\frac{dN\left(\mu^{A},\Sigma\right)}{dN\left(\mu^{H},\Sigma\right)}=\frac{\mu_{2}^{A}}{\sigma_{22}}X_{2}-\frac{1}{2}\frac{\left(\mu_{2}^{A}\right)^{2}}{\sigma_{22}}.
\]

Hence, the most powerful test rejects large values of $X_{2}$, which
is what LM does.
By \citet{lehmann2005testing} Theorem 3.8.1(i), since LM is valid for any distribution in the null space (by \Cref{thm:normality}) and it is most powerful for some distribution in the null space, LM is most powerful for testing the composite null against the given alternative $\left(\pi^{A},\pi_{Y}^{A}\right)$.
\end{proof}

\begin{proof}[Proof of \Cref{prop:exist_SF}]
The first two are straightforward: $C_{S}=\mu_{3}/\left(c-1\right)$ and $\beta=\mu_{2}/\mu_{3}$ imply $\mu_{3}=\left(c-1\right)C_{S}$ and $\mu_{2}=\left(c-1\right)C_{S}\beta$. 
For $\mu_{1}$, observe that:
\begin{align*}
h & =\sqrt{\frac{1}{\sqrt{K}}\frac{1}{c-1}\left(\mu_{1}-\frac{\mu_{2}^{2}}{\mu_{3}}\right)}
=\sqrt{\frac{1}{\sqrt{K}}\left(\mu_{1}-C_{S}\beta^{2}\right)}, \text{ and }\\
C_{H}=\sqrt{K}h^{2} & =\mu_{1}/\left(c-1\right)-C_{S}\beta^{2}, \text{ so } \\
\left(c-1\right)\left(C_{S}\beta^{2}+C_{H}\right) & =\left(c-1\right)\left(C_{S}\beta^{2}+\mu_{1}/\left(c-1\right)-C_{S}\beta^{2}\right)=\mu_{1}
\end{align*}
as required. 
Next, since $\sigma_{vv}=\sqrt{\frac{\sigma_{33}c}{2\left(c-1\right)}}$, $\sigma_{33}=2\frac{c-1}{c}\sigma_{vv}^{2}$ is immediate. 
Similarly, with $\sigma_{\varepsilon v}=\frac{1}{\sigma_{vv}}\left(\frac{\sigma_{23}c}{2\left(c-1\right)}-\sigma_{vv}^{2}\beta\right)$, $\sigma_{23}=2\frac{c-1}{c}\sigma_{vv}\left(\sigma_{vv}\beta+\sigma_{\varepsilon v}\right)$.
From these two expressions, we can observe that:
\[
\left(\sigma_{vv}\beta+\sigma_{\varepsilon v}\right)^{2}=\frac{c}{2\left(c-1\right)}\frac{\sigma_{23}^{2}}{\sigma_{33}}.
\]

To obtain an expression for $\sigma_{22}$, rearrange $\sigma_{\varepsilon\varepsilon}=\frac{1}{\sigma_{vv}}\frac{c}{c-1}\left(\sigma_{22}-\frac{\sigma_{23}^{2}}{\sigma_{33}}\right)+\frac{\sigma_{\varepsilon v}^{2}}{\sigma_{vv}}\geq0$:
\begin{align*}
\sigma_{22} & =\frac{\sigma_{23}^{2}}{\sigma_{33}}+\frac{c-1}{c}\left(\sigma_{\varepsilon\varepsilon}\sigma_{vv}-\sigma_{\varepsilon v}^{2}\right)\\
 & =\frac{c-1}{c}\left(\sigma_{vv}\left(\sigma_{\varepsilon\varepsilon}+\sigma_{vv}\beta^{2}+\sigma_{vv}\sigma_{\xi\xi}+2\sigma_{\varepsilon v}\beta\right)+\left(\sigma_{vv}\beta+\sigma_{\varepsilon v}\right)^{2}\right)+o(1),
\end{align*}
where the final step uses $\sigma_{\xi\xi}=h/\sigma_{vv}$.
This expression for $\sigma_{22}$ is of the form required in \Cref{lem:RF_in_SF}. Then,
\begin{align*}
det\left(\Sigma_{SF}\right) & =\sigma_{\varepsilon\varepsilon}\sigma_{\xi\xi}\sigma_{vv}-\sigma_{\varepsilon\varepsilon}h^{2}-\sigma_{\varepsilon\xi}^{2}\sigma_{vv}+2\sigma_{\varepsilon\xi}\sigma_{\varepsilon v}h-\sigma_{\xi\xi}\sigma_{\varepsilon v}^{2}\\
 & =\sigma_{\varepsilon\varepsilon}\sigma_{\xi\xi}\sigma_{vv}-\sigma_{\varepsilon\varepsilon}h^{2}-\sigma_{\xi\xi}\sigma_{\varepsilon v}^{2}
 =\sigma_{\varepsilon\varepsilon}h-\sigma_{\varepsilon\varepsilon}h^{2}-h\frac{\sigma_{\varepsilon v}^{2}}{\sigma_{vv}}; \text{ and } \\
 det\left(\Sigma_{SF}\right)/h & =\sigma_{\varepsilon\varepsilon}-\frac{\sigma_{\varepsilon v}^{2}}{\sigma_{vv}}-\sigma_{\varepsilon\varepsilon}h
=\sigma_{\varepsilon\varepsilon}-\frac{\sigma_{\varepsilon v}^{2}}{\sigma_{vv}}+o(1).
\end{align*}

An analogous argument holds for $\sigma_{\xi vk} =-h$. From the $\sigma_{22}$ equation, $\sigma_{\varepsilon\varepsilon}-\frac{\sigma_{\varepsilon v}^{2}}{\sigma_{vv}}=\frac{c}{c-1}\left(\sigma_{22}-\frac{\sigma_{23}^{2}}{\sigma_{33}}\right)\geq0$, which delivers the result that $det\left(\Sigma_{SF}\right)/h \rightarrow C_{D}\geq0$.
\end{proof}

\begin{proof} [Proof of \Cref{lem:CI}]
The $A$ expressions can be written as:
\begin{align*}
A_{1} & =\sum_{i}\sum_{j\ne i}\sum_{k\ne i}\sum_{l\ne k}\check{M}_{il,-ijk}G_{ij}X_{j}G_{ik}X_{k}\left(Y_{i}Y_{l}-X_{i}Y_{l}\beta_{0}-Y_{i}X_{l}\beta_{0}+X_{i}X_{l}\beta_{0}^{2}\right); \\
A_{2} & =\sum_{i}\sum_{j\ne i}\sum_{k\ne i}\sum_{l\ne k}\check{M}_{il,-ijk}G_{ij}X_{j}G_{ki}X_{l}\left(Y_{i}Y_{k}-X_{i}Y_{k}\beta_{0}-Y_{i}X_{k}\beta_{0}+X_{i}X_{k}\beta_{0}^{2}\right); \\
A_{3} & =\sum_{i}\sum_{j\ne i}\sum_{k\ne i}\sum_{l\ne k}\check{M}_{il,-ijk}X_{l}G_{ji}G_{ki}X_{i}\left(Y_{j}Y_{k}-X_{j}Y_{k}\beta_{0}-Y_{j}X_{k}\beta_{0}+X_{j}X_{k}\beta_{0}^{2}\right); \\
A_{4} & =-\sum_{i}\sum_{j\ne i}\sum_{k\ne j}\sum_{l\ne i,k}\check{M}_{jl,-ijk}\check{M}_{ik,-ij}G_{ji}^{2}X_{i}X_{k}\left(Y_{j}Y_{l}-X_{j}Y_{l}\beta_{0}-Y_{j}X_{l}\beta_{0}+X_{j}X_{l}\beta_{0}^{2}\right);  \text{ and }\\
A_{5} & =-\sum_{i}\sum_{j\ne i}\sum_{k\ne j}\sum_{l\ne i,k}\check{M}_{ik,-ij}\check{M}_{jl,-ijk}G_{ij}G_{ji}X_{k}X_{l}\left(Y_{i}Y_{j}-X_{i}Y_{j}\beta_{0}-Y_{i}X_{j}\beta_{0}+X_{i}X_{j}\beta_{0}^{2}\right).
\end{align*}

Since these terms have a quadratic form, the variance estimator is also quadratic in $\beta_{0}$, i.e.,
\[
\hat{V}_{LM}=B_{0}+B_{1}\beta_{0}+B_{2}\beta_{0}^{2},
\]
where the $B$'s can be worked out by collecting the expressions above.
For instance,
\begin{align*}
B_{0} & =\sum_{i}\sum_{j\ne i}\sum_{k\ne i}\sum_{l\ne k}\check{M}_{il,-ijk}G_{ij}X_{j}G_{ik}X_{k}Y_{i}Y_{l}+2\sum_{i}\sum_{j\ne i}\sum_{k\ne i}\sum_{l\ne k}\check{M}_{il,-ijk}G_{ij}X_{j}G_{ki}X_{l}Y_{i}Y_{k}\\
 & \quad+\sum_{i}\sum_{j\ne i}\sum_{k\ne i}\sum_{l\ne k}\check{M}_{il,-ijk}X_{l}G_{ji}G_{ki}X_{i}Y_{j}Y_{k}\\
 & \quad-\sum_{i}\sum_{j\ne i}\sum_{k\ne j}\sum_{l\ne i,k}\check{M}_{jl,-ijk}\check{M}_{ik,-ij}G_{ji}^{2}X_{i}X_{k}Y_{j}Y_{l}-\sum_{i}\sum_{j\ne i}\sum_{k\ne j}\sum_{l\ne i,k}\check{M}_{ik,-ij}\check{M}_{jl,-ijk}G_{ij}G_{ji}X_{k}X_{l}Y_{i}Y_{j}
\end{align*}
$B_{1}$ and $B_{2}$ are analogous by collecting the coefficients on $\beta_{0},\beta_{0}^{2}$ from expressions $A_{1}$ to $A_{5}$.
The test does not reject:
\begin{align*}
\frac{\left(KT_{YX}-KT_{XX}\beta_{0}\right)^{2}}{B_{0}+B_{1}\beta_{0}+B_{2}\beta_{0}^{2}} & \leq q 
\Leftrightarrow \left(KT_{XX}^{2}-qB_{2}\right)\beta_{0}^{2}-\left(2KT_{YX}T_{XX}+qB_{1}\right)\beta_{0} +\left(KT_{YX}^{2}-qB_{0}\right)\leq0. 
\end{align*}

Solutions exist when:
\[
D:=\left(2KT_{YX}T_{XX}+qB_{1}\right)^{2}-4\left(KT_{XX}^{2}-qB_{2}\right)\left(KT_{YX}^{2}-qB_{0}\right)\geq0. 
\]
The rest of the lemma is immediate from properties of solving quadratic inequalities. 
\end{proof}

\subsubsection{Proofs for Lemmas in Appendix D}

\begin{proof}[Proof of \Cref{lem:max_inv}]
The joint distribution of $\left(Y^{\prime},X^{\prime}\right)^{\prime}$ is:
\[
\left[\begin{array}{c}
Y\\
X
\end{array}\right]\sim N\left(\left[\begin{array}{c}
Z\pi_{Y}\\
Z\pi
\end{array}\right],\left[\begin{array}{cc}
I_{n}\omega_{\zeta \zeta} & I_{n}\omega_{\zeta\eta}\\
I_{n}\omega_{\zeta\eta} & I_{n}\omega_{\eta \eta}
\end{array}\right]\right).
\]

Stack them together with their predicted values $PY=Z\left(Z^{\prime}Z\right)^{-1}Z^{\prime}Y$ and $PX=Z\left(Z^{\prime}Z\right)^{-1}Z^{\prime}X$: 
\[
\left[\begin{array}{c}
Y\\
X\\
P^{\prime}Y\\
P^{\prime}X
\end{array}\right]\sim N\left(\left[\begin{array}{c}
Z\pi_{Y}\\
Z\pi\\
Z\pi_{Y}\\
Z\pi
\end{array}\right],\left[\begin{array}{cccc}
I_{n}\omega_{\zeta \zeta} & I_{n}\omega_{\zeta\eta} & \omega_{\zeta \zeta}P^{\prime} & \omega_{\zeta\eta}P^{\prime}\\
I_{n}\omega_{\zeta\eta} & I_{n}\omega_{\eta \eta} & \omega_{\zeta\eta}P^{\prime} & \omega_{\eta \eta}P^{\prime}\\
\omega_{\zeta \zeta}P^{\prime} & \omega_{\zeta\eta}P^{\prime} & \omega_{\zeta \zeta}P^{\prime} & \omega_{\zeta\eta}P^{\prime}\\
\omega_{\zeta\eta}P^{\prime} & \omega_{\eta \eta}P^{\prime} & \omega_{\zeta\eta}P^{\prime} & \omega_{\eta \eta}P^{\prime}
\end{array}\right]\right).
\]

Then, the conditional normal distribution is:
\begin{align*}
\left[\begin{array}{c}
Y\\
X
\end{array}\right]|\left[\begin{array}{c}
Z\left(Z^{\prime}Z\right)^{-1}Z^{\prime}Y\\
Z\left(Z^{\prime}Z\right)^{-1}Z^{\prime}X
\end{array}\right] & \sim N\left(\left[\begin{array}{c}
Z\pi_Y\\
Z\pi
\end{array}\right]+\left[\begin{array}{c}
Z\left(Z^{\prime}Z\right)^{-1}Z^{\prime}Y-Z\pi_Y\\
Z\left(Z^{\prime}Z\right)^{-1}Z^{\prime}X-Z\pi
\end{array}\right],V\right)\\
 & =N\left(\left[\begin{array}{c}
Z\left(Z^{\prime}Z\right)^{-1}Z^{\prime}Y\\
Z\left(Z^{\prime}Z\right)^{-1}Z^{\prime}X
\end{array}\right],V\right)
=N\left(\left[\begin{array}{c}
PY\\
PX
\end{array}\right],V\right)
\end{align*}
Hence, $PX$ and $PY$ (i.e, $Z^{\prime}X$, $Z^{\prime}Y$) are sufficient
statistics for $\pi_{Y},\pi$.

To show that $\left(s_{1}^{\prime}s_{1},s_{1}^{\prime}s_{2},s_{2}^{\prime}s_{2}\right)$ is a maximal invariant, let $F$ be some conformable orthogonal matrix so $F^{\prime}F=I$. 
For invariance, let $s_{1}^{*}=Fs_{1}$. 
Then, $s_{1}^{*\prime}s_{1}^{*}=s_{1}^{\prime}F^{\prime}Fs_{1}=s_{1}^{\prime}s_{1}$.
Invariance of $\left(s_{1}^{\prime}s_{2},s_{2}^{\prime}s_{2}\right)$ is analogous. 
Maximality states that if $s_{1}^{*\prime}s_{1}^{*}=s_{1}^{\prime}s_{1}$, then $s_{1}^{*}=Fs_{1}$ for some $F$. 
Suppose not. 
This means $s_{1}^{*}=Gs_{1}$, and $G$ is not an orthogonal matrix but yet $s_{1}^{*\prime}s_{1}^{*}=s_{1}^{\prime}s_{1}$.
Since $G$ is not an orthogonal matrix, $G^{\prime}G\ne I$. 
Hence, $s_{1}^{*\prime}s_{1}^{*}=s_{1}^{\prime}G^{\prime}Gs_{1}\ne s_{1}^{\prime}s_{1}$,
a contradiction.
To obtain the distribution, 
\begin{align*}
\left[\begin{array}{c}
s_{1}\\
s_{2}
\end{array}\right] & =\left[\begin{array}{c}
\left(Z^{\prime}Z\right)^{-1/2}Z^{\prime}\left(Z\pi_{Y}+\zeta\right)\\
\left(Z^{\prime}Z\right)^{-1/2}Z^{\prime}\left(Z\pi+\eta\right)
\end{array}\right]
=\left[\begin{array}{c}
\left(Z^{\prime}Z\right)^{1/2}\pi_{Y}\\
\left(Z^{\prime}Z\right)^{1/2}\pi
\end{array}\right]+\left[\begin{array}{c}
\left(Z^{\prime}Z\right)^{-1/2}Z^{\prime}\zeta\\
\left(Z^{\prime}Z\right)^{-1/2}Z^{\prime}\eta
\end{array}\right].
\end{align*}

Since $\Var\left(\left(Z^{\prime}Z\right)^{-1/2}Z^{\prime}\eta\right)=\left(Z^{\prime}Z\right)^{-1/2}Z^{\prime}\omega_{\eta \eta}Z\left(Z^{\prime}Z\right)^{-1/2}=I_{K}\omega_{\eta \eta}$,
\[
\left[\begin{array}{c}
s_{1}\\
s_{2}
\end{array}\right]\sim N\left(\left(\begin{array}{c}
\left(Z^{\prime}Z\right)^{1/2}\pi_{Y}\\
\left(Z^{\prime}Z\right)^{1/2}\pi
\end{array}\right),\Omega\otimes I_{K}\right).
\]
\end{proof}

\begin{proof}[Proof of \Cref{lem:RF_in_SF}]
I work out the $\mu$'s first. 
Using the judge structure, $\sum_{i}M_{ii}^{2}=\sum_{k}\frac{\left(c-1\right)^{2}}{c},\sum_{i}\sum_{j\ne i}P_{ij}=\sum_{k}\frac{c-1}{c}$.
We have also chosen $\pi_{k},\sigma_{\xi vk}$ such that $\sum_{k}\pi_{k}=0,\sum_{k}\sigma_{\xi vk}=0,\sum_{k}\pi_{k}\sigma_{\xi vk}=0$.
Then, we get the result for means:
\begin{align*}
\left(\begin{array}{c}
\mu_{1}\\
\mu_{2}\\
\mu_{3}
\end{array}\right) & =\left(\begin{array}{c}
\frac{1}{\sqrt{K}}\sum_{k}\left(c-1\right)\left(\pi_{k}^{2}\beta^{2}+2\pi_{k}\beta\sigma_{\xi vk}+\sigma_{\xi vk}^{2}\right)\\
\frac{1}{\sqrt{K}}\sum_{k}\left(c-1\right)\left(\pi_{k}^{2}\beta+\pi_{k}\sigma_{\xi vk}\right)\\
\frac{1}{\sqrt{K}}\sum_{k}\left(c-1\right)\pi_{k}^{2}
\end{array}\right)
=\left(\begin{array}{c}
\sqrt{K}\left(c-1\right)\left(s^{2}\beta^{2}+h^{2}\right)\\
\sqrt{K}\left(c-1\right)s^{2}\beta\\
\sqrt{K}\left(c-1\right)s^{2}
\end{array}\right).
\end{align*}

Using a derivation similar to that of the lemma for $V_{LM}$ expression,
\begin{align*}
K\sigma_{22} & =\sum_{i}\sum_{j\ne i}\sum_{k\ne i}\left(G_{ji}G_{ki}E\left[\zeta_{i}^{2}\right]R_{j}R_{k}+2G_{ij}G_{ki}E\left[\eta_{i}\zeta_{i}\right]R_{Yj}R_{k}+G_{ij}G_{ik}E\left[\eta_{i}^{2}\right]R_{Yj}R_{Yk}\right)\\
 & \quad+\sum_{i}\sum_{j\ne i}\left(G_{ij}^{2}E\left[\eta_{i}^{2}\right]E\left[\zeta_{j}^{2}\right]+G_{ij}G_{ji}E\left[\eta_{i}\zeta_{i}\right]E\left[\eta_{j}\zeta_{j}\right]\right); \\
K\sigma_{11} & =\sum_{i}\sum_{j\ne i}\sum_{k\ne i}E\left[\zeta_{i}^{2}\right]R_{Yj}R_{Yk}\left(G_{ji}G_{ki}+2G_{ij}G_{ki}+G_{ij}G_{ik}\right)+\sum_{i}\sum_{j\ne i}E\left[\zeta_{i}^{2}\right]E\left[\zeta_{j}^{2}\right]\left(G_{ij}^{2}+G_{ij}G_{ji}\right); \\ 
K\sigma_{33}&=\sum_{i}\sum_{j\ne i}\sum_{k\ne i}E\left[\eta_{i}^{2}\right]R_{j}R_{k}\left(G_{ji}G_{ki}+2G_{ij}G_{ki}+G_{ij}G_{ik}\right)+\sum_{i}\sum_{j\ne i}E\left[\eta_{i}^{2}\right]E\left[\eta_{j}^{2}\right]\left(G_{ij}^{2}+G_{ij}G_{ji}\right); \\
K\sigma_{12} & =\sum_{i}\sum_{j\ne i}\sum_{k\ne i}\left(G_{ji}G_{ki}E\left[\zeta_{i}^{2}\right]R_{j}R_{Yk}+2G_{ij}G_{ki}E\left[\zeta_{i}^{2}\right]R_{Yj}R_{k}+G_{ij}G_{ik}E\left[\eta_{i}\zeta_{i}\right]R_{Yj}R_{Yk}\right)\\
 & \quad+\sum_{i}\sum_{j\ne i}E\left[\eta_{i}\zeta_{i}\right]E\left[\zeta_{j}^{2}\right]\left(G_{ij}^{2}+G_{ij}G_{ji}\right); \\
K\sigma_{23} & =\sum_{i}\sum_{j\ne i}\sum_{k\ne i}\left(G_{ji}G_{ki}E\left[\eta_{i}^{2}\right]R_{Yj}R_{k}+2G_{ij}G_{ki}E\left[\eta_{i}^{2}\right]R_{j}R_{Yk}+G_{ij}G_{ik}E\left[\eta_{i}\zeta_{i}\right]R_{j}R_{k}\right)\\
 & \quad+\sum_{i}\sum_{j\ne i}E\left[\eta_{i}\zeta_{i}\right]E\left[\eta_{j}^{2}\right]\left(G_{ij}^{2}+G_{ij}G_{ji}\right); \text{ and }\\
K\sigma_{13}&=\sum_{i}\sum_{j\ne i}\sum_{k\ne i}E\left[\eta_{i}\zeta_{i}\right]R_{Yj}R_{k}\left(G_{ji}G_{ki}+2G_{ij}G_{ki}+G_{ij}G_{ik}\right)+\sum_{i}\sum_{j\ne i}E\left[\eta_{i}\zeta_{i}\right]E\left[\eta_{j}\zeta_{j}\right]\left(G_{ij}^{2}+G_{ij}G_{ji}\right).
\end{align*}

The equalities hold regardless of whether identification is strong or weak and whether heterogeneity converges or not.
Without covariates, $G=P$ is symmetric and the above expressions simplify. For instance,
\begin{align*}
K\sigma_{22} & =\sum_{k}\frac{\left(c-1\right)^{2}}{c}\left(\omega_{\zeta\zeta k}\pi_{k}^{2}+2\omega_{\zeta\eta k}\pi_{k}\pi_{Yk}+\omega_{\eta\eta k}\pi_{Yk}^{2}\right)+\sum_{k}\frac{c-1}{c}\left(\omega_{\eta\eta k}\omega_{\zeta\zeta k}+\omega_{\zeta\eta k}^{2}\right).
\end{align*}

Evaluate the terms in the expression. 
For higher moments of $\pi_{k}$, $\sum_{k}\pi_{k}^{2}=Ks^{2}$, $\sum_{k}\pi_{k}^{3}=0$, and $\sum_{k}\pi_{k}^{4}=Ks^{4}$.
Similarly, $\sum_{k}\pi_{k}^{3}\sigma_{\xi v}=0$. 
Treating the heterogeneity in the same way, $\sum_{k}\sigma_{\xi v}^{2}=Kh^{2}$. Then,
\begin{align*}
\sum_{k}\omega_{\zeta\zeta k}\pi_{k}^{2} & =\sum_{k}\left(\pi_{k}^{2}\sigma_{\xi\xi}+2\pi_{k}\beta\sigma_{\xi vk}+2\pi_{k}\sigma_{\varepsilon\xi}+\sigma_{\varepsilon\varepsilon}-\sigma_{\xi vk}^{2}+\sigma_{vv}\beta^{2}+\sigma_{vv}\sigma_{\xi\xi}+2\sigma_{\xi vk}^{2}+2\sigma_{\varepsilon v}\beta\right)\pi_{k}^{2}\\
 & =s^{2}K\left(s^{2}\sigma_{\xi\xi}+\sigma_{\varepsilon\varepsilon}+\sigma_{vv}\beta^{2}+\sigma_{vv}\sigma_{\xi\xi}+h^{2}+2\sigma_{\varepsilon v}\beta\right); \text{ and } \\
\sum_{k}\omega_{\zeta\eta k}\pi_{k}\pi_{Yk} & =\sum_{k}\left(\pi_{k}\sigma_{\xi vk}+\sigma_{vv}\beta+\sigma_{\varepsilon v}\right)\pi_{k}\left(\pi_{k}\beta+\sigma_{\xi vk}\right)\\
 & =\sum_{k}\left(\sigma_{vv}\beta^{2}\pi_{k}^{2}+\sigma_{\varepsilon v}\pi_{k}^{2}\beta+\pi_{k}^{2}\sigma_{\xi vk}^{2}\right) =s^{2}K\left(\sigma_{vv}\beta^{2}+\sigma_{\varepsilon v}\beta+h^{2}\right).
\end{align*}

Now, for the $P_{ij}^{2}$ part,
\begin{align*}
\sum_{k}\omega_{\eta\eta k}\omega_{\zeta\zeta k} & =\sum_{k}\sigma_{vv}\left(\pi_{k}^{2}\sigma_{\xi\xi}+2\pi_{k}\beta\sigma_{\xi vk}+2\pi_{k}\sigma_{\varepsilon\xi}+\sigma_{\varepsilon\varepsilon}-\sigma_{\xi vk}^{2}+\sigma_{vv}\beta^{2}+\sigma_{vv}\sigma_{\xi\xi}+2\sigma_{\xi vk}^{2}+2\sigma_{\varepsilon v}\beta\right)\\
 & =\sum_{k}\sigma_{vv}\left(\pi_{k}^{2}\sigma_{\xi\xi}+\sigma_{\varepsilon\varepsilon}-\sigma_{\xi vk}^{2}+\sigma_{vv}\beta^{2}+\sigma_{vv}\sigma_{\xi\xi}+2\sigma_{\xi vk}^{2}+2\sigma_{\varepsilon v}\beta\right)\\
 & =K\sigma_{vv}\left(s^{2}\sigma_{\xi\xi}+\sigma_{\varepsilon\varepsilon}+\sigma_{vv}\beta^{2}+\sigma_{vv}\sigma_{\xi\xi}+h^{2}+2\sigma_{\varepsilon v}\beta\right); \text{ and } \\
\sum_{k}\omega_{\zeta\eta k}^{2} & =\sum_{k}\left(\pi_{k}\sigma_{\xi vk}\pi_{k}\sigma_{\xi vk}+\sigma_{vv}\beta\pi_{k}\sigma_{\xi vk}+\sigma_{\varepsilon v}\pi_{k}\sigma_{\xi vk}+\pi_{k}\sigma_{\xi vk}\sigma_{vv}\beta+\sigma_{vv}\beta\sigma_{vv}\beta+\sigma_{\varepsilon v}\sigma_{vv}\beta\right)\\
 & \quad+\sum_{k}\left(\pi_{k}\sigma_{\xi vk}\sigma_{\varepsilon v}+\sigma_{vv}\beta\sigma_{\varepsilon v}+\sigma_{\varepsilon v}^{2}\right)\\
 & =\sum_{k}\left(\pi_{k}^{2}\sigma_{\xi vk}^{2}+\sigma_{vv}^{2}\beta^{2}+\sigma_{\varepsilon v}\sigma_{vv}\beta+\sigma_{vv}\beta\sigma_{\varepsilon v}+\sigma_{\varepsilon v}^{2}\right) 
 =K\left(s^{2}h^{2}+\left(\sigma_{vv}\beta+\sigma_{\varepsilon v}\right)^{2}\right).
\end{align*}

Combine the expressions for $\sigma_{22}$ and impose asymptotics where $s\rightarrow0$ and $h\rightarrow0$:
\begin{align*}
\sigma_{22} & =\frac{1}{K}\sum_{k}\frac{\left(c-1\right)^{2}}{c}h^{2}+\frac{1}{K}\sum_{k}\frac{c-1}{c}\left(\sigma_{vv}\left(\sigma_{\varepsilon\varepsilon}+\sigma_{vv}\beta^{2}+\sigma_{vv}\sigma_{\xi\xi}+h^{2}+2\sigma_{\varepsilon v}\beta\right)+\left(\sigma_{vv}\beta+\sigma_{\varepsilon v}\right)^{2}\right)+o(1)\\
 & =\frac{c-1}{c}\left(\sigma_{vv}\left(\sigma_{\varepsilon\varepsilon}+\sigma_{vv}\beta^{2}+\sigma_{vv}\sigma_{\xi\xi}+2\sigma_{\varepsilon v}\beta\right)+\left(\sigma_{vv}\beta+\sigma_{\varepsilon v}\right)^{2}\right)+o(1). 
\end{align*}

Next, evaluate a few more sums that feature in the other $\sigma$ expressions: 
\begin{align*}
\sum_{k}\omega_{\zeta\zeta}\pi_{Yk}^{2} & =\sum_{k}\left(\pi_{k}^{2}\sigma_{\xi\xi}+2\pi_{k}\beta\sigma_{\xi vk}+2\pi_{k}\sigma_{\varepsilon\xi}+\sigma_{\varepsilon\varepsilon}+\sigma_{vv}\beta^{2}+\sigma_{vv}\sigma_{\xi\xi}+\sigma_{\xi vk}^{2}+2\sigma_{\varepsilon v}\beta\right) \\
&\quad\left(\pi_{k}^{2}\beta^{2}+2\pi_{k}\sigma_{\xi vk}+\sigma_{\xi v}^{2}\right)\\
\frac{1}{K}\sum_{k}\omega_{\zeta\zeta}\pi_{Yk}^{2} & =\frac{1}{K}\sum_{k}\sigma_{\xi v}^{2}\left(\pi_{k}^{2}\sigma_{\xi\xi}+2\pi_{k}\beta\sigma_{\xi vk}+2\pi_{k}\sigma_{\varepsilon\xi}+\sigma_{\varepsilon\varepsilon}+\sigma_{vv}\beta^{2}+\sigma_{vv}\sigma_{\xi\xi}+\sigma_{\xi vk}^{2}+2\sigma_{\varepsilon v}\beta\right)\\
 & =h^{2}\left(\sigma_{\varepsilon\varepsilon}+\sigma_{vv}\beta^{2}+\sigma_{vv}\sigma_{\xi\xi}+h^{2}+2\sigma_{\varepsilon v}\beta\right)=o(1); \\
\frac{1}{K}\sum_{k}\omega_{\zeta\zeta}^{2} & =\frac{1}{K}\sum_{k}\left(\pi_{k}^{2}\sigma_{\xi\xi}+2\pi_{k}\beta\sigma_{\xi vk}+2\pi_{k}\sigma_{\varepsilon\xi}+\sigma_{\varepsilon\varepsilon}+\sigma_{vv}\beta^{2}+\sigma_{vv}\sigma_{\xi\xi}+\sigma_{\xi vk}^{2}+2\sigma_{\varepsilon v}\beta\right)^{2}\\
 & =\frac{1}{K}\sum_{k}\left(\sigma_{\varepsilon\varepsilon}+\sigma_{vv}\beta^{2}+\sigma_{vv}\sigma_{\xi\xi}+\sigma_{\xi vk}^{2}+2\sigma_{\varepsilon v}\beta\right)^{2}
 =\left(\sigma_{\varepsilon\varepsilon}+\sigma_{vv}\beta^{2}+\sigma_{vv}\sigma_{\xi\xi}+2\sigma_{\varepsilon v}\beta\right)^{2}; \\
\frac{1}{K}\sum_{k}\omega_{\zeta\eta}\pi_{Yk}^{2} & =\frac{1}{K}\sum_{k}\left(\pi_{k}\sigma_{\xi vk}+\sigma_{vv}\beta+\sigma_{\varepsilon v}\right)\left(\pi_{k}^{2}\beta^{2}+2\pi_{k}\sigma_{\xi vk}+\sigma_{\xi v}^{2}\right)\\
 & =h^{2}\left(\sigma_{vv}\beta+\sigma_{\varepsilon v}\right)=o(1); \text{ and } \\
\frac{1}{K}\sum_{k}\omega_{\zeta\eta}\omega_{\zeta\zeta} & =\frac{1}{K}\sum_{k}\left(\pi_{k}\sigma_{\xi vk}+\sigma_{vv}\beta+\sigma_{\varepsilon v}\right) \\
&\quad\left(\pi_{k}^{2}\sigma_{\xi\xi}+2\pi_{k}\beta\sigma_{\xi vk}+2\pi_{k}\sigma_{\varepsilon\xi}+\sigma_{\varepsilon\varepsilon}+\sigma_{vv}\beta^{2}+\sigma_{vv}\sigma_{\xi\xi}+\sigma_{\xi vk}^{2}+2\sigma_{\varepsilon v}\beta\right)\\
 & =\frac{1}{K}\sum_{k}\left(\sigma_{vv}\beta+\sigma_{\varepsilon v}\right)\left(\sigma_{\varepsilon\varepsilon}+\sigma_{vv}\beta^{2}+\sigma_{vv}\sigma_{\xi\xi}+\sigma_{\xi vk}^{2}+2\sigma_{\varepsilon v}\beta\right)\\
 & =\left(\sigma_{vv}\beta+\sigma_{\varepsilon v}\right) \left(\sigma_{\varepsilon\varepsilon}+\sigma_{vv}\beta^{2}+\sigma_{vv}\sigma_{\xi\xi}+2\sigma_{\varepsilon v}\beta\right)+o(1).
\end{align*}

Using these results,
\begin{align*}
\sigma_{22} & =\frac{c-1}{c}\left(\sigma_{vv}\left(\sigma_{\varepsilon\varepsilon}+\sigma_{vv}\beta^{2}+\sigma_{vv}\sigma_{\xi\xi}+2\sigma_{\varepsilon v}\beta\right)+\left(\sigma_{vv}\beta+\sigma_{\varepsilon v}\right)^{2}\right)+o(1); \\
\sigma_{11} & =2\frac{c-1}{c}\left(\sigma_{\varepsilon\varepsilon}+\sigma_{vv}\beta^{2}+\sigma_{vv}\sigma_{\xi\xi}+2\sigma_{\varepsilon v}\beta\right)^{2}+o(1); \\
\sigma_{33} &=2\frac{c-1}{c}\sigma_{vv}^{2}+o(1); \\
\sigma_{12} & =2\frac{c-1}{c}\left(\sigma_{vv}\beta+\sigma_{\varepsilon v}\right)\left(\sigma_{\varepsilon\varepsilon}+\sigma_{vv}\beta^{2}+\sigma_{vv}\sigma_{\xi\xi}+2\sigma_{\varepsilon v}\beta\right)+o(1); \\
\sigma_{23}&=2\frac{c-1}{c}\sigma_{vv}\left(\sigma_{vv}\beta+\sigma_{\varepsilon v}\right)+o(1); \text{ and } \\
\sigma_{13}&=2\frac{c-1}{c}\left(\sigma_{vv}\beta+\sigma_{\varepsilon v}\right)^{2}+o(1).
\end{align*}

Hence, $\sigma_{13}=\sigma_{23}^{2}/\sigma_{33}+o(1)$ is immediate.
Further, for $\sigma_{12}$,
\begin{align*}
2\frac{\sigma_{23}}{\sigma_{33}}&\left(\sigma_{22}-\frac{\sigma_{23}^{2}}{2\sigma_{33}}\right) =2\frac{\sigma_{vv}\beta+\sigma_{\varepsilon v}}{\sigma_{vv}}\left(\sigma_{22}-\frac{\left(2\frac{c-1}{c}\sigma_{vv}\left(\sigma_{vv}\beta+\sigma_{\varepsilon v}\right)\right)^{2}}{2\times2\frac{c-1}{c}\sigma_{vv}^{2}}\right)+o(1)\\
 & =2\frac{c-1}{c}\left(\sigma_{vv}\beta+\sigma_{\varepsilon v}\right)\left(\sigma_{\varepsilon\varepsilon}+\sigma_{vv}\beta^{2}+\sigma_{vv}\sigma_{\xi\xi}+2\sigma_{\varepsilon v}\beta\right)+o(1)
 =\sigma_{12}+o(1).
\end{align*}

Finally, the $\sigma_{11}$ can be obtained:
\begin{align*}
\frac{4}{\sigma_{33}}\left(\sigma_{22}-\frac{\sigma_{23}^{2}}{2\sigma_{33}}\right)^{2} & =\frac{2}{\frac{c-1}{c}\sigma_{vv}^{2}}\left(\frac{c-1}{c}\left(\sigma_{vv}\left(\sigma_{\varepsilon\varepsilon}+\sigma_{vv}\beta^{2}+\sigma_{vv}\sigma_{\xi\xi}+h^{2}+2\sigma_{\varepsilon v}\beta\right)\right)\right)^{2}+o(1)\\
 & =2\frac{c-1}{c}\left(\sigma_{\varepsilon\varepsilon}+\sigma_{vv}\beta^{2}+\sigma_{vv}\sigma_{\xi\xi}+2\sigma_{\varepsilon v}\beta\right)^{2}+o(1)=\sigma_{11}+o(1).
\end{align*}
\end{proof}

\subsubsection{Derivations for Simulations}
\textbf{Derivation for continuous setup without covariates}. 

This subsection derives expressions for objects in the reduced-form model.
Comparing the first-stage equations, $\eta_i = v_i$. 
As a corollary, for all $i$, $E\left[\eta_{i}^{2}\right]=\sigma_{vv}$. 
Then, $\zeta_{i}=Z_{i}^{\prime}\left(\pi\beta_{i}-\pi_{Y}\right)+v_{i}\beta_{i}+\varepsilon_{i}$. 
Define $\pi_{Y}$ using $E\left[\zeta_{i}\right]=0$ and $E\left[v_{i}\beta_{i}\right]=E\left[v_{i}\left(\beta+\xi_{i}\right)\right]=\sigma_{\xi v k(i)}$, 
which implies $\pi_{Yk}=\pi_{k}\beta+\sigma_{\xi vk}$. 
Hence, we can rewrite $\zeta_{i}$ as:
\begin{align*}
\zeta_{i} & =\pi_{k(i)}\xi_{i}-\sigma_{\xi v k(i)}+v_{i}\beta+v_{i}\xi_{i}+\varepsilon_{i}.
\end{align*}

By substituting the expression for $\zeta_i$, the covariance is $E\left[\eta_{i}\zeta_{i} \mid k\right] =\pi_{k}\sigma_{\xi vk} +\sigma_{vv}\beta+E\left[v_{i}^{2}\xi_{i}\right]+\sigma_{\varepsilon v}$.
By Isserlis' theorem, $E\left[v_{i}^{2}\xi_{i}\right]=0$, so $E\left[\eta_{i}\zeta_{i}\mid k\right]=\pi_{k}\sigma_{\xi vk}+\sigma_{vv}\beta+\sigma_{\varepsilon v}$.
The variance of $\zeta_i$ can be derived analogously. 
Since $E\left[v_{i}^{2}\beta_{i}^{2}\right]= \sigma_{vv}\beta^{2}+\sigma_{vv}\sigma_{\xi\xi}+2\sigma_{\xi vk}^{2}$ by applying Isserlis' theorem, with $\omega_{\eta\eta k} := E[\eta_i^2 \mid k(i) =k]$, $\omega_{\zeta\eta k} := E[\zeta_i\eta_i \mid k(i) =k]$, and $\omega_{\zeta\zeta k} := E[\zeta_i^2 \mid k(i) =k]$, we obtain:
\begin{equation} \label{eqn:rf_var_expr}
\begin{split}
\omega_{\eta\eta k} & =\sigma_{vv}^{2}, \\
\omega_{\zeta\eta k} & =\pi_{k}\sigma_{\xi vk}+\sigma_{vv}\beta+\sigma_{\varepsilon v}, \text{ and }\\
\omega_{\zeta\zeta k} & =\pi_{k}^{2}\sigma_{\xi\xi}+2\pi_{k}\beta\sigma_{\xi vk}+2\pi_{k}\sigma_{\varepsilon\xi}+\sigma_{\varepsilon\varepsilon} +\sigma_{\xi vk}^{2}+\sigma_{vv}\beta^{2}+\sigma_{vv}\sigma_{\xi\xi}+2\sigma_{\varepsilon v}\beta.
\end{split}
\end{equation}

In this model, the local average treatment effect (LATE) of judge $k$ relative to the base judge $0$ is:
\begin{equation}
LATE_k = \frac{\pi_{Yk}}{\pi_k} = \beta + \frac{\sigma_{\xi vk}}{\pi_k}.
\end{equation}

\textbf{Derivation for binary setup without covariates}. 

The reduced-form residuals are given by:
\begin{align*}
\eta_{i}\mid v_{i}=\begin{cases}
\begin{array}{c}
1-\pi_{k}\\
-\pi_{k}
\end{array} & \begin{array}{c}
if\\
if
\end{array}\begin{array}{c}
v_{i}\leq\pi_{k}\\
v_{i}>\pi_{k}
\end{array}\end{cases}, \text{ and } \quad 
\zeta_{i} =\pi_{k(i)}\beta_{i}-\pi_{Yk(i)}+\eta_{i}\beta_{i}+\varepsilon_{i}.
\end{align*}

Imposing $E\left[\zeta_{i}\right]=0$, $\pi_{Yk(i)} =\pi_{k(i)}\beta+E\left[\eta_{i}\beta_{i}\right]$, where $E\left[\eta_{i}\beta_{i}\right]=-\left(1-s\right)\left(2p-1\right)\sigma_{\xi vk}$. 
Hence,
\[
\pi_{Yk}=\pi_{k}\beta-\left(1-s\right)\left(2p-1\right)\sigma_{\xi vk}.
\]

Due to the judge setup, the estimand is:
\begin{align*}
\frac{\sum_{k}\pi_{Yk}\pi_{k}}{\sum_{k}\pi_{k}^2} & =\frac{\sum_{k}\left(\pi_{k}\beta-\left(1-s\right)\left(2p-1\right)\sigma_{\xi vk}\right)\pi_{k}}{\sum_{k}\pi_{k}^2}=\beta
\end{align*}
because $\sum_{k}\sigma_{\xi vk}\pi_{k}=0$ by construction.

\textbf{Derivation for binary setup with covariates}. 

Consider the structural model:
\begin{align*}
Y_{i}(x) & =x(\beta + \xi_i)+w^\prime \gamma+ \varepsilon_{i}, \text{ and }\\
X_{i}(z) & =I\left\{ z^{\prime}\pi + w^\prime \gamma  -v_{i}\geq0\right\}. 
\end{align*}
Let $\mathcal{N}_t$ denote the set of observations in state $t$.
Then, using the $G$ that corresponds to UJIVE,
\begin{align*}
\sum_{i\in\mathcal{N}_{t}}&\sum_{j\in\mathcal{N}_{t}\backslash{i}}G_{ij}R_{Yi}R_{j}  =\sum_{i\in\mathcal{N}_{t}}\sum_{j\in\mathcal{N}_{t}\backslash{i}}G_{ij}\left(\pi_{Yk(i)}+\gamma_{t(i)}\right)\left(\pi_{k(j)}+\gamma_{t(j)}\right)\\
 & =\sum_{i\in\mathcal{N}_{t}}\sum_{j\in\mathcal{N}_{t}\backslash{i}}G_{ij}\left(\pi_{Yk(i)}\pi_{k(j)}+\gamma_{t(i)}\pi_{k(j)}+\pi_{Yk(i)}\gamma_{t(j)}+\gamma_{t(i)}\gamma_{t(j)}\right)\\
 & =\frac{1}{1-1/5}\sum_{k\in\{0,t\}}5\times4\times\frac{1}{5}\left(\pi_{Yk}\pi_{k}+\gamma_{t}\pi_{k}+\pi_{Yk}\gamma_{t}+\gamma_{t}^2\right)\\
 & \quad-\frac{1}{1-1/10}\sum_{i\in\mathcal{N}_{t}}\sum_{j\in\mathcal{N}_{t}\backslash{i}}\frac{1}{10}\left(\pi_{Yk(i)}\pi_{k(j)}+\gamma_{t}\pi_{k(j)}+\pi_{Yk(i)}\gamma_{t}+\gamma_{t}^2\right)\\
 & =\sum_{k\in\{0,t\}} 5\left(\pi_{Yk}\pi_{k}+\gamma_{t}\pi_{k}+\pi_{Yk}\gamma_{t}+\gamma_{t}^2\right)
 -\frac{1}{9}\sum_{k\in\{0,t\}} 5\times4\left(\pi_{Yk}\pi_{k}+\gamma_{Yt}\pi_{k}+\pi_{Yk}\gamma_{Xt}+\gamma_{t}^2\right)\\
 & \quad-\frac{1}{9}5\times5\left(\pi_{Yt}\pi_{0}+\gamma_{t}\pi_{0}+\pi_{Yt}\gamma_{t}+\gamma_{t}^2\right)
 -\frac{1}{9}5\times5\left(\pi_{Y0}\pi_{t}+\gamma_{t}\pi_{t}+\pi_{Y0}\gamma_{t}+\gamma_{t}^2\right)\\
 & =5\left(\frac{5}{9}\right)\left(\pi_{Y0}\pi_{0}+\pi_{Yt}\pi_{t}-\pi_{Yt}\pi_{0}-\pi_{Y0}\pi_{t}\right).
\end{align*}

Using the result that $\pi_{Yk}=\pi_{k}\beta-\left(1-s\right)\left(2p-1\right)\sigma_{\xi vk}$,
\begin{align*}
\sum_{i\in\mathcal{N}_{t}}\sum_{j\in\mathcal{N}_{t}\backslash{i}}G_{ij}R_{Yi}R_{j} & =5\left(\frac{5}{9}\right)\left(\pi_{Y0}\pi_{0}+\pi_{Yt}\pi_{t}-\pi_{Yt}\pi_{0}-\pi_{Y0}\pi_{t}\right)
=\frac{25}{9}\pi_{Yt}\pi_{t}.
\end{align*}

Analogously, $\sum_{i\in\mathcal{N}_{t}}\sum_{j\in\mathcal{N}_{t}\backslash{i}}G_{ij}R_{i}R_{j}=\frac{25}{9}\pi_{t}^{2}$.
Hence, as long as $\sum_{t}\sigma_{\xi vt}\pi_{t}=0$, which is the case for the construction in the main text, we still recover $\beta$ as our estimand:
\begin{align*}
\frac{\sum_{i}\sum_{j\ne i}G_{ij}R_{Yi}R_{j}}{\sum_{i}\sum_{j\ne i}G_{ij}R_{i}R_{j}} & =\frac{\sum_{t}\pi_{Yt}\pi_{t}}{\sum_{t}\pi_{t}^{2}}
=\frac{\sum_{t}\left(\pi_{t}\beta-\left(1-s\right)\left(2p-1\right)\sigma_{\xi vt}\right)\pi_{t}}{\sum_{t}\pi_{t}^{2}}\\
 & =\beta-\frac{\sum_{t} \left(1-s\right)\left(2p-1\right)\sigma_{\xi vt} \pi_{t}}{\sum_{t}\pi_{t}^{2}}=\beta,
\end{align*}
regardless of $\gamma_{t}$. 

\subsubsection{Derivations for Variance Estimands}

\begin{proof}[Proof of \Cref{lem:VMO_expr}]
\footnotesize
\begin{align*}
E&\left[\hat{\Psi}_{MO}\right] =E\left[\sum_{i}\left(\sum_{j\ne i}P_{ij}\left(R_{j}+\eta_{j}\right)\right)^{2}\left(R_{\Delta i}+\nu_{i}\right)^{2}+\sum_{i}\sum_{j\ne i}P_{ij}^{2}\left(R_{i}+\eta_{i}\right)\left(R_{\Delta i}+\nu_{i}\right)\left(R_{j}+\eta_{j}\right)\left(R_{\Delta j}+\nu_{j}\right)\right]\\
 & =E\left[\sum_{i}\left(\left(\sum_{j\ne i}P_{ij}R_{j}\right)^{2}+\left(\sum_{j\ne i}P_{ij}\eta_{j}\right)^{2}\right)\left(R_{\Delta i}+\nu_{i}\right)^{2}\right]\\
 & \quad+E\left[\sum_{i}\sum_{j\ne i}P_{ij}^{2}\left(R_{i}R_{\Delta i}+\eta_{i}R_{\Delta i}+R_{i}\nu_{i}+\eta_{i}\nu_{i}\right)\left(R_{j}R_{\Delta j}+\eta_{j}R_{\Delta j}+R_{j}\nu_{j}+\eta_{j}\nu_{j}\right)\right]\\
 & =\sum_{i}M_{ii}^{2}R_{i}^{2}\left(R_{\Delta i}^{2}+E\left[\nu_{i}^{2}\right]\right)+\sum_{i}R_{\Delta i}^{2}E\left[\left(\sum_{j\ne i}P_{ij}\eta_{j}\right)^{2}\right]+\sum_{i}E\left[\nu_{i}^{2}\right]E\left[\left(\sum_{j\ne i}P_{ij}\eta_{j}\right)^{2}\right]\\
 & \quad+\sum_{i}\sum_{j\ne i}P_{ij}^{2}\left(R_{i}R_{\Delta i}+E\left[\eta_{i}\nu_{i}\right]\right)\left(R_{j}R_{\Delta j}+E\left[\eta_{j}\nu_{j}\right]\right)\\
 & =\sum_{i}M_{ii}^{2}R_{i}^{2}\left(R_{\Delta i}^{2}+E\left[\nu_{i}^{2}\right]\right)+\sum_{i}\sum_{j\ne i}P_{ij}^{2}E\left[\eta_{j}^{2}\left(R_{\Delta i}^{2}+\nu_{i}^{2}\right)\right] \\
&\quad+\sum_{i}\sum_{j\ne i}P_{ij}^{2}\left(R_{i}R_{\Delta i}+E\left[\eta_{i}\nu_{i}\right]\right)\left(R_{j}R_{\Delta j}+E\left[\eta_{j}\nu_{j}\right]\right)\\
 & =\sum_{i}M_{ii}^{2}R_{i}^{2}R_{\Delta i}^{2}+\sum_{i}M_{ii}^{2}R_{i}^{2}E\left[\nu_{i}^{2}\right] +\sum_{i}\sum_{j\ne i}P_{ij}^{2}E\left[\nu_{i}^{2}\right]E\left[\eta_{j}^{2}\right]+\sum_{i}\sum_{j\ne i}P_{ij}^{2}R_{\Delta i}^{2}E\left[\eta_{j}^{2}\right]\\
 & \quad+\sum_{i}\sum_{j\ne i}P_{ij}^{2}\left(R_{i}R_{\Delta i}R_{j}R_{\Delta j}+E\left[\eta_{i}\nu_{i}\right]R_{j}R_{\Delta j}+R_{i}R_{\Delta i}E\left[\eta_{j}\nu_{j}\right]+E\left[\eta_{i}\nu_{i}\right]E\left[\eta_{j}\nu_{j}\right]\right)
\end{align*}
\small
\end{proof}

\end{document}